\renewcommand\footnotetextcopyrightpermission[1]{}
        \pgfpointadd{\pgfpointdecoratedinputsegmentlast}{\pgfpoint{1pt}{1pt}}
\DeclareMathAlphabet{\mathpzc}{OT1}{pzc}{m}{it}
\newcommand{\sk}{\mathfrak{s}} 
\newcommand{\hh}{\mathfrak{h}} 
\newcommand{\skhh}{(\sk,\hh)} 
\newcommand{\Stacks}{\mathsf{Stacks}}
\newcommand{\Heaps}{\mathsf{Heaps}}
\newcommand{\emptyheap}{\hh_\emptyset}
\newcommand{\singleheap}[2]{\{ #1 \,\mapsto\, #2\}}
\newcommand{\loc}{\ell}
\newcommand{\val}{\vala}
\newcommand{\vala}{v}
\newcommand{\valb}{w}
\newcommand{\Loc}{L}
\newcommand{\disjoint}{\mathrel{\bot}}
\newcommand{\States}{\mathsf{States}}
\newcommand{\To}{\rightarrow}
\newcommand{\true}{\mathsf{true}}
\newcommand{\false}{\mathsf{false}}
\newcommand{\mydot}{\text{{\Large\textbf{.}}~}}
\newcommand{\guard}{\varphi}
\newcommand{\qif}{\quad\textnormal{if}~}
\newcommand{\tif}{~\textnormal{if}~}
\newcommand{\qiff}{\quad\textnormal{iff}\quad}
\newcommand{\qqiff}{\qquad\textnormal{iff}\qquad}
\newcommand{\qand}{\quad\textnormal{and}\quad}
\newcommand{\qqand}{\qquad\textnormal{and}\qquad}
\newcommand{\qimplies}{\quad\textnormal{implies}\quad}
\newcommand{\qqimplies}{\qquad\textnormal{implies}\qquad}
\newcommand{\ppreceq}{~{}\preceq{}~}
\newcommand{\ssucceq}{~{}\succeq{}~}
\newcommand{\eeq}{~{}={}~}
\newcommand{\defeq}{\coloneqq}
\newcommand{\ddefeq}{~{}\coloneqq{}~}
\newcommand{\qeq}{\quad{}={}\quad}
\newcommand{\qqeq}{\qquad{}={}\qquad}
\newcommand{\lleq}{~{}\leq{}~}
\newcommand{\mmodels}{~{}\models{}~}
\newcommand{\morespace}[1]{~{}#1{}~}
\newcommand{\ttimes}{\morespace{\times}}
\newcommand{\qedtriangle}{\hfill $\triangle$}
\newcommand{\RSL}{\sfsymbol{RSL}\xspace}
\newcommand{\SL}{\sfsymbol{SL}\xspace}
\newcommand{\slemp}{\sfsymbol{\textbf{emp}}}
\newcommand{\iemp}{\iiverson{\slemp}}
\newcommand{\slsingleton}[2]{{#1 \mapsto #2}}
\newcommand{\slvalidpointer}[1]{{#1 \mapsto \,{-}\,}}
\newcommand{\singleton}[2]{\iverson{#1 \mapsto #2}}
\newcommand{\isingleton}[2]{\iiverson{#1 \mapsto #2}}
\newcommand{\ivalidpointer}[1]{\iiverson{#1 \mapsto \,{-}\,}}
\newcommand{\sepcon}{\mathbin{{\star}}} 
\newcommand{\sepadd}{\mathbin{{\oplus}}}
\newcommand{\ssepadd}{\morespace{\sepadd}}
\newcommand{\sepimp}{\mathbin{\text{\raisebox{-0.1ex}{$\boldsymbol{{-}\hspace{-.55ex}{-}}$}}\hspace{-1ex}\text{\raisebox{0.13ex}{\rotatebox{-17}{$\star$}}}}}
\newcommand{\sepmon}{\mathbin{\text{\raisebox{-0.0ex}{$\boldsymbol{{-}\hspace{-.55ex}{-}}$}}\hspace{-0.3ex}\text{$\ominus$}}}
\newcommand{\emprun}[1]{\ensuremath{\underline{#1}}}
\newcommand{\pureemp}[1]{\emprun{\iiverson{#1}}}
\newcommand{\sfsymbol}[1]{\textsf{\upshape {#1}}}
\newcommand{\ttsymbol}[1]{\texttt{\upshape {#1}}}
\newcommand{\ertsymbol}{\sfsymbol{ert}}
\newcommand{\boldertsymbol}{\textbf{\sfsymbol{ert}}}
\newcommand{\ert}[2]{\ertsymbol \, \left\llbracket {#1} \right\rrbracket \, \left( {#2} \right)}
\newcommand{\ertC}[1]{\ertsymbol \, \left\llbracket {#1} \right\rrbracket}
\newcommand{\SKIP}{\ttsymbol{skip}}
\newcommand{\TICKsymbol}{\ensuremath{\textnormal{\texttt{tick}}}}
\newcommand{\TICK}[1]{\ensuremath{\TICKsymbol\left({#1}\right)}}
\newcommand{\AssignSymbol}{\mathrel{\textnormal{\texttt{:=}}}}
\newcommand{\ASSIGN}[2]{\ensuremath{#1 \AssignSymbol #2}}
\newcommand{\UNIFASSIGN}[3]{\ensuremath{\ASSIGN{#1}{\ttsymbol{Unif} \left(#2,#3\right)}}}
\newcommand{\ALLOC}[2]{\ensuremath{{#1} \AssignSymbol \mathtt{alloc}\left( #2 \right)}}
\newcommand{\ADDRESS}[1]{\langle#1\rangle}
\newcommand{\dereference}[1]{\ADDRESS{#1}}
\newcommand{\HASSIGN}[2]{\ensuremath{\dereference{#1} \AssignSymbol #2}}
\newcommand{\ASSIGNH}[2]{\ensuremath{#1 \AssignSymbol \dereference{#2}}}
\newcommand{\FREE}[1]{\ensuremath{\mathtt{free}(#1)}}
\newcommand{\COMPOSE}[2]{\ensuremath{{#1}{\,\fatsemi}~ {#2}}}
\newcommand{\PCHOICE}[3]{\ensuremath{\left\{\, {#1} \,\right\}\mathrel{\left[{#2}\right]}\left\{\, {#3} \,\right\}}}
\newcommand{\IFSYMBOL}{\ensuremath{\textnormal{\texttt{if}}}}
\newcommand{\IF}[1]{\ensuremath{\IFSYMBOL\,\left(\, {#1} \,\right)\,\{}}
\newcommand{\ELSESYMBOL}{\ensuremath{\textnormal{\texttt{else}}}}
\newcommand{\ELSE}{\ensuremath{\}\,\ELSESYMBOL\,\{}}
\newcommand{\ITE}[3]{\ensuremath{\IFSYMBOL\,\left(\, {#1} \,\right)\,\left\{\, {#2} \,\right\}\,\ELSESYMBOL\,\left\{\, {#3} \,\right\}}}
\newcommand{\WHILESYMBOL}{\ensuremath{\textnormal{\texttt{while}}}}
\newcommand{\WHILE}[1]{\ensuremath{\WHILESYMBOL \, \left(\, {#1} \,\right)\left\{\right.}}
\newcommand{\WHILEDO}[2]{\ensuremath{\WHILESYMBOL \, \left(\, {#1} \,\right)\left\{\, {#2} \,\right\}}}
\newcommand{\QSL}{\textnormal{\sfsymbol{QSL}}\xspace}
\newcommand{\hpgcl}{\textnormal{\sfsymbol{hpGCL}}\xspace}   
\newcommand{\boldhpgcl}{\textnormal{\textbf{\sfsymbol{hpGCL}}}\xspace}   
\newcommand{\Vars}{\ensuremath{\mathsf{Vars}}\xspace}   
\newcommand{\Vals}{\ensuremath{\mathsf{Vals}}\xspace}    
\newcommand{\Locs}{\ensuremath{\mathsf{{Loc}}\xspace}}    
\newcommand{\Nats}{\ensuremath{\mathbb{N}}\xspace}
\newcommand{\Rats}{\ensuremath{\mathbb{Q}}\xspace}
\newcommand{\Reals}{\ensuremath{\mathbb{R}}\xspace}
\newcommand{\PosReals}{\mathbb{R}_{\geq 0}}
\newcommand{\PosRealsInf}{\mathbb{R}_{\geq 0}^\infty}
\newcommand{\Confs}{\textsf{Conf}}
\newcommand{\E}{\mathbb{E}}
\newcommand{\A}{\mathbb{A}}
\newcommand{\Api}[1][\potent]{\A_{#1}}
\newcommand{\T}{\mathbb{T}}
\newcommand{\dom}[1]{\sfsymbol{dom}\left({#1}\right)}
\newcommand{\iverson}[1]{\left[ {#1} \right]}
\newcommand{\iiverson}[1]{\ensuremath{\Lbag #1 \Rbag}}
\newcommand{\subst}[2]{\left[ {#1} \middle/ {#2}\right]}
\newcommand{\statesubst}[2]{\left[ {#1} \leftarrow {#2}\right]}
\newcommand{\monus}{
	\mathbin{
		\vphantom{+}
		\text{
			\mathsurround=0pt 
			\ooalign{
				\noalign{\kern-.5ex}
				\hidewidth$\smash{\cdot}$\hidewidth\cr 
				\noalign{\kern.5ex}
				$-$\cr 
			}%
		}%
	}%
}
\newcommand{\charfun}[1]{\Phi_{#1}}
\newcommand{\aertcharfun}[1]{\Psi_{#1}}
\newcommand{\modC}[1]{\sfsymbol{Mod}\left( #1 \right)}
\newcommand{\ms}{\sigma}
\newcommand{\ma}{a}
\newcommand{\msched}{\mathfrak{S}}
\newcommand{\MDP}{\mathcal{M}}
\newcommand{\MS}{\mathcal{S}}
\newcommand{\MA}{\textsf{Act}}
\newcommand{\MP}{\textsf{prob}}
\newcommand{\MI}{\ms_{\text{init}}}
\renewcommand{\MR}{\textsf{rew}}
\newcommand{\MSCHED}{\textsf{Sched}}
\newcommand{\mstep}[2]{\xrightarrow[#1]{#2}}
\newcommand{\qmstep}[2]{\quad\mstep{#1}{#2}\quad}
\newcommand{\conf}[3]{#1,\, #2,\, #3}
\newcommand{\term}{\mathpzc{term}}
\newcommand{\sink}{\mathpzc{sink}}
\newcommand{\fail}{\mathpzc{fault}}
\newcommand{\tconf}[2]{\conf{\term}{#1}{#2}}
\newcommand{\MPOP}{\mathpzc{prob}}
\newcommand{\MROP}{\mathpzc{rew}}
\newcommand{\MDPOP}[1]{\mathpzc{M}\llbracket #1 \rrbracket}
\newcommand{\mpath}{\pi}
\newcommand{\mpaths}[2]{\textsf{Paths}_{=#1}(#2)}
\newcommand{\rt}{\ensuremath{{\textnormal{\textsf{rt}}}}}
\newcommand{\rtt}[2]{\ensuremath{\rt\llbracket #1 \rrbracket(#2)}}
\newcommand{\oprt}{\ensuremath{{\textnormal{\textsf{oprt}}}}}
\newcommand{\oprtt}[2]{\ensuremath{{\oprt\llbracket #1 \rrbracket(#2)}}}
\newcommand{\eert}{\ensuremath{{\textnormal{\textsf{eert}}}}}
\newcommand{\eertt}[2]{\ensuremath{{\eert\llbracket #1 \rrbracket(#2)}}}
\newcommand{\ExpRew}[1]{\textsf{ExpRew}\left(#1\right)}
\newcounter{blacktrianglelefteq}
\newcounter{leftsliceeq}
\newcommand{\pplus}{~{}+{}~}
\newcommand{\rtmin}{\ensuremath{\sqcap}}
\newcommand{\mminus}{~{}-{}~}
\newcommand{\qmid}{\quad{}|{}\quad}
\newcommand{\rta}{\ensuremath{f}}
\newcommand{\rtb}{\ensuremath{g}}
\newcommand{\rtc}{\ensuremath{u}}
\newcommand{\inv}{\ensuremath{I}}
\newcommand{\arta}{\ensuremath{X}}
\newcommand{\artb}{\ensuremath{Y}}
\newcommand{\artc}{\ensuremath{Z}}
\newcommand{\ainv}{\ensuremath{J}}
\newcommand{\preda}{\ensuremath{\varphi}} 
\newcommand{\predb}{\ensuremath{\psi}}
\newcommand{\setcomp}[2]{\left\{\, {#1} ~\middle|~ {#2} \,\right\}}
\definecolor{webgreen}{rgb}{0,.5,0}
\colorlet{myblue}{teal}
\definecolor{mydarkorange}{RGB}{220,80,0}
\colorlet{cemphcolor}{mydarkorange}
\newcommand{\underdot}[1]{%
    \tikz[baseline=(todotted.base)]{
\node[inner sep=1pt,outer sep=0pt] (todotted) {#1};
\draw[densely dotted] (todotted.south west) -- (todotted.south east);
    }%
}%
\newcommand{\cloze}[1]{\underdot{\hphantom{$#1$}}}
\newcommand{\ruletext}[1]{\textsf{#1}}
\newcommand{\annocolor}[1]{\textcolor{DodgerBlue3}{#1}}
\newcommand{\aannocolor}[1]{\textcolor{webgreen}{#1}}
\newcommand{\annotate}[1]{{\annocolor{\!\!{\fatslash}\!\!{\fatslash}~~\vphantom{G'} {#1}}}}
\newcommand{\starannotate}[1]{{\annocolor{{\talloblong}\!{\talloblong}\:\vphantom{G'} {#1}}}}
\newcommand{\phiannotate}[1]{{\annocolor{\!\!\hspace{-.15ex}{}^{\annocolor{\Phi}}\!\!\!{\fatslash}\!\!{\fatslash}~~\vphantom{G'} {#1}}}}
\newcommand{\psiannotate}[1]{{\aannocolor{\!\!\hspace{-.15ex}{}^{\aannocolor{\Psi}}\!\!\!{\fatslash}\!\!{\fatslash}~~\vphantom{G'} {#1}}}}
\newcommand{\lkpannotate}[1]{\annocolor{\!\!\hspace{-1.3ex}{}^{\annocolor{{\text{\tiny $\ruletext{lkp}$}}}}{\!\!\!{\fatslash}\!\!{\fatslash}~~\vphantom{G'} {#1}}}}
\newcommand{\succeqannotate}[1]{\annocolor{\!\!\hspace{-.25ex}{}^{\annocolor{{\succeq}}}{\!\!\!{\fatslash}\!\!{\fatslash}~~\vphantom{G'} {#1}}}}
\newcommand{\frameannotate}[2]{{\annocolor{\!\!\hspace{-2.25ex}{}^{\annocolor{\text{\tiny $\ruletext{frame}$}}}_{#2}\!\!\!{\fatslash}\!\!{\fatslash}~~\vphantom{G'} {#1}}}}
\newcommand{\unframeannotate}[2]{{\annocolor{\!\!\hspace{-4.0ex}{}^{\annocolor{\text{\tiny $\ruletext{unframe}$}}}_{#2}\!\!\!{\fatslash}\!\!{\fatslash}~~\vphantom{G'} {#1}}}}
\newcommand{\auxvarannotate}[1]{{\annocolor{\!\!\hspace{-0.90ex}{}^{\annocolor{\text{\tiny $\ruletext{aux}$}}}\!\!\!{\fatslash}\!\!{\fatslash}~~\vphantom{G'} {#1}}}}
\newcommand{\ertannotate}[1]{{\annocolor{\!\!\hspace{-0.5ex}{}^{\annocolor{\text{\tiny $\ertsymbol$}}}\!\!\!{\fatslash}\!\!{\fatslash}~~\vphantom{G'} {#1}}}}
\newcommand{\phantannotate}[1]{{\annocolor{\vphantom{\!\!{\fatslash}\!\!{\fatslash}~~}\vphantom{G'} {#1}}}}
\newcommand{\aaertannotate}[1]{{\aannocolor{\!\!\hspace{-1.3ex}{}^{\aannocolor{\text{\tiny $\aertsymbol$}}}\!\!\!{\fatslash}\!\!{\fatslash}~~\vphantom{G'} {#1}}}}
\newcommand{\aannotate}[1]{{\aannocolor{\!\!{\fatslash}\!\!{\fatslash}~~\vphantom{G'} {#1}}}}
\newcommand{\asucceqannotate}[1]{\aannocolor{\!\!\hspace{-.25ex}{}^{\aannocolor{{\succeq}}}{\!\!\!{\fatslash}\!\!{\fatslash}~~\vphantom{G'} {#1}}}}
\newcommand{\aframeannotate}[1]{{\aannocolor{\!\!\hspace{-2.25ex}{}^{\aannocolor{\text{\tiny $\ruletext{frame}$}}}\!\!\!{\fatslash}\!\!{\fatslash}~~\vphantom{G'} {#1}}}}
\newcommand{\aunframeannotate}[1]{{\aannocolor{\!\!\hspace{-4.0ex}{}^{\aannocolor{\text{\tiny $\ruletext{unframe}$}}}\!\!\!{\fatslash}\!\!{\fatslash}~~\vphantom{G'} {#1}}}}
\newcommand{\amutannotate}[1]{\aannocolor{\!\!\hspace{-1.3ex}{}^{\aannocolor{{\text{\tiny $\ruletext{mut}$}}}}{\!\!\!{\fatslash}\!\!{\fatslash}~~\vphantom{G'} {#1}}}}
\newcommand{\astarannotate}[1]{{\aannocolor{{\talloblong}\!{\talloblong}\:\vphantom{G'} {#1}}}}
\newcommand{\aphantannotate}[1]{{\aannocolor{\vphantom{\!\!{\fatslash}\!\!{\fatslash}~~}\vphantom{G'} {#1}}}}
\newcommand{\alkpannotate}[1]{\aannocolor{\!\!\hspace{-1.3ex}{}^{\aannocolor{{\text{\tiny $\ruletext{lkp}$}}}}{\!\!\!{\fatslash}\!\!{\fatslash}~~\vphantom{G'} {#1}}}}
\newcommand{\aauxvarannotate}[1]{{\aannocolor{\!\!\hspace{-0.90ex}{}^{\aannocolor{\text{\tiny $\ruletext{aux}$}}}\!\!\!{\fatslash}\!\!{\fatslash}~~\vphantom{G'} {#1}}}}
\newcounter{computationarrowsone}
\newcounter{computationarrowstwo}
\newcounter{sarrow}
\newcommand{\lfp}{\ensuremath{\textnormal{\sfsymbol{lfp}}~}}
\newcommand{\gfp}{\ensuremath{\textnormal{\sfsymbol{gfp}}~}}
\newlength{\negph@wd}
\DeclareRobustCommand{\negphantom}[1]{%
  \ifmmode
    \mathpalette\negph@math{#1}%
  \else
    \negph@do{#1}%
  \fi
}
\newcommand{\negph@math}[2]{\negph@do{$\m@th#1#2$}}
\newcommand{\negph@do}[1]{%
  \settowidth{\negph@wd}{#1}%
  \hspace*{-\negph@wd}%
}
\newcommand{\aertsymbol}{\ensuremath{{\textnormal{\textsf{aert}}}}}
\newcommand{\aert}[3]{\aertOne{#1}{#2}{#3}} 
\newcommand{\aertOne}[3]{\aertsymbol_{#1}  \left\llbracket{#2} \right\rrbracket \left({#3}\right)}
\newcommand{\boldaertsymbol}{\ensuremath{\boldsymbol{{\textnormal{\textbf{\textsf{aert}}}}}}}
\newcommand{\potent}{\ensuremath{\pi}}
\newcommand{\Sup}{\reflectbox{\textnormal{\textsf{\fontfamily{phv}\selectfont S}}}\hspace{.2ex}}
\newcommand{\Inf}{\raisebox{.6\depth}{\rotatebox{-30}{\textnormal{\textsf{\fontfamily{phv}\selectfont \reflectbox{J}}}}\hspace{-.1ex}}}
\newcommand{\ie}{i.e.\xspace}
\newcommand{\eg}{e.\,g.\xspace}
\newcommand{\Eg}{E.\,g.\xspace}
\newcommand{\procfont}[1]{\textit{#1}}
\newcommand{\listtraverse}[1]{C}
\newcommand{\listinsert}[1]{\ensuremath{\procfont{Insert(#1)}}}
\newcommand{\arraycopy}[3]{\ensuremath{\procfont{ArrayCopy(#1,\, #2,\, #3)}}}
\newcommand{\arraydelete}[2]{\ensuremath{\procfont{DeleteArray(#1,\, #2)}}}
\newcommand{\ahead}{A}
\newcommand{\aoff}{o}
\newcommand{\asize}{s}
\newcommand{\expfont}[1]{\textsf{#1}}
\newcommand{\elist}[2]{\expfont{list}\left( #1,\, #2 \right)}
\newcommand{\esize}[2]{\expfont{listsize}\left( #1,\, #2 \right)}
\newcommand{\earray}[2]{\expfont{array}\left(#1,\, #2\right)}
\newcommand{\lh}{H}
\newcommand{\ls}{len}
\newcommand{\lpre}{pre}
\newcommand{\lend}{end}
\newcommand{\lsucc}{succ}
\newcommand{\dll}[5]{\expfont{dll}\left(#1,\, #2, \, #3, \, #4, \, #5 \right)}
\newcommand{\procsample}{\procfont{Sample}}
\newcommand{\procrank}{\procfont{Rank}}
\newcommand{\procremove}[1]{\procfont{remove} \left( #1 \right)}
\newcommand{\procdelete}[1]{\procfont{delete} \left( #1 \right)}
\newcommand{\procadd}[1]{\procfont{add} \left( #1 \right)}
\newcommand{\procinsert}[1]{\procfont{insert} \left( #1 \right)}
\newcommand{\procfindany}{\procfont{FindAny}}
\newcommand{\varrank}{rank}
\newcommand{\varany}{any}
\newcommand{\varremove}{x}
\newcommand{\varadd}{y}
\newcommand{\varvalrem}{v_x}
\newcommand{\varvalany}{v_{any}}
\newcommand{\varcur}{c}
\newcommand{\varvalcur}{v_c}
\newcommand{\varsamplen}{\ell}
\newcommand{\varlistelem}{y}
\newcommand{\poww}[1]{\textsf{pow2}\left( #1 \right)}
\newcommand{\nextpoww}[1]{\textsf{nextpow2}\left( #1 \right)}
\newcommand{\procrandlistinsert}[1]{\procfont{RandInsert}\left( #1\right)}
\newcommand{\procrandlistinsertcopy}[2]{\procfont{RandInsert}_{#1}\left( #2\right)}
\newcommand{\procbalancedlistinsert}[1]{\procfont{BalancedInsert}\left( #1\right)}
\theoremstyle{acmdefinition}
\newtheorem{remark}[theorem]{Remark}}
\begin{document}

\title{A Calculus for Amortized Expected Runtimes}         

\author[Batz]{Kevin Batz}
\authornote{Batz, Katoen, and Verscht are supported by the ERC AdG 787914 FRAPPANT.}
\orcid{0000-0001-8705-2564}             
\affiliation{
	\institution{RWTH Aachen University}           
	\country{Germany}                   
}
\email{kevin.batz@cs.rwth-aachen.de}         

\author[Kaminski]{Benjamin Lucien Kaminski}
\orcid{0000-0001-5185-2324}             
\affiliation{%
	\institution{Saarland University, Saarland Informatics Campus}
	\country{Germany}
}
\affiliation{%
	\institution{University College London}
	\country{United Kingdom}
}
\email{kaminski@cs.uni-saarland.de}          

\author[Katoen]{Joost-Pieter Katoen}
\authornotemark[1]
\orcid{0000-0002-6143-1926}             
\affiliation{
  \institution{RWTH Aachen University}           
  \country{Germany}                   
}
\email{katoen@cs.rwth-aachen.de}         

\author[Matheja]{Christoph Matheja}
\orcid{0000-0001-9151-0441}             
\affiliation{
	\institution{Technical University of Denmark}           
	\country{Denmark}                   
}
\email{chmat@dtu.dk}         

\author[Verscht]{Lena Verscht}
\authornotemark[1]
\orcid{0000-0001-6823-7918}             
\affiliation{
  \institution{RWTH Aachen University}           
  \country{Germany}                   
}
\email{lena.verscht@rwth-aachen.de}          

\begin{abstract}
	We develop a weakest-precondition-style calculus à la Dijkstra for reasoning about \emph{amortized expected runtimes of randomized algorithms with access to dynamic memory} --- the $\aertsymbol$ calculus.
	Our calculus is truly quantitative, \ie instead of Boolean valued predicates, it manipulates real-valued functions.
	
	En route to the $\aertsymbol$ calculus, we study the $\ertsymbol$ calculus for reasoning about \emph{expected runtimes} of 
	Kaminski et al.\ [2018]
	 extended by capabilities for handling dynamic memory, thus enabling \emph{compositional} and \emph{local} reasoning about \emph{randomized data structures}.
	This extension employs \emph{runtime separation logic}, which has been foreshadowed by 
	Matheja [2020]
	and then implemented in Isabelle/HOL by 
	Haslbeck [2021].
	In addition to Haslbeck's results, we further prove soundness of the so-extended $\ertsymbol$ calculus with respect to an operational Markov decision process model featuring countably-branching nondeterminism, provide extensive intuitive explanations, and provide proof rules enabling separation logic-style verification for upper bounds on expected runtimes.
	Finally, we build the so-called \emph{potential method} for amortized analysis into the $\ertsymbol$ calculus, thus obtaining the $\aertsymbol$ calculus. Soundness of the $\aertsymbol$ calculus is obtained from the soundness of the $\ertsymbol$ calculus and some probabilistic form of telescoping.
	
	Since one needs to be able to handle \emph{changes in potential} which can in principle be both positive or negative, the $\aertsymbol$ calculus needs to be --- essentially --- capable of handling certain signed random variables.
	A particularly pleasing feature of our solution is that, unlike \eg 
	Kozen [1985], we obtain a loop rule for our signed random variables, and furthermore, unlike \eg
	 Kaminski and Katoen [2017], the $\aertsymbol$ calculus makes do without the need for involved technical machinery keeping track of the integrability of the random variables.

	Finally, we present case studies, including a formal analysis of a randomized delete-insert-find-any set data structure
	[Brodal et al.\ 1996], which yields a constant expected runtime per operation, whereas no deterministic algorithm can achieve this.
\end{abstract}

\begin{CCSXML}
	<ccs2012>
	<concept>
	<concept_id>10003752.10003753.10003757</concept_id>
	<concept_desc>Theory of computation~Probabilistic computation</concept_desc>
	<concept_significance>500</concept_significance>
	</concept>
	<concept>
	<concept_id>10003752.10010124.10010138.10010139</concept_id>
	<concept_desc>Theory of computation~Invariants</concept_desc>
	<concept_significance>500</concept_significance>
	</concept>
	<concept>
	<concept_id>10003752.10010124.10010138.10010140</concept_id>
	<concept_desc>Theory of computation~Program specifications</concept_desc>
	<concept_significance>500</concept_significance>
	</concept>
	<concept>
	<concept_id>10003752.10010124.10010138.10010141</concept_id>
	<concept_desc>Theory of computation~Pre- and post-conditions</concept_desc>
	<concept_significance>500</concept_significance>
	</concept>
	<concept>
	<concept_id>10003752.10010124.10010138.10010142</concept_id>
	<concept_desc>Theory of computation~Program verification</concept_desc>
	<concept_significance>500</concept_significance>
	</concept>
	<concept>
	<concept_id>10003752.10010124.10010131.10010133</concept_id>
	<concept_desc>Theory of computation~Denotational semantics</concept_desc>
	<concept_significance>500</concept_significance>
	</concept>
	<concept>
	<concept_id>10003752.10003790.10011742</concept_id>
	<concept_desc>Theory of computation~Separation logic</concept_desc>
	<concept_significance>500</concept_significance>
	</concept>
	</ccs2012>
\end{CCSXML}

\ccsdesc[500]{Theory of computation~Probabilistic computation}
\ccsdesc[500]{Theory of computation~Invariants}
\ccsdesc[500]{Theory of computation~Program specifications}
\ccsdesc[500]{Theory of computation~Pre- and post-conditions}
\ccsdesc[500]{Theory of computation~Program verification}
\ccsdesc[500]{Theory of computation~Denotational semantics}
\ccsdesc[500]{Theory of computation~Separation logic}

\keywords{quantitative verification, randomized data structures, amortized analysis}  

\allowdisplaybreaks[0]

\maketitle

\section{Introduction}
\label{s:intro}

Amortized analysis~\cite{tarjan1985amortized} is a well-established method to analyze the runtime complexity of algorithms, in particular of those who manipulate dynamic data structures such as dynamically-sized lists, self-balancing trees, and so forth.
The essence of amortized analysis is to average the runtime of a single operation $\mathit{Op}$ over a long sequence of $\mathit{Op}$'s.
Why is this useful?
Suppose $\mathit{Op}$ has \emph{large worst-case runtime} and \emph{small \enquote{normal-case} runtime}.
A worst-case analysis of $\mathit{Op}$ would tell us that $\mathit{Op}$ performs poorly.
However, when executing a \emph{long sequence} of consecutive $\mathit{Op}$'s, it may be that the worst case inevitably occurs only very \emph{seldomly}.
The large runtimes of the small number of worst cases thus amortize over the large number of small runtimes of the normal cases.
\emph{On average}, the runtime of a single execution of $\mathit{Op}$ is thus actually small.
Amortized analysis hence yields results that are more realistic than worst-case analyses.
Notice that amortized analysis is \emph{not} the same as so-called \emph{average case analysis}.
The latter assumes a \emph{probability distribution over all possible inputs} to $\mathit{Op}$ and averages $\mathit{Op}$'s runtime over this distribution.

One popular technique for amortized analysis is the \emph{potential method} (aka \emph{physicist’s method}) \cite{tarjan1985amortized,leiserson2001introduction}.
We will introduce this method in some detail in \Cref{s:aert} and plot out how to make it probabilistic.
For that, we will develop a calculus for reasoning about the \emph{amortized runtime
complexity of randomized algorithms}. 
Various  randomized algorithms use dynamic data structures such as randomized meldable heaps, randomized splay trees and randomized search trees.
An amortized analysis gives a detailed account of the expected runtime of a randomized algorithm and extends (read: refines) existing runtime analysis techniques for probabilistic programs~\cite{DBLP:conf/lics/AvanziniLG19,DBLP:conf/pldi/NgoC018,kaminski2018weakest}.
For instance, an amortized analysis of the complexity of the randomized delete-insert-find-any set data structure~\cite{DBLP:journals/njc/BrodalCR96} yields a constant expected runtime per operation, whereas no deterministic algorithm can achieve this.
The aim of this paper is to develop a \emph{systematic, calculational method for carrying out an amortized runtime analysis of randomized algorithms on source code level}.
Our method is in the spirit of weakest-precondition style reasoning.
That is to say, we present a syntax-oriented technique to determine the amortized expected time of randomized algorithms by applying backward reasoning.
Our technique yields amortized upper bounds on the expected runtime complexity.

Our starting point is the $\ertsymbol$-calculus for reasoning about expected runtimes of probabilistic pointer programs by \citet{haslbeckPhd}, which extends the $\ertsymbol$-calculus of~\citet{kaminski2018weakest} by the principles of separation logic~\cite{DBLP:conf/lics/Reynolds02,DBLP:conf/popl/IshtiaqO01}.
The main challenge here is that classical separation logic connectives do not admit a frame rule --- the key to compositional and local \mbox{reasoning ---} for runtime \emph{over}-approximations.
Based on a suggestion in \cite[Chapter 9]{christophPhd}, \citet{haslbeckPhd} investigated runtime analogues to these separating connectives, separating sum and (its adjoint) separating monus, thus obtaining a real-valued \enquote{logic} --- \emph{runtime separation logic} (\textsf{RSL}) --- upon which the $\ertsymbol$-calculus is built.

 \citet{haslbeckPhd} has mechanized the $\ertsymbol$-calculus in Isabelle/HOL and has proven various properties such as the validity of the frame rule. In addition to Haslbeck's results, we further prove \emph{soundness} of $\ertsymbol$ by establishing a strong correspondence to a simple operational cost model defined in terms of Markov decision processes~(MDPs)~\cite{puterman2005markov}.
This resembles the approach adopted for quantitative separation logic~($\QSL$)~\cite{QSLpopl}, a version of separation logic to reason about the correctness (not the runtime) of probabilistic pointer programs.
The treatment of rewards in the operational model is rather different, however, as time may elapse at arbitrary steps in a program execution.
This contrasts the situation for $\QSL$ where rewards corresponding to post\enquote{conditions} are only collected in states indicating successful program termination.
The proof principle to establish the correspondence between \textsf{RSL} and the operational MDP interpretation is new and relies on Bellman equations~\cite{puterman2005markov} and Blackwell’s theorem~\cite{blackwell1967positive}.

In a second step, we extend $\ertsymbol$ to $\aertsymbol$ --- a calculus for reasoning about \emph{amortized expected runtimes using the potential method}.
We show that 
$\aertsymbol$ recovers what is essentially the \emph{telescoping} property of the classical potential method in the probabilistic setting. That is, for probabilistic program $C$ and potential function $\potent$:
$$
\overbrace{\aert{\potent}{C}{0}}^{\mathclap{\mbox{\footnotesize amortized expected runtime of $C$}}} 
\qquad\qqeq\qquad
\underbrace{\ert{C}{\potent}~{}-{}~ \potent}_{\mathclap{\mbox{\footnotesize expected runtime of $C$ + expected change of the potential caused by $C$}}} 
$$
This result enables us to derive a frame rule for local reasoning about amortized expected runtimes. 
As indicated above, an integral part of the potential method is reasoning about differences in potential when performing an operation.
Such differences can potentially become negative.
This seems rather innocent, but technically it is not.
Existing weakest precondition reasoning rules for probabilistic programs restrict random variables to be non-negative (at least for loops)~\cite{DBLP:series/mcs/McIverM05,DBLP:journals/jcss/Kozen85}.
This is for a good reason as it avoids issues with integrability of expected values.
Extensions which can handle signed random variables~\cite{DBLP:conf/lics/KaminskiK17} are technically involved.
(Indeed a na\"{i}ve approach would have been to extend the classical $\ertsymbol$ calculus with \textsf{RSL} and with these signed random variables.)
This paper shows that this complicated machinery is not needed to treat amortization. 
Another interesting result is that our framework recovers a classical result from amortized complexity analysis over sequences of programs (\Cref{thm:aert_sound}).

We illustrate our amortized runtime calculus on a few examples such as a randomized dynamic list as well as an analysis of the insert-delete-find-any set data structure from~\cite{DBLP:journals/njc/BrodalCR96}.
The latter example is of interest as it only has a constant amortized runtime per operation under randomization.
To the best of our knowledge, our analysis is the first such analysis using the potential method and on source code level.

To summarize, the main contributions of this paper are:
\begin{itemize}
\item 
a compositional, weakest-precondition-style calculus to reason about amortized expected runtimes of randomized algorithms that features local reasoning, 
\item invariant-based reasoning for loops and proof rules enabling the separation logic-style verification of such runtimes,
\item
soundness of our methods by providing a close correspondence to an operational model based on countably-branching Markov decision processes, and 
\item 
a source-code-level analysis based on the potential method for amortized complexity on the insert-delete-find-any data structure~\cite{DBLP:journals/njc/BrodalCR96}.
\end{itemize}%
\paragraph{Structure of the paper.} \Cref{s:hpgcl} introduces 
our model programming language, where \Cref{s:op-semantics} defines its operational semantics.
\Cref{s:rsl} studies and explains runtime separation logic. 
We present the $\ertsymbol$ calculus for expected runtimes in \Cref{s:ert}, where \Cref{s:op} features its soundness proof. 
In \Cref{s:aert} we present the $\aertsymbol$ calculus for reasoning about \emph{amortized} expected runtimes. We consider related work in \Cref{s:related}. 
Proofs and details on the case studies are found in the appendix.

\section{Probabilistic Pointer Programs}
\label{s:hpgcl}

We will employ an imperative model language \`{a} la Dijkstra's guarded commands language adapted from~\cite{QSLpopl} with three main features:
(1) \emph{probabilistic choices}, (2) a \emph{customizable runtime model}, and (3) statements for \emph{accessing and manipulating dynamic memory}.

\subsection{Program States}
Program states have two components: 
(1) a \emph{stack} assigning values to program variables and 
\mbox{(2) a \emph{heap}} modeling the dynamic memory which stores values at (dynamically) allocated locations.%

\paragraph{Stacks}
A stack $\sk$ is a mapping from \emph{variables} taken from a finite set $\Vars$ to values taken from a set \Vals; in our case values are natural numbers, \ie $\Vals = \Nats$.
Hence, the set of \emph{stacks} is given by%
\begin{align*}
	{\Stacks} \eeq \setcomp{\sk}{\sk \colon \Vars \To \Vals}~.
\end{align*}%
\paragraph{Heaps}
A heap $\hh$ maps finitely many \emph{memory locations} taken from the set $\Locs = \Nats_{{>} 0}$ to values; the value $0$ is \emph{not} a valid location and represents the null pointer. Hence, the set of \emph{heaps} is given by%
%
\begin{align*}
	{\Heaps} \eeq \setcomp{\hh}{\hh \colon \Loc \To \Vals,\quad \Loc \subseteq \Locs,\quad 
	\Loc~\text{finite} }~.
\end{align*}%
For a given heap $\hh \colon \Loc \To \Vals$, we denote by $\dom{\hh}$ its \emph{domain}, \ie
$\dom{\hh} = \Loc$.
We write $\hh_1 \disjoint \hh_2$ to indicate that the domains of heaps $\hh_1$ and $\hh_2$ are disjoint, \ie
\begin{align*}
	h_1 \disjoint h_2 \qiff \dom{h_1} \cap \dom{h_2} = \emptyset~.
\end{align*}%
For heaps $\hh_1$ and $\hh_2$ with disjoint domains, \ie $\hh_1 \disjoint \hh_2$, 
their \emph{union} $\hh_1 \sepcon \hh_2$ is given by
\begin{align*}
	\hh_1 \sepcon \hh_2\colon \quad \dom{\hh_1} \mathrel{\cup} \dom{\hh_2} \To \Vals, \quad \loc \mapsto \begin{cases}\hh_1(\loc), & \textnormal{if } \loc \in \dom{\hh_1} \\ \hh_2(\loc), & \textnormal{if } \loc \in \dom{\hh_2}~. \end{cases}
\end{align*}%
If the domains of $\hh_1$ and $\hh_2$ are not disjoint, $\hh_1 \sepcon \hh_2$ is undefined.

We denote by $\emptyheap$ the \emph{empty heap} with $\dom{\emptyheap} = \emptyset$.
Moreover, $\singleheap{\loc}{\val}$ denotes the heap $\hh$ that consists of a single memory location $\loc$ which stores value $\val$, \ie $\dom{\hh} = \{ \loc \}$ and $\hh(\loc) = \val$.
Note that $\hh \sepcon \emptyheap = \emptyheap \sepcon \hh= \hh$ for any heap $\hh$, whereas $\singleheap{\loc}{\val} \sepcon \singleheap{\loc}{\valb}$ is always undefined.
%
\paragraph{Program states}
The set $\States$ of \emph{program states} consists of all stack-heap pairs, \ie
%
\begin{align*}
	{\States} \eeq \setcomp{\skhh}{\sk \in \Stacks,~ \hh \in \Heaps}~.
\end{align*}%
Given a program state $\skhh$, we denote by \textbf{$\sk(e)$} the evaluation of an arithmetic expression $e$ in stack~$\sk$, i.e.\ the value that is obtained by evaluating expression $e$ after replacing every occurrence of every variable $x$ in $e$ by its assigned value $\sk(x)$.
Analogously, we denote by \textbf{$\sk \models \varphi$} that the boolean expression $\varphi$ evaluates to true in stack $\sk$.
We require that both arithmetic and boolean expressions are \emph{pure}, meaning that they only depend on variables in $\Vars$ and \emph{not} on the heap.
Evaluating an expression thus never causes any side effects, such as dereferencing an unallocated location. 

We write \textbf{$\sk\statesubst{x}{\val}$} for stack $\sk$ in which the value of variable \mbox{$x$ has been updated to $\val \in \Vals$, \ie}\footnote{We use $\lambda$--expressions to construct functions: $\lambda \xi \mydot \epsilon$ stands for the function that, when applied to an argument $\alpha$, evaluates to $\epsilon$ in which every occurrence of $\xi$ is replaced by $\alpha$.}%
\begin{align*}
	\sk\statesubst{x}{v} \eeq \lambda\, y\mydot \begin{cases}
		v, & \textnormal{if } y = x\\
		\sk(y), & \textnormal{if } y \neq x~.
	\end{cases}
\end{align*}%

Likewise, 
we write \textbf{$\hh\statesubst{\loc}{\val}$} for the heap $\hh$ in which the value stored at location~$\loc$ has been updated to $\val$.
Formally, if $\loc$ is allocated in $\hh$, \ie if $\loc \in \dom{\hh}$ (otherwise $\hh\statesubst{\loc}{v}$ is undefined),%
%
%
\begin{align*}
	\hh\statesubst{\loc}{v} \eeq \lambda\, \loc'\mydot \begin{cases}
		\val, & \textnormal{if } \loc' = \loc \\
		\hh(\loc'), & \textnormal{if } \loc' \neq \loc~.
	\end{cases}
\end{align*}%
\subsection{The Heap-manipulating Probabilistic Guarded Command Language}
\label{s:hpgcl-syntax}
We now present the syntax of our model programming language and briefly discuss its intuitive semantics and runtime model; a formal semantics and runtime model are provided in \Cref{s:op-semantics}. 

\paragraph{Syntax}
Programs written in the \emph{heap-manipulating probabilistic guarded command language}, denoted $\hpgcl$, are given by the context-free grammar%
%
\begin{center}
\begin{adjustbox}{max width=\linewidth}
\begin{minipage}{0.999\linewidth}
\begin{align*}
	\begin{array}{rlrlr}
		C~\morespace{\longrightarrow}~	& 		\TICK{e}				&\textnormal{(time consumption)}		&\qmid \ASSIGN{x}{e} 		&\textnormal{(variable assignment)} 		\\
									& \qmid 	\ALLOC{x}{e}			&\textnormal{(memory allocation)}		&\qmid \HASSIGN{e}{e'} 		&\textnormal{(heap mutation)} 			\\
									& \qmid 	\ASSIGNH{x}{e} 		&\textnormal{(heap lookup)}			&\qmid \FREE{e} 			&\textnormal{(memory deallocation)} 	\\
									& \qmid 	\PCHOICE{C}{p}{C}		&\textnormal{(probabilistic choice)}		&\qmid \ITE{\varphi}{C}{C}	 	& \textnormal{(conditional choice)} 		\\
									& \qmid 	\COMPOSE{C}{C}		&\textnormal{(sequential composition)}	&\qmid \WHILEDO{\varphi}{C} 	& \textnormal{(while loop)} 			\\
	\end{array}
\end{align*}
\end{minipage}
\end{adjustbox}
\end{center}%
\vspace{.5em}
where $x \in \Vars$, $e, e_1, \ldots, e_n$ are arithmetic expressions over variables that evaluate to values in $\Vals = \Nats$,
and $\varphi$ is a Boolean expression over variables. 
Moreover, $p$ is an arithmetic expression over variables that evaluates to a rational probability, \ie $\sk(p) \in [0,1] \cap \Rats$ holds for all stacks $\sk$.

\paragraph{Intuitive semantics}
Assignments, sequential composition, conditionals, and loops are standard. 
The probabilistic choice $\PCHOICE{C_1}{p}{C_2}$ executes $C_1$ with probability $\sk(p)$ and $C_2$ with probability $1-\sk(p)$.
$\TICK{e}$ does not affect the program state but takes $e$ units of time; see below.

The remaining statements access or manipulate the dynamic memory.
$\ALLOC{x}{e}$ allocates a block of $e$ consecutive, previously unallocated, and nondeterministically chosen memory locations, initializes their contents with zero,\footnote{Similarly to \texttt{C}'s \texttt{calloc}.} and assigns to $x$ the first of those locations; attempting to allocate an empty block, \eg via $\ALLOC{x}{0}$, only assigns a \emph{nondeterministic} value to $x$ but does not affect the heap.
Since we have an infinite reservoir of locations, \emph{memory allocation never fails}. 

The mutation statement $\HASSIGN{e}{e'}$ changes the value at location $e$ to $e'$.
Mutation \emph{can} fail:
If location $e$ is not allocated, we encounter \emph{undefined behavior} (most likely a crash) due to a \emph{memory fault}.
The lookup statement $\ASSIGNH{x}{e}$ assigns the value stored at location $e$ to variable $x$ if location $e$ is allocated and otherwise causes a memory fault.
Finally, the deallocation statement $\FREE{e}$ disposes of location $e$ if allocated and causes a memory fault otherwise.

\paragraph{Runtime model}
Our ultimate goal is to reason about (amortized) expected runtimes of $\hpgcl$ programs. To deal with a variety of runtime models, we do \emph{not} assign particular runtimes to individual statements. Rather, we model runtime using $\TICKsymbol$ statements; 
executing $\TICK{e}$ takes $e$ units of time (where $e$ is evaluated in the current program state). 
All other $\hpgcl$ statements have a runtime of zero with one exception: whenever we encounter a memory fault, this constitutes for us \emph{undefined behavior} --- anything can happen, including non-termination. 
Hence, our runtime model for memory faults is that they have unbounded, \ie infinite, runtime.

\paragraph{Modified variables}
We denote by $\modC{C} \subseteq \Vars$ the set of variables that are potentially \emph{modified} by program $C$, \ie occur on the left-hand side of a variable assignment $\ASSIGN{x}{e}$ in $C$.

\subsection{Formal Operational Semantics}
\label{s:op-semantics}

We give operational semantics to programs by 
(1) defining a \emph{small-step execution relation} $\mstep{}{}$ describing how (and how probable) statements manipulate program states, 
(2) constructing a \emph{Markov decision process} (MDP) based on $\mstep{}{}$, and
(3) introducing a \emph{reward function modeling runtimes}.
Expected runtimes of program executions will then be the expected rewards of a \mbox{corresponding MDP}.%

\paragraph{Configurations}
The set of \emph{program configurations} is given by
\begin{align*}
	\Confs 
	\qeq 
	\bigl(
		\hpgcl 
		\cup  
		\{\term,\, \fail\}
	\bigr)
	\ttimes 
	\Stacks \ttimes \Heaps 
	\quad{}\cup{}\quad
	\{\sink\}~.
\end{align*}
A configuration is an \hpgcl program $C$, or $\term$ indicating fault-free termination, or $\fail$, indicating a memory fault, together with a program state $\skhh$. 
For technical reasons, we also add a $\sink$ configuration which we enter after program termination.

\paragraph{Execution relation}
The steps of our operational semantics are given by an execution relation
\begin{align*}
  \mstep{}{}~ \quad\subseteq\quad \Confs ~\times~ \textsf{Prob} ~\times~ \Vals ~\times~ \Confs~,
\end{align*}
where $\textsf{Prob}$ is the set of transition probabilities\footnote{Formally, we set $\textsf{Prob} = ([0,1] \cap \Rats) \cup \{1-\sfrac{1}{2}\}$, where $1-\sfrac{1}{2} \neq \sfrac{1}{2}$ is a \emph{formal value} that allows us to represent two distinguishable steps from a configuration $c$ both to the same configuration $c'$, each with probability $\sfrac{1}{2}$.} and $\Vals$ are the allocation values which are chosen nondeterministically; if the step is not an allocation, we default to $0$.
Hence, $c \mstep{p}{\val} c'$ (denoting $(c, p, \val, c') \in {\mstep{}{}}$) indicates that our program performs \emph{one step} from $c$ to $c'$ with probability~$p$ while allocation value~$\val$ has been chosen.
To avoid cluttering, we omit $p$ if $p=1$ and $v$ if $v = 0$.
\mbox{$\mstep{}{}$ is} given by the inference rules in \Cref{fig:op:rules} which match the intuitive semantics of \Cref{s:hpgcl-syntax}. 
\Eg, the rule for $\ALLOC{x}{e}$ chooses a location $\val$ from $e$ consecutive unallocated locations. 
These locations are added to the heap with their content initialized to $0$. 
In particular, allocation never fails (steps into $(\fail,\ldots)$) and causes infinite branching over all such memory locations $\val$.%
\begin{figure}[t]
\small
\begin{align*}
& \conf{\TICK{e}}{\sk}{\hh} \qmstep{}{} \tconf{\sk}{\hh}
  \hspace{3em}
  \tconf{\sk}{\hh} \qmstep{}{} \sink 
  \hspace{3em} 
  \sink \mstep{}{} \sink
  \hspace{3em} 
  \conf{\fail}{\sk}{\hh} \mstep{}{} \sink
\\
& \conf{\PCHOICE{C_1}{p}{C_2}}{\sk}{\hh} \qmstep{\sk(p)}{} \conf{C_1}{\sk}{\hh}
  \hspace{5em}
  \conf{\PCHOICE{C_1}{p}{C_2}}{\sk}{\hh} \qmstep{1-\sk(p)}{} \conf{C_2}{\sk}{\hh}
\\
& \conf{\ASSIGN{x}{e}}{\sk}{\hh} \qmstep{}{} \tconf{\sk\statesubst{x}{\sk(e)}}{\hh}
  \hspace{2em}
  \conf{\ITE{\varphi}{C_1}{C_2}}{\sk}{\hh} \qmstep{}{} 
  \begin{cases}
    \conf{C_1}{\sk}{\hh} & \tif \sk \models \varphi \\
    \conf{C_2}{\sk}{\hh} & \tif \sk \not\models \varphi
  \end{cases}
\\
& \conf{\ALLOC{x}{e}}{\sk}{\hh} \qmstep{}{\val} 
  \begin{cases}
    \tconf{\sk\statesubst{x}{\val}}{\hh \sepcon \hh' & \tif \sk(e) = n > 0~\text{and}~\hh \disjoint \hh' \\
        & \qand \hh' = \singleheap{\val}{0} \sepcon \ldots \sepcon \singleheap{\val+n-1}{0}} \\
    \tconf{\sk\statesubst{x}{\val}}{\hh} & \tif \sk(e) = 0~\text{and}~\val \in \Vals
  \end{cases}
\\
& \conf{\ASSIGNH{x}{e}}{\sk}{\hh} \qmstep{}{}
  \begin{cases}
    \tconf{\sk\statesubst{x}{\hh(v)}}{\hh} & \qif \sk(e) = v \in \dom{\hh} \\
    \conf{\fail}{\sk}{\hh} & \qif \sk(e) \notin \dom{\hh}
  \end{cases}
\\
& \conf{\HASSIGN{e}{e'}}{\sk}{\hh} \qmstep{}{}
  \begin{cases}
    \tconf{\sk}{\hh\statesubst{v}{\sk(e')}} & \qif \sk(e) = v \in \dom{\hh} \\
    \conf{\fail}{\sk}{\hh} & \qif \sk(e) \notin \dom{\hh}
  \end{cases}
\\
& \conf{\FREE{e}}{\sk}{\hh} \qmstep{}{}
  \begin{cases}
    \tconf{\sk}{\hh'} & \qif \hh = \hh' \sepcon \{ \sk(e) \mapsto v \} ~\text{for some}~v \in \Vals \\
    \conf{\fail}{\sk}{\hh} & \quad\text{otherwise}
  \end{cases}
\\
& \conf{\WHILEDO{\varphi}{C}}{\sk}{\hh} \qmstep{}{} 
  \begin{cases}
    \conf{\COMPOSE{C}{\WHILEDO{\varphi}{C}}}{\sk}{\hh} & \qif \sk \models \varphi \\
    \conf{\term}{\sk}{\hh} & \qif \sk \not\models \varphi
  \end{cases}
\\
&
  \inference{
    \conf{C_1}{\sk}{\hh}
    \qmstep{p}{\val}
    \conf{C_1'}{\sk'}{\hh'}
  }{
    \conf{\COMPOSE{C_1}{C_2}}{\sk}{\hh}
    \qmstep{p}{\val}
    \conf{\COMPOSE{C_1'}{C_2}}{\sk'}{\hh'}
  }
  \hspace{0.3em}
  \inference{
    \conf{C_1}{\sk}{\hh}
    \qmstep{p}{\val}
    \tconf{\sk'}{\hh'}
  }{
    \conf{\COMPOSE{C_1}{C_2}}{\sk}{\hh}
    \qmstep{p}{\val}
    \conf{C_2}{\sk'}{\hh'}
  }
  \hspace{0.3em}
  \inference{
    \conf{C_1}{\sk}{\hh}
    \qmstep{p}{\val}
    \conf{\fail}{\sk}{\hh}
  }{
    \conf{\COMPOSE{C_1}{C_2}}{\sk}{\hh}
    \qmstep{p}{\val}
    \conf{\fail}{\sk}{\hh}
  }
\end{align*}
\caption{The inference rules determining the execution relation $\mstep{}{}$ of our operational semantics.
Here, $\smash{c \mstep{p}{\val} c'}$ indicates a step from $c$ to $c'$ with probability $p$ in which allocation value $\val$ has been chosen.
To avoid clutter, we omit $p$ and $\val$  if they equal their default values $p = 1$ and $\val = 0$. 
Hence, $c \mstep{}{} c'$ means $\smash{c \mstep{1}{0} c'}$.
}
\label{fig:op:rules}
\end{figure}%

\paragraph{Markov Decision Processes}
We formalize the expected runtime of programs as expected rewards of MDPs.
While we broadly adhere to~\citet[Chapter 10]{DBLP:books/daglib/0020348}, we consider MDPs with infinitely many states, infinite branching (actions), and (non-discounted) rewards. 
A thorough discussion of such MDPs is found in \cite[Chapter 7]{puterman2005markov}.
Intuitively, an MDP is a transition system that assigns to every state one or more\footnote{due to nondeterminism} probability distributions (distinguished by an action) over successor states. Moreover, whenever we \emph{leave} a state, we collect a reward.%
\begin{definition}[Markov Decision Process]
	A \emph{Markov decision process} $\MDP$ is a tuple 
	\begin{align*} \MDP \eeq \left( \MS, \MA, \MP, \MI, \MR \right)~, \end{align*}
	where 
	$\MS$ is a countable set of \emph{states},
	$\MA$ is a countable set of \emph{actions},
	$\MP\colon \MS \times \MA \times \MS \to [0,1]$ is the \emph{transition probability function}\footnote{i.e., (1) for all states $\ms \in \MS$ and actions $\ma \in \MA$, $\text{dist}(\ms, \ma) = \sum_{\ms' \in \MS} \MP(\ms,\ma,\ms') \in \{0, 1\}$ and (2) for all $\ms \in \MS$ there exists an action $\ma \in \MA$ such that $\text{dist}(\ms, \ma) = 1$. We call the actions $\ma$ with $\text{dist}(\ms, \ma) = 1$ the \emph{enabled actions} of state $\ms$.}, 
	$\MI \in \MS$ is the \emph{initial state}, and 
	$\MR\colon \MS \to \PosRealsInf$ assigns to every state a reward that is collected when leaving a state.\qedtriangle%
\end{definition}%
\noindent{}%
Consider until further notice a fixed MDP $\MDP = \left( \MS, \MA, \MP, \MI, \MR \right)$.
Our goal is to determine the maximal expected reward collected over all possible paths that start in $\MI$.
For that, we first resolve the nondeterminism, arising from multiple actions being enabled, by a \emph{scheduler} $\msched\colon \MS^{+} \to \MA$
	which chooses an action for every history of states. 
	We denote by $\MSCHED$ the set of all schedulers.%

Once a scheduler $\msched$ is fixed, a \emph{path} is a sequence $\ms_0 \ms_{1} \ms_{2} \ldots$ of states starting in $\ms_0 = \MI$ such that there is non-$0$ probability (under the distribution determined by $\msched$) of moving from $\ms_{n}$ to $\ms_{n+1}$.
Formally, the set $\mpaths{n}{\msched}$ of \emph{paths of length $n \in \Nats$ induced by scheduler $\msched$} 
is given by
\begin{align*}
  \mpaths{n}{\msched} \eeq 
   \left\{~
     \ms_0 \ldots \ms_{n-1}
     ~~\middle|~~
     \ms_0 = \MI,~~ \forall i \in \{1, \ldots, n-1\}\colon \MP(\ms_{i-1}, \msched(\ms_0 \ldots \ms_{i-1}), \ms_i) > 0
  ~\right\}~.
\end{align*}
The total probability of taking a path $\ms_0 \ldots \ms_{n-1}$ and the total reward collected along that path are then obtained by multiplying transition probabilities and summing up rewards, that is,  
\begin{align*}
 \MP(\ms_0 \ldots \ms_{n-1}) \eeq \prod_{i=1}^{n-1} \MP(\ms_{i-1}, \msched(\ms_0 \ldots \ms_{i-1}), \ms_i)
 \qand
 \MR(\ms_0 \ldots \ms_{n-1}) \eeq \sum_{i=0}^{n-2} \MR(\ms_{i})~~.
\end{align*}
The \emph{total expected reward} $\ExpRew{\MDP}$ of MDP $\MDP$ is then the maximal\footnote{since the reward is used for modeling the runtime} (over all schedulers $\msched$ and path lengths $n$) 
accumulated reward collected along all paths of length $n$ and induced by scheduler~$\msched$  weighted by each path's probability.
Formally, $\ExpRew{\MDP}$ is given by
\begin{align*}
  	\ExpRew{\MDP} \qeq 
  	\sup_{\msched \in \MSCHED} ~ \sup_{n \in \Nats} ~\sum_{\mpath \in \mpaths{n}{\msched}} \MP(\mpath) \cdot \MR(\mpath)~.
\end{align*}
\paragraph{The Operational Markov Decision Process}
We will now construct an MDP%
\begin{align*} 
	  \MDPOP{C,\sk,\hh} \qeq \left(\, \Confs,~ \Vals,~ \MPOP,~ (\conf{C}{\sk}{\hh}),~ \MROP \,\right) 
\end{align*}%
whose expected reward captures the expected runtime of executing program $C$ on state $\skhh$.
We call $\MDPOP{C,\sk,\hh}$ the \emph{operational MDP} of $C$ and $\skhh$.
This MDP's set of states are the program configurations $\Confs$, its action set are the allocation values $\Vals$, and its initial state is the configuration $(C, \sk, \hh)$.
The transition probability function $\MPOP$ is obtained from the execution relation 
$\mstep{}{}$ in \Cref{fig:op:rules} by accumulating the probability of all steps from $c$ to $c'$ \mbox{which choose the same value $\val$, \ie}%
\begin{align*}
    \MPOP\colon \Confs \times \Vals \times \Confs,
    \qquad
    (c,\val,c') \mapsto \sum_{p\colon c ~\mstep{p}{\val}~ c'} p~.
    \label{eq:mpop}
\end{align*}%
By construction of our execution relation, $\MPOP$ is a well-defined transition probability function:
\begin{enumerate}
	\item for all $c \in \Confs$ and $\val \in \Vals$, we have $\sum_{c' \in \Confs} \MPOP(c,\val,c') \in \{0, 1\}$ \quad and
	\item for all $c \in \Confs$, there exists $\val \in \Vals$ such that $\sum_{c' \in \Confs} \MPOP(c,\val,c') \eeq 1$.
\end{enumerate}
(2) follows from the fact that in $\mstep{}{}$ every configuration has a successor ($\sink \mstep{}{} \sink$ if necessary).

Finally, the reward function $\MROP$ reflects our runtime model, where collected reward corresponds to accumulated runtime: we collect reward $e$ whenever we execute $\TICK{e}$ and reward $\infty$ whenever we encounter a memory fault. 
Executing any other $\hpgcl$ statement consumes no runtime and thus reward 0 is collected. 
Hence, the reward function $\MROP$ is given by
\begin{align*}
  \MROP\colon \Confs \to \PosRealsInf~,
  \qquad\qquad 
  c \mapsto
  \begin{cases}
    \sk(e),        & \qif c = (\conf{\TICK{e}}{\sk}{\hh})~\text{or}~c = (\conf{\TICK{e}\fatsemi C}{\sk}{\hh})\\
    \infty,        & \qif c = (\conf{\fail}{\sk}{\hh}) \\
    0,             & \quad \text{otherwise}.
  \end{cases}
\end{align*}
Put together, we define the \emph{expected runtime} of $\hpgcl$ program $C$ on initial program state $\skhh$ as the expected reward $\ExpRew{\MDPOP{C,\sk,\hh}}$ of the operational MDP of $C$ and $\skhh$.

\section{Runtime Separation Logic}
\label{s:rsl}

We will now study \emph{runtime separation logic} (\RSL), a real-valued \enquote{logic} in the spirit of \cite{QSLpopl}, suitable for use in reasoning about upper bounds on expected runtimes of randomized algorithms that manipulate dynamic data structures.
Its key ingredient are two separating connectives, $\sepadd$ and $\sepmon$, which replace the standard separation logic connectives $\sepcon$ and $\sepimp$.
Though rediscovered independently by us, \RSL has been proposed for future investigation by \citet{christophPhd} and then investigated by \citet{haslbeckPhd}, with almost-exclusive focus on its meta-theory.%

\subsection{Runtimes}
Classical program verification employs \emph{logical predicates} which evaluate to $\true$ or $\false$ for reasoning about program correctness.
Our goal is to reason about a program's \emph{expected runtime}, \ie the average (possibly unbounded) number of time units it takes to execute the program.
To this end, we use genuine \emph{quantities}, which map states to \emph{numbers} instead of truth values.%
\begin{definition}[Runtimes]
	The set of \emph{runtimes} is given by%
	\begin{align*}
		\T \qeq \setcomp{\rta}{\rta \colon \States \to \PosRealsInf}~.
	\end{align*}
	We use metavariables $\rta, \rtb, \rtc$, and variations for runtimes. 
	Together with the order%
	\begin{align*}
		\rta \ppreceq \rtb \qqiff \forall\,(\sk, \hh) \in \States\colon \quad \rta(\sk, \hh) \lleq \rtb(\sk, \hh)
	\end{align*}
	the set of runtimes forms a complete lattice.
	
	We call a runtime $\rta$ \emph{finite} if $\forall\, (\sk,\hh) \in \States\colon \rta(\sk,\hh) < \infty$ and denote this by $f \prec \infty$.
	\qedtriangle
\end{definition}%
\noindent{}%
%
%
%
For any constant $r \in \PosRealsInf$, we simply write $r$ for the \emph{constant runtime} $\lambda (\sk,\hh)\mydot r$.
Similarly, we write $x$ for the runtime $\lambda (\sk,\hh)\mydot \sk(x) $.
We write \textsf{size} for the runtime corresponding to the number of allocated memory locations on the heap, \ie $\textsf{size} = \lambda(\sk,\hh)\mydot|\dom{\hh}|$.

\subsection{Truth vs.~Runtimes}

It is not overly helpful, in general, to think of runtimes as many-valued truth values.
However, if one subscribes to the fairly widespread conception that \emph{truth is something desirable} and \emph{falsehood is something undesirable}, then one could make the following analogy:
In the world of runtimes \mbox{--- usually ---} \emph{a runtime of \,$0$ is something desirable}, \emph{higher numbers are less and less desirable}, and \emph{$\infty$~is something undesirable}.
In that sense, one can well draw a connection between the undesirable $\false$ and $\infty$ on the one hand, and less well a connection between the desirable $\true$ and all finite runtimes.
This analogy is to some extent also reflected in our runtime model of the heap-manipulating \hpgcl constructs:
Memory faults (certainly undesirable!) cause infinite runtime.

\subsection{Gatekeeper Brackets}

We can turn Boolean predicates $\preda\colon \States \to \{\true, \false\}$ into runtimes in a way that preserves the above analogy:
The \emph{gatekeeper bracket} \textbf{$\iiverson{\preda}$} (reminiscent of \emph{Iverson brackets}) of a predicate $\preda$ is defined as the function
\begin{align*}
\iiverson{\preda} \colon\quad \States \To \{0,\, \infty\},\quad \iiverson{\preda}(\sk, \hh) \eeq \begin{cases}
0, & \textnormal{if $(\sk,\hh) \models \preda$}\\
\infty, & \textnormal{if $(\sk,\hh) \not\models \preda$}.
\end{cases}
\end{align*}
$\iiverson{\preda}$ can be understood as a gatekeeper which checks whether the documents we present to them (\ie the current state $(\sk, \hh)$) are in accordance with their internal guidelines (\ie the predicate $\preda$).
In the desirable case that our documents check out (\ie if $(\sk,\hh) \models \varphi$), they will let us pass with zero further delay: $\iiverson{\preda}(\sk,\hh) = 0$.
In the undesirable case that our documents do not check out \mbox{(\ie if $(\sk,\hh) \not\models \varphi$)}, the gatekeeper will hold us up for infinitely long: $\iiverson{\preda}(\sk,\hh) = \infty$.

\subsection{Separation Logic Atoms}
We will often specify memory-safety constraints using the standard (Boolean) separation logic atoms collected in \Cref{fig:sl-predicates} (cf., \cite{DBLP:conf/popl/IshtiaqO01,DBLP:conf/lics/Reynolds02}).
The \emph{empty heap predicate} $\slemp$ specifies that \emph{no} memory location is allocated; 
the predicate $\slvalidpointer{e}$ specifies that exactly one location, namely $e$, is allocated on the heap, 
and the \emph{points-to predicate} $\slsingleton{e}{e'}$ specifies also that precisely location $e$ is allocated on the heap and moreover that it stores content $e'$.
For a separation logic atom $\alpha$, we write $\iiverson{\alpha}$ to obtain its gatekeeper bracket.
\begin{table}[t]
	\renewcommand{\arraystretch}{1.5}
	\begin{adjustbox}{max width=\textwidth}
	\begin{tabular}{l@{\hspace{5em}}l}
	\hline \hline
	$\preda$ & $\skhh \models \preda$ \qiff $\ldots$ \\
	\hline
	$\slemp$ & $\dom{\hh} = \emptyset$ \\
	$\slvalidpointer{e}$ & $\dom{\hh} = \{ \sk(e) \}$ \\
	$\slsingleton{e}{e'}$ & $\dom{\hh} = \{ \sk(e) \}$ \qand $\hh(\sk(e)) = \sk(e')$ \\
	\hline 
	\hline
	\end{tabular}
	\end{adjustbox}
	\caption{Standard separation logic atoms and their semantics.}
	\label{fig:sl-predicates}
\end{table}%

\subsection{Standard Connectives on Runtimes}
\label{sec:std-connectives}

\subsubsection{Addition}
\label{sec:std-connectives-add}

We define standard mathematical operations (addition, multiplication, minimum, etc.) on runtimes pointwise, \eg $\rta + \rtb = \lambda(\sk,\hh)\mydot \rta(\sk,\hh) \pplus \rtb(\sk,\hh)$.
\emph{Addition} ($+$) aggregates undesirableness, namely the runtime $\rta$ \emph{and} the runtime $\rtb$.
In that sense, addition can be thought of as the runtime analogue to \emph{logical conjunction} aka \emph{logical and} ($\wedge$) which also aggregates undesirableness, namely \emph{falsehood}:
If either $a$ or $b$ are $\false$, $a \wedge b$ aggregates this falsehood and becomes itself $\false$.
This interpretation is also compatible when using gatekeepers: $\iiverson{\preda \wedge \psi} = \iiverson{\preda} + \iiverson{\psi}$.

\subsubsection{Monus}
\label{sec:std-connectives-mon}

The (pointwise) \emph{monus} operation $\rtb \monus \rta = \max\{\rtb - \rta, 0\}$ can be read as first carving out the runtime $\rta$ from the runtime $\rtb$ and then measuring only the remaining runtime.
Monus is the adjoint of addition, satisfying $\rtb \preceq \rtc + \rta$ iff $\rtb \monus \rta \preceq \rtc$.
Consequently, monus is the runtime analogue to logical implication ($\rightarrow$) in that $\alpha \monus \beta$ corresponds to $\beta \rightarrow \alpha$.
Keeping in mind that non-zero runtime is undesired, implication also carves out undesirableness, namely falsehood:
\mbox{For $a \rightarrow b$}, whatever falsehood $a$ carries is carved out from $b$.
Indeed, if $a$ is $\false$, then we carve out \emph{all} the falsehood from $b$ (since we are in a Boolean realm).
Thus, there cannot remain any falsehood and $a \rightarrow b$ is $\true$ in this case.
Dually, if $a$ is $\true$, then there is no falsehood to be carved out from $b$.
Thus, there remains only whichever falsehood $b$ already carries and $a \rightarrow b$ is just $b$ in this case.

Compatibility for gatekeepers is given by $\iiverson{\preda \rightarrow \predb} = \iiverson{\preda} \monus \iiverson{\predb}$ when using the convention $\infty \monus \infty = 0$.
Even $\infty \monus \infty = 0$ fits with our intuition of carving out undesirableness: 
$\infty$ is the most undesired, and $\infty \monus \infty$ would thus carve out \emph{all} undesirableness out of the most undesired.
What remains is no undesirableness whatsoever: 0.%

\subsubsection{Minimum}
\label{sec:std-connectives-min}

The \emph{minimum} of two runtimes, denoted $\rta \rtmin \rtb$, is the runtime analogue of logical disjunction ($\vee$). 
Applied to arbitrary runtimes, we can read $\rta \rtmin \rtb$ as a preference for smaller, \ie more desirable, runtimes, thus reflecting that we ultimately wish to reason about as tight as possible upper bounds. 
In particular, we prefer a finite runtime over an infinite one indicating undesired behavior. 
Analogously, $a \vee b$ prefers the more desirable (more true) truth value.
Compatibility for gatekeepers is given by $\iiverson{\preda \vee \predb} = \iiverson{\preda} \rtmin \iiverson{\predb}$.

\subsubsection{Multiplication}
\label{sec:std-connectives-mul}

We typically use runtime multiplication $\rta \cdot \rtb$ in two restricted forms: 
Firstly, we write $p \cdot \rta$ for the runtime $\rta$ scaled by some probability $p \in [0,1]$. 
Throughout this paper, we adapt the convention that $0 \cdot \infty = \infty \cdot 0 = 0$.
Secondly, we write $\iverson{\preda} \cdot \rta$ to specify a \emph{conditional runtime}~$\rta$ that only amounts to $f$ if the predicate $\preda$ holds and otherwise to $0$.
Here, the \emph{Iverson bracket} $\iverson{\preda}$ (defined as $\iverson{\preda}(\sk, \hh) = 1$ if $(\sk,\hh) \models \preda$ and $\iverson{\preda}(\sk, \hh) = 0$ otherwise) acts as a logical guard that \enquote{activates} the runtime $\rta$ if and only if $\preda$ holds. 
Conditional runtimes $\iverson{\preda} \cdot \rta$ can also be expressed with gatekeeper brackets, since %
	$\iverson{\preda} \cdot \rta = \iiverson{\neg \preda} \rtmin \rta = \rta \monus \iiverson{\preda}$. 
%

\subsection{Separating Connectives on Runtimes}

To enable \emph{local reasoning} about expected runtimes of randomized \emph{heap-manipulating} programs, we derive quantitative versions of separation logic's characteristic connectives --- the separating conjunction $\preda \sepcon \psi$ and the separating implication $\preda \sepimp \psi$.
We will obtain them from our runtime analogues for conjunction and implication, namely addition $\rta + \rtb$ and monus $\rtb \monus \rta$.

\subsubsection{Separating Addition}

Recall from \cite{DBLP:conf/popl/IshtiaqO01} that%
\begin{align*}
	(\sk, \hh) \mmodels \preda \sepcon \predb \qqiff
	\exists \, \hh_1, \hh_2 \textnormal{ with } \hh \eeq \hh_1 \sepcon \hh_2\colon \quad (\sk, \hh_1) \mmodels \preda ~\text{ and }~ (\sk, \hh_2) \mmodels \predb~,
\end{align*}
\ie the \emph{separating conjunction} $\preda \sepcon \psi$ is $\true$ for a state $(\sk,\hh)$ if, among all partitionings of the heap~$\hh$ into $\hh_1$ and $\hh_2$, there exists one such that $\preda$ is $\true$ for $(\sk,\hh_1)$ \emph{and} $\psi$ is true for $(\sk,\hh_2)$. 
Notice that the \enquote{and} aggregates \emph{undesirableness} (falsehood, cf.~\Cref{sec:std-connectives-add}), whereas the $\exists$~quantifier, by choosing a heap partitioning, optimizes globally for the \emph{most desirable} outcome (truth). 

Towards connecting runtimes $\rta$ and $\rtb$ in a similar fashion aggregating as little undesirableness as possible, it is natural to replace the falsehood aggregator \enquote{and} by its runtime analogue~$+$.
As for the $\exists$ quantifier which governs the choice of partitioning, it is natural to replace this with a $\min$, so that we aggregate as little runtime as possible.
This leads us to the following definition:%
\begin{definition}[Quantitative Separating Addition~\textnormal{\cite{christophPhd,haslbeckPhd}}]	
\label{def:sepadd}
	The \emph{quantitative separating addition} $\rta \sepadd \rtb$ of two runtimes $\rta,\rtb \in \T$ is defined as%
	\begin{align*}
	\rta \ssepadd \rtb \qeq \lambda (\sk, \hh)\mydot \min_{\hh_1, \hh_2} \setcomp{\rta (\sk, \hh_1) + \rtb(\sk, \hh_2)}{\hh = \hh_1 \sepcon \hh_2}~.\tag*{\qedtriangle}
	\end{align*}
\end{definition}%
\begin{example}
We typically use $\rta \sepadd \rtb$ to cut off parts of the heap (specified by $\rta$) before evaluating the runtime in $\rtb$.
For example, to evaluate $\isingleton{7}{3} \ssepadd \textsf{size}$, we first attempt to cut off the single memory location $7$ such that $\isingleton{7}{3}$ evaluates to $0$ and then measure the number of locations in the remaining heap.
As a truly quantitative example, $\textsf{size} \sepadd \textsf{size} = \textsf{size}$.%
\qedtriangle%
\end{example}

\subsubsection{Separating Monus}

Recall from \cite{DBLP:conf/popl/IshtiaqO01} that%
\begin{align*}
	(\sk, \hh) \mmodels \preda \sepimp \predb \qqiff
    \forall \, \hh' \textnormal{ with } \hh' \disjoint \hh\colon \quad (\sk, \hh') \models \preda ~~\text{ implies }~~ (\sk, \hh \sepcon \hh') \models \predb, 
\end{align*}
\ie the state $(\sk,\hh)$ satisfies the \emph{separating implication} $\preda \sepimp \psi$
iff for every well-defined heap extension $\hh'$ of $\hh$ (\ie $\hh' \disjoint \hh$) specified by $\preda$ (\ie $(\sk,\hh') \models \preda$), the combined state $(\sk,\hh \sepcon \hh')$ satisfies $\psi$.
In other words, $\psi$ must hold for the worst (for satisfying $\psi$) heap extensions admitted by $\preda$. 

As we saw in \Cref{sec:std-connectives-mon}, the \enquote{implies} carves out the \emph{undesirableness} (falsehood) of its left operand from its right one, whereas the $\forall$~quantifier, by considering \emph{every} heap extension, optimizes globally for the \emph{least desirable} outcome (falsehood). 
Towards connecting runtimes $\rta$ and $\rtb$ in a similar fashion carving out as little undesirableness as possible, it is natural to replace \enquote{implies} by its runtime analogue~$\monus$.
As for the $\forall$ quantifier which optimizes for falsehood, it is natural to replace this with a quantitative analogue that also optimizes for most undesirable: $\sup$,\footnote{Not a $\max$ as there are potentially infinitely many extensions $\hh'$.} so that we aggregate as much runtime as possible.
This leads us to the following definition:%
\begin{definition}[Quantitative Separating Monus~\textnormal{\cite{haslbeckPhd}}]	
\label{def:sepmon}
	 The \emph{quantitative separating monus} $\rta \sepmon \rtb$ of two runtimes $\rta, \rtb \in \T$ is defined as%
	\begin{align*}
	\rta \sepmon \rtb \qeq \lambda (\sk, \hh)\mydot \smash{\sup_{\hh'}}~ \setcomp{\rtb(\sk, \hh \sepcon \hh') \monus \rta(\sk,\hh')}{\hh' \disjoint \hh}~,
	\end{align*}
	where $\infty \monus \infty = 0$.\qedtriangle
\end{definition}%
%
%
%
%
%
\begin{example}
	We typically use $\rta \sepmon \rtb$ to extend the heap before evaluating $\rtb$ on the extended heap. 
	For example, to evaluate $\isingleton{7}{3} \sepmon \textsf{size}$, we first extend the heap $\hh$ by $\singleheap{7}{3}$ and then count the number of allocated locations. 
	If location $7$ is not allocated in $\hh$, the result is $|\dom{\hh}|+1$; otherwise, $\isingleton{7}{3}$ is $\infty$ for every heap extension and the overall result is $\textnormal{\enquote{something}} \monus \infty = 0$.%
	\qedtriangle%
\end{example}%

\subsubsection{Properties of \,$\sepadd$ and $\sepmon$}

\citet[Chapter 4]{haslbeckPhd} showed that the separating addition~$\sepadd$ and the separating monus~$\sepmon$ enjoy most desirable properties of the classical separating connectives collected by \citet{DBLP:conf/lics/Reynolds02}.
In particular, $\sepadd$ and $\sepmon$ are \emph{adjoint}, \ie for all runtimes $\rta,\rtb,\rtc \in \T$, 
	\begin{align*}
	   \rtc \ppreceq \rta \sepadd \rtb \qquad \text{iff} \qquad \rtb \sepmon \rtc \ppreceq \rta~.
	\end{align*}%
Adjointness immediately yields the \emph{modus ponens} property: subtracting and adding the same runtime $\rta$ from and to a runtime $\rtb$ overapproximates $\rtb$, \ie $\rtb \ppreceq \rta \sepadd (\rta \sepmon \rtb)$.	
Moreover, $(\T, \, \sepadd,\, \iemp)$ forms a commutative monoid, \ie $\sepadd$ is associative ($\rta \sepadd (\rtb \sepadd \rtc) = (\rta \sepadd \rtb) \sepadd \rtc$), $\iemp$ is the neutral element ($\rta \sepadd \iemp = \iemp \sepadd \rta = \rta$), and $\sepadd$ is commutative ($\rta \sepadd \rtb = \rtb \sepadd \rta$).
Many properties of standard addition naturally carry over to the separating addition. For example, $\sepadd$ is monotone and multiplication distributes over $\sepadd$, \ie $p \cdot (f \sepadd g) = p\cdot f \sepadd p\cdot g$.
We remark that standard addition and separating addition are \emph{not}\footnote{as opposed to our claim in \cite{AERTpopl}} sub-distributive, \ie $\rta + (\rtb \sepadd \rtc) \ssucceq (\rta \sepadd \rtb) \pplus (\rta \sepadd \rtc)$ does \emph{not} hold in general (choose $\rta=1$ and $\rtb=\rtc=0$).
	

\subsection{Runtime Specifications}

\paragraph{Pure runtimes}
A runtime $\rta$ is \emph{pure} if it does not depend on the heap, \ie $\rta(\sk,\hh) = \rta(\sk,\hh')$ for all $\hh,\hh' \in \Heaps$.
Examples of pure runtimes include $5$, $x + y$, and $\iiverson{x = y}$, but not $\textsf{size}$ or $\iemp$.

In a separating addition $\rta \sepadd \rtb$ where $\rta$ is pure, heap portions that would increase $\rtb$ will always be evaluated in $\rta$ (thus not at all), since $\sepadd$ tries to minimize the overall runtime. 
In particular, we have $\rta \preceq \rta \sepadd \rtb$ and $0 \sepadd \rtb  \preceq \rtb$, where $0 \sepadd \rtb$ is the runtime analogue to the smallest intuitionistic extension $\true \sepcon \preda$ of predicate $\preda$ (cf. \cite{DBLP:conf/lics/Reynolds02}).
Notice that $\rta \preceq \rta \sepadd \rtb$ does not hold for arbitrary $\rta$: 
\Eg, $ \textsf{size}  \not\preceq  \textsf{size} \sepadd \ivalidpointer{x} $, since the LHS can become 1 unit bigger than the RHS.

To explicitly prohibit such effects, we denote by $\emprun{\rta}$ the runtime $\rta$ ($\rta$ arbitrary) that is required to be evaluated in the empty heap (otherwise it is $\infty$), \ie we require $\rta$ \emph{and} the empty heap:
\begin{align*}
   \emprun{\rta} ~{}\coloneqq{}~\rta \pplus \iemp~.
\end{align*}%
If, additionally, $\rta$ is pure, then $\rta + \rtb = \emprun{\rta} \sepadd \rtb$ holds for all runtimes $\rtb$.

\begin{example}
	Consider the runtime $\rta$ given by
	\begin{align*}
	  \rta \qeq \isingleton{a}{b} \sepadd \isingleton{b}{c} \sepadd \isingleton{c}{d} \sepadd \emprun{\iiverson{a = d}} \ssepadd \emprun{42}~.
	\end{align*}%
	It evaluates to $42$ for every state whose heap contains a circle $a \mapsto b \mapsto c \mapsto a$ and nothing else; otherwise, it evaluates to $\infty$. We can think of $\rta$ as having two components: the gatekeeper brackets impose safety constraints to avoid undefined behavior (which would lead to $\infty$) and $\emprun{42}$ represents the time units consumed if all safety constraints are met.\qedtriangle
\end{example}
\paragraph{Quantifiers}
In the Boolean case, $\exists$ optimizes for the most desirable (truth), whereas $\forall$ optimizes for the least desirable (falsehood).
In \RSL, smaller runtimes are more desirable than larger ones.
The \RSL analogue to $\exists x\colon \preda$ is thus an \emph{infimum}, denoted by $\Inf x\colon \rta$, which picks a value for $x$ to minimize runtime $\rta$.
The \RSL analogue to $\forall x\colon \preda$ is a \emph{supremum}, denoted by $\Sup x\colon \rta$, which picks a for $x$ to maximize runtime $\rta$. 
To formally define our runtime quantifiers $\Inf$ and $\Sup$, we denote by
\begin{align*}
  \rta\subst{x}{e} \qeq \lambda(\sk,\hh)\mydot \rta(\sk\statesubst{x}{\sk(e)}, \hh)
\end{align*}%
the \enquote{syntactic} replacement of every \enquote{occurrence} of variable $x$ in $\rta$ by expression $e$.
We then define
\begin{align*}
   \Inf x\colon \rta &\qeq \lambda(\sk,\hh)\mydot \inf \setcomp{\rta\subst{x}{v}(\sk,\hh)}{v \in \Vals} \qqand\\
    \Sup x\colon \rta &\qeq \lambda(\sk,\hh)\mydot \sup \setcomp{\rta\subst{x}{v}(\sk,\hh)}{v \in \Vals}~.
\end{align*}%
Further details on these quantifiers are found in \cite{QSLpopl,DBLP:journals/pacmpl/BatzKKM21}.
\noindent
To specify runtimes over data structures of arbitrary sizes, we also define runtime variants of \emph{iterating separating conjunctions} and \emph{inductive predicate definitions} (cf. \cite{DBLP:conf/lics/Reynolds02}).

\paragraph{Separating sums} 

To specify runtimes evaluated in variably-sized contiguous memory blocks, we use the iterative separating addition, called \emph{separating sum} for short, given by
\begin{align*}
  \bigoplus_{i=e}^{e'} \rta \qeq \lambda(\sk,\hh)\mydot
  \begin{cases}
    \left(\rta\subst{i}{\sk(e)} \sepadd \bigoplus_{i=e+1}^{e'} \rta\right) (\sk,\hh) & \qif \sk(e) \leq \sk(e')
    \\
  	\iemp(\sk,\hh) & \qif \sk(e) > \sk(e')~,
  \end{cases}
\end{align*}%
where $\rta \in \T$ and $e,e'$ are arithmetic expressions evaluating to values in $\Nats$.
\begin{example}
Consider the following runtime $\rta$ over the variables $x,y$, and $i$:
\begin{align*}
  \rta \qeq \bigoplus_{i=1}^{y} \left(~\Inf z\colon \isingleton{x + i - 1}{z} \ssepadd \emprun{(z \cdot z)}~\right)~.
\end{align*}%
$\rta$ \emph{specifies} --- read: \emph{is not $\infty$ iff} --- that the heap is an array of length $y$ and evaluates to
the sum of squares of the values stored in the array.
The $\isingleton{x + i - 1}{z}$'s ensure the array structure and we use \RSL's $\Inf$ quantifier to (\enquote{existentially}) refer to each location $x+i-1$'s content $z$.\qedtriangle%
\end{example}

\paragraph{Coinductive runtime definitions}
We specify runtimes of linked data structures using coinductive definitions, \ie \emph{greatest} fixed points $\gfp \Psi$ of recursive runtime equations of the form 
\begin{align*}
  \rta \eeq \Psi(\rta)   \qquad \text{where} \quad  
  \Psi\colon \T \to \T~\text{monotone}~.
\end{align*}%
%
Given a coinductive definition $\rta = \Psi(f)$, we just write $\rta$ to refer to its solution $\gfp \Psi$.
For example, a runtime $\elist{e}{e'}$ specifiying that the heap is a singly-linked list segment from $e$ to $e'$ is given by
\begin{align*}
  \elist{e}{e'} \qeq\quad 
  \underbrace{\pureemp{e=e'}}_{\text{empty list}} 
  \qquad\underbrace{{\rtmin}}_{\mathclap{\substack{\textnormal{\enquote{or}}\\\textnormal{cf.~\Cref{sec:std-connectives-min}}}}}\qquad 
  \underbrace{
    \Inf z \colon \isingleton{e}{z} \sepadd \elist{z}{e'}
  }_{\text{lists of length $\geq$ 1}}.
\end{align*}%
In words, a list specifies either the empty list such that $e = e'$  \emph{or} a non-empty list in which $e$ points to some location $z$ that is the head of a list segment to $e'$.
We can easily extend this definition to obtain the \emph{size} of the list from $e$ to $e'$, or $\infty$ if the heap is not such a list by
\begin{align*}
  \esize{e}{e'} \qeq
  \pureemp{e=e'} 
  \quad\rtmin\quad
  (\emprun{1} \sepadd \Inf z \colon \isingleton{e}{z} \sepadd \esize{z}{e'})~.
\end{align*}%

\section{The Expected Runtime Calculus for $\boldhpgcl$}
\label{s:ert}

\begin{table}[t]
	\renewcommand{\arraystretch}{1.5}
	\begin{adjustbox}{max width=\textwidth}
		\begin{tabular}{@{\hspace{1em}}l@{\hspace{2em}}l@{\hspace{1em}}}
			\hline\hline
			$\boldsymbol{C}$			& $\boldsymbol{\textbf{\textsf{ert}}\,\left \llbracket C\right\rrbracket  \left(\rta \right)}$\\
			\hline\hline
			$\TICK{e}$				& $e + \rta$ \quad or equivalently \quad $\emprun{e} \sepadd \rta$\\
			$\ASSIGN{x}{e}$			& $\rta\subst{x}{e}$ \\
			%
			%
			%
			$\ALLOC{x}{e}$
			& $\Sup\, v \colon \Bigl(\bigoplus_{i=1}^{e}\, \isingleton{v+i-1}{0}\Bigr) \sepmon \rta\subst{x}{v}$ \\
			%
			$\ASSIGNH{x}{e}$			& $\Inf\, v \colon \isingleton{e}{v} \sepadd \Bigl(\isingleton{e}{v} \sepmon \rta\subst{x}{v}   \Bigr)$ \\
			$\HASSIGN{e}{e'}$			& $\ivalidpointer{e} \sepadd \Bigl(\isingleton{e}{e'} \sepmon \rta \Bigr)$ \\	
			$\FREE{e}$			& $\ivalidpointer{e} \sepadd \rta$ \\	
			$\PCHOICE{C_1}{p}{C_2}$		& $p \cdot \ert{C_1}{\rta} \pplus (1- p) \cdot \ert{C_2}{\rta}$ \\
			$\ITE{\varphi}{C_1}{C_2}$		& $\iverson{\varphi} \cdot \ert{C_1}{\rta} \pplus\iverson{\neg\varphi} \cdot \ert{C_2}{\rta}$ \\
			$\COMPOSE{C_1}{C_2}$		& $\ert{C_1}{\ert{C_2}{\rta}}$ \\
			$\WHILEDO{\varphi}{C'}$		& $\lfp \rtb\mydot ~ \iverson{\neg \varphi} \cdot \rta \pplus \iverson{\varphi} \cdot \ert{C'}{\rtb}$\\[1ex]
			\hline\hline
		\end{tabular}
	\end{adjustbox}
	\renewcommand{\arraystretch}{1}
	\vspace{1ex}
	\caption{Rules for the $\ertsymbol$--transformer. Here $v$ is a fresh variable not occuring in $e$ or $\rta$.
	}	
	\label{table:ert}
\end{table}%
We now extend the expected runtime calculus of \citet{kaminski2018weakest} by \RSL, thus enabling capabilities for local reasoning about expected runtimes of programs that access and mutate dynamic memory.
This is inspired by the quantitative separation logic of \citet{QSLpopl}.

The backward-moving \emph{expected runtime transformer}
%
	%
	\begin{align*}
	\ertsymbol \colon \hpgcl \to (\T \to \T)
	\end{align*}
	%
%
is defined by induction on $\hpgcl$ in \Cref{table:ert}. The transformer is defined in such a way that 
\begin{align*}
\ert{C}{0}(\sk,\hh)
 \eeq
\begin{cases}
   \text{``expected number of ticks} &  \\[-.3em]
   \text{\quad when executing $C$ on $(\sk,\hh)$"}~, & \text{if $C$ memory-safe on $(\sk,\hh)$} \\
   \infty~, &\text{if $C$ not memory-safe on $(\sk,\hh)$.}
\end{cases}
\end{align*}
--- a fact we will make formally precise w.r.t. our operational MDP semantics in \Cref{s:op}. More generally, to achieve compositionality, $\ert{C}{\rta}(\sk,\hh)$ is a runtime that gives us the expected number of ticks  it takes to first execute the program $C$ on $(\sk,\hh)$ and then let time $\rta$ pass, or $\infty$ in case $C$ is not memory safe on $(\sk,\hh)$. We refer to $\rta$ as the \emph{postruntime}. Let us go over the rules. 

\paragraph{Time consumption} 
How long does it take to execute $\TICK{e}$ and then let time $\rta$ pass? $e + \rta$.
Since $e$ is always pure, this is equivalent to $\emprun{e} \sepadd \rta$, which can be more handy for local reasoning.

\paragraph{Assignment, Sequential composition, Conditional and probabilistic choice, While loop}

All these cases have been treated in detail in \cite[Section 7.3, p.163--166]{Kam19}.
The only difference is that in \cite{Kam19} each basic instruction consumes 1 unit of time whereas we have here (by means of $\TICKsymbol$) a more fine-grained runtime model.

In order to be somewhat self-contained, however, let us quickly go over the case for assignments and probabilistic choice:
How long does it take to execute $\ASSIGN{x}{e}$ (on initial state $(\sk, \hh)$) and then let time $\rta$ pass?
Executing $\ASSIGN{x}{e}$ itself takes no time.
But we need to evaluate $\rta$ in the state that is reached \emph{after} the assignment, \ie the final state $(\sk\statesubst{x}{\sk(e)}, \hh)$.
This is precisely $\rta\subst{x}{e}$ but evaluated in the initial state $(\sk, \hh)$.
For the probabilistic choice $\PCHOICE{C_1}{p}{C_2}$, we simply take the weighted average of the expected time it takes to either execute $C_1$ or $C_2$ and then let time $\rta$ pass.%

\paragraph{Allocation} 

Again, $\ALLOC{x}{e}$ itself takes no time, but we need to measure $\rta$ in a state where the heap has been extended by $e$ contiguous memory locations, all initialized to store value 0.
$\bigoplus_{i=1}^{e}\, \isingleton{v+i-1}{0}$ describes precisely such an extension. By $\sepmon$ we impose this extension on $\rta$.
What is left is to handle the nondeterminism arising from the memory allocator's choice of the first new location $\val$.
As we do upper-bound (worst-case) reasoning, we resolve this nondeterminism via a maximizing $\Sup$ and measure $\rta$ in the extended heap and in a stack where $x$ has been updated to $\val$.%

\paragraph{Lookup} For $\ASSIGNH{x}{e}$, we first ensure via $\isingleton{e}{v} \sepadd \cloze{f}$ that $e$ is indeed allocated.
(If not, the whole term becomes $\infty$, indicating a memory fault.)
In connection with the $\Inf$ quantifier, we moreover select the value $\val$ that $e$ points to.
The $\sepadd$ has now carved the memory location $\singleheap{e}{\val}$ out from the heap.
Since we did not want to manipulate the heap, we reinsert $\singleheap{e}{\val} $ via $\isingleton{e}{\val} \sepmon \cloze{f}$.
What is left is to measure $\rta$ but in a stack where $x$ has been \mbox{updated to $\val$}.

\paragraph{Mutation}
For $\HASSIGN{e}{e'}$, we first ensure via $\ivalidpointer{e} \sepadd \cloze{f}$ that $e$ is actually allocated, but care not about its stored value since we are about to overwrite it.
The $\sepadd$ has now carved location $e$ out from the heap.
In order to overwrite the value stored at $e$ with $e'$, we insert $\singleheap{e}{e'}$ into the heap via $\isingleton{e}{e'} \sepmon \cloze{f}$.
This insertion of location $e$ cannot fail (become $\infty$) because we have previously carved out precisely location $e$ (unless $e$ was not allocated in the first place, in which case the whole term becomes $\infty$ anyway).
What is left is to measure $\rta$ in the so-manipulated heap.

\paragraph{Deallocation}

For $\FREE{e}$, we need to measure $\rta$ in a heap where the location $e$ has been carved out.
As demonstrated numerous times previously, such carving out is achieved by $\ivalidpointer{e} \sepmon \cloze{f}$.%

%
%
\begin{theorem}[Healthiness Properties of $\ertsymbol$]
	\label{thm:ert-health}
	Let $C$ be a program; $F = \{\rta_1 \preceq \rta_2 \preceq \ldots\}$ be an $\omega$-chain of runtimes; $\rta, \rtb$ be runtimes; and $\rtc \in \T$ be a constant runtime.
	Then the following hold:
	\begin{enumerate}
		\item \emph{$\omega$-continuity}: \qquad $\ert{C}{\sup F} \eeq \sup \ert{C}{F}$\vspace{.5em}
		
		\item \emph{Monotonicity:} \qquad $\rta \ppreceq \rtb$ \qimplies $\ert{C}{\rta} \ppreceq \ert{C}{\rtb}$ \vspace{.5em}
		
		\item \emph{Sub-additivity:}\qquad $\ert{C}{\rta + \rtb}
		   \ppreceq
		   \ert{C}{\rta} + \ert{C}{\rtb}$ \vspace{.5em}

		 \item \emph{Constant propagation:} \qquad
		 %
		   $ \ert{C}{\rtc + \rta} \ppreceq \rtc + \ert{C}{\rta}$
	\end{enumerate}
	
\end{theorem}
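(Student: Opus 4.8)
The plan is to prove the four properties by induction on the structure of $C$, in the order (2), then (1), then (3) and (4) --- monotonicity must come first because the statement of (1) already presupposes that $\ert{C}{F}$ is again a chain. For the atomic statements everything is essentially immediate: each right-hand side of \Cref{table:ert} is assembled from operations that are monotone and $\omega$-continuous in the argument in which the postruntime occurs --- pointwise $+$, scaling by a probability $p$, multiplication by a $\{0,1\}$-valued Iverson guard, the substitution $\rta\mapsto\rta\subst{x}{e}$, the quantifiers $\Sup$ and $\Inf$, and the separating operators $\sepadd,\sepmon$. (Continuity of $\sepadd$ in its second argument uses that, for a fixed \emph{finite} heap, the $\min$ of \Cref{def:sepadd} ranges over finitely many partitions; continuity of $\sepmon$ uses that directed suprema commute with the $\sup_{\hh'}$ of \Cref{def:sepmon} and with $(-)\monus c$.) For $\ASSIGNH{x}{e}$ one additionally observes that, since $\isingleton{e}{v}$ is $\{0,\infty\}$-valued, the combination $\isingleton{e}{v}\sepadd(\cdots)$ together with the $\Inf$ amounts to \emph{selecting} the unique value $v=\hh(\sk(e))$ (or yielding $\infty$ when $\sk(e)\notin\dom{\hh}$), so the nominal infimum over $\Vals$ is in fact a pointwise selection and poses no continuity threat. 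The compound constructs $\COMPOSE{C_1}{C_2}$, $\PCHOICE{C_1}{p}{C_2}$, and $\ITE{\varphi}{C_1}{C_2}$ are then dispatched by composing the induction hypotheses with these (monotone, continuous) outer operations.

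The $\WHILEDO{\varphi}{C'}$ case is the only genuinely non-routine one. Write $\Phi_{\rta}$ for its characteristic functional, $\Phi_{\rta}(\rtb)=\iverson{\neg\varphi}\cdot\rta+\iverson{\varphi}\cdot\ert{C'}{\rtb}$; by the induction hypothesis $\ert{C'}{-}$ is $\omega$-continuous, hence so is $\Phi_{\rta}$, and Kleene's fixed-point theorem gives $\ert{\WHILEDO{\varphi}{C'}}{\rta}=\lfp \Phi_{\rta}=\sup_{k\in\Nats}\Phi_{\rta}^{k}(0)$. Monotonicity in $\rta$ is immediate ($\rta\ppreceq\rtb$ implies $\Phi_{\rta}\ppreceq\Phi_{\rtb}$ pointwise, and $\lfp$ is monotone). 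For $\omega$-continuity in $\rta$ one shows by an inner induction on $k$ that $\Phi_{\sup F}^{k}(0)=\sup_{\rta\in F}\Phi_{\rta}^{k}(0)$ --- using the induction hypothesis (continuity of $\ert{C'}{-}$), continuity of $\iverson{\varphi}\cdot(-)$, and the identity $\sup_{\rta\in F}a_{\rta}+\sup_{\rta\in F}b_{\rta}=\sup_{\rta\in F}(a_{\rta}+b_{\rta})$ for chains indexed by $F$ --- and then exchanges the two suprema: $\lfp \Phi_{\sup F}=\sup_{k}\sup_{\rta\in F}\Phi_{\rta}^{k}(0)=\sup_{\rta\in F}\lfp \Phi_{\rta}$. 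For (3) and (4) the loop case goes through the least-pre-fixed-point principle: one checks that $\ert{\WHILEDO{\varphi}{C'}}{\rta}+\ert{\WHILEDO{\varphi}{C'}}{\rtb}$ (respectively $\rtc+\ert{\WHILEDO{\varphi}{C'}}{\rta}$) is a pre-fixed point of $\Phi_{\rta+\rtb}$ (respectively $\Phi_{\rtc+\rta}$) by a one-line calculation using the induction hypothesis for $C'$, monotonicity of $\iverson{\varphi}\cdot(-)$, distributivity of the guards over $+$, and $\iverson{\varphi}+\iverson{\neg\varphi}=1$; since $\lfp$ returns the least pre-fixed point, the claimed inequality follows.

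For sub-additivity and constant propagation on the atomic statements we read off the explicit effect of each transformer. The point to keep in mind is that \emph{equality} $\ert{C}{\rta+\rtb}=\ert{C}{\rta}+\ert{C}{\rtb}$ already fails for $\TICK{e}$, where $\ert{\TICK{e}}{\rta+\rtb}=e+\rta+\rtb\ppreceq(e+\rta)+(e+\rtb)=\ert{\TICK{e}}{\rta}+\ert{\TICK{e}}{\rtb}$, so (3) cannot be reduced to any distributivity of $\ert{C}{-}$ over $+$; recall moreover that $+$ and $\sepadd$ are not even sub-distributive (\Cref{s:rsl}). For $\ASSIGN{x}{e}$, $\ASSIGNH{x}{e}$, $\HASSIGN{e}{e'}$, and $\FREE{e}$, however, evaluating the right-hand side at $\skhh$ shows that $\ert{C}{-}$ is in fact \emph{additive} and \emph{constant-propagating with equality} (returning $\infty$ on memory-unsafe states, which is consistent because $\infty=\rtc+\infty$), since the $\{0,\infty\}$-valued gatekeeper parts force a single, state-determined heap reshuffle. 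For $\ALLOC{x}{e}$, each $v$-summand $\bigl(\bigoplus_{i=1}^{e}\isingleton{v+i-1}{0}\bigr)\sepmon\rta\subst{x}{v}$ is additive and constant-propagating in $\rta$ (again because the $\sepmon$-operand selects a unique extending heap), and the inequalities $\Sup v\colon(X_v+Y_v)\ppreceq\Sup v\colon X_v+\Sup v\colon Y_v$ and $\Sup v\colon(\rtc+X_v)=\rtc+\Sup v\colon X_v$ close the gap. Compound statements reduce to the induction hypotheses plus the already-proven monotonicity exactly as for $\omega$-continuity.

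The main obstacle I anticipate is not any single deep step but the bookkeeping discipline in the loop case and in the atomic heap computations: one must arrange that every supremum appearing in the loop argument is a directed (chain) supremum --- which works precisely because the Kleene iteration $\sup_{k}\Phi^{k}(0)$ only ever involves such suprema, in contrast to the $\Inf$ in the lookup rule --- and one must handle $\infty\monus\infty=0$ together with the absence of a suitable heap partition or extension uniformly, so that the claimed semantic descriptions of the atomic transformers (and hence additivity/constant propagation with equality) really hold on \emph{all} states, including the memory-unsafe ones. Modulo this care, the argument is the heap-free one of \citet{kaminski2018weakest} extended by the $\sepadd/\sepmon$ closure properties established by \citet{haslbeckPhd}.
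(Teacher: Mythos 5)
Your proposal is correct and follows the same structural-induction skeleton as the paper, but it diverges in two places worth noting. First, you prove monotonicity before $\omega$-continuity on the grounds that (1) ``presupposes'' $\ert{C}{F}$ is a chain; the paper does the reverse and obtains (2) as a corollary of (1) (using $\{\rta,\rtb,\rtb,\dots\}$). There is in fact no chicken-and-egg problem with the paper's order: the right-hand supremum in (1) exists in the complete lattice $\T$ regardless of chain-ness, and the monotonicity of $\Phi_{\rta}$ \emph{in the subscript $\rta$} --- needed to see that $\{\Phi_{\rta}\mid\rta\in F\}$ is a chain of transformers in the loop case --- follows trivially from $\rta$ occurring only additively in $\Phi_{\rta}(\rtb)=\iverson{\neg\varphi}\cdot\rta+\iverson{\varphi}\cdot\ert{C'}{\rtb}$, without any appeal to monotonicity of $\ert{C'}{-}$. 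So your reordering is safe but not forced. Second, and more substantively, for the loop case of (3) and (4) you invoke Park induction: verify that $\ert{\WHILEDO{\varphi}{C'}}{\rta}+\ert{\WHILEDO{\varphi}{C'}}{\rtb}$ (resp.\ $\rtc+\ert{\WHILEDO{\varphi}{C'}}{\rta}$) is a pre-fixed point of $\Phi_{\rta+\rtb}$ (resp.\ $\Phi_{\rtc+\rta}$). The paper instead performs an inner induction over the Kleene iterates $\charfun{\cdot}^{n}(0)$ and then passes to the supremum using the already-established $\omega$-continuity. Both routes need only the structural induction hypothesis for $C'$ plus elementary guard algebra; yours is shorter and does not consume $\omega$-continuity, while the paper's makes the approximation structure explicit. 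Your observations about the atomic cases --- that the heap-manipulating statements are additive and constant-propagating \emph{with equality} because the $\{0,\infty\}$-valued gatekeepers force a state-determined heap reshuffle (including the $\infty=\rtc+\infty$ case), that $\TICK{e}$ breaks equality, and that $\ALLOC{x}{e}$ only loses equality through the outer $\Sup v$ --- all match the paper's appendix lemmas.
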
%
%
%
%
%
\begin{remark}[(Non-)\,$\omega$-continuity of $\ertsymbol$]
	$\omega$-continuity is actually somewhat \emph{unexpected}.
	The weakest preexpectation transformer of \citet{QSLpopl} who deals with determining (minimal) expected values of essentially the same $\hpgcl$ is indeed \emph{not} $\omega$-continuous due to the unbounded nondeterminism arising from memory allocation.
	Probably guided by this result, \citet[Section~4.2, p.46]{haslbeckPhd} \emph{claims} that the $\ertsymbol$ transformer is also \emph{not} $\omega$-continuous.
	However, since $\ertsymbol$ is a \emph{maximizer} and suprema commute, $\ertsymbol$ \emph{does} enjoy the beneficial property of $\omega$-continuity.
\end{remark}%
%
%
%


\needspace{2\baselineskip}
\subsection{Local Reasoning for Expected Runtimes}

To enable local reasoning, the $\ertsymbol$ calculus features a \emph{frame rule for establishing upper bounds}:%
\begin{theorem}[Frame Rule for \RSL \textnormal{\citet{haslbeckPhd}}]
	\label{thm:ert_frame}
	For every $C \in \hpgcl$ and runtimes $\rta,\rtb$,%
	\begin{align*}
	     \modC{C}\cap\Vars(\rtb)\eeq\emptyset
	     \qqimplies 
	     \ert{C}{\rta \sepadd \rtb} \ppreceq \ert{C}{\rta} \ssepadd \rtb ~.
	\end{align*}
\end{theorem}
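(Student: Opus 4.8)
The plan is to prove the frame rule by structural induction on $C \in \hpgcl$, using the healthiness properties of $\ertsymbol$ (\Cref{thm:ert-health}) together with the algebraic properties of $\sepadd$ and $\sepmon$, in particular adjointness and the fact that multiplication distributes over $\sepadd$. The side condition $\modC{C} \cap \Vars(\rtb) = \emptyset$ ensures that $\rtb$ is invariant under any stack update performed by $C$, and this invariance is exactly what the base cases need.

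First I would handle the atomic statements. For $\TICK{e}$ we compute $\ert{\TICK{e}}{\rta \sepadd \rtb} = e + (\rta \sepadd \rtb) = (e + \rta) \sepadd \rtb = \ert{\TICK{e}}{\rta} \sepadd \rtb$, using that $e$ is pure so $e + (\cdot)$ pushes inside $\sepadd$. For $\ASSIGN{x}{e}$, since $x \notin \Vars(\rtb)$ we have $(\rta \sepadd \rtb)\subst{x}{e} = \rta\subst{x}{e} \sepadd \rtb$, because substitution commutes with $\sepadd$ on the $\rta$-part and fixes $\rtb$. For $\FREE{e}$ we need $\ivalidpointer{e} \sepadd (\rta \sepadd \rtb) = (\ivalidpointer{e} \sepadd \rta) \sepadd \rtb$, which is just associativity of $\sepadd$. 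The genuinely delicate atomic cases are lookup and mutation, whose rules nest a $\sepmon$ inside a $\sepadd$. Here I would argue via adjointness: to show $\ivalidpointer{e} \sepadd (\isingleton{e}{e'} \sepmon (\rta \sepadd \rtb)) \ppreceq (\ivalidpointer{e} \sepadd (\isingleton{e}{e'} \sepmon \rta)) \sepadd \rtb$, it suffices (after using associativity and commutativity to move $\rtb$ around) to establish the pointwise "frame law" $\isingleton{e}{e'} \sepmon (\rta \sepadd \rtb) \ppreceq (\isingleton{e}{e'} \sepmon \rta) \sepadd \rtb$; this in turn follows from the adjoint characterization together with the modus ponens inequality $\rtb \ppreceq \isingleton{e}{e'} \sepadd (\isingleton{e}{e'} \sepmon \rtb)$ and monotonicity of $\sepadd$. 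Allocation is analogous but with a $\Sup\,v$ in front; since $v$ is fresh it does not occur in $\rtb$, so $\Sup v$ commutes past the outer $\sepadd \rtb$ by the same reasoning used for quantifiers, and the inner step is again the frame law for $\sepmon$.

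For the compound constructs the induction is routine. Sequential composition uses the induction hypothesis twice: $\ert{C_1}{\ert{C_2}{\rta \sepadd \rtb}} \ppreceq \ert{C_1}{\ert{C_2}{\rta} \sepadd \rtb} \ppreceq \ert{C_1}{\ert{C_2}{\rta}} \sepadd \rtb$, where the second step needs $\modC{C_1} \cap \Vars(\rtb) = \emptyset$, which is inherited from $\modC{C} = \modC{C_1} \cup \modC{C_2}$. Probabilistic and conditional choice use distributivity of scaling and of $\iverson{\varphi}\cdot(\cdot)$ over $\sepadd$, plus the two induction hypotheses and monotonicity. The while loop is the one place requiring care with fixed points: writing $\Phi_\rta$ for the $\ertsymbol$-characteristic functional of $\WHILEDO{\varphi}{C'}$ with postruntime $\rta$, I would show by fixed-point induction (using $\omega$-continuity from \Cref{thm:ert-health}(1), which is available precisely because $\ertsymbol$ is a maximizer) that $(\lfp \Phi_{\rta\sepadd\rtb}) \ppreceq (\lfp \Phi_\rta) \sepadd \rtb$: the map $g \mapsto g \sepadd \rtb$ is $\omega$-continuous and monotone, and one checks $\Phi_{\rta \sepadd \rtb}(g \sepadd \rtb) \ppreceq \Phi_\rta(g) \sepadd \rtb$ using the body's induction hypothesis and the choice-case reasoning, so the least fixed point on the left is dominated by the image of the least fixed point on the right.

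The main obstacle I expect is the interaction of $\sepmon$ with $\sepadd$ in the heap-accessing atomic rules — establishing the pointwise frame law $\rta' \sepmon (\rta \sepadd \rtb) \ppreceq (\rta' \sepmon \rta) \sepadd \rtb$ (for the specific singleton/validpointer $\rta'$ arising in the rules) cleanly, since one must juggle the $\min$ over heap splittings in $\sepadd$ against the $\sup$ over heap extensions in $\sepmon$ without a general sub-distributivity law (which, as the excerpt notes, fails). I would lean entirely on the adjointness of $\sepadd$ and $\sepmon$ from \citet{haslbeckPhd} rather than manipulating heaps directly: adjointness reduces the desired inequality to something of the form $\rtc \ppreceq \rta' \sepadd \rtd$ for suitable $\rtc, \rtd$, which is then discharged by modus ponens and monotonicity. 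The secondary subtlety is making sure the freshness of the bound variable $v$ in the allocation and lookup rules genuinely lets $\Sup v$ / $\Inf v$ commute with $(\cdot) \sepadd \rtb$; this is where $v \notin \Vars(\rtb)$ (guaranteed by freshness) is used, and it is the quantitative analogue of the usual side condition on the frame rule in separation logic.
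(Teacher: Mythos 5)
Your proposal is correct, and it takes a genuinely different route from the paper's proof. The paper handles only the $\ALLOC{x}{e}$ case directly---because \citet{haslbeckPhd} has already mechanized the frame rule for every other $\hpgcl$ statement, and allocation is the one place where the two settings differ (Haslbeck allocates fixed-size blocks; here the block size is an expression). That single case is then discharged by unfolding the definitions of $\sepmon$ and $\sepadd$ into explicit suprema over heap extensions and minima over heap splittings, exchanging them via $\sup\inf \leq \inf\sup$. You instead prove \emph{all} cases by structural induction, and your pivot is the algebraic \enquote{frame law for~$\sepmon$}: $\rtc \sepmon (\rta \sepadd \rtb) \ppreceq (\rtc \sepmon \rta) \sepadd \rtb$. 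I checked that your derivation of this law closes cleanly: by adjointness, the inequality is equivalent to $\rta \sepadd \rtb \ppreceq ((\rtc \sepmon \rta) \sepadd \rtb) \sepadd \rtc$, which after commutativity/associativity follows from modus ponens $\rta \ppreceq \rtc \sepadd (\rtc \sepmon \rta)$ and monotonicity of $\sepadd$. Applying this law uniformly to lookup, mutation, and allocation (with the quantifier/$\sepadd$ exchange going the right way---$\Sup v$ only \emph{lower}-bounds when pushed inside $\sepadd$, which is fine for an upper bound; $\Inf v$ commutes exactly by \Cref{lem:inf-prenex}) yields a self-contained proof, whereas the paper's is shorter but relies on an external mechanized result for most cases. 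The one small imprecision is calling the $\Sup v$ step a \enquote{commutation}---it is a $\sup\min\leq\min\sup$ inequality, not an equality---but since you only need the inequality, the argument is sound. Your $\sepmon$-frame lemma is arguably the cleaner conceptual organization, and it is exactly what the paper proves implicitly for the allocation case via direct heap-splitting.
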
%
%
%
\noindent%
We call $\rtb$ in the above theorem the \emph{frame}. Combining the above frame rule with the \RSL analogues of \citet{DBLP:conf/lics/Reynolds02}'s local rules for heap mutation, lookup, memory allocation, and auxiliary variable elimination enables \SL-style source-code level proofs for upper-bounding expected runtimes:%
%
%
\begin{theorem}[Local Rules for \RSL following \textnormal{~\citet{DBLP:conf/lics/Reynolds02}}]
	\label{thm:local_rules}
	Let $C \in \hpgcl$. Then:
	
	\begin{enumerate}
		\item \textnormal{(mut):} \qquad $\ert{\HASSIGN{e}{e'}}{\isingleton{e}{e'}} \ppreceq \ivalidpointer{e}$
		\item \textnormal{(lkp):} \qquad 
		%
		$\ert{\ASSIGNH{x}{e}}{\pureemp{x=z} \sepadd \isingleton{e\subst{x}{y}}{z}} \ppreceq \pureemp{x=y} \sepadd \isingleton{e}{z}$,
		%
		\item \textnormal{(alc):} if $x$ does not occur in $e$, then
		\[\ert{\ALLOC{x}{e}}{\bigoplus_{i=1}^{e} \isingleton{x+i-1}{0}} \ppreceq \iemp~.
		\]
		\item \textnormal{(aux):} For all $\rta, \rtb \in \T$ and all $y \in \Vars$ not occurring in $C$, 
		\begin{align*}
		\ert{C}{\rta} \ppreceq \rtb \qqimplies \ert{C}{\Inf y\colon \rta} \ppreceq \Inf y \colon \rtb~.
		\end{align*}
	\end{enumerate}
\end{theorem}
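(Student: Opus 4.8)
The plan is to treat all four items as ``small axioms'' in the spirit of \citet{DBLP:conf/lics/Reynolds02}: each of \textnormal{(mut)}, \textnormal{(lkp)}, and \textnormal{(alc)} is obtained by unfolding the corresponding clause of \Cref{table:ert} for $\ertsymbol$ and then simplifying the resulting \RSL expression with the algebra of \Cref{s:rsl} (adjointness of $\sepadd$ and $\sepmon$, the modus ponens inequality $\rtb \ppreceq \rta \sepadd (\rta \sepmon \rtb)$, the monoid unit law $\rta \sepadd \iemp = \rta$, purity, and the convention $\infty \monus \infty = 0$), whereas \textnormal{(aux)} uses $\ertsymbol$ as a black box via monotonicity together with a substitution lemma.

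For the three heap axioms the common phenomenon is a \emph{cancellation}: each relevant $\ertsymbol$ clause has the shape $\alpha \sepadd (\alpha' \sepmon \cdots)$ --- or a separating block $\sepmon \cdots$ alone, for allocation --- and for the particular postruntime appearing on the left the inner $\alpha' \sepmon \cdots$ cancels against the surrounding $\alpha \sepadd$. Concretely, for \textnormal{(mut)} I would first record the auxiliary identity that $\ert{\HASSIGN{e}{e'}}{\rtb}$ is the runtime mapping $(\sk,\hh)$ to $\rtb$ evaluated on $\hh$ with location $\sk(e)$ overwritten by $\sk(e')$ when $\sk(e) \in \dom{\hh}$, and to $\infty$ otherwise (a short pointwise computation), and then observe that substituting $\rtb = \isingleton{e}{e'}$ collapses this to exactly $\ivalidpointer{e}$. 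For \textnormal{(lkp)}, the same cancellation in $\ert{\ASSIGNH{x}{e}}{\rtb} = \Inf v\colon \isingleton{e}{v} \sepadd (\isingleton{e}{v} \sepmon \rtb\subst{x}{v})$ effectively instantiates the $\Inf$ to the value actually stored at $e$; rewriting $\pureemp{x=z} \sepadd g = \iiverson{x=z} + g$ by purity, the transformer applied to the stated postruntime then evaluates to $0$ on every state on which $\pureemp{x=y} \sepadd \isingleton{e}{z}$ evaluates to $0$ --- the arithmetic being that after the lookup $x$ holds the value of $z$, and since $x = y$ held beforehand, $e\subst{x}{y}$ still denotes the same location --- and to $\infty$ elsewhere, which is $\ppreceq \pureemp{x=y} \sepadd \isingleton{e}{z}$. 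For \textnormal{(alc)}, since $x$ does not occur in $e$ we have $\bigl(\bigoplus_{i=1}^{e} \isingleton{x+i-1}{0}\bigr)\subst{x}{v} = \bigoplus_{i=1}^{e} \isingleton{v+i-1}{0}$, so the transformer equals $\Sup v\colon \beta_v \sepmon \beta_v$ with $\beta_v$ the corresponding array block; as $\iemp$ is $\infty$ off the empty heap, only the empty-heap case matters, and there $\beta_v \sepmon \beta_v = 0$ for every $v$ (again by $\infty \monus \infty = 0$), so the $\Sup$ is $0 = \iemp$.

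For \textnormal{(aux)}, fix an arbitrary value $v \in \Vals$. From $\Inf y\colon \rta \ppreceq \rta\subst{y}{v}$ and monotonicity of $\ertC{C}$ (\Cref{thm:ert-health}) we get $\ert{C}{\Inf y\colon \rta} \ppreceq \ert{C}{\rta\subst{y}{v}}$. The crucial step is the \emph{substitution lemma} $\ert{C}{\rta\subst{y}{v}} = \ert{C}{\rta}\subst{y}{v}$, which is valid because $y$ does not occur in $C$; combined with the hypothesis $\ert{C}{\rta} \ppreceq \rtb$ (which is preserved by applying $\subst{y}{v}$ to both sides, as substitution is a monotone operation on $\T$) this yields $\ert{C}{\Inf y\colon \rta} \ppreceq \rtb\subst{y}{v}$. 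Since $v$ was arbitrary, taking the infimum over all $v \in \Vals$ gives $\ert{C}{\Inf y\colon \rta} \ppreceq \Inf y\colon \rtb$. The substitution lemma itself is proven separately by induction on the structure of $C$; its only delicate case is the while loop, where $\ertC{\WHILEDO{\varphi}{C'}}$ is defined as a least fixed point --- one then argues that the least fixed point is the supremum of its Kleene iterates, that substitution by a constant commutes with suprema, and, inductively, with each iterate, using that $y$ occurs neither in $\varphi$ nor in $C'$.

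I expect the main obstacle to be twofold. For \textnormal{(mut)}, \textnormal{(lkp)}, and \textnormal{(alc)} the reasoning is routine but fiddly: the convention $\infty \monus \infty = 0$ and the memory-fault states must be tracked with care, and among the three \textnormal{(lkp)} is the most error-prone because of the nested substitution $e\subst{x}{y}$ and the two gatekeeper brackets $\pureemp{x=z}$ and $\pureemp{x=y}$, whose interplay with the stack update performed by the lookup has to be checked explicitly. For \textnormal{(aux)} the genuine technical content lies entirely in the while-loop case of the substitution lemma.
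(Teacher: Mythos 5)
Your treatment of \textnormal{(mut)}, \textnormal{(lkp)}, and \textnormal{(alc)} matches the paper's: in each case the paper first establishes exactly the pointwise characterizations you describe (as Lemmas~\ref{lem:mut}, \ref{lem:lookup}, and the unfolding in the \textnormal{(alc)} proof), and then the case split on whether the right-hand side is already $\infty$ discharges the inequality. Your observation that \textnormal{(mut)} is actually an equality for this particular postruntime is correct, and the \textnormal{(lkp)} arithmetic you sketch — using $\sk(x)=\sk(y)$ to show $\sk'(e\subst{x}{y})=\sk(e)$ after the stack update — is precisely the substance of the paper's second case.

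For \textnormal{(aux)} you take a genuinely different route. The paper proves directly, by structural induction on $C$, the distributivity inequality
$\ert{C}{\Inf y\colon\rta}\ppreceq\Inf y\colon\ert{C}{\rta}$
(with the $\ALLOC$ case giving a strict inequality via $\sup\inf\leq\inf\sup$, and the while case handled by Park induction using $\Inf y\colon\ert{\WHILEDO{\varphi}{C_1}}{\rta}$ as a prefixed point), and then obtains \textnormal{(aux)} by monotonicity of $\Inf y$ applied to the hypothesis. You instead prove an equational substitution lemma $\ert{C}{\rta\subst{y}{v}}=\ert{C}{\rta}\subst{y}{v}$ for constants $v$ and $y\notin\Vars(C)$, then chain $\ert{C}{\Inf y\colon\rta}\ppreceq\ert{C}{\rta\subst{y}{v}}=\ert{C}{\rta}\subst{y}{v}\ppreceq\rtb\subst{y}{v}$ and take the infimum over $v$. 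Both are correct. What each buys: your substitution lemma is a stronger, cleaner commutation equality (it in fact implies the paper's inequality via one application of monotonicity and $\inf$), and it is a natural reusable tool; its cost is that the while case leans on $\omega$-continuity of $\Phi_\rta$ and the Kleene characterization $\lfp\Phi_\rta=\sup_n\Phi_\rta^n(0)$, whereas the paper's Park-induction argument needs only monotonicity. Since the paper establishes $\omega$-continuity of $\ertsymbol$ (\Cref{thm:ert-health}), your route is sound, and the remark that this continuity is ``somewhat unexpected'' given the unbounded nondeterminism from allocation is exactly why your reliance on it is worth flagging explicitly if you write this up.
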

%

\subsection{Invariant-based Reasoning for Loops}

Recall that $\ert{\mathit{loop}}{\rta}$ for $\mathit{loop}=\WHILEDO{\varphi}{\mathit{body}}$ is defined as 
\begin{align*}
\lfp \rtb\mydot ~ \underbrace{\iverson{\neg \varphi} \cdot \rta \pplus \iverson{\varphi} \cdot \ert{\mathit{body}}{\rtb}}_{ {}\coloneqq \charfun{\rta}(\rtb)}~,
\end{align*}
where we call $\charfun{\rta}$ the \emph{$\ertsymbol$-characteristic functional} of $\mathit{loop}$ w.r.t.\ postruntime $f$. 
For upper-bounding expected runtimes of loops --- given as least fixed points --- we have an invariant-based proof rule:%
\begin{theorem}[Park Induction for \RSL]
	\label{thm:ert_invariants}
	Let $\mathit{loop} = \WHILEDO{\guard}{\mathit{body}}$ and $\inv,\rta\in \T$. 
	Then%
	\begin{align*}
	   \charfun{\rta}(I) \ppreceq I \qimplies
	   \ert{\mathit{loop}}{\rta} \ppreceq I~.
	\end{align*}%
\end{theorem}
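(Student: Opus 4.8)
The plan is to read the result off from the Knaster--Tarski fixed point theorem. By definition (\Cref{table:ert}), $\ert{\mathit{loop}}{\rta} = \lfp \charfun{\rta}$, the least fixed point being taken in the complete lattice $(\T,\ppreceq)$. Hence it suffices to establish (i) that $\charfun{\rta}$ is monotone on this lattice, so that Knaster--Tarski applies and identifies $\lfp\charfun{\rta}$ with the \emph{least pre-fixed point} of $\charfun{\rta}$, and (ii) that the hypothesis $\charfun{\rta}(I)\ppreceq I$ says precisely that $I$ is such a pre-fixed point. Then $\ert{\mathit{loop}}{\rta} = \lfp\charfun{\rta} \ppreceq I$ follows immediately.

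For step (i), recall $\charfun{\rta}(\rtb) = \iverson{\neg\guard}\cdot\rta \pplus \iverson{\guard}\cdot\ert{\mathit{body}}{\rtb}$; the only subterm depending on $\rtb$ is $\iverson{\guard}\cdot\ert{\mathit{body}}{\rtb}$. If $\rtb_1 \ppreceq \rtb_2$, then $\ert{\mathit{body}}{\rtb_1}\ppreceq\ert{\mathit{body}}{\rtb_2}$ by monotonicity of $\ertsymbol$ (\Cref{thm:ert-health}~(2)); since pointwise multiplication by the $\{0,1\}$-valued guard $\iverson{\guard}$ and pointwise addition of the fixed runtime $\iverson{\neg\guard}\cdot\rta$ both preserve $\ppreceq$, we obtain $\charfun{\rta}(\rtb_1)\ppreceq\charfun{\rta}(\rtb_2)$. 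Thus $\charfun{\rta}$ is monotone, and since $\T$ is a complete lattice, $\lfp\charfun{\rta}$ exists and equals $\infpreceq \setcomp{\rtb\in\T}{\charfun{\rta}(\rtb)\ppreceq\rtb}$. Step (ii) is a tautology, and combining the two yields $\ert{\mathit{loop}}{\rta}=\lfp\charfun{\rta}\ppreceq I$.

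I expect essentially no obstacle here: this is the textbook Park-induction pattern, and the only setting-specific ingredient --- monotonicity of the characteristic functional --- reduces to the healthiness property already in hand. As an alternative that avoids invoking Knaster--Tarski, one could instead exploit $\omega$-continuity (\Cref{thm:ert-health}~(1)) to write $\lfp\charfun{\rta} = \sup_{n\in\Nats}\charfun{\rta}^{n}(0)$ and then prove $\charfun{\rta}^{n}(0)\ppreceq I$ for all $n$ by induction on $n$ --- the base case being $0\ppreceq I$, and the inductive step applying monotonicity of $\charfun{\rta}$ together with the hypothesis $\charfun{\rta}(I)\ppreceq I$ to get $\charfun{\rta}^{n+1}(0) = \charfun{\rta}(\charfun{\rta}^{n}(0)) \ppreceq \charfun{\rta}(I) \ppreceq I$ --- before passing to the supremum; but the fixed-point-theoretic route is shorter and does not even require continuity.
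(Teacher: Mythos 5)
Your proof is correct and is exactly the standard Park-induction argument that the paper implicitly relies on (the paper states the theorem without an explicit proof, since $\lfp\charfun{\rta}\ppreceq I$ for any pre-fixed point $I$ of a monotone $\charfun{\rta}$ on the complete lattice $\T$ is a textbook fact). Your check that $\charfun{\rta}$ is monotone via monotonicity of $\ertsymbol$ on the loop body (\Cref{thm:ert-health}) is the only setting-specific ingredient needed, and your sketch handles it properly.
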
%
%
%
\noindent%
We call such $\inv$ an \emph{$\ertsymbol$-invariant}. 
Park induction can simplify the verification of loop runtimes significantly: 
to obtain an upper bound on the expected runtime of the \emph{entire} loop, it essentially suffices to upper-bound $\ert{\mathit{body}}{\inv}$, \ie the expected runtime of \emph{one} (arbitrary) loop \emph{iteration}. 
Notice that the above proof rule is \emph{complete} since $\lfp \rtb\mydot \charfun{\rta}(\rtb)$ is necessarily an $\ertsymbol$-invariant.%

\subsection{Example: The Lagging List Traversal}
\label{label:ex_lagging_list}
We demonstrate the applicability of the $\ertsymbol$ calculus. Consider the program $\listtraverse{\varlistelem}$:
\begin{align*}
& \WHILE{\varlistelem \neq 0} 
%
\TICK{1}\fatsemi 
\PCHOICE{\ASSIGNH{\varlistelem}{\varlistelem}}{\sfrac{1}{2}}{\SKIP}
%
 ~\}
\end{align*}
$\listtraverse{\varlistelem}$ traverses a null-terminated list segment beginning at $\varlistelem$. Every iteration costs $1$ unit of time. 
The program then either traverses the next edge of the list (left branch), or forgets to do so (right branch), each with probability $\sfrac{1}{2}$. 
Using the frame rule, the local rules for lookup and auxiliary variable elimination, and monotonicity of $\ertsymbol$ we prove that
\[
  \inv \ddefeq 2 \cdot \Inf \lh \colon \elist{\lh}{\varlistelem} \sepadd \esize{\varlistelem}{0}
\]
is an $\ertsymbol$-invariant of $\listtraverse{\varlistelem}$ w.r.t.\ postruntime $0$, which, by \Cref{thm:ert_invariants}, implies that $\inv$ upper-bounds the expected runtime of $\listtraverse{\varlistelem}$, where $\elist{\lh}{\varlistelem}$ is only needed for strengthening the loop invariant.
Hence, when executed on a heap consisting of a null-terminated list containing the element $\varlistelem$, the program $\listtraverse{\varlistelem}$ is \emph{memory-safe} and takes, in expectation, a runtime of at most $2$ times the size of the list segment beginning at $\varlistelem$. 

\subsection{Soundness of the $\boldertsymbol$ calculus}
\label{s:op}

We will now show that the $\ertsymbol$ calculus is sound in that it characterizes a program's expected execution time defined in terms of expected rewards of operational MDPs in \Cref{s:op-semantics}.
Formally, we show that, for every $\hpgcl$ program $C$ and initial state $(\sk,\hh)$, we have 
\begin{align*}
  \ert{C}{0}(\sk,\hh) \qeq \ExpRew{\MDPOP{C,\sk,\hh}}~,
\end{align*}
where the post-runtime $0$ indicates that no time is consumed after termination of $C$.

We will prove a more general claim for arbitrary post-runtimes $\rta \in \T$ 
instead of the fixed post-runtime $0$. 
To account for $\rta$ in our operational semantics, we first extend the reward function $\MROP$ of our operational MDPs $\MDPOP{C,\sk,\hh}$  such that we collect a reward of $\rta(\sk',\hh')$ whenever $C$ successfully terminates as indicated by an execution step from configuration $(\conf{\term}{\sk'}{\hh'})$ to $\sink$:
\begin{align*}
  \MROP\colon \quad \Confs \to \PosRealsInf,
  \qquad
  c \mapsto
  \begin{cases}
    \sk(e),        & \qif c = (\conf{\TICK{e}}{\sk}{\hh}) \\
    \rta(\sk,\hh), & \qif c = (\conf{\term}{\sk}{\hh}) \\
    \infty,        & \qif c = (\conf{\fail}{\sk}{\hh}) \\
    0,             & \quad \text{otherwise}.
  \end{cases}
\end{align*}
We denote by $\MDPOP{C,\rta,\sk,\hh}$ the operational MDP introduced in \Cref{s:op-semantics} but with the above reward function.
Our expected runtime calculus is then sound in the following sense:%
\begin{theorem}[Soundness of $\ertsymbol$]\label{thm:ert-sound-op}
  For all $C \in \hpgcl$, runtimes $\rta \in \T$, and program states $(\sk,\hh)$,
  \begin{align*}
    \ert{C}{\rta}(\sk,\hh) \qeq 
    \ExpRew{\MDPOP{C,\rta,\sk,\hh}}~.
  \end{align*}
\end{theorem}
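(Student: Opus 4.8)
The idea is to route the equality through the \emph{Bellman optimality equations} of the operational MDP, exploiting that all rewards in $\MROP$ are non-negative, so that each $\MDPOP{C,\rta,\sk,\hh}$ is a \emph{positive} reward model in the sense of \citet{blackwell1967positive} (see also \citet[Ch.~7]{puterman2005markov}). Fix $\rta$ and write $\MDPOP{c}$ for the operational MDP of \Cref{s:op-semantics} with the $\rta$-extended reward function $\MROP$ (a single function on all of $\Confs$ once $\rta$ is fixed) and initial configuration $c \in \Confs$; put $V(c) \defeq \ExpRew{\MDPOP{c}}$. Directly from the definitions of $\MROP$ and of $\ExpRew{\cdot}$ one reads off the boundary values $V(\sink) = 0$, $V((\conf{\fail}{\sk}{\hh})) = \infty$, and $V((\conf{\term}{\sk}{\hh})) = \rta(\sk,\hh)$. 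The theorem then follows from the general claim that $\ert{C}{\rta}$ equals the function $(\sk,\hh) \mapsto V((\conf{C}{\sk}{\hh}))$, which we establish for all $C$ and all post-runtimes $\rta$ simultaneously.

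The structural backbone is as follows. Because every configuration has an enabled action and rewards are non-negative, finite-horizon dynamic programming identifies $\mathcal{B}^n(\mathbf{0})$ with the optimal horizon-$(n{+}1)$ expected reward, where $\mathcal{B}(W)(c) = \MROP(c) + \sup_{v \in \Vals} \sum_{c'} \MPOP(c,v,c') \cdot W(c')$ (supremum over enabled actions) is the Bellman operator on the complete lattice $(\Confs \to \PosRealsInf, \preceq)$. Since $\Vals = \Nats$ is countable, $\mathcal{B}$ is $\omega$-continuous (monotone convergence for the sums, commuting suprema for the outer $\sup$), and monotone convergence on the reward side gives $V = \sup_{n} \mathcal{B}^n(\mathbf{0}) = \lfp \mathcal{B}$. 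We prove $V((\conf{C}{\cdot}{\cdot})) = \ert{C}{\rta}$ by induction on the structure of $C$. For the \emph{atomic} statements, one unfolds $V = \mathcal{B}(V)$ once at $(\conf{C}{\sk}{\hh})$ using the rules of \Cref{fig:op:rules} and the boundary values, and compares with the matching line of \Cref{table:ert}; for $\ALLOC{x}{e}$, $\ASSIGNH{x}{e}$, $\HASSIGN{e}{e'}$, and $\FREE{e}$ this comes down to unwinding \Cref{def:sepadd,def:sepmon} and observing that the heap fragment one must carve out of $\hh$, respectively extend $\hh$ by, to keep the relevant (iterated) points-to predicate finite is \emph{uniquely determined} --- so the $\min$ in $\sepadd$ and the $\sup$ in $\sepmon$ are pinned to exactly the single operational successor, respectively to the $\sup$ over the enabled allocation values. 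The cases $\ITE{\varphi}{C_1}{C_2}$ and $\PCHOICE{C_1}{p}{C_2}$ need one Bellman step plus the induction hypotheses for $C_1, C_2$. For $\COMPOSE{C_1}{C_2}$ one proves an operational \emph{sequencing lemma}: the executions of $\COMPOSE{C_1}{C_2}$ before $C_1$ terminates coincide step-by-step --- rewards included --- with the executions of $C_1$, with $C_1$'s terminal configuration $(\conf{\term}{\sk'}{\hh'})$ identified with $(\conf{C_2}{\sk'}{\hh'})$; hence $V((\conf{\COMPOSE{C_1}{C_2}}{\sk}{\hh}))$ equals the value of $C_1$ under the reward function extended at termination by $(\sk',\hh') \mapsto V((\conf{C_2}{\sk'}{\hh'}))$, which by the induction hypothesis for $C_2$ is $\ert{C_2}{\rta}$, and then by the induction hypothesis for $C_1$ (applied to post-runtime $\ert{C_2}{\rta}$) equals $\ert{C_1}{\ert{C_2}{\rta}} = \ert{\COMPOSE{C_1}{C_2}}{\rta}$.

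The loop $\mathit{loop} = \WHILEDO{\varphi}{C'}$ is the crux. Set $L$ to be the function $(\sk,\hh) \mapsto V((\conf{\mathit{loop}}{\sk}{\hh}))$. A single Bellman step (the only transition from $(\conf{\mathit{loop}}{\sk}{\hh})$ goes to $(\conf{\COMPOSE{C'}{\mathit{loop}}}{\sk}{\hh})$ if $\sk \models \varphi$, else to $(\conf{\term}{\sk}{\hh})$) combined with the sequencing lemma and the induction hypothesis for $C'$ (post-runtime $L$) shows $L = \charfun{\rta}(L)$, i.e.\ $L$ is an $\ertsymbol$-invariant of $\mathit{loop}$; \Cref{thm:ert_invariants} then yields $\ert{\mathit{loop}}{\rta} = \lfp \charfun{\rta} \preceq L$. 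The reverse inequality $L \preceq \lfp \charfun{\rta}$ is obtained by matching the two $\omega$-limits. A horizon-$n$ execution of $\mathit{loop}$ performs at most $n$ loop iterations, so $\mathcal{B}^n(\mathbf{0})((\conf{\mathit{loop}}{\sk}{\hh})) \preceq V((\conf{\mathit{loop}^{[n]}}{\sk}{\hh}))$, where $\mathit{loop}^{[n]}$ is the $n$-fold syntactic unrolling of $\mathit{loop}$ that, once $\varphi$ still holds after $n$ iterations, enters the trivially diverging loop $\WHILEDO{\true}{\SKIP}$ (contributing nothing further) --- a program built only from $\ITE{}{}{}$, $\COMPOSE{}{}$, $\SKIP$, $\WHILEDO{\true}{\SKIP}$ (whose operational value and $\ertsymbol$ are both $0$), and $C'$. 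The already-established cases, with the induction hypothesis for $C'$ discharging its loops, give $V((\conf{\mathit{loop}^{[n]}}{\cdot}{\cdot})) = \ert{\mathit{loop}^{[n]}}{\rta} = \charfun{\rta}^n(\mathbf{0})$. Taking suprema over $n$, using $V = \sup_n \mathcal{B}^n(\mathbf{0})$ on the left and $\omega$-continuity of $\charfun{\rta}$ (inherited from \Cref{thm:ert-health}) on the right, gives $L = \sup_n \mathcal{B}^n(\mathbf{0})((\conf{\mathit{loop}}{\cdot}{\cdot})) \preceq \sup_n \charfun{\rta}^n(\mathbf{0}) = \lfp \charfun{\rta}$, as required.

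I expect the main obstacles to be, in decreasing order of severity: the loop's reverse inequality --- in particular making the correspondence $\mathcal{B}^n(\mathbf{0})((\conf{\mathit{loop}}{\cdot}{\cdot})) \leftrightarrow \charfun{\rta}^n(\mathbf{0})$ precise, which amounts to embedding finite-horizon schedulers of $\mathit{loop}$ into schedulers of the unrolling $\mathit{loop}^{[n]}$ while controlling the collected rewards, and correctly threading the induction hypothesis for $C'$ (which may itself contain loops) through that argument; and the operational sequencing lemma, whose careful statement and proof underlies both the $\COMPOSE{}{}$ case and the loop case. The atomic heap-manipulating cases are routine but somewhat tedious because of the $\sepadd$/$\sepmon$ unfoldings, and one must keep track of the conventions $0 \cdot \infty = 0$ and $\infty \monus \infty = 0$ while checking them.
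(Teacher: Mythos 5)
Your plan reaches the right conclusion via a genuinely different route, and the divergence concentrates entirely in the loop's hard direction. The paper factors the equality into two inequalities between runtime \emph{transformers}: after extending $\ertsymbol$ to $\eert$ on $\hpgcl\cup\{\term,\fail\}$ and defining $\oprt$ from the operational MDPs, it shows both are \emph{Bellman compliant} (\Cref{thm:op:bellman,thm:op:eert-bellman}), proves $\eert\preceq\oprt$ by structural induction --- Park induction for loops plus a sequencing lemma for $\fatsemi$ (\Cref{thm:op:eert-leq-oprt}, \Cref{thm:op:seq}) --- and then obtains $\oprt\preceq\eert$ in one stroke from Blackwell's theorem that the operational value is the \emph{least} Bellman-compliant transformer in positive models (\Cref{thm:op:oprt-least}). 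That last step sidesteps the loop's reverse inequality entirely. You instead prove both directions inside a single structural induction, handling the loop's hard inequality $V\preceq\lfp\charfun{\rta}$ by matching Bellman iterates $\mathcal{B}^n(\mathbf{0})$ against $\charfun{\rta}^n(\mathbf{0})$ through the $n$-fold unrolling $\mathit{loop}^{[n]}$. This is sound --- your characterization $V=\lfp\mathcal{B}$ is, at bottom, the same appeal to Blackwell/Puterman that the paper makes explicitly --- but where the paper replaces the unrolling with a citation, you take on the full correspondence proof. The easy direction (Park), the sequencing lemma, and the atomic one-step unfoldings are shared; your atomic cases are exactly the paper's Bellman-compliance computations for $\eert$.

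If you pursue the unrolling route, two points need to be made watertight. First, the step ``$V = \sup_n\mathcal{B}^n(\mathbf{0})$ by monotone convergence'' hides the finite-horizon dynamic-programming identity for countably-infinite-action positive MDPs; this is precisely what \citet{blackwell1967positive} and \citet[Thm.~7.2.2]{puterman2005markov} establish and what the paper cites by name, so it is cleaner to invoke it than to re-derive it inline. Second, $\mathcal{B}^n(\mathbf{0})$ at $(\conf{\mathit{loop}}{\sk}{\hh})$ versus $V$ at $(\conf{\mathit{loop}^{[n]}}{\sk}{\hh})$ are not related by a step-for-step path correspondence: the unrolling interleaves extra zero-reward transitions, so a horizon-$n$ path in $\mathit{loop}$ maps to a longer path in $\mathit{loop}^{[n]}$, and the inequality needs a stuttering argument (harmless because the extra states carry reward zero and rewards are non-negative). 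Finally, applying the already-established composite cases to $\mathit{loop}^{[n]}$ --- which is not a subterm of $C$ --- is a nested induction on $n$; its base case $\WHILEDO{\true}{\SKIP}$ must be discharged separately, which you correctly observe, noting that both $\ertsymbol$ and the operational value vanish there.
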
%
%
\noindent%
Our soundness theorem clarifies what the expected runtime calculus actually computes: 
as long as \emph{no} program execution of $C$ on initial state $(\sk,\hh)$ leads to a memory fault, $\ert{C}{0}(\sk,\hh)$ is the expected execution time measured in units of time consumed by $\TICK{e}$ statements; 
if \emph{some} program execution of $C$ on initial state $(\sk,\hh)$ \emph{does} lead to a memory fault, we cannot give a finite bound on the expected runtime and have to conclude $\ert{C}{0}(\sk,\hh) = \infty$.
Towards a proof of our soundness theorem, a \emph{runtime transformer} is a function of the  form
\begin{align*}
	    \rt\colon \hpgcl \cup \{ \term, \fail \} \to (\T \to \T)
\end{align*}
that maps the terminated programs to the postruntime, \ie $\rtt{\term}{\rta} = \rta$, and memory faults to an infinite runtime, \ie $\rtt{\fail}{\rta} = \infty$.

We then lift the partial ordering $\preceq$ on runtimes in $\T$ to an order on runtime transformers, \ie
	\begin{align*}
	    \rt \lleq \rt' \qqiff \forall\, C \in \hpgcl \cup \{ \term, \fail \} ~ \forall\, \rta \in \T\colon \quad \rtt{C}{\rta} \ppreceq \rt'\llbracket C \rrbracket(\rta)~.
	\end{align*}
Clearly, we can naturally extend the expected runtime calculus $\ertsymbol\colon \hpgcl \to (\T \to \T)$ to a runtime transformer. We denote the resulting \emph{extended expected runtime calculus} by $\eert$, \ie
 \begin{align*}
   \eert \qeq \lambda C ~ \lambda \rta\mydot~
   \begin{cases}
     \rta, & \qif C \eeq \term \\
     \infty, & \qif C \eeq \fail \\
     \ert{C}{\rta}, & \quad\text{otherwise}. 	
   \end{cases}
 \end{align*}  	
Analogously, our operational semantics induces a runtime transformer $\oprt$  that maps every program, runtime, and state to the expected reward of the corresponding operational MDP, \ie
 \begin{align*}
   \oprt \qeq \lambda C ~ \lambda \rta ~ \lambda (\sk,\hh) \mydot ~ \ExpRew{\MDPOP{C,\rta,\sk,\hh}}~.
 \end{align*} 
 To prove our soundness theorem, it then suffices to show $\eert \leq \oprt$ and $\oprt \leq \eert$.
 
 We will leverage that our runtime transformers $\eert$ and $\oprt$ satisfy the well-established optimality equations for MDPs, also known as Bellman equations (cf. \cite{puterman2005markov}).
Formally, a runtime transformer $\rt$ is \emph{Bellman compliant} if and only if for all $C \in \hpgcl$, $\rta \in \T$, and $(\sk,\hh)$,\footnote{Textbooks typically restrict the supremum to the set of actions that are \emph{enabled} in the given configuration. To simplify notation, we take the supremum over all actions $\MA = \Vals$ and agree on the convention that $\sum_{\emptyset} \ldots = 0$. }
 	    \begin{align*}
 	        \rtt{C}{\rta}(\sk,\hh) \qeq \MROP(\conf{C}{\sk}{\hh}) \pplus \sup_{\ma \in \MA} \sum_{(\conf{C}{\sk}{\hh}) \mstep{p}{\ma} (\conf{C'}{\sk'}{\hh'})} p \cdot \rtt{C'}{\rta}(\sk',\hh')~.
 	    \end{align*}
In words, the runtime computed by a Bellman compliant runtime transformer is the reward collected for leaving the current configuration $(\conf{C}{\sk}{\hh})$ plus
the runtimes of all direct successor configurations $(\conf{C'}{\sk'}{\hh'})$ weighted by the probability moving to configuration $(\conf{C'}{\sk'}{\hh'})$; if there is a nondeterministic choice between different actions (\ie we can choose a distribution over successor configurations), then we take one that maximizes the overall runtime.

A long established result on MDPs (even those with countable state spaces and actions) is that their expected rewards satisfy the Bellman equations~\cite[Theorem 7.1.3]{puterman2005markov}. Hence:%
\begin{lemma}\label{thm:op:bellman}
	$\oprt$ is Bellman compliant.
\end{lemma}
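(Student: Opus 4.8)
The plan is to invoke the classical theory of (possibly countably infinite, countably branching, non-discounted, non-negative reward) Markov decision processes. Concretely, $\oprt\llbracket C\rrbracket(\rta)(\sk,\hh)$ is by definition the total expected reward $\ExpRew{\MDPOP{C,\rta,\sk,\hh}}$, which is a supremum over schedulers $\msched$ and path lengths $n$ of the reward collected along paths of the operational MDP. Because all rewards lie in $\PosRealsInf$, this is exactly the \enquote{optimal positive bounded model} studied in \cite[Chapter 7]{puterman2005markov}, and the cited \cite[Theorem 7.1.3]{puterman2005markov} states that the optimal value function of such an MDP satisfies the Bellman optimality equation. So the first step is simply to check that $\MDPOP{C,\rta,\sk,\hh}$ is an instance of the MDP class covered by that theorem: the state space $\Confs$ is countable, the action set $\Vals = \Nats$ is countable, $\MPOP$ is a well-defined (sub-)stochastic transition function (verified right after its definition in \Cref{s:op-semantics}), and $\MROP \colon \Confs \to \PosRealsInf$ is non-negative. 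The only mild point is that $\ExpRew{\cdot}$ as defined here takes a supremum over \emph{both} schedulers and path lengths $n$; since rewards are non-negative, the partial sums $\sum_{\mpath \in \mpaths{n}{\msched}} \MP(\mpath)\cdot\MR(\mpath)$ are monotone in $n$, so $\sup_n$ is just the limit, and this coincides with the total-reward value function used by Puterman.

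The second step is to transport the Bellman equation from "value of the MDP $\MDPOP{C,\rta,\sk,\hh}$" to the stated fixed-point shape in terms of $\oprt$ applied to successor configurations. This is a bookkeeping step: a scheduler for $\MDPOP{C,\rta,\sk,\hh}$ that starts by choosing action $\ma$ in the initial configuration $(\conf{C}{\sk}{\hh})$ and thereafter behaves, on each successor $(\conf{C'}{\sk'}{\hh'})$, like some scheduler for $\MDPOP{C',\rta,\sk',\hh'}$, decomposes the total reward as $\MROP(\conf{C}{\sk}{\hh})$ (collected on leaving the first state) plus the $\MPOP(\cdot,\ma,\cdot)$-weighted sum of the continuation rewards. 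Taking suprema — first over the continuation schedulers in each branch, using that the branches are independent and rewards non-negative so the supremum distributes into the (countable) sum via monotone convergence, and then over the first action $\ma$ — yields exactly
\begin{align*}
\oprtt{C}{\rta}(\sk,\hh) \qeq \MROP(\conf{C}{\sk}{\hh}) \pplus \sup_{\ma \in \MA} \sum_{(\conf{C}{\sk}{\hh}) \mstep{p}{\ma} (\conf{C'}{\sk'}{\hh'})} p \cdot \oprtt{C'}{\rta}(\sk',\hh')~,
\end{align*}
which is precisely the Bellman compliance condition. The role of the $\sink$ configuration and of the transitions $\tconf{\sk}{\hh}\mstep{}{}\sink$, $\sink\mstep{}{}\sink$, $\conf{\fail}{\sk}{\hh}\mstep{}{}\sink$ is just to make every configuration have a successor so that the path-based and recursive formulations agree; in particular $\oprtt{\term}{\rta}(\sk,\hh) = \rta(\sk,\hh)$ (one step into $\sink$, collecting $\MROP(\conf{\term}{\sk}{\hh}) = \rta(\sk,\hh)$, then reward $0$ forever) and $\oprtt{\fail}{\rta}(\sk,\hh) = \infty$, consistent with the definition of $\oprt$.

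The main obstacle is not conceptual but a matter of carefully matching our setup to the hypotheses of \cite[Theorem 7.1.3]{puterman2005markov}: that theorem is usually phrased for history-dependent randomized policies and with a fixed action set per state, whereas we use deterministic schedulers $\msched \colon \MS^{+}\to\MA$ and an artificial uniform action set $\Vals$ with the convention $\sum_\emptyset \ldots = 0$ on non-enabled actions. One has to argue that restricting to deterministic schedulers does not change the optimal value — standard for countable MDPs with non-negative rewards, since for an expected-reward maximization a randomized policy is a mixture of deterministic ones and cannot beat the best of them — and that padding with non-enabled actions (which contribute an empty, hence $0$, continuation) is harmless because $0$ is the bottom of $\PosRealsInf$ and $\MROP \ge 0$, so the padded actions are never selected by the supremum unless all enabled actions also give $0$. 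Once these alignment points are dispatched, the lemma is immediate from the cited theorem.
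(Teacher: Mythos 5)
Your proof follows the same route as the paper: both invoke Puterman's Theorem~7.1.3 directly, the paper with a single sentence and you with a (useful) check of its hypotheses and the bookkeeping needed to phrase it as Bellman compliance of $\oprt$. One small slip worth noting: you describe $\MDPOP{C,\rta,\sk,\hh}$ as an \enquote{optimal positive bounded model,} but rewards (and hence optimal values) can be $\infty$ here, so the model is positive but not in general bounded --- the same subtlety the paper itself flags when discussing Theorem~7.2.2 --- though this does not affect the applicability of Theorem~7.1.3.
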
%
\noindent%
In fact, the same holds for the (extended) expected runtime calculus:
\begin{lemma}\label{thm:op:eert-bellman}
  $\eert$ is Bellman compliant.
\end{lemma}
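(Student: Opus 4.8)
The plan is to prove \Cref{thm:op:eert-bellman} by structural induction on $C \in \hpgcl$. In each case I would evaluate the left-hand side of the Bellman equation by unfolding $\eertt{C}{\rta} = \ert{C}{\rta}$ according to \Cref{table:ert}, evaluate the right-hand side by unfolding the one-step transitions of $(\conf{C}{\sk}{\hh})$ from \Cref{fig:op:rules} together with the definition of $\eert$ on the successor configurations, and check that the two sides agree. Throughout I may assume $\ert{C}{\cdot}$ is well-defined, since for loops the least fixed point of the monotone characteristic functional $\charfun{\rta}$ exists by Knaster--Tarski.

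For the purely stack-manipulating atoms the check is immediate. For $\TICK{e}$ the unique successor is $\tconf{\sk}{\hh}$ with $\MROP(\conf{\TICK{e}}{\sk}{\hh}) = \sk(e)$ and $\eertt{\term}{\rta} = \rta$, so the right-hand side is $\sk(e) + \rta(\sk,\hh) = (e+\rta)(\sk,\hh)$; for $\ASSIGN{x}{e}$ the unique successor is $\tconf{\sk\statesubst{x}{\sk(e)}}{\hh}$ with reward $0$, giving $\rta(\sk\statesubst{x}{\sk(e)},\hh) = \rta\subst{x}{e}(\sk,\hh)$; for $\PCHOICE{C_1}{p}{C_2}$ the accumulated transition-probability function $\MPOP$ yields exactly $\sk(p)\cdot\eertt{C_1}{\rta}(\sk,\hh) + (1-\sk(p))\cdot\eertt{C_2}{\rta}(\sk,\hh)$ with $\eertt{C_i}{\rta} = \ert{C_i}{\rta}$ by definition, and the conditional is analogous. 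For the remaining heap atoms (lookup, mutation, deallocation) the computation is exactly the one sketched informally after \Cref{table:ert}: using the defining properties of $\sepadd$ and $\sepmon$ and the conventions $0\cdot\infty = 0$, $\infty \monus \infty = 0$, one shows the \RSL expression from the table evaluated at $(\sk,\hh)$ equals $\rta$ evaluated in the (deterministic, once the heap is fixed) successor state when the relevant location is allocated, and $\infty$ otherwise --- matching the operational step into $(\conf{\fail}{\sk}{\hh})$, for which $\eertt{\fail}{\rta} = \infty$.

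For $\COMPOSE{C_1}{C_2}$ I would exploit that a single step of $\COMPOSE{C_1}{C_2}$ is a single step of $C_1$ with $C_2$ appended: every successor is $(\conf{\COMPOSE{C_1'}{C_2}}{\sk'}{\hh'})$, $(\conf{C_2}{\sk'}{\hh'})$, or $(\conf{\fail}{\sk}{\hh})$, mirroring a step of $(\conf{C_1}{\sk}{\hh})$ to $(\conf{C_1'}{\sk'}{\hh'})$, to $\tconf{\sk'}{\hh'}$, or to $(\conf{\fail}{\sk}{\hh})$, respectively. Setting $\rtb \defeq \ert{C_2}{\rta}$ and using $\eertt{\COMPOSE{C_1'}{C_2}}{\rta} = \ert{C_1'}{\ert{C_2}{\rta}} = \ert{C_1'}{\rtb} = \eertt{C_1'}{\rtb}$ (a definitional unfolding of the composition rule of \Cref{table:ert}), $\eertt{C_2}{\rta} = \rtb = \eertt{\term}{\rtb}$, $\eertt{\fail}{\rta} = \infty = \eertt{\fail}{\rtb}$, and $\MROP(\conf{\COMPOSE{C_1}{C_2}}{\sk}{\hh}) = \MROP(\conf{C_1}{\sk}{\hh})$, the right-hand side of the Bellman equation for $\COMPOSE{C_1}{C_2}$ and postruntime $\rta$ becomes termwise identical to that for $C_1$ and postruntime $\rtb$; by the induction hypothesis for $C_1$ it equals $\eertt{C_1}{\rtb}(\sk,\hh) = \ert{C_1}{\ert{C_2}{\rta}}(\sk,\hh) = \ert{\COMPOSE{C_1}{C_2}}{\rta}(\sk,\hh)$. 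For $\WHILEDO{\varphi}{C'}$ no induction hypothesis is needed: the unique successor is $(\conf{\COMPOSE{C'}{\WHILEDO{\varphi}{C'}}}{\sk}{\hh})$ if $\sk \models \varphi$ and $\tconf{\sk}{\hh}$ otherwise, both with reward $0$, so the right-hand side is $\charfun{\rta}\bigl(\ert{\WHILEDO{\varphi}{C'}}{\rta}\bigr)(\sk,\hh)$, which equals $\ert{\WHILEDO{\varphi}{C'}}{\rta}(\sk,\hh)$ because the latter is by definition the least fixed point of $\charfun{\rta}$, hence a fixed point.

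The main obstacle is the allocation case, $\ALLOC{x}{e}$: here the nondeterministic, countably-infinitely-branching choice of the first location $v$ of the fresh block in the operational semantics must be matched with the maximizing $\Sup v$ of \Cref{table:ert}, while the supremum over heap extensions hidden inside $\sepmon$ must be reconciled with the single concrete extension $\singleheap{v}{0}\sepcon\cdots\sepcon\singleheap{v+\sk(e)-1}{0}$ appearing in the operational rule. The key computation is that, for a fixed value $v$, the term $\bigl(\bigoplus_{i=1}^{e}\isingleton{v+i-1}{0}\bigr)\sepmon\rta\subst{x}{v}$ evaluates at $(\sk,\hh)$ to $\rta(\sk\statesubst{x}{v},\hh\sepcon\hh_v)$ when the block $\hh_v$ is disjoint from $\hh$ (the separating monus being witnessed by exactly $\hh_v$, since any other extension makes the separating sum $\infty$) and to $\text{``something''} \monus \infty = 0$ otherwise; the outer $\Sup v$ then discards the collisions and reproduces the operational supremum over allocation values, with the corner case $\sk(e)=0$ (empty block, unchanged heap, nondeterministic value of $x$) covered by $\bigoplus_{i=1}^{0}\cdots = \iemp$. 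The other heap-manipulating statements follow the same pattern with strictly less bookkeeping.
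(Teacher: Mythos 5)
Your proposal is correct and follows essentially the same route as the paper's own proof: structural induction unfolding $\ert$ against \Cref{table:ert} on one side and the one-step transitions of \Cref{fig:op:rules} on the other, with the heap atoms discharged via the pointwise behaviour of $\sepadd$/$\sepmon$ (the paper factors these into auxiliary lemmas \ref{lem:lookup}--\ref{lem:free}), the allocation case handled by matching the operational enabled-action supremum to the $\Sup v$ and the internal $\sepmon$ supremum, and the loop case closed by the fixed-point equation for $\charfun{\rta}$. Your treatment of sequential composition --- setting $\rtb \defeq \ert{C_2}{\rta}$ and observing that the Bellman right-hand side for $\COMPOSE{C_1}{C_2}$ with postruntime $\rta$ is termwise identical to that for $C_1$ with postruntime $\rtb$, then invoking the induction hypothesis on $C_1$ --- is a mild streamlining of the paper's explicit three-way case split on whether $C_1$'s step lands in $\hpgcl$, $\term$, or $\fail$, but it rests on exactly the same correspondences and the same (implicit) identity $\MROP(\conf{\COMPOSE{C_1}{C_2}}{\sk}{\hh}) = \MROP(\conf{C_1}{\sk}{\hh})$, so it is the same argument differently packaged rather than a genuinely different proof.
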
%
\begin{proof}
  By structural induction on the rules of our execution relation (\Cref{fig:op:rules}).
\end{proof}%
\noindent%
Equipped with two Bellman compliant runtime transformers, we now prove that $\eert \leq \oprt$ and $\oprt \leq \eert$. The first inequality can be proven directly by structural induction: 
\begin{lemma}\label{thm:op:eert-leq-oprt}
  $\eert \lleq \oprt$.
\end{lemma}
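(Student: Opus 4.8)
The plan is to proceed by structural induction on the program $C \in \hpgcl \cup \{\term, \fail\}$, showing for all runtimes $\rta \in \T$ and states $(\sk,\hh)$ that $\eertt{C}{\rta}(\sk,\hh) \ppreceq \oprtt{C}{\rta}(\sk,\hh)$. The base cases $C = \term$ and $C = \fail$ are immediate: $\eertt{\term}{\rta} = \rta$ and the operational MDP started in $(\conf{\term}{\sk}{\hh})$ collects exactly reward $\rta(\sk,\hh)$ and then moves to $\sink$ (which loops forever collecting $0$), so $\oprtt{\term}{\rta}(\sk,\hh) = \rta(\sk,\hh)$; similarly both sides are $\infty$ for $\fail$. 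For the atomic non-looping statements ($\TICKsymbol$, assignment, allocation, lookup, mutation, free), I would not even need the induction hypothesis: I would unfold $\oprt$ using Bellman compliance (\Cref{thm:op:bellman}), which expresses $\oprtt{C}{\rta}(\sk,\hh)$ as the immediate reward $\MROP(\conf{C}{\sk}{\hh})$ plus the $\sup_{\ma}$-weighted sum over one-step successors of $\oprtt{\term}{\rta}$ (or $\oprtt{\fail}{\rta} = \infty$ for the faulting branches), and then check that this matches the corresponding clause of \Cref{table:ert} exactly — e.g. for $\ALLOC{x}{e}$ the one-step successors are indexed by the nondeterministically chosen base location $\val$, the $\sup$ over actions becomes the $\Sup v$, and the heap extension by $\bigoplus_{i=1}^e \isingleton{v+i-1}{0}$ (imposed via $\sepmon$) matches the operational rule's $\hh \sepcon \hh'$. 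The compositional cases $\COMPOSE{C_1}{C_2}$, $\ITE{\guard}{C_1}{C_2}$, and $\PCHOICE{C_1}{p}{C_2}$ follow by unfolding Bellman compliance one step and then applying the induction hypothesis to the (structurally smaller) continuations; for sequential composition one also uses that $\eertt{\COMPOSE{C_1}{C_2}}{\rta} = \eertt{C_1}{\eertt{C_2}{\rta}}$, so a nested induction (or a prior lemma that $\eert$ composes correctly on the extended domain, treating $\COMPOSE{\cdot}{C_2}$ uniformly) handles it.

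The real work is the while loop $\mathit{loop} = \WHILEDO{\guard}{\mathit{body}}$. Here $\eertt{\mathit{loop}}{\rta} = \lfp\, \charfun{\rta}$ where $\charfun{\rta}(\rtb) = \iverson{\neg\guard}\cdot\rta + \iverson{\guard}\cdot\ert{\mathit{body}}{\rtb}$. By \Cref{thm:ert-health}(1), $\ertsymbol$ is $\omega$-continuous, so the least fixed point is the supremum of the Kleene iterates $\charfun{\rta}^n(0)$. The plan is to show, by an inner induction on $n$, that $\charfun{\rta}^n(0) \ppreceq \oprtt{\mathit{loop}}{\rta}$ for every $n$, whence $\lfp\,\charfun{\rta} = \sup_n \charfun{\rta}^n(0) \ppreceq \oprtt{\mathit{loop}}{\rta}$ by taking the supremum. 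For the inner step, I would rewrite $\charfun{\rta}^{n+1}(0)$ via one loop unrolling — $\charfun{\rta}^{n+1}(0) = \iverson{\neg\guard}\cdot\rta + \iverson{\guard}\cdot\ert{\mathit{body}}{\charfun{\rta}^n(0)}$ — apply the outer structural induction hypothesis to $\mathit{body}$ (strictly smaller than $\mathit{loop}$) with post-runtime $\charfun{\rta}^n(0)$, then the inner induction hypothesis $\charfun{\rta}^n(0) \ppreceq \oprtt{\mathit{loop}}{\rta}$ together with monotonicity of $\ertsymbol$ (\Cref{thm:ert-health}(2)), and finally Bellman compliance of $\oprt$ on the loop configuration — whose one-step successors are $(\conf{\COMPOSE{\mathit{body}}{\mathit{loop}}}{\sk}{\hh})$ when $\sk\models\guard$ and $(\conf{\term}{\sk}{\hh})$ otherwise — to fold everything back into $\oprtt{\mathit{loop}}{\rta}(\sk,\hh)$. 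The subtlety is the interaction of the two inductions: the outer one is on program structure, the inner one on Kleene iterates; one must be careful that the outer IH is available for $\mathit{body}$ (and for $\COMPOSE{\mathit{body}}{\mathit{loop}}$, using the composition identity plus the already-established sequential-composition case) while the inner IH drives the fixed-point approximation.

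The main obstacle I anticipate is precisely this loop case — more specifically, correctly marrying the operational "one step of the loop" (which unrolls to $\COMPOSE{\mathit{body}}{\mathit{loop}}$ and hence re-introduces the whole loop as a subterm, breaking naive structural recursion) with the denotational fixed-point characterization. The clean way around it is to phase the argument: first establish the lemma for all proper subprograms and for sequential compositions of the form $\COMPOSE{C_1}{\mathit{loop}}$ where $C_1$ is a subprogram of $\mathit{loop}$, using that $\oprt$ restricted to such compositions is again Bellman compliant and that $\eert$ distributes over sequencing; then run the Kleene-iterate induction for the loop itself on top of that. The remaining cases are bookkeeping: checking that each row of \Cref{table:ert} is literally the Bellman right-hand side specialized to that statement, with the $\Sup$/$\Inf$ quantifiers in the $\ertsymbol$-rules mirroring the $\sup$ over enabled actions (for allocation) and the choice of heap value/extension forced by the operational rules (for lookup, mutation, free).
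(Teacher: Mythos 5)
Your overall architecture — base cases for $\term$ and $\fail$, Bellman unfolding for atomic statements, structural induction with Bellman unfolding for composites, and a fixed-point argument for loops — matches the paper's proof closely. The one genuine structural difference is the loop case: you use Kleene iteration (show $\charfun{\rta}^n(0) \preceq \oprtt{\mathit{loop}}{\rta}$ by inner induction on $n$ and then pass to the supremum, which needs $\omega$-continuity of $\ertsymbol$), whereas the paper uses Park induction directly (show $\oprtt{\mathit{loop}}{\rta}$ is a prefixed point of $\charfun{\rta}$, which needs only monotonicity). Both are sound; Park is the slightly leaner of the two.

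Where your sketch is imprecise is the load-bearing auxiliary fact. The obstacle you correctly flag — that one operational step from the loop produces $\COMPOSE{\mathit{body}}{\mathit{loop}}$ and seems to break structural induction — is resolved not by the identity $\eertt{\COMPOSE{C_1}{C_2}}{\rta} = \eertt{C_1}{\eertt{C_2}{\rta}}$ (that holds by definition and is not the difficulty), but by a standalone inequality on the \emph{operational} side, namely $\oprtt{C_1}{\oprtt{C_2}{\rta}} \preceq \oprtt{\COMPOSE{C_1}{C_2}}{\rta}$. This is the paper's Lemma \ref{thm:op:seq}, and it requires its own nontrivial argument (truncated $n$-step unrollings plus a minimality invocation à la Lemma \ref{thm:op:oprt-least}); your phrase ``$\oprt$ restricted to such compositions is again Bellman compliant and $\eert$ distributes over sequencing'' does not deliver it. Once that lemma is available, your proposed ``phasing'' trick is unnecessary: in the loop case the structural IH is only ever applied to the strictly smaller program $\mathit{body}$ (with a possibly complicated postruntime), while the reassembly into $\oprtt{\COMPOSE{\mathit{body}}{\mathit{loop}}}{\rta}$ and then $\oprtt{\mathit{loop}}{\rta}$ happens entirely on the $\oprt$ side via Lemma \ref{thm:op:seq} and the one-step unfolding of Lemma \ref{thm:op:while}. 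So the structural induction goes through in the ordinary sense, and the real work you should budget for is proving the $\oprt$ composition inequality.
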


\begin{proof}
By construction, we have 
\begin{align*} \eert\llbracket\term\rrbracket(\rta) \eeq \rta \ppreceq \oprt\llbracket\term\rrbracket(\rta) \qand
  \eert\llbracket\fail\rrbracket(\rta) \eeq \infty \ppreceq \oprt\llbracket\fail\rrbracket(\rta)~.
\end{align*}
It then suffices to show that by induction on the structure of $\hpgcl$ programs that, for all $C \in \hpgcl$ and all $\rta \in \T$, we have $\eertt{C}{\rta} \preceq \oprtt{C}{\rta}$.
\end{proof}%
\noindent%
We do not directly show the converse direction, \ie $\oprt \leq \eert$, since we can invoke a more general result for MDPs that goes back to \citet{blackwell1967positive}:
\begin{lemma}\label{thm:op:oprt-least}
  For every Bellman compliant runtime transformer $\rt$, we have $\rt \geq \oprt$.
\end{lemma}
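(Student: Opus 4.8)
The statement is an instance of positive (unbounded‑reward) dynamic programming in the sense of \citet{blackwell1967positive}: the expected‑reward transformer $\oprt$ is the \emph{least} Bellman‑compliant runtime transformer. The plan is to prove it directly by a finite‑horizon induction rather than invoking Blackwell's theorem as a black box. First I would fix a Bellman‑compliant runtime transformer $\rt$ and a post‑runtime $\rta \in \T$, and repackage the pair $(\rt,\rta)$ into a single \emph{value function} $V\colon \Confs \to \PosRealsInf$ on configurations, setting $V(\conf{C}{\sk}{\hh}) \defeq \rtt{C}{\rta}(\sk,\hh)$ and $V(\sink) \defeq 0$; here one uses that every runtime transformer satisfies $\rtt{\term}{\rta} = \rta$ and $\rtt{\fail}{\rta} = \infty$. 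Unrolling the Bellman equation for $C \in \hpgcl$ and checking the remaining configuration shapes $\conf{\term}{\sk}{\hh}$, $\conf{\fail}{\sk}{\hh}$, and $\sink$ by hand — in each of which the equation holds trivially with equality — shows that $V$ satisfies the Bellman equation at \emph{every} configuration, and in particular the Bellman inequality
\[
   V(c) ~\geq~ \MROP(c) \pplus \sum_{c'} \MPOP(c,\ma,c')\cdot V(c') \qquad\text{for all } c \in \Confs \text{ and all actions } \ma \in \MA~.
\]
Since all quantities live in $\PosRealsInf$, the countable successor‑sums here and below are unconditionally well‑defined and every rearrangement is justified by Tonelli's theorem (all summands are non‑negative).

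Next, for a scheduler $\msched \in \MSCHED$, a configuration $c \in \Confs$, and $m \in \Nats$ I would write $U_m^\msched(c)$ for the expected reward accumulated over the first $m$ steps of the runs induced by $\msched$ started in $c$ — i.e.\ the obvious variant of $\sum_{\mpath \in \mpaths{m+1}{\msched}} \MP(\mpath)\cdot\MR(\mpath)$ with the initial state $\MI$ replaced by $c$. Unravelling the definition of $\ExpRew{\cdot}$ then gives $\oprtt{C}{\rta}(\sk,\hh) = \sup_{\msched}\sup_{m}U_m^\msched(\conf{C}{\sk}{\hh})$, so it suffices to establish $U_m^\msched(c) \leq V(c)$ for all $m$, $\msched$, and $c$, and then take suprema.

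I would prove $U_m^\msched(c) \leq V(c)$ by induction on $m$. The base case $m = 0$ is immediate since no reward is collected, so $U_0^\msched(c) = 0 \leq V(c)$. For the inductive step, fix $\msched$ and $c$; if $\msched(c)$ is not an enabled action then $U_{m+1}^\msched(c) = 0$ and we are done, so assume it is enabled. Conditioning on the first transition — which moves from $c$ to $c'$ with probability $\MPOP(c,\msched(c),c')$ — and passing to the \emph{shifted scheduler} $\msched'(\rho) \defeq \msched(c\,\rho)$, the Markov property yields
\[
  U_{m+1}^\msched(c) \qeq \MROP(c) \pplus \sum_{c'}\MPOP(c,\msched(c),c')\cdot U_m^{\msched'}(c')~.
\]
Applying the induction hypothesis to $\msched'$ and each successor $c'$ gives $U_m^{\msched'}(c') \leq V(c')$, and plugging this into the Bellman inequality with $\ma = \msched(c)$ yields $U_{m+1}^\msched(c) \leq \MROP(c) + \sum_{c'}\MPOP(c,\msched(c),c')\,V(c') \leq V(c)$, closing the induction. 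Taking $\sup$ over $m$ and then over $\msched$ gives $\oprtt{C}{\rta}(\sk,\hh) \leq V(\conf{C}{\sk}{\hh}) = \rtt{C}{\rta}(\sk,\hh)$; as $C$, $\rta$, and $(\sk,\hh)$ were arbitrary, this is exactly $\rt \geq \oprt$.

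I expect the only genuinely fiddly step to be the middle identity for $U_{m+1}^\msched(c)$: one has to match the reward‑indexing convention $\MR(\ms_0 \cdots \ms_{n-1}) = \sum_{i=0}^{n-2}\MR(\ms_i)$ (reward is collected on \emph{leaving} a state) against the factorisation $\MP(c\,\ms_1\cdots\ms_{m+1}) = \MPOP(c,\msched(c),\ms_1)\cdot\MP^{\msched'}(\ms_1\cdots\ms_{m+1})$ of path probabilities, and use that the total probability mass of all length‑$(m+1)$ continuations from any state equals $1$, so that the leading $\MROP(c)$ term is neither dropped nor diluted. Everything else — the case analysis verifying the Bellman equation at $\term$, $\fail$, and $\sink$, the index shift relating $\ExpRew{\cdot}$ to the $U_m^\msched$, and the interchange of the two suprema — is routine.
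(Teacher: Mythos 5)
Your proof is correct, but it takes a genuinely different route from the paper. The paper's proof is a two-sentence citation: it observes that, since runtimes are non-negative and the execution relation never deadlocks, the operational MDPs are \emph{positive models} in the sense of Puterman, and then invokes \citet{blackwell1967positive}'s Theorem~2 (that the optimal value function of a positive model is the least solution of the Bellman optimality equation) as a black box. You instead unpack and reprove precisely the special case of Blackwell's theorem that is needed: you package the Bellman-compliant transformer into a value function $V$ on configurations, check by hand that the Bellman inequality also holds at the ``boundary'' configurations $\term$, $\fail$, and $\sink$ (a step the paper elides by treating the MDP abstractly), and then show $\sup_\msched\sup_m U_m^\msched \leq V$ by a finite-horizon induction using the shifted scheduler and Tonelli. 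The payoff of your approach is that the proof is self-contained, elementary, and makes explicit where non-negativity of runtimes enters (Tonelli, and the base case $U_0 \leq V$), which the citation obscures; the cost is that it is longer and reproduces a result that is standard in the MDP literature. One small point worth being pedantic about if you write this out in full: your middle identity $U_{m+1}^\msched(c) = \MROP(c) + \sum_{c'}\MPOP(c,\msched(c),c')\,U_m^{\msched'}(c')$ relies on the fact that $\sum_{c'}\MPOP(c,\msched(c),c') = 1$ when $\msched(c)$ is enabled, so that the leading $\MROP(c)$ term appears with full weight; you flag this, which is exactly right, since the paper's reward-on-leaving convention makes it easy to double-count or drop the initial reward.
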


\begin{proof}
    Since all runtimes in $\T$ are non-negative and our execution relation
    never gets stuck, the MDPs induced by our operational semantics are
    positive models (cf. beginning of \cite[Chapter 7.2]{puterman2005markov}).
    The claim is then a special case of \cite[Theorem 2]{blackwell1967positive}.   
\end{proof}%
\noindent%
Notice that a variant of Blackwell's theorem is also found in the textbook of \citet[Theorem 7.2.2]{puterman2005markov}. However, Puterman considers only positive \emph{bounded} models, even though the proof of his theorem does not seem to rely on having a bounded model. Finally, we conclude:%
 %
%
\begin{proof}[Proof of \Cref{thm:ert-sound-op}]
  By \Cref{thm:op:eert-bellman}, $\eert$ is a Bellman compliant runtime transformer. By \Cref{thm:op:oprt-least}, this implies $\oprt \leq \eert$.
  Moreover, by \Cref{thm:op:eert-leq-oprt}, we have $\eert \leq \oprt$.
  Hence, $\eert = \oprt$.
  Now, for any $\hpgcl$ program $C$, runtime $\rta \in \T$, and state $(\sk,\hh)$,  we have
   \begin{align*}
  	\ert{C}{\rta}(\sk,\hh) 
  	\eeq & \eert\llbracket{C}\rrbracket(\rta)(\sk,\hh) \tag{$C \in \hpgcl$, by construction of $\eert$} \\
  	\eeq & \oprt\llbracket{C}\rrbracket(\rta)(\sk,\hh) \tag{$\eert = \oprt$ as shown above} \\
  	\eeq & \ExpRew{\MDPOP{C,\rta,\sk,\hh}}. \tag*{(definition of $\oprt$)\quad\qedhere}
  \end{align*}
\end{proof}

\section{The Amortized Expected Runtime Calculus}
\label{s:aert}

In \emph{amortized analysis}, instead of analyzing the worst-case runtime of $C$, we average the runtime of $C$ over a whole sequence $C^n = \COMPOSE{\COMPOSE{C_1}{\ldots}}{C_n}$ of $n$ consecutive executions of $C$.
One technique for amortized analysis is the \emph{potential method}, whose core idea is to \emph{make frequently occurring low \enquote{\underline{normal-case}} runtimes of $C$ \underline{mildl}y\underline{ lar}g\underline{er}} and in return be able to \emph{make the seldomly occurring \underline{worst-case} runtimes \underline{a lot smaller}}, thus smoothing out seldomly occurring runtime-peaks in the sequence $C^n$.
Key ingredient to achieve such smoothing is a potential function:%
\begin{definition}[Potential Functions~\textnormal{\cite{tarjan1985amortized,sleator1985self}}]
	A \emph{potential function} is a function $\potent$ of type $\States \to \PosReals$.
	Note that $\potent \in \T$ with $\potent \prec \infty$.\qedtriangle
\end{definition}%
\noindent{}%
The potential function needs to be chosen so that each time $C$ has small runtime, the potential is mildly increased.
Each time $C$ has large runtime, on the other hand, the potential should be drastically decreased.
The \emph{amortized runtime} of $C$ is then $C$'s actual runtime plus \emph{the change in potential}.
Indeed, then the amortized runtime of $C$ in the cheap case is $C$'s small runtime plus a mildly positive change in potential --- overall still a small number.
The amortized runtime of the expensive case, on the other hand, is $C$'s large runtime plus a large \emph{negative} change in potential --- overall, again (hopefully), a small number.
Why did the potential do the trick?

Let us denote the runtime of executing $C_i$ by $\mathit{rt}_i$ and the potential attained afterwards by $\potent_i$. 
The amortized runtime of executing sequence element $C_{i+1}$ is then $\mathit{rt}_{i+1} + \potent_{i+1} - \potent_{i}$.
Summing the amortized runtimes over the whole sequence gives%
\begin{align*}
	\left . \sum_{i=0}^n \middle( \mathit{rt}_{i+1} + \potent_{i+1} - \potent_{i} \right)
	\qqeq
	\underbrace{\left. \sum_{i=0}^n \middle( \mathit{rt}_{i+1}\right)}_{\mathclap{{\textnormal{actual runtime of the sequence}}}}{}  \pplus \overbrace{\potent_{n}}^{\mathclap{{\textnormal{non-negative}}}} \mminus \underbrace{\potent_{0}}_{\mathclap{{\textnormal{assumed to be 0}}}}~. \tag{$\dagger$}
\end{align*}%
It is now easy to see that if we start with initial potential $\potent_0 = 0$, then the amortized runtime of the whole sequence \emph{overapproximates} the actual runtime of the sequence.
Indeed, we could recover the actual runtime by subtracting from the amortized runtime the (non-negative) final potential $\potent_n$.

As we are concerned with \emph{expected} (amortized) runtimes of randomized algorithms, we would need to take expected changes in potential into account and telescoping is not as obvious anymore.
Moreover, changes in potential may become negative.
In fact: they should!
Otherwise, we have no chance of compensating for expensive operations.
In the following, we present an $\ertsymbol$-style calculus that can capture \emph{amortized expected runtimes} and we prove that it essentially satisfies the above telescoping property, so that it also over-approximates true \emph{expected} runtimes.

\needspace{2\baselineskip}
\subsection{The Calculus}
\begin{table}[t]
	\renewcommand{\arraystretch}{1.5}
	\begin{adjustbox}{max width=\textwidth}
		\begin{tabular}{@{\hspace{1em}}l@{\hspace{2em}}l@{\hspace{1em}}}
			\hline\hline
			$\boldsymbol{C}$			& $\boldsymbol{\textbf{\textsf{aert}}_\potent\,\left \llbracket C\right\rrbracket  \left(\arta \right)}$\\
			\hline\hline
			$\TICK{e}$ 						& $e + \arta $ \\
			$C$ is atomic and not $\TICKsymbol$ 	& $\ert{C}{\arta+\potent} - \potent$ \\
			$\COMPOSE{C_1}{C_2}$				& $\aert{\potent}{C_1}{\aert{\potent}{C_2}{\arta}}$ \\
			$\ITE{\varphi}{C_1}{C_2}$				& $\iverson{\varphi} \cdot \aert{\potent}{C_1}{\arta} \pplus \iverson{\neg\varphi} \cdot \aert{\potent}{C_2}{\arta}$ \\
			$\PCHOICE{C_1}{p}{C_2}$			& $p \cdot \aert{\potent}{C_1}{\arta} \pplus (1- p) \cdot \aert{\potent}{C_2}{\arta}$ \\
			$\WHILEDO{\varphi}{C'}$				& $\lfp \artb\mydot ~ \iverson{\neg \varphi} \cdot \arta \pplus \iverson{\varphi} \cdot \aert{\potent}{C'}{\artb}$\\[1ex]
			\hline\hline
		\end{tabular}
	\end{adjustbox}
	\renewcommand{\arraystretch}{1}
	\vspace{1ex}
	\caption{Rules for the $\aertsymbol$--transformer w.r.t.\ potential $\potent$.}	
	\label{table:aert}
\end{table}

Let us fix a potential function $\pi$ and define a set of \emph{amortized runtimes} relative to $\pi$.%
\begin{definition}[Amortized Runtimes]
	The set $\Api$ of amortized runtimes with respect to potential function $\potent$, $\potent$-runtimes for short, is defined as 
	\begin{align*}
	   \Api \eeq \setcomp{\arta \colon \States \to \Reals \cup \{\infty\} }{\forall (\sk,\hh) \in \States \colon {-}\potent(\sk,\hh) \leq \arta(\sk,\hh)}~.
	\end{align*}%
	We denote amortized runtimes by $\arta, \artb,\artc$ and variations. 
	We extend $\preceq$ from $\E$ to $\Api$ naturally by
	\begin{align*}
		\arta \ppreceq \artb \qquad\text{iff}\qquad 
		\text{for all $(\sk,\hh)\in\States$}\colon\quad \arta(\sk,\hh) \lleq \artb(\sk,\hh)~.
	\end{align*}%
	$(\Api,\, {\preceq})$ forms a complete lattice with least element $-\potent$.\qedtriangle
\end{definition}%
\noindent{}%
The backward-moving \emph{amortized expected runtime transformer}%
%
	%
	\begin{align*}
	\aertsymbol_\potent \colon \hpgcl \to (\Api \to \Api)
	\end{align*}%
	%
%
is defined by induction on $\hpgcl$ in \Cref{table:aert} and manipulates $\pi$-runtimes, which can in principle become negative (as negative as ${-}\potent$), instead of ordinary runtimes.
Like $\ertsymbol$, the $\aertsymbol$ transformer is defined in such a way that%
\begin{align*}
	\aert{\potent}{C}{0}(\sk,\hh) \eeq \infty \quad\textnormal{if}\quad C \textnormal{ is not memory-safe on } (\sk,\hh)~.
\end{align*}%
Let us briefly go over the rules defining $\aertsymbol$.

\paragraph{Time consumption} 
Executing $\TICK{e}$ and then letting (amortized) time $\arta$ pass takes $e + \arta$ units of time and cannot change the potential.
We cannot go for $\emprun{e} \sepadd \arta$ because $\sepadd$ is undefined on $\Api$.

\paragraph{All other atoms.}

We have $\aert{\potent}{\mathit{atom}}{\arta} = \ert{\mathit{atom}}{\arta + \potent} - \potent$.
\Eg, for $\ASSIGN{x}{e}$, this gives $\arta\subst{x}{e} + \potent\subst{x}{e} - \potent$.
Here, we can see how the change in potential is propagated through the program at the level of atoms.

\paragraph{Composite constructs} 

Defined like $\ertsymbol$, but we need to compose the terms of $\aertsymbol$ components.

\begin{theorem}
	\label{thm:aert_cont}
	$\aertsymbol_\potent$ is $\omega$-continuous, \ie for all $C \in \hpgcl$ and $\omega$-chains  $F = \{\arta_1 \preceq \arta_2 \preceq {\ldots}\}$, 
		\begin{align*}
		\aert{\potent}{C}{\sup F} \eeq \sup \aert{\potent}{C}{F}~.
		\end{align*}%
\end{theorem}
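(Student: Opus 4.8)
The plan is a structural induction on $C$ that simultaneously establishes monotonicity of $\aertsymbol_\potent\llbracket C\rrbracket$ (needed anyway for the least fixed point in the while-rule to be well-defined, and carried along here for free) and the desired $\omega$-continuity. The one observation that makes everything go through is that $\potent \prec \infty$: hence the translation $\arta \mapsto \arta + \potent$ is an order isomorphism between $(\Api,\preceq)$ and $(\T,\preceq)$ with inverse $\rta \mapsto \rta - \potent$, and \emph{both} directions preserve suprema of $\omega$-chains, since at each state $(\sk,\hh)$ one only adds or subtracts the finite real $\potent(\sk,\hh)$ and, for an increasing sequence $a_n$ in $\Reals\cup\{\infty\}$ bounded below, $\sup_n(a_n \pm c) = (\sup_n a_n) \pm c$ whenever $c$ is finite. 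The only further arithmetic facts needed are that $e$ is finite, that scalar multiplication by $p\in[0,1]$ (with the convention $0\cdot\infty = 0$) commutes with suprema of chains, and that $\sup_n(a_n+b_n) = \sup_n a_n + \sup_n b_n$ for two increasing chains. No integrability bookkeeping enters.

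For the base cases: for $\TICK{e}$, continuity is immediate from $\sup_i(e + \arta_i) = e + \sup_i \arta_i$. For an atomic $C$ other than $\TICK{e}$ we use $\aert{\potent}{C}{\arta} = \ert{C}{\arta + \potent} - \potent$. Given an $\omega$-chain $F = \{\arta_1\preceq\arta_2\preceq\ldots\}$ in $\Api$, the family $F+\potent$ is an $\omega$-chain in $\T$ with $\sup(F+\potent) = \sup F + \potent$; by $\omega$-continuity of $\ertsymbol$ (\Cref{thm:ert-health}(1)), $\ert{C}{\sup F + \potent} = \sup \ert{C}{F+\potent}$; and subtracting the finite $\potent$ commutes with this supremum. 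Hence $\aert{\potent}{C}{\sup F} = \sup\aert{\potent}{C}{F}$. Monotonicity for atoms follows the same way from monotonicity of $\ertsymbol$ (\Cref{thm:ert-health}(2)).

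For the composite constructs I would argue exactly as one does for $\ertsymbol$. Sequential composition $\COMPOSE{C_1}{C_2}$ follows by composing the two induction hypotheses (using monotonicity of $\aert{\potent}{C_1}$ to see that $\aert{\potent}{C_2}{F}$ is again an $\omega$-chain). The probabilistic choice $\PCHOICE{C_1}{p}{C_2}$ and the conditional reduce to the arithmetic facts above (scalar multiplication and binary addition commuting with suprema of chains) applied to the hypotheses for $C_1$ and $C_2$. For the while loop $\WHILEDO{\varphi}{C'}$, write $\Psi_{\arta}(\artb) = \iverson{\neg\varphi}\cdot\arta \pplus \iverson{\varphi}\cdot\aert{\potent}{C'}{\artb}$. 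By the hypothesis for $C'$, $\Psi_{\arta}$ is monotone and $\omega$-continuous in $\artb$, so by Kleene's theorem $\aert{\potent}{\WHILEDO{\varphi}{C'}}{\arta} = \lfp\Psi_{\arta} = \sup_k \Psi_{\arta}^k({-}\potent)$, recalling that $-\potent$ is the least element of $\Api$. One then proves by induction on $k$ that $\Psi_{\sup F}^k({-}\potent) = \sup_n \Psi_{\arta_n}^k({-}\potent)$, using in the step continuity of $(\arta,\artb)\mapsto\Psi_{\arta}(\artb)$ in both arguments together with the standard diagonalization $\sup_m\sup_n = \sup_k$ along the diagonal (legal because both families are $\omega$-chains). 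Exchanging the two suprema gives $\lfp\Psi_{\sup F} = \sup_k\sup_n\Psi_{\arta_n}^k({-}\potent) = \sup_n\sup_k\Psi_{\arta_n}^k({-}\potent) = \sup_n\lfp\Psi_{\arta_n}$, which is the claim; monotonicity in $\arta$ is the usual Knaster--Tarski fact.

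The main obstacle is the while case: one has to set up the nested induction on the unrolling index $k$ carefully and check that every supremum in sight is a supremum of an $\omega$-chain, so that the diagonalization and the interchange of suprema are both justified. By contrast, the difficulty one might have feared — that $\aertsymbol$ manipulates possibly negative, possibly infinite amortized runtimes — simply disappears once one records that $\potent$ and $e$ are finite, which is precisely what lets all the relevant operations commute with suprema; this is where the present development sidesteps the integrability machinery of \cite{DBLP:conf/lics/KaminskiK17}.
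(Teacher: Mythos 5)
Your proof is correct, but it takes a genuinely different route from the paper's. The paper does \emph{not} prove $\omega$-continuity of $\aertsymbol_\potent$ by a fresh structural induction. Instead, it first establishes the telescoping identity (\Cref{thm:aert_decomp}), $\aert{\potent}{C}{\arta} = \ert{C}{\arta + \potent} - \potent$, for \emph{all} $C\in\hpgcl$ (not only atoms, where it is the definition), and then observes that continuity is an immediate corollary: since $\potent \prec \infty$, the maps $\arta \mapsto \arta + \potent$ and $\rta \mapsto \rta - \potent$ form an order isomorphism between $\Api$ and $\T$ that preserves suprema of $\omega$-chains, so $\aert{\potent}{C}{\sup F} = \ert{C}{\sup F + \potent} - \potent = \sup \ert{C}{F + \potent} - \potent = \sup \aert{\potent}{C}{F}$, using $\omega$-continuity of $\ertsymbol$ (\Cref{thm:ert-health}) in the middle step. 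That is the whole proof. Your version instead re-derives continuity by structural induction on $C$, which for the while case forces you into the Kleene iteration and the $\sup_m\sup_n$ diagonalization; this is correct (and your check that each supremum is along an $\omega$-chain is exactly the right thing to verify), but it duplicates work that the telescoping lemma already packages once and for all. The key inductive labour in the paper lives inside the proof of \Cref{thm:aert_decomp}, whose while-case is handled not via Kleene chains but by showing that the shift by $\potent$ sets up a bijection between fixed points of the two characteristic functionals, so that the least fixed points correspond. The telescoping route pays off because \Cref{thm:aert_decomp} is independently needed for \Cref{thm:aert_sound}, \Cref{thm:aert_frame}, and \Cref{thm:aert_compositional}; once it is on the books, continuity, monotonicity, and the frame rule for $\aertsymbol$ all reduce to their $\ertsymbol$ counterparts. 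Your observation that finiteness of $\potent$ is the crucial fact that makes all operations commute with suprema is exactly the point the paper is leaning on, and it does explain why the integrability machinery of signed random variables is not needed here.
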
%
\noindent{}%
Continuity of the $\aertsymbol$ transformer follows from continuity of the $\ertsymbol$ transformer and the following central theorem, which formalizes the telescoping principle of Equation $(\dagger)$ for \emph{expected} runtimes:%
\begin{theorem}[Telescoping for $\aertsymbol$]
	\label{thm:aert_decomp}
	For all $C\in\hpgcl$ and $\arta\in\Api$, we have%
	\begin{align*}%
		\aert{\potent}{C}{\arta} \eeq \ert{C}{\arta + \potent} - \potent~.
	\end{align*}
\end{theorem}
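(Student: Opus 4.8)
The plan is to prove the identity $\aert{\potent}{C}{\arta} = \ert{C}{\arta+\potent} - \potent$ by structural induction on $C$, transporting everything through the \emph{potential shift} $\rtb \mapsto \rtb - \potent$. The structural fact that makes this clean is that a potential function is everywhere finite ($\potent \prec \infty$): hence $\rtb \mapsto \rtb + \potent$ and $\artb \mapsto \artb - \potent$ are mutually inverse, order-preserving bijections between $\T$ and $\Api$, and no ill-defined $\infty - \infty$ ever arises. In particular, for $\arta \in \Api$ we have $\arta + \potent \succeq 0$, so $\arta + \potent \in \T$ and $\ert{C}{\arta+\potent}$ is meaningful; conversely $\rtb - \potent \in \Api$ for every $\rtb \in \T$. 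This is precisely where the paper's claim that no heavy integrability machinery is needed cashes out.

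\textbf{Base and composite non-loop cases.} For $\TICK{e}$ both sides equal $e + \arta$, since on the right $(e + \arta + \potent) - \potent = e + \arta$. For every other atomic $C$ the identity holds \emph{by definition} of $\aertsymbol_\potent$ (row \enquote{$C$ is atomic and not $\TICKsymbol$} of \Cref{table:aert}). For $\COMPOSE{C_1}{C_2}$, apply the induction hypothesis to $C_1$ and then to $C_2$, and use $(\ert{C_2}{\arta+\potent} - \potent) + \potent = \ert{C_2}{\arta+\potent}$, matching $\ert{\COMPOSE{C_1}{C_2}}{\arta+\potent} = \ert{C_1}{\ert{C_2}{\arta+\potent}}$. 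For $\ITE{\varphi}{C_1}{C_2}$ and $\PCHOICE{C_1}{p}{C_2}$, apply the induction hypothesis to $C_1$ and $C_2$, pull the $-\potent$ out of the (convex) combination, and collapse the coefficients using $\iverson{\varphi} + \iverson{\neg\varphi} = 1$ (resp. $p + (1-p) = 1$) pointwise; the only bookkeeping is that $\iverson{\varphi}\cdot(g - \potent) = \iverson{\varphi}\cdot g - \iverson{\varphi}\cdot\potent$ even when $g = \infty$, which holds under the convention $0 \cdot \infty = 0$ together with $\potent \prec \infty$.

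\textbf{The loop case (the real work).} Let $\mathit{loop} = \WHILEDO{\varphi}{C'}$, write $\charfun{\arta+\potent}$ for the $\ertsymbol$-characteristic functional of $\mathit{loop}$ w.r.t. postruntime $\arta + \potent$ (a monotone self-map of the complete lattice $\T$, by the monotonicity part of \Cref{thm:ert-health}, so $\lfp \charfun{\arta+\potent}$ exists), and $\aertcharfun{\arta}$ for the $\aertsymbol$-characteristic functional of $\mathit{loop}$ w.r.t. postruntime $\arta$. Consider $T\colon \T \to \Api$, $T(\rtb) = \rtb - \potent$; as noted above $T$ is an order isomorphism with inverse $\artb \mapsto \artb + \potent$, and it sends the least element $0$ of $\T$ to the least element $-\potent$ of $\Api$. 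Using the induction hypothesis for the body $C'$ (legitimate since $\rtb - \potent \in \Api$), a short calculation gives the conjugation identity $\aertcharfun{\arta} \circ T = T \circ \charfun{\arta+\potent}$: expanding $\aertcharfun{\arta}(\rtb - \potent) = \iverson{\neg\varphi}\cdot\arta + \iverson{\varphi}\cdot\aert{\potent}{C'}{\rtb - \potent}$, replacing $\aert{\potent}{C'}{\rtb-\potent}$ by $\ert{C'}{\rtb} - \potent$, and regrouping via $\iverson{\neg\varphi}\cdot\potent + \iverson{\varphi}\cdot\potent = \potent$ yields exactly $\charfun{\arta+\potent}(\rtb) - \potent$. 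Consequently $\aertcharfun{\arta} = T \circ \charfun{\arta+\potent} \circ T^{-1}$ is monotone (so the $\aertsymbol$ while-rule is well-defined), and by the standard fixed-point transfer across an order isomorphism, $\lfp \aertcharfun{\arta} = T(\lfp \charfun{\arta+\potent})$. Hence $\aert{\potent}{\mathit{loop}}{\arta} = \lfp \aertcharfun{\arta} = \lfp \charfun{\arta+\potent} - \potent = \ert{\mathit{loop}}{\arta+\potent} - \potent$, closing the induction.

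\textbf{Main obstacle.} Everything outside the loop is symbol pushing; the one delicate point is the loop case, namely setting up the shift $T$ as an honest order isomorphism $\T \to \Api$, verifying the conjugation identity using the body's induction hypothesis, and pushing the least fixed point through $T$ — all of which hinges on $\potent \prec \infty$ to keep subtraction of $\potent$ total, monotone, and pathology-free.
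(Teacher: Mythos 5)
Your proof is correct and follows essentially the same route as the paper's: atomic cases hold by definition of $\aertsymbol_\potent$, composite non-loop cases follow by unfolding definitions and applying the induction hypothesis, and the loop case reduces to transporting the least fixed point across the potential shift $\rtb \mapsto \rtb - \potent$ (which is total and order-preserving precisely because $\potent \prec \infty$). The only difference is presentational: where you package the loop argument as a conjugation identity $\aertcharfun{\arta}\circ T = T\circ\charfun{\arta+\potent}$ plus the standard fixed-point-transfer lemma for order isomorphisms, the paper instead shows by hand that fixed points of $\aertcharfun{\arta}$ and of $\charfun{\arta+\potent}$ correspond under the shift and derives the two inequalities from the least-fixed-point property — the same content, unpacked.
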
%
\noindent{}%
In some sense, the above theorem tells us that the following \emph{almost} (modulo some technical particularities of the $\texttt{alloc}$ statement which we omit here) holds:%
\begin{align*}
	\aert{\potent}{C}{0} \morespace{\approx} \ert{C}{0} \pplus \textnormal{\enquote{expected potential after executing $C$}} - \potent \tag{$\ddagger$}
\end{align*}%
The reason is that $\ert{C}{\potent}$ can (again: almost) be decomposed into $\ert{C}{0}$ plus the expected value of $\potent$ after executing $C$.

If $C$ is an entire sequence of operations, we can now relate the right-hand-sides of $(\dagger)$ and $(\ddagger)$: 
the \enquote{actual runtime} corresponds to $\ert{C}{0}$, the expected potential after executing $C$ corresponds to $\potent_n$, and the initial potential $\potent_0$ corresponds to $\potent$.
Again, this explanation breaks slightly for programs featuring dynamic memory allocation, but \Cref{thm:aert_decomp} holds also for those programs and allows us to prove soundness of the $\aertsymbol$ transformer in the following \emph{overapproximating} sense:%
\begin{theorem}[Soundness of $\aertsymbol$]
	\label{thm:aert_sound}
	For every $C\in\hpgcl$, we have
	\begin{align*}
	   \ert{C}{0} \ppreceq \aert{\potent}{C}{0} + \potent~.
	\end{align*}%
\end{theorem}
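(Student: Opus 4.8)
The plan is to derive the claim directly from the telescoping identity of \Cref{thm:aert_decomp}. Applied with post-runtime $\arta = 0$, that theorem gives
\begin{align*}
	\aert{\potent}{C}{0} \eeq \ert{C}{0 + \potent} - \potent \eeq \ert{C}{\potent} - \potent~.
\end{align*}
Hence it suffices to establish the inequality $\ert{C}{0} \ppreceq \ert{C}{\potent} - \potent + \potent = \ert{C}{\potent}$, i.e.\ to show $\ert{C}{0} \ppreceq \ert{C}{\potent}$. But this is immediate from monotonicity of $\ertsymbol$ (\Cref{thm:ert-health}\,(2)): since $\potent$ has type $\States \to \PosReals$, we have $0 \ppreceq \potent$, and monotonicity yields $\ert{C}{0} \ppreceq \ert{C}{\potent}$. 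Combining, $\ert{C}{0} \ppreceq \ert{C}{\potent} = \aert{\potent}{C}{0} + \potent$, which is the desired statement.

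There is one subtlety to pin down before this is fully rigorous: one must check that $0 \in \Api$ and that $\aert{\potent}{C}{0}$ is well-typed so that \Cref{thm:aert_decomp} may legitimately be instantiated at $\arta = 0$. This is immediate since $\potent \ppreceq \infty$ and $\potent \succeq 0$, so $-\potent \ppreceq 0$, placing $0$ in the complete lattice $\Api$. A second point worth a sentence: the subtraction $\ert{C}{\potent} - \potent$ appearing after telescoping is genuine (signed) subtraction on $\Api$, not monus; but since we only ever add $\potent$ back, the rearrangement $\bigl(\ert{C}{\potent} - \potent\bigr) + \potent = \ert{C}{\potent}$ is valid pointwise wherever $\ert{C}{\potent}(\sk,\hh) < \infty$, and trivially valid (both sides $\infty$) otherwise, using that $\potent \prec \infty$.

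I do not anticipate any genuine obstacle here: all the work has been front-loaded into \Cref{thm:aert_decomp} (telescoping) and \Cref{thm:ert-health} (monotonicity and the healthiness package), both of which are available. The only place where care is needed — and hence the "main obstacle," such as it is — is making sure the arithmetic on the extended signed domain $\Api$ is handled cleanly, in particular that adding and subtracting the finite function $\potent$ commutes past the (possibly infinite) value $\ert{C}{\potent}$ without ambiguity; this is precisely why the definition of $\Api$ restricts $\potent$ to be real-valued and finite, so the point is really just to invoke that restriction explicitly rather than to prove anything new.
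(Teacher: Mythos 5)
Your proof is correct and follows essentially the same route as the paper's: apply \Cref{thm:aert_decomp} with $\arta = 0$ to obtain $\aert{\potent}{C}{0} = \ert{C}{\potent} - \potent$, then use monotonicity of $\ertsymbol$ (\Cref{thm:ert-health}) with $0 \ppreceq \potent$ to conclude $\ert{C}{0} \ppreceq \ert{C}{\potent}$, and rearrange. The side remarks you add about $0 \in \Api$ and about the arithmetic being unambiguous because $\potent \prec \infty$ are accurate and make explicit what the paper leaves implicit.
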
%
\begin{proof}
	%
	\abovedisplayskip=-1\baselineskip%
	\begin{align*}
	      & \ert{C}{\pi}  - \potent 
	       \eeq 
	       \aert{\potent}{C}{0}
	     \tag{by \Cref{thm:aert_decomp} where $\arta = 0$} \\
	     \text{implies}\quad &\ert{C}{0} - \potent 
	     \ppreceq
	     \aert{\potent}{C}{0}
	     \tag{$\ert{C}{0} \preceq \ert{C}{\potent}$ by monotonicity of $\ertsymbol$}\\
	     \text{implies}\quad &\ert{C}{0}
	     \ppreceq
	     \aert{\potent}{C}{0} + \potent \tag*{\qedhere}
	\end{align*}%
	\normalsize%
\end{proof}%
\noindent%
A further handy decomposition of $\aertsymbol$ is so-called \emph{constant propagation} also known from the non-\RSL $\ertsymbol$ calculus of \citet{kaminski2018weakest}:%
\begin{theorem}[Constant Propagation for $\aertsymbol$]
	\label{thm:aert_const-prop}
	For all $C\in\hpgcl$ and all constant $\mathit{const}\in\T$,
	\begin{align*}
		\aert{\potent}{C}{\arta + \mathit{const}} \ppreceq \aert{\potent}{C}{\arta} + \mathit{const}~.
	\end{align*}%
\end{theorem}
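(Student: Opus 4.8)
The plan is to avoid any induction on the structure of $C$ and instead derive the inequality directly from the telescoping identity of \Cref{thm:aert_decomp} together with the constant-propagation property of $\ertsymbol$ recorded in \Cref{thm:ert-health}. The preliminary observation that makes this work is that $\mathit{const} \in \T$ is non-negative, so $\arta + \mathit{const} \in \Api$ whenever $\arta \in \Api$, and $(\arta + \mathit{const}) + \potent \eeq \mathit{const} + (\arta + \potent)$ is an ordinary runtime in $\T$, to which the $\ertsymbol$ calculus applies.

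First I would apply \Cref{thm:aert_decomp} to the left-hand side with post-runtime $\arta + \mathit{const}$, giving
\[ \aert{\potent}{C}{\arta + \mathit{const}} \eeq \ert{C}{\mathit{const} + (\arta + \potent)} - \potent~. \]
Then I would invoke \Cref{thm:ert-health}\,(4), constant propagation for $\ertsymbol$, with constant runtime $\mathit{const}$ and runtime $\arta + \potent \in \T$, to obtain $\ert{C}{\mathit{const} + (\arta + \potent)} \ppreceq \mathit{const} + \ert{C}{\arta + \potent}$. Subtracting the (finite) potential $\potent$ from both sides preserves $\preceq$, and on the right it commutes past the leading summand $\mathit{const}$, so
\[ \aert{\potent}{C}{\arta + \mathit{const}} \ppreceq \mathit{const} + \bigl( \ert{C}{\arta + \potent} - \potent \bigr)~. \]
Finally, reading \Cref{thm:aert_decomp} backwards identifies $\ert{C}{\arta + \potent} - \potent \eeq \aert{\potent}{C}{\arta}$, which turns the right-hand side into $\mathit{const} + \aert{\potent}{C}{\arta}$, as desired.

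The only delicate point --- and the one I would double-check --- is the pointwise arithmetic with the signed quantity $\ert{C}{\arta + \potent} - \potent$: one must verify that $(\mathit{const} + u) - \potent \eeq \mathit{const} + (u - \potent)$ holds pointwise over $\Reals \cup \{\infty\}$. This is legitimate precisely because $\potent$ is a genuine potential function, hence $\potent(\sk,\hh) < \infty$ at every state, so both sides agree as extended reals, including the case $u(\sk,\hh) = \infty$. No integrability argument or fixed-point reasoning is needed here, since all of that has already been absorbed into \Cref{thm:aert_decomp} and \Cref{thm:ert-health}; the loop case in particular is handled "for free" by telescoping rather than by a separate Park-induction argument.
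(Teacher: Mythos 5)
Your proof is correct, and it follows what is evidently the paper's intended route: the paper gives no separate proof for this theorem precisely because it falls out of \Cref{thm:aert_decomp} (telescoping) combined with \Cref{thm:ert-health}\,(4) (constant propagation for $\ertsymbol$), which is exactly the chain you use — and it mirrors how the paper proves \Cref{thm:aert_sound} and \Cref{thm:aert_frame}. Your side-remarks are the right ones to double-check and they do go through: $\arta + \mathit{const} \in \Api$ because $\mathit{const} \succeq 0$; $\arta + \potent \in \T$ because $\arta \succeq -\potent$ and $\potent$ is everywhere finite; and the pointwise rearrangement $(\mathit{const}+u)-\potent = \mathit{const}+(u-\potent)$ is sound over $\Reals\cup\{\infty\}$ because $\potent$ never takes the value $\infty$.
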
%
\begin{figure}[t]
				\renewcommand{\arraystretch}{1}
				\begin{align*}
					\begin{array}{l@{\hspace{4em}}l}
						\boldaertsymbol\boldsymbol{\colon}			& \boldertsymbol\boldsymbol{\colon}		\\			
						\aannotate{1} 							& \annotate{\tfrac{x}{2}}				\\
						\{									& \{ 								\\
						\qquad \aannotate{x + 1} 					& \qquad \annotate{x}				\\
						\qquad \TICK{x} 						& \qquad \TICK{x}					\\
						\qquad \aannotate{1}						& \qquad \annotate{0}				\\
						\} \mathrel{[\sfrac{1}{2}]} \{ 				& \} \mathrel{[\sfrac{1}{2}]} \{ 			\\
						\qquad \aannotate{1 - x}					& \qquad \annotate{0}				\\
						\qquad \ASSIGN{x}{0} 					& \qquad \ASSIGN{x}{0}				\\
						\qquad \aannotate{1} 					& \qquad \annotate{0}				\\
						\} 		 							& \} 								\\
						\aannotate{1} 							& \annotate{0}						\\
						\ASSIGN{x}{x + 1} 						& \ASSIGN{x}{x + 1} 					\\
						\aannotate{0} 							& \annotate{0}
					\end{array}
				\end{align*}%
				\renewcommand{\arraystretch}{1}%
		
		\caption{$\aertsymbol$ reasoning (left) and $\ertsymbol$ reasoning (right).}
		\label{fig:aert_example_loopfree}
\end{figure}
\begin{example}[$\aertsymbol$ Reasoning]%
\label{ex:aert-loopfree}
Consider $\mathit{Op} = \COMPOSE{\PCHOICE{\TICK{x}}{\sfrac{1}{2}}{\ASSIGN{x}{0}}}{\ASSIGN{x}{x+1}}$.
This operation either consumes $x$ units of time or resets $x$ to $0$, each with probability $\sfrac{1}{2}$.
Furthermore, each invocation of $\mathit{Op}$ increases $x$. 
Let us now perform both $\aertsymbol$ as well as $\ertsymbol$ analyses.

For $\aertsymbol$, we choose as potential function $\potent = x$. 
We will then make program annotations as shown in \Cref{fig:aert_example_loopfree}.
The left one of these annotations are the $\aertsymbol$ annotations and they can be read (best from bottom to top) as follows:
$0$ is the postruntime.
$1$ is the result (after simplifications) of $\aert{\potent}{\ASSIGN{x}{x + 1}}{0} = 0 + x\subst{x}{x+1} - x = 1$.
The resulting $1$ is also the postruntime to consider for both branches of the probabilistic choice.
$1-x$ is the result (again, after simplification) of $\aert{\potent}{\ASSIGN{x}{0}}{1}$.
Likewise for the other branch of the probabilistic choice.
Finally, at the very top, $1$ is the result of combining (and simplifying) the two outcomes of the branches of the probabilistic choice according to the rule for $\aertsymbol$ --- in this case: \mbox{$1 = \tfrac{1}{2} \cdot (x+1) + \tfrac{1}{2} \cdot (1 - x)$}.
The same annotation style applies to the $\ertsymbol$ annotations on the far right.

We can read off that the amortized expected time to perform a single $\mathit{Op}$ is 1.
Moreover, \Cref{thm:aert_const-prop} immediately yields \mbox{$\aert{\potent}{\mathit{Op}}{\mathit{const}} = \mathit{const} + 1$} and from there it is easy to prove by induction that $\aert{\potent}{\mathit{Op}^n}{0} = n$. From there, in turn, we obtain by \Cref{thm:aert_sound} that%
\begin{align*}
	\ert{\mathit{Op}^n}{0} \ppreceq \aert{\potent}{\mathit{Op}^n}{0} + x \eeq n + x~.
\end{align*}%
This was relatively easy and gives us a very clear upper bound on the expected time it takes to execute a sequence of $n$ $\mathit{Op}$'s, namely $n + x$ which is $n$ if potential $x$ was 0 initially.

Obtaining the same insight solely via $\ertsymbol$ reasoning would have been harder.
First, we can read off that $\ert{\mathit{Op}}{0} = \tfrac{x}{2}$.
But what about $\ert{\mathit{Op}^n}{0}$?
For $n = 2$, we get $\ert{\COMPOSE{\mathit{Op}}{\mathit{Op}}}{0} = \tfrac{3x}{4} + \tfrac{1}{4}$.
For $n = 3$, we get $\tfrac{7x}{8} + \tfrac{5}{8}$.
For $n = 4$, we get $\tfrac{15x}{16} + \tfrac{17}{16}$.
Seeing a pattern here is less easy than it was for $\aertsymbol$, especially for the constant part of the term. 
\end{example}

\needspace{2\baselineskip}
\subsection{Local Reasoning for Amortized Expected Runtimes}

To enable local reasoning, the $\aertsymbol$ calculus also features a \emph{frame rule for upper bounds}:%
\begin{theorem}[Frame Rule for $\aertsymbol$]
	\label{thm:aert_frame}
	Let $C\in\hpgcl$ and $\rta, \rtb \in \T$. Then
	\begin{align*}
		&\modC{C} \cap \Vars(\rtb) = \emptyset
		\qqimplies \\
		& \qquad \aert{\potent}{C}{(\rta \sepadd \rtb) - \potent} \ppreceq \bigl( (\aert{\potent}{C}{\rta - \potent} + \potent ) \sepadd \rtb \bigr) - \potent~.
	\end{align*}%
\end{theorem}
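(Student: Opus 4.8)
The plan is to reduce the statement to the already-proven frame rule for $\ertsymbol$ (\Cref{thm:ert_frame}) by unfolding both $\aertsymbol$-expressions with the telescoping identity of \Cref{thm:aert_decomp}. First I would check well-formedness of all terms appearing in the claim: since $\rta,\rtb\in\T$ are non-negative and $\potent\prec\infty$, both $(\rta\sepadd\rtb)-\potent$ and $\rta-\potent$ lie in $\Api$, so $\aertsymbol_\potent$ may be applied to them; and $\aert{\potent}{C}{\rta-\potent}+\potent$ lies back in $\T$ because $\aertsymbol_\potent$ returns $\potent$-runtimes (hence $\geq-\potent$), so the outer separating addition with $\rtb$ is legal.

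Next I would rewrite both sides. For the left-hand side, \Cref{thm:aert_decomp} with $\arta=(\rta\sepadd\rtb)-\potent$ together with $\big((\rta\sepadd\rtb)-\potent\big)+\potent=\rta\sepadd\rtb$ (valid pointwise because $\potent$ is finite everywhere) yields $\aert{\potent}{C}{(\rta\sepadd\rtb)-\potent}=\ert{C}{\rta\sepadd\rtb}-\potent$. For the right-hand side, \Cref{thm:aert_decomp} with $\arta=\rta-\potent$ gives $\aert{\potent}{C}{\rta-\potent}=\ert{C}{\rta}-\potent$, so $\aert{\potent}{C}{\rta-\potent}+\potent=\ert{C}{\rta}$ (again using $\potent\prec\infty$), and hence the right-hand side equals $(\ert{C}{\rta}\sepadd\rtb)-\potent$. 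After these rewritings, the desired inequality is $\ert{C}{\rta\sepadd\rtb}-\potent\ppreceq(\ert{C}{\rta}\sepadd\rtb)-\potent$; since subtracting the fixed finite function $\potent$ from both sides is order-preserving, this is equivalent to $\ert{C}{\rta\sepadd\rtb}\ppreceq\ert{C}{\rta}\sepadd\rtb$, which is exactly \Cref{thm:ert_frame}, whose side condition $\modC{C}\cap\Vars(\rtb)=\emptyset$ is our hypothesis.

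I do not expect a genuine obstacle; the only care needed is the bookkeeping around the signed subtraction of $\potent$. The cancellation $(\arta-\potent)+\potent=\arta$ and the order-preservation of $\arta\mapsto\arta-\potent$ both hinge on $\potent\prec\infty$, which holds by definition of a potential function, and one should verify that $\ert{C}{\cdot}$ taking the value $\infty$ at some states causes no trouble — but $\infty-\potent$ and $(\infty-\potent)+\potent=\infty$ behave as expected for finite $\potent$, so the argument goes through uniformly. It is also worth noting explicitly that \Cref{thm:aert_decomp} is only ever invoked on arguments that lie in $\Api$, as checked in the first step.
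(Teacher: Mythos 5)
Your proof is correct and follows the same route as the paper: unfold both occurrences of $\aertsymbol$ via the telescoping identity (\Cref{thm:aert_decomp}) and then invoke the $\ertsymbol$ frame rule (\Cref{thm:ert_frame}). The extra care you take over well-definedness in $\Api$ and the cancellations $(\arta - \potent) + \potent = \arta$ under $\potent \prec \infty$ is sound and simply makes explicit bookkeeping that the paper's proof leaves implicit.
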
%
\begin{proof}
	\abovedisplayskip=-1\baselineskip%
	\begin{align*}
		\aert{\potent}{C}{(\rta \sepadd \rtb) - \potent} 
		& \eeq \ert{C}{\rta \sepadd \rtb} - \potent \tag{by \Cref{thm:aert_decomp}} \\
		& \ppreceq (\ert{C}{\rta} \sepadd \rtb) - \potent \tag{by \Cref{thm:ert_frame}} \\
		& \eeq \bigl( ( \ert{C}{\rta} - \potent + \potent )\sepadd \rtb\bigr) - \potent  \\
		& \eeq \bigl( ( \aert{\pi}{C}{\rta - \potent} + \potent )\sepadd \rtb\bigr) - \potent\hspace{-4em} \tag*{(by \Cref{thm:aert_decomp})\quad\qedhere}
	\end{align*}%
	\normalsize%
\end{proof}%
%
%
\noindent%
While this rule may look quite involved, it is still helpful:
(1) postruntimes during reasoning are often of the form $\cloze{f} - \potent$ and (2) the heavy lifting, \ie the $\aertsymbol$-reasoning, is still done more locally, namely on $\rta - \potent$ instead of $(\rta \sepadd \rtb) - \potent$.

\subsection{Compositional Reasoning about Nested Data Structures}\label{sec:compositional-nested}
The $\aertsymbol$ calculus is parameterized in a potential function $\potent$, which must be chosen carefully with respect to the data structure that is analyzed. 
On a first glance, having to fix a potential function for $\aertsymbol$-based reasoning might hamper compositionality.
For example, assume we have already analyzed the amortized expected runtime of a data structure, say $D_1$, using some potential function~$\potent_1$. 
Furthermore, suppose a second data structure, say $D_2$, internally uses $D_1$ as a sub-component, and its analysis requires a slightly different potential function, say $\potent_1 \sepadd \potent_2$, to account for additional potential of other elements of $D_2$.
Naively, we then have to analyze the $\aertsymbol$ of $D_1$ again, this time with respect to the extended potential $\potent_1 \sepadd \potent_2$.
However, this \emph{need not be necessary}.
Instead, we can re-use our existing analysis of $D_1$ with respect to potential $\potent_1$ to analyze the amortized expected runtime of $D_2$ with respect to the extended potential $\potent_1 \sepadd \potent_2$.

More precisely, we can -- under mild conditions for $\potent_1$ and $\potent_2$ -- re-use an existing upper bound on $\aert{\potent_1}{C}{\arta}$ to obtain an upper-bound on $\aert{\potent_1\oplus \potent_2}{C}{\arta}$.
The following theorem leverages the frame rule for $\aertsymbol$ and monotonicity to enable such compositional reasoning:
\begin{theorem}
	\label{thm:aert_compositional}
    Let 	$\potent_1,\potent_2$ be potentials with $\potent_1 \preceq \potent_1 \sepadd \potent_2$.
    Then, for all $C \in \hpgcl$ and $\arta \in \A_{\potent_1}$,
	%
	\begin{align*}
	   &\modC{C} \cap \Vars(\potent_2) = \emptyset 
	      \quad\text{and}\quad \arta + (\potent_1 \sepadd \potent_2) \preceq (\arta + \potent_1) \sepadd \potent_2
	   \\
	   \qqimplies & \aert{\potent_1 \oplus \potent_2}{C}{\arta} \preceq 
	      (\aert{\potent_1}{C}{\arta} + \potent_1) \sepadd \potent_2 ~~~-~~~ \potent_1 \sepadd \potent_2~.
	\end{align*}
\end{theorem}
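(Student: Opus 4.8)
The statement is essentially a chained application of Telescoping (\Cref{thm:aert_decomp}), the $\ertsymbol$ frame rule (\Cref{thm:ert_frame}), and monotonicity of $\ertsymbol$ (\Cref{thm:ert-health}), together with the two side conditions on $\potent_1,\potent_2$. The plan is to start from the left-hand side $\aert{\potent_1 \oplus \potent_2}{C}{\arta}$, unfold it via Telescoping into an $\ertsymbol$-expression, push through the frame rule, and then fold back.

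\begin{proof}[Proof sketch]
	\abovedisplayskip=-1\baselineskip
	\begin{align*}
		\aert{\potent_1 \oplus \potent_2}{C}{\arta}
		& \eeq \ert{C}{\arta + (\potent_1 \sepadd \potent_2)} - (\potent_1 \sepadd \potent_2)
		  \tag{by \Cref{thm:aert_decomp} with potential $\potent_1\sepadd\potent_2$} \\
		& \ppreceq \ert{C}{(\arta + \potent_1) \sepadd \potent_2} - (\potent_1 \sepadd \potent_2)
		  \tag{monotonicity of $\ertsymbol$, 2nd side condition} \\
		& \ppreceq \bigl(\ert{C}{\arta + \potent_1} \sepadd \potent_2\bigr) - (\potent_1 \sepadd \potent_2)
		  \tag{by \Cref{thm:ert_frame}, using $\modC{C}\cap\Vars(\potent_2)=\emptyset$} \\
		& \eeq \bigl((\aert{\potent_1}{C}{\arta} + \potent_1) \sepadd \potent_2\bigr) - (\potent_1 \sepadd \potent_2)~.
		  \tag*{(by \Cref{thm:aert_decomp} with potential $\potent_1$)\quad\qedhere}
	\end{align*}
	\normalsize
\end{proof}

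The first step requires $\arta \in \A_{\potent_1 \oplus \potent_2}$ so that $\aert{\potent_1\oplus\potent_2}{C}{\arta}$ is even well-typed; since $\arta \in \A_{\potent_1}$ and $\potent_1 \preceq \potent_1 \sepadd \potent_2$, we get $-(\potent_1\sepadd\potent_2) \preceq -\potent_1 \preceq \arta$, so $\arta$ indeed lies in $\A_{\potent_1\oplus\potent_2}$. The second step is where the side condition $\arta + (\potent_1 \sepadd \potent_2) \preceq (\arta + \potent_1) \sepadd \potent_2$ is consumed, combined with monotonicity of $\ertsymbol$. The third step is the $\ertsymbol$ frame rule applied with postruntime $\arta + \potent_1$ and frame $\potent_2$, valid because $\modC{C} \cap \Vars(\potent_2) = \emptyset$. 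The last step just re-folds $\ert{C}{\arta+\potent_1}$ as $\aert{\potent_1}{C}{\arta} + \potent_1$ via Telescoping again.

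The main obstacle is not any single step but making sure the inequalities and the monus-arithmetic behave: one must check that $\ert{C}{\arta+\potent_1}$ is genuinely $\aert{\potent_1}{C}{\arta}+\potent_1$ (which is exactly \Cref{thm:aert_decomp} rearranged, valid since $\aert{\potent_1}{C}{\arta} \succeq -\potent_1$ so the sum is a legitimate runtime in $\T$), and that subtracting $\potent_1\sepadd\potent_2$ on both sides preserves the order — which it does, because $\potent_1\sepadd\potent_2 \prec \infty$ (it is a potential, hence finite) so $x \mapsto x - (\potent_1\sepadd\potent_2)$ is monotone on the relevant range. I would also double-check that the intermediate expression $(\arta+\potent_1)\sepadd\potent_2$ lands in $\T$ (it does: it is a $\sepadd$ of non-negative runtimes), so that applying $\ertsymbol$ to it and invoking monotonicity is legitimate. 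Beyond these type-checking chores, the argument is a direct four-line computation.
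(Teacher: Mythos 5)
Your four-line calculation is identical to the paper's own proof: Telescoping (\Cref{thm:aert_decomp}) with potential $\potent_1 \oplus \potent_2$, then monotonicity of $\ertsymbol$ consuming the second side condition, then the $\ertsymbol$ frame rule (\Cref{thm:ert_frame}) consuming $\modC{C} \cap \Vars(\potent_2) = \emptyset$, and finally Telescoping again with potential $\potent_1$. The extra well-definedness bookkeeping you supply is a reasonable expansion of the paper's one-line remark that the expressions are well-defined, and the appeal to $\potent_1 \preceq \potent_1 \sepadd \potent_2$ is exactly what the paper intends that hypothesis for.
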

\begin{proof}
	Notice that all of the above expressions are well-defined. Then, consider the following:
	\begin{align*}
	   & \aert{\potent_1 \oplus \potent_2}{C}{\arta} \\
	   \eeq &\ert{C}{\arta + (\potent_1 \oplus \potent_2)} ~~~-~~~ \potent_1 \oplus \potent_2
	   \tag{by \Cref{thm:aert_decomp}} \\
	   \ppreceq & \ert{C}{(\arta + \potent_1) \oplus \potent_2} ~~~-~~~ \potent_1 \oplus \potent_2
	   \tag{by monotonicity of $\ertsymbol$ (\Cref{thm:ert-health}) and assumption} \\
	   \ppreceq & \ert{C}{\arta + \potent_1} \oplus \potent_2  ~~~-~~~ \potent_1 \oplus \potent_2
	   \tag{by frame rule for $\ertsymbol$ (\Cref{thm:ert_frame})} \\
	   \eeq & (\aert{\potent_1}{C}{\arta} + \potent_1)
	       \oplus \potent_2  ~~~-~~~ \potent_1 \oplus \potent_2
	   \tag*{(by \Cref{thm:aert_decomp})\qquad\qedhere}
	\end{align*}
\end{proof}
\noindent
In the above proof, the assumptions $\modC{C} \cap \Vars(\potent_2) = \emptyset $ and $\arta + (\potent_1 \sepadd \potent_2) \preceq (\arta + \potent_1) \sepadd \potent_2$ enable framing the potential $\potent_2$; we remark that the latter assumption immediately holds if both potentials do not depend on the heap.
Moreover, the assumption $\potent_1 \preceq \potent_1 \sepadd \potent_2$ ensures that $\aert{\potent_1 \oplus \potent_2}{C}{\arta}$ is well-defined. 
We use the above compositionality theorem to analyze a load-balancing approach on top of an already analyzed amortized data structure in \Cref{sec:load_balanced_table}.

\subsection{Reasoning about Loops}

As $\aertsymbol$ for loops is also defined via a least fixed point of a function $\aertcharfun{\arta}$, similarly to $\ertsymbol$, we obtain an invariant-based proof rule for upper-bounding  amortized expected runtimes:%
\begin{theorem}[Park Induction for $\aertsymbol$]
\label{aert:induction}
	Let $\mathit{loop} = \WHILEDO{\guard}{\mathit{body}}$ and $\arta,\ainv\in \Api$. Then
	\begin{align*}
	\aertcharfun{\arta}(\ainv) \ppreceq \ainv \quad \text{implies} \quad
	\aert{\potent}{\mathit{loop}}{\arta} \ppreceq \ainv~.
	\end{align*}%
\end{theorem}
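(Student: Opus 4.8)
The plan is to reduce the statement to Park induction for $\ertsymbol$ (\Cref{thm:ert_invariants}) via the telescoping identity of \Cref{thm:aert_decomp}, so that no fresh fixed-point argument over the signed lattice $\Api$ is needed. Write $\mathit{loop} = \WHILEDO{\guard}{\mathit{body}}$. By \Cref{thm:aert_decomp} we have $\aert{\potent}{\mathit{loop}}{\arta} = \ert{\mathit{loop}}{\arta + \potent} - \potent$, so it suffices to prove $\ert{\mathit{loop}}{\arta + \potent} \ppreceq \ainv + \potent$. Note that $\arta + \potent$ and $\ainv + \potent$ are genuine runtimes in $\T$: since $\arta, \ainv \in \Api$ we have $-\potent \ppreceq \arta$ and $-\potent \ppreceq \ainv$, and $\potent \prec \infty$, so both sums are non-negative (and possibly $\infty$). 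Hence $\ertsymbol$ and \Cref{thm:ert_invariants} are applicable, and it is enough to show that $\ainv + \potent$ is an $\ertsymbol$-invariant of $\mathit{loop}$ with respect to postruntime $\arta + \potent$, i.e. $\charfun{\arta + \potent}(\ainv + \potent) \ppreceq \ainv + \potent$.

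The crux is the pointwise identity
\[
  \charfun{\arta + \potent}(\ainv + \potent) \eeq \aertcharfun{\arta}(\ainv) + \potent~.
\]
I would establish it by unfolding the left-hand side to $\iverson{\neg\guard}\cdot(\arta + \potent) + \iverson{\guard}\cdot\ert{\mathit{body}}{\ainv + \potent}$, applying \Cref{thm:aert_decomp} to $\mathit{body}$ to replace $\ert{\mathit{body}}{\ainv + \potent}$ by $\aert{\potent}{\mathit{body}}{\ainv} + \potent$, and then using $\iverson{\guard} + \iverson{\neg\guard} = 1$ to recombine the two copies of $\potent$ (i.e. $\iverson{\neg\guard}\cdot\potent + \iverson{\guard}\cdot\potent = \potent$); what is left over is precisely $\iverson{\neg\guard}\cdot\arta + \iverson{\guard}\cdot\aert{\potent}{\mathit{body}}{\ainv} + \potent = \aertcharfun{\arta}(\ainv) + \potent$. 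Granting this identity, the hypothesis $\aertcharfun{\arta}(\ainv) \ppreceq \ainv$ gives $\charfun{\arta + \potent}(\ainv + \potent) = \aertcharfun{\arta}(\ainv) + \potent \ppreceq \ainv + \potent$, so $\ainv + \potent$ is indeed an $\ertsymbol$-invariant. Then \Cref{thm:ert_invariants} yields $\ert{\mathit{loop}}{\arta + \potent} \ppreceq \ainv + \potent$, and subtracting the finite function $\potent$ on both sides (a monotone operation) gives $\aert{\potent}{\mathit{loop}}{\arta} = \ert{\mathit{loop}}{\arta + \potent} - \potent \ppreceq \ainv + \potent - \potent = \ainv$, as desired.

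I expect the only delicate point to be the extended-reals bookkeeping in the identity above: $\ert{\mathit{body}}{\ainv + \potent}$ and $\ainv$ may equal $\infty$, so the algebraic rearrangements tacitly rely on $\infty + \potent(\sk,\hh) = \infty$, on $\infty - \potent(\sk,\hh) = \infty$ (valid because $\potent \prec \infty$), and on the paper's convention $0 \cdot \infty = 0$ when splitting $\potent$ along $\iverson{\guard}$ and $\iverson{\neg\guard}$. A clean way to dispatch this is to verify the identity separately on states where $\guard$ holds and where it does not; both cases are immediate. For completeness one could also give an alternative, wholly self-contained argument that bypasses \Cref{thm:ert_invariants}: $(\Api, \preceq)$ is a complete lattice and $\aertcharfun{\arta}$ is monotone (by monotonicity of $\ertsymbol$, \Cref{thm:ert-health}, and \Cref{thm:aert_cont}), so by Knaster--Tarski $\lfp \aertcharfun{\arta}$ is the least prefixed point of $\aertcharfun{\arta}$; since $\ainv$ is a prefixed point by assumption and $\aert{\potent}{\mathit{loop}}{\arta} = \lfp \aertcharfun{\arta}$, the conclusion follows. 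The telescoping route is the shorter write-up, so that is the one I would carry out.
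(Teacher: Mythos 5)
Your proof is correct. The paper does not print a proof of this theorem, but the sentence immediately preceding it (``As $\aertsymbol$ for loops is also defined via a least fixed point of a function $\aertcharfun{\arta}$, similarly to $\ertsymbol$, we obtain an invariant-based proof rule'') signals that the intended argument is the direct one you list as your alternative: $(\Api,\preceq)$ is a complete lattice, $\aertcharfun{\arta}$ is monotone (via \Cref{thm:aert_cont}, or via \Cref{thm:aert_decomp} and monotonicity of $\ertsymbol$), and Park's lemma immediately gives $\lfp \aertcharfun{\arta} \ppreceq \ainv$. Your main route --- reducing to Park induction for $\ertsymbol$ through the telescoping identity $\charfun{\arta+\potent}(\ainv+\potent) = \aertcharfun{\arta}(\ainv)+\potent$ --- is a genuinely different and equally valid derivation, and it has the conceptual virtue of exhibiting the $\aertsymbol$ rule as a shadow of the unsigned $\ertsymbol$ rule after shifting by $\potent$, so no fresh fixed-point reasoning over a signed lattice is needed. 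The cost is exactly the extended-reals bookkeeping you correctly flag: the rearrangement $\iverson{\neg\guard}\cdot\potent + \iverson{\guard}\cdot\potent = \potent$ and the invocation of $\ert{\mathit{body}}{\ainv+\potent} = \aert{\potent}{\mathit{body}}{\ainv}+\potent$ involve terms that may be $\infty$ or (on $\Api$) negative, and checking the identity pointwise by case distinction on $\guard$ is the right way to dispatch this. Your use of $\potent \prec \infty$ to justify that $x \mapsto x-\potent$ is monotone and that $\ainv+\potent-\potent=\ainv$ is also correct. In short, the direct Park argument is shorter to write down; your telescoping argument buys a cleaner conceptual story at the price of a little arithmetic care.
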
%
\begin{example}
	Consider the loop in \Cref{fig:loopy_aert} with clearly non-constant expected runtime. For every $m \in \Nats$, let $\nextpoww{m}$ be the smallest power of $2$ greater than- or equal to $m$, and let $\poww{m}$ be the predicate that evaluates to $\true$ iff $m$ is a power of $2$.
	Using the potential function $\potent \defeq 2\cdot x - \nextpoww{x}$ and the $\aertsymbol$ loop invariant $\ainv = \iverson{\varcur = 0}\cdot 2$, the amortized expected runtime of the loop is shown to be constant. 
	Annotations are best read from bottom to top:
	$\arta \coloneqq 0$ is the postruntime.
	Somewhat differently from \Cref{ex:aert-loopfree}, $0$ is not copied to the loop body, but instead, invariant $\ainv$ is employed and pushed through the loop body (possibly with simplifications and \emph{overapproximations}), obtaining $\iverson{\varcur = 0} +1$.
	We have now overapproximated $\aert{\potent}{\mathit{body}}{\ainv}$ by $\iverson{\varcur = 0} +1$.
	To resemble an overapproximation of the characteristic function $\aertcharfun{0}$ applied to $\ainv$, we construct $\artc = \iverson{\neg \guard} \cdot 0 \pplus \iverson{\guard} \cdot (\iverson{\varcur = 0} +1)$.
	The final (topmost) annotation indicated that, indeed $\ainv \succeq Z$ which in total confirms $\ainv \succeq \aertcharfun{0}(\ainv)$, thus confirming --- by \Cref{aert:induction} --- that $\ainv$ is an upper bound for the total amortized expected runtime of the loop with respect to postruntime $0$. 
	
	\begin{figure}[t]
		
	\begin{align*}
	  &\asucceqannotate{\iverson{\varcur=0} \cdot 2} 
	  \tag{$\aertcharfun{0}(\ainv) \preceq \ainv$
	                hence $\aert{\potent}{\mathit{loop}}{0} \preceq \ainv$}\\
	  & \psiannotate{\iverson{\varcur \neq 0} \cdot 0 \pplus \iverson{\varcur = 0} \cdot (\iverson{\varcur=0} + 1)} \\
	  & \WHILE{\varcur = 0} \\
	  &\qquad \aannotate{\iverson{\varcur=0} + 1}
	  \tag{obtain $\aert{\potent}{\mathit{body}}{\ainv} \preceq  \iverson{\varcur = 0} +1$} \\
	  &\qquad \aannotate{\sfrac{1}{2} \cdot (0 + \iverson{\varcur=0} \cdot 2 + 2) } \\
	  &\qquad \{ \\
	   & \qquad \qquad  \aannotate{0} \\
	  &\qquad \qquad \ASSIGN{\varcur}{1}\\
	  & \qquad \qquad  \aannotate{\iverson{\varcur = 0}\cdot 2} \\
	  &\qquad \} [\sfrac{1}{2}] \{ \\
	  & \qquad \qquad \aannotate{\iverson{\varcur=0} \cdot 2 + 2} \\
	  %
	  %
	  &\qquad \qquad \ITE{\poww{\varremove}}{\TICK{\varremove}}{\SKIP} \fatsemi \\
	  & \qquad \qquad  \aannotate{\iverson{\varcur = 0}\cdot 2 + 2 - \iverson{\poww{\varremove}} \cdot \varremove} \\
	  %
	  %
	  &\qquad \qquad \ASSIGN{\varremove}{\varremove + 1} \\
	  & \qquad \qquad\aannotate{\iverson{\varcur = 0}\cdot 2} \\
	  &\qquad \} 
	  ~~ \astarannotate{\iverson{\varcur = 0} \cdot 2} 
	  \tag{we employ invariant $\ainv \coloneqq \iverson{\varcur = 0} \cdot 2$}\\
	  &\} 
	  ~\aannotate{0}
	  \tag{$0$ is the post $\potent$-runtime}
	\end{align*}%

		\caption{A probabilistic loop with constant amortized expected runtime. Here $\potent = 2\cdot \varremove - \nextpoww{\varremove}$.}%
		\label{fig:loopy_aert}%
	\end{figure}
\end{example}

\subsection{Case Studies}

\subsubsection{The Randomized Dynamic Table}
\label{sec:dynlist}
A dynamic table is a dictionary data structure for maintaining a table of elements in the heap. The data structure provides, amongst others, an operation $\listinsert{\varadd}$ for inserting a new element with content $\varadd$ at the end of the table. We first describe a well-known deterministic implementation of dynamic tables using fixed-size arrays that runs in  \emph{constant} amortized time. We then employ our $\aertsymbol$ calculus to prove that a \emph{randomized} variant of this implementation runs in \emph{constant expected} amortized time.

We can implement dynamic tables by means of fixed-size arrays: Maintain an array $\ahead$ of size $\asize \geq 1$ in the heap and keep track of the number of cells $\aoff$ currently occupied by some element. A call to $\listinsert{\varadd}$ then behaves as follows. If $\aoff < \asize$, then store $\varadd$ at $\ahead[\aoff]$---the first (i.e., with smallest offset) cell that is not occupied by some element yet---and increase $\aoff$ by one. In this case, we assume a runtime of $1$ for storing the value $\varadd$. Otherwise, i.e, if  $\aoff = \asize$, we need to allocate a new array $\ahead'$ of size $\asize' > \asize$, copy all elements from $\ahead$ to $\ahead'$, and store the new element $v$ at $\ahead'[\aoff]$. We then increase $\aoff$ by one, deallocate the old array $\ahead$, and set $\ahead$ to $\ahead'$. In this case, we assume a runtime of $\asize+1$ for copying the elements from $\ahead$ to $\ahead'$ and for storing the new element $\varadd$. A clear downsize of this implementation is that $\listinsert{\varadd}$ has a \emph{non-constant} worst-case runtime of $\asize+1$.

However, by choosing the size $\asize'$ of the new array $\ahead'$ carefully, we can do better in an \emph{amortized} sense---a prime example of amortized analysis \cite[Chapter 17]{leiserson2001introduction}. For $\asize' = 2\cdot \asize$, i.e., by doubling the size of the array each time it is full, we achieve a \emph{constant} amortized time for  $\listinsert{\varadd}$. We remark that, when increasing the list size by some constant by, e.g., choosing $\asize' = \asize+1$, the amortized runtime of $\listinsert{\varadd}$ is \emph{not} constant. 

We now consider our randomized variant $\procrandlistinsert{\varadd}$ shown in \Cref{fig:rand_dynlist} on p.\ \pageref{fig:rand_dynlist}. Instead of deterministically choosing $\asize' = 2 \cdot \asize$, our randomized implementation chooses $\asize' = \asize +1$ with probability $\nicefrac{1}{\asize +1}$ and $\asize' = 2 \cdot \asize$ with the remaining probability $1-\nicefrac{1}{\asize +1}$ in case $\aoff=\asize$. Thus, for small array sizes $\asize$, our randomized variant behaves with high probability like the deterministic variant with non-constant amortized runtime, while \emph{in the limit} behaving like the classical variant with constant amortized runtime. Program $\arraycopy{\ahead}{\asize}{\ahead'}$ copies the array $\ahead$ of size $\asize$ to $\ahead'$ and consumes $\asize$ units of time. $\arraydelete{\ahead}{\asize}$ deallocates the array $\ahead$  and consumes no time.

Using our $\aertsymbol$ calculus and the potential $\potent = 2\cdot \aoff \monus \asize$, we prove in a fully calculational way that our randomized variant is \emph{memory-safe} and runs in \emph{constant amortized expected time}. We have 
%
%
%
\begin{align}
	\label{eqn:dynlist}
\aert{\potent}{\procrandlistinsert{\varadd}}{0} \lleq \emprun{4}  \sepadd \pureemp{\aoff\leq \asize \wedge \asize \geq 1} \sepadd \bigoplus_{i=1}^{\asize} \ivalidpointer{\ahead + i-1} ~,
\end{align}
i.e., when invoked on an array with head $\ahead$ of size at least $1$ and where the offset $\aoff$ of the last occupied cell is at most $s$, $\procrandlistinsert{\varadd}$ is \emph{memory-safe} and runs in an \emph{amortized expected time} of at most $4$. We emphasize that local reasoning simplifies our amortized analysis significantly: The frame rule enables to specify memory-safety and expected runtimes of the sub-programs $\arraycopy{\ahead}{\asize}{\ahead'}$ and $\arraydelete{\ahead}{\asize}$ separately, and to employ these specifications in the broader context of $\procrandlistinsert{\varadd}$.

\begin{figure}[t]
	\begin{align*}
		%
		& \IF{\aoff = \asize} \\
		%
		%
		&\qquad  \PCHOICE{\ASSIGN{\asize'}{s+1}}{\nicefrac{1}{\asize +1}}{\ASSIGN{\asize'}{2\cdot \asize}} \fatsemi \\
		%
		%
		& \qquad \ALLOC{\ahead'}{\asize'}\fatsemi \\
		&\qquad
		\arraycopy{\ahead}{\asize}{\ahead'}\fatsemi 
		\\
		&\qquad 
		\arraydelete{\ahead}{\asize} ~ \fatsemi \\
		&\qquad \ASSIGN{\ahead}{\ahead'} \fatsemi \\
		&\qquad 
		\ASSIGN{\asize}{\asize'} \\ %
		%
		%
		&\} \fatsemi\\
		%
		%
		&\HASSIGN{\ahead+\aoff}{y} \fatsemi \\%
		& 
		\ASSIGN{\aoff}{\aoff+1} \fatsemi \\
		%
		&\TICK{1}
		%
		%
	\end{align*}%
	\caption{Randomized Dynamic Table Insert $\procrandlistinsert{y}$.}%
	\label{fig:rand_dynlist}%
\end{figure}

%
%

\subsubsection{The Load-balanced Randomized Dynamic Table}
\label{sec:load_balanced_table}
To demonstrate the $\aertsymbol$ calculus' capabilities for compositional reasoning about nested data structures
(cf.\ \Cref{sec:compositional-nested}), we consider a variant of the randomized dynamic table in \Cref{sec:dynlist}
that internally uses \emph{two} dynamic tables instead of one for load balancing reasons, \eg, to enable parallel operations on the smaller tables.
In particular, our variant supports an operation $\procbalancedlistinsert{\varadd}$ for inserting a value $\varadd$.
To ensure that the load of the two internal dynamic tables is balanced in expectation, we flip a fair coin to decide in which of the dynamic tables the value $\varadd$ is to be inserted.

To model this operation in $\hpgcl$, we take two copies $\procrandlistinsertcopy{1}{\varadd}$ and $\procrandlistinsertcopy{2}{\varadd}$ of the program in \Cref{fig:rand_dynlist}, where every variable $x$ is replaced by a copy variable $x_1$ and $x_2$, respectively. The operation $\procbalancedlistinsert{\varadd}$ is then given by the $\hpgcl$ program%
%
%
\begin{align*}
   \PCHOICE{\procrandlistinsertcopy{1}{\varadd}}{0.5}{\procrandlistinsertcopy{2}{\varadd}}~.
\end{align*}%
%
%
We analyze the amortized expected runtime $\aert{\potent_1 \sepadd \potent_2}{\procbalancedlistinsert{\varadd}}{0}$ using the \emph{extended potential function} $\potent_1 \sepadd \potent_2$, where, for $i \in \{1,2\}$, the potential $\potent_i = 2\cdot \aoff_i \monus \asize_i$ is a copy of the potential used for analyzing $\procrandlistinsert{y}$. 
We will apply \Cref{thm:aert_compositional} to reuse our existing analysis for $\procrandlistinsert{y}$ (c.f.\ \Cref{eqn:dynlist}) and the frame rule (\Cref{thm:aert_frame}) to account for the second dynamic table.
To this end, we first calculate for $j=1$ and $j'=2$ (and analogously for $j=2$ and $j'=1$):
\begin{align*}
&\aert{\potent_j}{\procrandlistinsertcopy{j}{\varadd}}{0} \\
\ppreceq &  \aert{\potent_j}{\procrandlistinsertcopy{j}{\varadd}}{\potent_j\sepadd\bigoplus_{i=1}^{\asize_{j'}} \ivalidpointer{\ahead_{j'} + i-1} ~~~ - \potent_j}
\tag{by monotonicity of $\aertsymbol$} \\
\ppreceq &  (\aert{\potent_j}{\procrandlistinsertcopy{j}{\varadd}}{0}+\potent_j) \oplus 
\bigoplus_{i=1}^{\asize_{j'}} \ivalidpointer{\ahead_{j'} + i-1} ~~~ - \potent_j
\tag{by \Cref{thm:aert_frame}} \\
\ppreceq &  \big(\emprun{4}  \sepadd \pureemp{\aoff_j\leq \asize_j \wedge \asize_j \geq 1} \sepadd \bigoplus_{i=1}^{\asize_j} \ivalidpointer{\ahead_j + i-1}+\potent_j\big) \oplus 
\bigoplus_{i=1}^{\asize_{j'}} \ivalidpointer{\ahead_{j'} + i-1} ~~~ - \potent_j
\tag{\Cref{eqn:dynlist}} \\
\ppreceq &  \underbrace{\emprun{4}  \sepadd \pureemp{\aoff_j\leq \asize_j \wedge \asize_j \geq 1} \sepadd \big(\bigoplus_{i=1}^{\asize_j} \ivalidpointer{\ahead_j + i-1}\big) \oplus 
	\big(\bigoplus_{i=1}^{\asize_{j'}} \ivalidpointer{\ahead_{j'} + i-1}}_{{}\eqqcolon \arta_j}\big)
\tag{$\potent_j$ does not depend upon the heap}
\end{align*}%
In other words, we can extend the bound for $\procrandlistinsertcopy{j}{\varadd}$ by a specification involving the array $\ahead_{j'}$ not occurring in $\procrandlistinsertcopy{j}{\varadd}$. This gives us
%
%
%
%
%
\begin{align*}
	&\aert{\potent_1 \sepadd \potent_2}{\PCHOICE{\procrandlistinsertcopy{1}{\varadd}}{0.5}{\procrandlistinsertcopy{2}{\varadd}}}{0} \\
	\eeq & 0.5 \cdot  \aert{\potent_1 \sepadd \potent_2}{\procrandlistinsertcopy{1}{\varadd}}{0}+ 0.5\cdot \aert{\potent_1 \sepadd \potent_2}{\procrandlistinsertcopy{2}{\varadd}}{0}
	\tag{by \Cref{table:aert}} \\
	\ppreceq & 0.5 \cdot \big ( (\aert{\potent_1}{\procrandlistinsertcopy{1}{\varadd}}{0} + \potent_1) \sepadd \potent_2 ~~~-~~~ \potent_1 \sepadd \potent_2 \big) \\
	                 & \qquad +  0.5 \cdot \big ( (\aert{\potent_2}{\procrandlistinsertcopy{2}{\varadd}}{0} + \potent_2) \sepadd \potent_1 ~~~-~~~ \potent_1 \sepadd \potent_2 \big)
	\tag{apply \Cref{thm:aert_compositional} twice} \\
	\ppreceq & 0.5 \cdot \big ( (\arta_1+ \potent_1) \sepadd \potent_2 ~~~-~~~ \potent_1 \sepadd \potent_2 \big)  +  0.5 \cdot \big ( (\arta_2 + \potent_2) \sepadd \potent_1 ~~~-~~~ \potent_1 \sepadd \potent_2 \big) 
	\tag{by above reasoning}
	\\
	\ppreceq & 0.5 \cdot \arta_1+  0.5 \cdot \arta_2
	\tag{$\potent_1$ and $\potent_2$ do not depend upon the heap} \\
	\eeq & \emprun{4}  \sepadd \pureemp{\aoff_1\leq \asize \wedge \asize_1 \geq 1 \wedge \aoff_2\leq \asize \wedge \asize_2 \geq 1} \\
	 &\quad \sepadd \big(\bigoplus_{i=1}^{\asize_1} \ivalidpointer{\ahead_1 + i-1}\big) \sepadd \big(\bigoplus_{i=1}^{\asize_2} \ivalidpointer{\ahead_2 + i-1}\big) ~.
\end{align*}%
That is, if both of the arrays $\ahead_1$ and $\ahead_2$ satisfy their respective specifications, then $\procbalancedlistinsert{\varadd}$ is \emph{memory-safe} and runs in \emph{constant} amortized expected time.

\subsubsection{The Insert-Delete-FindAny Problem \cite{DBLP:journals/njc/BrodalCR96}}

\begin{figure}[t]
	\begin{subfigure}[t]{.45\linewidth}
		\begin{align*}
		& \procremove{x}\fatsemi \\
		& \IF{\ls = 0} \\
		&\qquad \COMPOSE{\ASSIGN{\varany}{0}}{\ASSIGN{\varrank}{0}}  \\
		&\ELSE  \\
		& \qquad \IF{\varremove = \varany} \\
		&\qquad \qquad \COMPOSE{\procsample}{\procrank} \\
		&\qquad \ELSE \\
		& \qquad \qquad \COMPOSE{\ASSIGNH{\varvalrem}{\varremove+2}}{\ASSIGNH{\varvalany}{\varany+2}}\fatsemi\\
		&\qquad\qquad \TICK{1}\fatsemi \\
		&\qquad \qquad \IF{\varvalrem < \varvalany} 
		%
		\ASSIGN{\varrank}{\varrank \monus 1}~\} \\
		&\qquad	 \} \\
		&\} \fatsemi
		%
		\FREE{\varremove,\varremove+1, \varremove+2}
		\end{align*}
		\caption{Program $\procdelete{\varremove}$.}
	\end{subfigure}
	\quad\hfill
	\begin{subfigure}[t]{.45\linewidth}
		\vspace{-0.01cm}
		\begin{align*}
		&\procadd{\varadd} \fatsemi \\
		&\{ \\
		&\qquad \ASSIGN{\varany}{\lh} \fatsemi  \procrank \\
		&\}~[\nicefrac{1}{\ls+1}]~\{ \\
		&\qquad \IF{\ls \geq 2} \\
		&\qquad\qquad \ASSIGNH{\varvalany }{\varany+2}\fatsemi   \\
		&\qquad \qquad \TICK{1}\fatsemi \\
		&\qquad\qquad \IF{\varadd < \varvalany}
		%
		\ASSIGN{\varrank}{\varrank +1}~\} \\
		%
		%
		&\qquad \ELSE \\
		&\qquad \qquad \COMPOSE{\ASSIGN{\varany}{\lh}}{\ASSIGN{\varrank}{1}} \\
		&\qquad \} \\
		%
		%
		&\} 
		\end{align*}
		\caption{Program $\procinsert{\varadd}$.}
	\end{subfigure}
	
	\caption{Programs $\procdelete{\varremove}$ and $\procinsert{\varadd}$.}
	\label{fig:findany_delete_insert}
\end{figure}

The Insert-Delete-FindAny problem is to maintain a dictionary data structure storing numbers, which provides three operations:
\begin{itemize}
	\item $\procinsert{\varadd}$ inserts a new element with content $\varadd$ into the dictionary.
	\item $\procdelete{\varremove}$ gets a \emph{pointer} $\varremove$ to some element in the dictionary and removes this element.
	\item $\procfindany$ returns an \emph{arbitrary} element $\varany$ from the dictionary together with its \emph{rank}, which is defined as one plus the number of elements in the dictionary whose content is strictly smaller than the content of $\varany$, or returns $0$ if the dictionary is empty.
\end{itemize}
We assume a runtime model that counts the number of comparisons of elements in the dictionary. 
\citet{DBLP:journals/njc/BrodalCR96} provide randomized algorithms of the above operations using doubly-linked lists  that each run in \emph{constant} amortized expected time. Remarkably, they prove that every \emph{deterministic} implementation is less efficient in the sense that  there is \emph{no} deterministic implementation of the above operations that achieves a constant amortized runtime.

We encode the algorithms provided by \citet{DBLP:journals/njc/BrodalCR96} in $\hpgcl$ and use our $\aertsymbol$ calculus to prove on source-code level that these operations indeed run in constant amortized expected time. The operations $\procinsert{\varadd}$ and $\procdelete{\varremove}$ are depicted in \Cref{fig:findany_delete_insert}. $\procfindany$ is realized by maintaining the variables $\varany$ and $\varrank$ with the desired properties. We specify doubly-linked lists co-inductively:
\allowdisplaybreaks%
\begin{align*}
&\dll{\lh}{\lend}{\lpre}{\ls}{\lsucc} \\
\ddefeq&  \pureemp{\lh=\lsucc \wedge \lpre=\lend\wedge \ls = 0 } \\
 &\qquad {}\sqcap \big(\pureemp{\ls \geq 1} \sepadd
\Inf v \colon \singleton{\lh}{v, \lpre,{-}} \sepadd \dll{v}{\lend}{\lh}{\ls-1}{\lsucc}\big)
\end{align*}
In particular, $\Inf \lend \colon \dll{\lh}{\lend}{0}{\ls}{0}$ specifies that the heap consists of a null-terminated doubly-linked list (with head) $\lh$. Every element $\varremove$ of a doubly-linked list consists of three locations in the heap: location $\varremove$ stores the \emph{successor} element (or  $0$ if $\varremove$ is the last element), location $\varremove +1$ stores the \emph{predecessor} element (or $0$ if $\varremove$ is the first element), and $\varremove +2$ stores the \emph{content}.

Let us now consider $\procdelete{\varremove}$. Assume that the heap consists of a doubly-linked list $\lh$ of length $\ls$ containing the element $\varremove$. We first execute $\procremove{\varremove}$, which removes the element $\varremove$ from the list without deallocating the locations associated to $\varremove$, and decreases $\ls$ by one. Then, if the list becomes empty, we set $\varany$ and $\varrank$ to $0$. Otherwise, i.e, if the list is not empty, there are two possible cases: Either $\varremove = \varany$, which means that we removed our current $\varany$ element. We thus need to find a new $\varany$ together with its rank. This is realized by the programs $\procsample$ and $\procrank$. $\procsample$ first samples some element uniformly at random from the list with head $\lh$ and stores the result in variable $\varany$ (requires no comparisons). $\procrank$ then computes the rank of the new $\varany$ and stores the result in variable $\varrank$ (the number of comparisons required is $\ls$). In the other case, we have $\varremove \neq \varany$. We check whether the rank of $\varany$ needs to be updated by comparing the content of $\varremove$ to the content of $\varany$. Finally, we deallocate the pointers associated to the element $\varremove$.

Let us now consider $\procinsert{\varadd}$. Assume that the heap consists of a doubly-linked list $\lh$ of length $\ls$. We start by executing $\procadd{\varadd}$, which allocates a new element with content $\varadd$, inserts this new element at the front of the doubly-linked list with head $\lh$---thus becoming the new head---, and increases $\ls$ by one. We then proceed randomly: With probability $\nicefrac{1}{\ls+1}$, we set $\varany$ to the new element $\lh$ and compute its rank (which again takes $\ls$ comparisons). With the remaining probability $1-\nicefrac{1}{\ls+1}$, we either keep the current $\varany$ and check whether its rank needs to updated if the list was not empty before (i.e., if $\ls \geq 2$), or set $\varany$ to $\lh$ and $\varrank$ to $1$ if the list was empty before (i.e., if $\ls \leq 1$).

Now define the potential $\potent \coloneqq \ls \cdot (1 + \iverson{\varany = \varremove})$. We prove 
\[
\aert{\potent}{\procdelete{\varremove}}{0} \ppreceq
1  + \isingleton{\varremove}{-,-,-}\sepadd 0
+ \isingleton{\varany}{-,-,-}\sepadd 0 
+ \Inf \lend \colon \dll{\lh}{\lend}{0}{\ls}{0}
\]
i.e., if the heap consists of a doubly-linked list $\lh$ of size $\ls$ containing the (not necessarily distinct) elements $\varany$ and $\varremove$, then $\procdelete{\varremove}$ is memory-safe and runs in an amortized expected time of at most $1$. Moreover, we show 
\[
\aert{\potent}{\procinsert{\varadd}}{0} 
\ppreceq 4 + \iverson{\ls \geq 1} \cdot(\isingleton{\varany}{-,-,-}\sepadd 0)
+ \Inf \lend \colon \dll{\lh}{\lend}{0}{\ls}{0}~ 
\]
i.e., if the heap consists of a doubly-linked list $\lh$ of size $\ls$ containing the element $\varany$ in case $\lh$ is non-empty, then $\procinsert{\varadd}$ is memory-safe and runs in an amortized expected time of at most $4$. 


\section{Related Work}
\label{s:related}

There is a plethora of research on the verification of runtime bounds.
We focus on literature most closely related to our approach, specifically techniques for formal reasoning about 
(1) expected runtimes of probabilistic programs and
(2) amortized runtimes of non-probabilistic programs.

\paragraph{Reasoning about expected runtimes}
Our $\ertsymbol$ calculus combines two existing approaches to enable proving upper bounds on expected runtimes of randomized algorithms manipulating dynamic data structures: the original $\ertsymbol$ calculus of~\citet{kaminski2018weakest} and quantitative separation logic (QSL) of ~\citet{QSLpopl}.
Developing a calculus based on a separating addition (our~$\sepadd$) was initially proposed by \citet[Chapter 9.1]{christophPhd}.
\citet[Chapter 4]{haslbeckPhd} formalized this idea and proved that one obtains a variant of QSL for reasoning about upper bounds. 
His calculus and its properties essentially coincide with our $\ertsymbol$ with two exceptions:
(1) $\ertsymbol$ allows the allocation of arbitrarily large chunks of memory instead of fixed-sized ones; and
(2) we prove soundness of $\ertsymbol$ with respect to an operational semantics based on MDPs; earlier attempts to prove soundness by \citet[p. 47]{haslbeckPhd} lead to technical issues which were not further pursued.

\citet{DBLP:conf/pldi/NgoC018,DBLP:journals/pacmpl/WangKH20} apply the potential method for automatic reasoning about expected runtimes. 
The soundness theorem in \citet{DBLP:conf/pldi/NgoC018} relies on the soundness of the original $\ertsymbol$ calculus of~\citet{kaminski2018weakest}. 
Our more general calculus for \RSL provides foundations for proving their techniques sound when applied to probabilistic pointer programs.
\citet{DBLP:journals/pacmpl/WangKH20} presents a type-based analysis for deriving over-approximations of expected runtime.
Their upper bounds are proven sound w.r.t.\ a distribution-based operational cost semantics.
Other approaches for analyzing expected runtimes of probabilistic programs, such as \cite{DBLP:conf/sas/Monniaux01,McIver:FM:2005,DBLP:journals/jcss/BrazdilKKV15,DBLP:conf/tacas/MeyerHG21,DBLP:conf/esop/MoosbruggerBKK21}, neither support dynamic data structures nor consider amortization.
Recently, \citet{DBLP:journals/corr/abs-2206-03537} defined a type-and-effect system for a functional programming language that can automatically infer logarithmic amortized bounds on randomized tree and heap structures.
Our amortized calculus is a weakest-precondition-style framework whose soundness w.r.t.\ an operational semantics is shown using a novel technique, which recovers a well-known interpretation of amortized expected runtime analysis at the level of program semantics for arbitrary sequences of data structure operations.

\paragraph{Verifying amortized runtimes for non-probabilistic programs}
Amortized runtime analysis builds upon either the \emph{potential method} (a.k.a. physicists view) or the \emph{banker's view} as already proposed in \citet{tarjan1985amortized}, who already noted that these two views are equivalent. 
\citet{DBLP:conf/tacas/HaslbeckN18} survey existing verification techniques for amortized runtimes.
In particular, the potential method has been formalized and applied in an interactive theorem prover by \citet{DBLP:conf/itp/Nipkow15,DBLP:journals/jar/NipkowB19}.
\citet{DBLP:conf/pldi/Carbonneaux0RS14} developed a quantitative logic similar to our $\ertsymbol$ transformer (for deterministic programs) based on potentials. 
Potential functions are also at the foundation of (automatic but not necessarily amortized) type-based runtime analyses, \eg, \cite{DBLP:journals/pacmpl/RajaniG0021,DBLP:conf/fossacs/Kahn020}, pioneered by \citet{DBLP:phd/dnb/Hoffmann11b}.
A recent survey of type-based analysis is given in \citet{hoffmann-jost2022}.

The banker's view of amortized analysis has been integrated into separation logic by \citet{DBLP:journals/corr/abs-1104-1998}. 
He introduced time credits, a dedicated resource modeling the remaining amount of time a program may consume. 
With this view, one can naturally reason about time credits in the same way as for heap allocated memory, \eg by storing time credits in individual elements of dynamic data structures.
In contrast to many other runtime verification techniques, Atkey proves his approach sound w.r.t.\ a program semantics.
The intricacies encountered when using time credits for reasoning about \emph{asymptotic} (amortized) complexities are discussed by \citet{DBLP:conf/esop/GueneauCP18}.
\citet{DBLP:journals/jar/ChargueraudP19} implemented time credits in a verification tool and verified the amortized complexity of the Union-Find data structure.
A variant of time credits, called time receipts~\cite{DBLP:conf/esop/MevelJP19}, enables reasoning about lower runtime bounds.

None of these works reason about amortized \emph{expected} runtimes of randomized algorithms. To enable this, we chose to use the potential method to formalize $\aertsymbol$, since potentials are closely related to expectations, which also map states to a quantity. Reasoning about potentials thus seems natural if one is used to working with expectations and quantitative invariants. Some of our proof rules exploit the above similarity to mix potentials and expectations, e.g.\ \Cref{thm:aert_frame,thm:aert_compositional}.

\section{Conclusion}
\label{s:conclusion}

We have presented calculi featuring compositionality and local reasoning for the verification of (amortized) expected runtimes of probabilistic pointer programs. We have established soundness results w.r.t.\ an operational semantics and demonstrated the applicability of our techniques.

Future work includes the runtime verification of (randomized) splay-trees \cite{sleator1985self,DBLP:conf/soda/Furer99,DBLP:journals/ipl/AlbersK02} and skip lists \cite{DBLP:conf/wads/Pugh89}, and mechanizing the $\aertsymbol$-calculus building upon the work by \citet{haslbeckPhd}. Further promising directions for automated $\aertsymbol$ reasoning include leveraging entailment checking techniques for quantitative separation logic  \cite{DBLP:conf/esop/BatzFJKKMN22} and generalizations of $k$-induction for probabilistic programs \cite{DBLP:conf/cav/BatzCKKMS20}.

\begin{acks}                            
  We thank Gerhard Woeginger on the fruitful discussions about amortized analysis.
  Furthermore, we are grateful for the reviewers for their highly constructive feedback that, in particular, contributed to the development of \Cref{thm:aert_compositional}. We also thank Eleanore Meyer for pointing out that standard addition and separating addition are not sub-distributive.
\end{acks}

\bibliographystyle{ACM-Reference-Format}
\bibliography{bibfile}

\allowdisplaybreaks
\appendix
\newpage

\section{Appendix}
\subsection{Appendix to Section~\ref{s:ert}}
\subsection{Auxiliary results}

\begin{lemma}\label{lem:inf-prenex}
	For $\rta,\rtb \in \T$ and $x \notin \Vars(\rta)$, we have
	\[
	   \rta \sepadd \Inf x\colon \rtb \qeq \Inf x\colon \rta \sepadd \rtb.
	\]
\end{lemma}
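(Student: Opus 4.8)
The identity is proved pointwise: I fix an arbitrary state $(\sk,\hh)\in\States$ and compute both sides in $\PosRealsInf$. The only ingredients beyond unfolding \Cref{def:sepadd} and the definition of $\Inf$ are two elementary lattice facts: (i) for a fixed $c\in\PosRealsInf$, addition $y\mapsto c+y$ commutes with infima (immediate if $c=\infty$ under the convention $c+\infty=\infty$; otherwise it is an order isomorphism of $\PosRealsInf$ onto $[c,\infty]$ and hence preserves arbitrary infima); and (ii) a double infimum equals the infimum over the product index set, so nested infima may be interchanged whenever the inner index set does not depend on the outer index.

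Unfolding the left-hand side gives
\[
  \bigl(\rta \sepadd \Inf x\colon \rtb\bigr)(\sk,\hh)
  \eeq \min_{\hh_1,\hh_2}\setcomp{\rta(\sk,\hh_1) + \inf_{v\in\Vals}\rtb(\sk\statesubst{x}{v},\hh_2)}{\hh = \hh_1\sepcon\hh_2}~,
\]
where the outer $\min$ ranges over the finitely many splittings of the finite-domain heap $\hh$. Pulling the constant $\rta(\sk,\hh_1)$ into the infimum over $v$ using fact (i), and then using the hypothesis $x\notin\Vars(\rta)$ to replace $\rta(\sk,\hh_1)$ by $\rta(\sk\statesubst{x}{v},\hh_1)$, turns this into
\[
  \min_{\hh = \hh_1\sepcon\hh_2}~\inf_{v\in\Vals}\bigl(\rta(\sk\statesubst{x}{v},\hh_1) + \rtb(\sk\statesubst{x}{v},\hh_2)\bigr)~.
\]
Since the set of splittings of $\hh$ is a fixed finite set independent of $v$, fact (ii) lets me swap $\min$ and $\inf$; re-reading the inner $\min$ as $(\rta\sepadd\rtb)(\sk\statesubst{x}{v},\hh)$ and the outer $\inf$ as the definition of $\Inf x$ then yields exactly $\bigl(\Inf x\colon \rta\sepadd\rtb\bigr)(\sk,\hh)$, which is the right-hand side. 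As $(\sk,\hh)$ was arbitrary, the claim follows.

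I do not anticipate a genuine obstacle: the proof is a short chain of equalities. The two points deserving explicit mention are the $\infty$-arithmetic in the ``constant commutes with $\inf$'' step and the observation that the admissible heap splittings form a set independent of $v$, which is precisely what makes the interchange of $\min$ and $\inf$ valid. Should one prefer it, the same result follows from proving the two inequalities $\rta\sepadd\Inf x\colon\rtb \ppreceq \Inf x\colon\rta\sepadd\rtb$ and $\ssucceq$ separately via the standard $\varepsilon$-characterization of infima, but the direct computation is more transparent.
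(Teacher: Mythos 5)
Your proof is correct and follows essentially the same route as the paper's: unfold $\sepadd$, use $x\notin\Vars(\rta)$ together with commutation of constant addition with infima, then interchange the outer $\min$ (a finite infimum, independent of $v$) with the inner $\inf$. The paper's version lumps the constant-commutes-with-$\inf$ step silently into the line annotated ``$x\notin\Vars(\rta)$'' and explicitly converts $\min$ to $\inf$ and back before and after the swap, whereas you name both elementary facts up front; the substance is identical.
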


\begin{proof}
For $(\sk,\hh) \in \States$ consider the following:

\begin{align*}
&
(\rta \sepadd \Inf x\colon \rtb )(\sk,\hh)
\\ 
\eeq &
\min \{ \rta(\sk,\hh_1) \pplus \inf_v \rtb(\sk\statesubst{x}{v},\hh_2) ~|~ \hh = \hh_1 \sepcon \hh_2 \}
\tag{Def. of $\sepadd$} \\
\eeq &
\inf \{ \rta(\sk,\hh_1) \pplus \inf_v \rtb(\sk\statesubst{x}{v},\hh_2) ~|~ \hh = \hh_1 \sepcon \hh_2 \}
\tag{there are only finitely many partitions $\hh = \hh_1 \sepcon \hh_2$} 
\\
\eeq &
\inf \{ \inf_v \rta(\sk\statesubst{x}{v},\hh_1) \pplus  \rtb(\sk\statesubst{x}{v},\hh_2) ~|~ \hh = \hh_1 \sepcon \hh_2 \}
\tag{$x \notin \Vars(\rta)$} \\
\eeq &
\inf_v \inf \{ \rta(\sk\statesubst{x}{v},\hh_1) \pplus  \rtb(\sk\statesubst{x}{v},\hh_2) ~|~ \hh = \hh_1 \sepcon \hh_2 \}
\\
\eeq &
\inf_v \min \{ \rta(\sk\statesubst{x}{v},\hh_1) \pplus  \rtb(\sk\statesubst{x}{v},\hh_2) ~|~ \hh = \hh_1 \sepcon \hh_2 \}
\tag{there are only finitely many partitions $\hh = \hh_1 \sepcon \hh_2$} \\
\eeq &
\Inf x\colon \rta \sepadd \rtb.
\tag{Def. of $\sepadd$} 
\end{align*}

\end{proof}

\begin{lemma}
	\label{lem:lookup}
	For all $(\sk,\hh) \in \States$ and $\rta \in \T$, 
	\begin{align*}
		\ert{\ASSIGNH{x}{e}}{\rta}(\sk,\hh) \eeq
		\begin{cases}
			\rta(\sk\statesubst{x}{\hh(\sk(e))},\hh) & \qif \sk(e)\in\dom{\hh} \\
			\infty & \qif \sk(e) \not\in \dom{\hh}.
		\end{cases}
	\end{align*}
	
\end{lemma}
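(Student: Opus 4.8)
The plan is to unfold the definition of $\ertsymbol$ for the lookup statement given in \Cref{table:ert} and simplify. Recall that
\[
  \ert{\ASSIGNH{x}{e}}{\rta} \eeq \Inf v\colon \isingleton{e}{v} \sepadd \bigl(\isingleton{e}{v} \sepmon \rta\subst{x}{v}\bigr)~,
\]
where $v$ is fresh. I would fix a state $(\sk,\hh)$ and distinguish the two cases according to whether $\sk(e) \in \dom{\hh}$.

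\textbf{Case $\sk(e) \notin \dom{\hh}$.} For every choice of $v$, the outer separating addition $\isingleton{e}{v} \sepadd (\cdots)$ must, over all partitionings $\hh = \hh_1 \sepcon \hh_2$, evaluate $\isingleton{e}{v}$ on some subheap $\hh_1 \subseteq \hh$; but no subheap of $\hh$ has domain exactly $\{\sk(e)\}$ since $\sk(e)\notin\dom{\hh}$, so $\isingleton{e}{v}(\sk,\hh_1) = \infty$ for every partition, hence each summand is $\infty$ and the minimum over partitions is $\infty$. Taking the infimum over $v$ still gives $\infty$, which matches the claimed right-hand side.

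\textbf{Case $\sk(e) \in \dom{\hh}$.} Write $\loc \defeq \sk(e)$ and $w \defeq \hh(\loc)$. For the outer $\sepadd$ with a fixed $v$, the only partition $\hh = \hh_1 \sepcon \hh_2$ that makes $\isingleton{e}{v}(\sk,\hh_1)$ finite (namely $0$) is $\hh_1 = \singleheap{\loc}{w}$ together with $v = w$; for any other partition or any $v \neq w$ the first summand is $\infty$, so the minimum is attained (if at all finitely) at $\hh_1 = \singleheap{\loc}{w}$, $\hh_2 = \hh'$ where $\hh = \singleheap{\loc}{w}\sepcon\hh'$, and it equals $\bigl(\isingleton{e}{w}\sepmon\rta\subst{x}{w}\bigr)(\sk,\hh')$. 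Now I would evaluate this inner separating monus: $\bigl(\isingleton{e}{w}\sepmon\rta\subst{x}{w}\bigr)(\sk,\hh') = \sup_{\hh''\disjoint\hh'}\bigl(\rta\subst{x}{w}(\sk,\hh'\sepcon\hh'') \monus \isingleton{e}{w}(\sk,\hh'')\bigr)$; the term $\isingleton{e}{w}(\sk,\hh'')$ is $0$ exactly when $\hh'' = \singleheap{\loc}{w}$ and $\infty$ otherwise, so (using $\infty\monus\infty = 0$ and that $\infty$ is subtracted in all other cases) the supremum is achieved at $\hh'' = \singleheap{\loc}{w}$, giving $\rta\subst{x}{w}(\sk, \hh'\sepcon\singleheap{\loc}{w}) = \rta\subst{x}{w}(\sk,\hh)$. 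Since $\rta\subst{x}{w}(\sk,\hh) = \rta(\sk\statesubst{x}{w},\hh)$ and $w = \hh(\sk(e))$, and since for $v \neq w$ the outer expression is $\infty$, the infimum over $v$ equals $\rta(\sk\statesubst{x}{\hh(\sk(e))},\hh)$, as claimed. (One can also shortcut some of this by invoking the modus-ponens / adjointness facts for $\sepadd$ and $\sepmon$ recorded in \Cref{s:rsl}, but the direct computation above is self-contained.)

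The main obstacle is purely bookkeeping: one must argue carefully that all the ``bad'' partitions and all the ``bad'' values of $v$ (and of the quantified heap $\hh''$ inside $\sepmon$) genuinely yield $\infty$, so that the $\min$/$\inf$/$\sup$ are all attained at the unique ``good'' choice, and that the conventions $0\cdot\infty = 0$ and $\infty\monus\infty = 0$ are applied consistently. No step is deep; the only care needed is to handle the interaction of the outer $\Inf$, the outer $\sepadd$, and the inner $\sepmon$ simultaneously rather than treating them in isolation.
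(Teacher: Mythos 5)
Your proof is correct and takes essentially the same approach as the paper: unfold the $\ertsymbol$ rule from \Cref{table:ert}, case-split on $\sk(e)\in\dom{\hh}$, and in the allocated case collapse the outer $\Inf$, the $\sepadd$, and the inner $\sepmon$ to their unique finite witnesses (the partition $\hh_1 = \singleheap{\sk(e)}{\hh(\sk(e))}$, the value $v = \hh(\sk(e))$, and the extension $\hh'' = \singleheap{\sk(e)}{\hh(\sk(e))}$), with all other choices yielding $\infty$. The bookkeeping you describe matches the paper's own calculation step for step, including the treatment of the unallocated case and the $\infty \monus \infty = 0$ convention.
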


\begin{proof}
	
First, assume $\sk(e) \in \dom{\hh}$ and let $\hh = \hh_1 \sepcon \hh_2$ such that $\dom{\hh_1} = \{ \sk(e) \}$. Then:

\begin{align*}
& \ert{\ASSIGNH{x}{e}}{\rta}(\sk,\hh)
\\
\eeq & \left(\Inf\, v \colon \isingleton{e}{v} \sepadd \bigl(\isingleton{e}{v} \sepmon \rta\subst{x}{v}   \bigr)\right)(\sk,\hh)
\tag{\Cref{table:ert}} 
\\
\eeq & \left(\isingleton{e}{\hh(\sk(e))} \sepadd \bigl(\isingleton{e}{\hh(\sk(e))} \sepmon \rta\subst{x}{\hh(\sk(e))}   \bigr)\right)(\sk,\hh)
\tag{$\isingleton{e}{v}(\sk,\hh') = 0$ iff $v = \hh(\sk(e))$ and $\hh' = \hh_1$}
\\
\eeq &  \underbrace{\isingleton{e}{\hh(\sk(e))}(\sk,\hh_1)}_{\eeq 0} + \bigl(\isingleton{e}{\hh(\sk(e))} \sepmon \rta\subst{x}{\hh(\sk(e))}   \bigr)(\sk,\hh_2) 
\tag{Def. of $\sepadd$; minimum obtained only for $\hh = \hh_1 \sepcon \hh_2$}
\\
\eeq & \bigl(\isingleton{e}{\hh(\sk(e))} \sepmon \rta\subst{x}{\hh(\sk(e))}   \bigr)(\sk,\hh_2)
\\
\eeq & \sup \{ \rta\subst{x}{\hh(\sk(e))}(\sk,\hh_2 \sepcon \hh') \monus \isingleton{e}{\hh(\sk(e))}(\sk,\hh') ~|~ \hh' \disjoint \hh  \}
\tag{Def. of $\sepmon$}
\\
\eeq & \rta\subst{x}{\hh(\sk(e))}(\sk,\hh_2 \sepcon \hh') \monus \underbrace{\isingleton{e}{\hh(\sk(e))}(\sk,\hh')}_{\eeq 0}
\tag{$\isingleton{e}{\hh(\sk(e))}(\sk,\hh') = 0$ iff $\hh = \{\sk(e) \mapsto \hh(\sk(e))\}$}
\\
\eeq & \rta\subst{x}{\hh(\sk(e))}(\sk,\hh_2 \sepcon \hh')
\tag{$\hh' = \{\sk(e) \mapsto \hh(\sk(e))\}$}
\\
\eeq & \rta(\sk\statesubst{x}{\hh(\sk(e))}, \hh).
\tag{$\hh_2 \sepcon \hh(\sk(e)) = \hh$}
\end{align*}

Second, assume $\sk(e) \not\in \dom{\hh}$:

\begin{align*}
& \ert{\ASSIGNH{x}{e}}{\rta}(\sk,\hh)
\\
\eeq & \left(\Inf\, v \colon \isingleton{e}{v} \sepadd \bigl(\isingleton{e}{v} \sepmon \rta\subst{x}{v}   \bigr)\right)(\sk,\hh)
\tag{\Cref{table:ert}}
\\
\eeq & (\Inf\, v \colon\infty \sepadd \ldots)(\sk,\hh) \tag{for all $v$, and all $\hh_1 \sepcon \hh_2$, $\isingleton{e}{v}(\sk,\hh_1) = \infty$ since $\sk(e) \notin \dom{\hh}$}
\\
\eeq & \infty.
\end{align*}

\end{proof}

\begin{lemma}
	\label{lem:mut}
	For all $(\sk,\hh) \in \States$ and $\rta \in \T$,

    \begin{align*}
    	\ert{\HASSIGN{e}{e'}}{\rta}(\sk,\hh) \eeq
    	\begin{cases}
    		\rta(\sk,\hh\statesubst{\sk(e)}{\sk(e')}) & \qif \sk(e) \in \dom{\hh} \\
    		\infty & \qif \sk(e) \notin \dom{\hh}.
    	\end{cases}
    \end{align*}
    
\end{lemma}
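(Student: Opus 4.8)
The plan is to mirror the proof of \Cref{lem:lookup}: unfold the $\ertsymbol$-rule for mutation from \Cref{table:ert}, namely $\ert{\HASSIGN{e}{e'}}{\rta} = \ivalidpointer{e} \sepadd \bigl(\isingleton{e}{e'} \sepmon \rta\bigr)$, and evaluate the separating connectives pointwise at a fixed state $(\sk,\hh)$. The case split is on whether $\sk(e) \in \dom{\hh}$.

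For the case $\sk(e) \in \dom{\hh}$, I would first observe that $\ivalidpointer{e}(\sk,\hh') = 0$ exactly when $\dom{\hh'} = \{\sk(e)\}$ and $= \infty$ otherwise. Hence, since the outer $\sepadd$ minimizes over partitions $\hh = \hh_1 \sepcon \hh_2$, the minimum is attained only at the unique partition with $\hh_1 = \singleheap{\sk(e)}{\hh(\sk(e))}$, reducing the whole expression to $\bigl(\isingleton{e}{e'} \sepmon \rta\bigr)(\sk,\hh_2)$, where $\hh_2$ is the restriction of $\hh$ to $\dom{\hh}\setminus\{\sk(e)\}$. Unfolding $\sepmon$, this equals $\sup_{\hh' \disjoint \hh_2}\bigl(\rta(\sk,\hh_2 \sepcon \hh') \monus \isingleton{e}{e'}(\sk,\hh')\bigr)$; since $\isingleton{e}{e'}(\sk,\hh')$ is $0$ iff $\hh' = \singleheap{\sk(e)}{\sk(e')}$ and $\infty$ otherwise, and since $\singleheap{\sk(e)}{\sk(e')} \disjoint \hh_2$ (because $\sk(e) \notin \dom{\hh_2}$), the supremum is realized at $\hh' = \singleheap{\sk(e)}{\sk(e')}$ and equals $\rta(\sk,\hh_2 \sepcon \singleheap{\sk(e)}{\sk(e')})$. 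Finally I would note the heap identity $\hh_2 \sepcon \singleheap{\sk(e)}{\sk(e')} = \hh\statesubst{\sk(e)}{\sk(e')}$, which closes this case.

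For the case $\sk(e) \notin \dom{\hh}$, every partition $\hh = \hh_1 \sepcon \hh_2$ has $\dom{\hh_1} \subseteq \dom{\hh}$, so $\sk(e) \notin \dom{\hh_1}$ and thus $\ivalidpointer{e}(\sk,\hh_1) = \infty$; consequently every summand in the outer $\sepadd$ is $\infty$, and the term is $\infty$ as claimed.

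I do not expect a genuine obstacle: this is a routine pointwise computation of exactly the kind carried out for \Cref{lem:lookup}. The only points needing a little care are bookkeeping ones — verifying that the partition (resp.\ heap extension) realizing the $\min$ (resp.\ $\sup$) is the singleton heap, so that the degenerate convention $\infty \monus \infty = 0$ never actually interferes; checking the disjointness side condition $\singleheap{\sk(e)}{\sk(e')} \disjoint \hh_2$; and confirming the heap identity $\hh_2 \sepcon \singleheap{\sk(e)}{\sk(e')} = \hh\statesubst{\sk(e)}{\sk(e')}$.
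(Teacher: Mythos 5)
Your proposal is correct and follows essentially the same route as the paper: unfold the table rule, observe that $\ivalidpointer{e}$ (respectively $\isingleton{e}{e'}$) is zero on exactly one heap and $\infty$ elsewhere, so the $\min$ inside $\sepadd$ (respectively the $\sup$ inside $\sepmon$) collapses to the singleton-heap partition, leaving $\rta$ evaluated on $\hh\statesubst{\sk(e)}{\sk(e')}$; and when $\sk(e)\notin\dom\hh$ every summand in the $\sepadd$ is $\infty$. Your extra remark — that since $\rta\succeq 0$ the degenerate convention $\infty\monus\infty=0$ cannot inflate or deflate the $\sup$ — is a useful clarification the paper leaves implicit, but the argument is otherwise the same.
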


\begin{proof}
First, assume $\sk(e) \in \dom{\hh}$ and let $\hh = \hh_1' \sepcon \hh_2'$ such that $\dom{\hh_1'} = \{ \sk(e) \}$. Then:

\begin{align*}
	 & \ert{\HASSIGN{e}{e'}}{\rta}(\sk,\hh) 
\\
\eeq & \left(\ivalidpointer{e} \sepadd \bigl(\isingleton{e}{e'} \sepmon \rta \bigr)\right)(\sk,\hh) 
\tag{\Cref{table:ert}} \\
\eeq & \min \{ \ivalidpointer{e}(\sk,\hh_1) \pplus \bigl(\isingleton{e}{e'} \sepmon \rta \bigr)(\sk,\hh_2) ~|~ \hh = \hh_1 \sepcon \hh_2 \} 
\tag{Def. of $\sepadd$} \\
\eeq & \ivalidpointer{e}(\sk,\hh_1') \pplus \bigl(\isingleton{e}{e'} \sepmon \rta \bigr)(\sk,\hh_2')
\tag{$\ivalidpointer{e}(\sk,\hh_1) = \infty$ unless $\hh_1 = \hh_1'$} 
\\ 
\eeq & \bigl(\isingleton{e}{e'} \sepmon \rta \bigr)(\sk,\hh_2') 
\\
\eeq & \sup \{ \rta(\sk,\hh_2' \sepcon \hh') \monus \isingleton{e}{e'}(\sk,\hh') ~|~ \hh' \disjoint \hh_2 \}
\tag{Def. of $\sepmon$}
\\
\eeq & \rta(\sk,\hh_2' \sepcon \{\sk(e) \mapsto \sk(e') \})
\tag{$\isingleton{e}{e'}(\sk,\hh') = 0$ iff $\hh' = \{\sk(e) \mapsto \sk(e') \}$}\\
\eeq & \rta(\sk, \hh\statesubst{\sk(e)}{\sk(e')}).
\end{align*}

Second, assume $\sk(e) \notin \dom{\hh}$. Then:

\begin{align*}
& \ert{\HASSIGN{e}{e'}}{\rta}(\sk,\hh) 
\\
\eeq & \left(\ivalidpointer{e} \sepadd \bigl(\isingleton{e}{e'} \sepmon \rta \bigr)\right)(\sk,\hh) 
\tag{\Cref{table:ert}} \\
\eeq & \min \{ \ivalidpointer{e}(\sk,\hh_1) \pplus \bigl(\isingleton{e}{e'} \sepmon \rta \bigr)(\sk,\hh_2) ~|~ \hh = \hh_1 \sepcon \hh_2 \} 
\tag{Def. of $\sepadd$} \\
\eeq & \min \{ \infty \pplus \bigl(\isingleton{e}{e'} \sepmon \rta \bigr)(\sk,\hh_2) ~|~ \hh = \hh_1 \sepcon \hh_2 \} 
\tag{$\ivalidpointer{e}(\sk,\hh_1) = \infty$ since $\sk(e) \notin \dom{\hh}$} \\
\eeq & \infty.
\end{align*}

\end{proof}

\begin{lemma}
	\label{lem:free}
	For all $(\sk,\hh) \in \States$ and $\rta \in \T$, 
	\begin{align*}
		\ert{\FREE{e}}{\rta}(\sk,\hh) \eeq
		\begin{cases}
			\rta(\sk,\hh_2) & \qif \hh = \hh_1 \sepcon \hh_2, \dom{\hh_1} = \{ \sk(e) \} \\
			\infty & \qif \sk(e) \notin \dom{\hh}.
		\end{cases}
	\end{align*}
	
\end{lemma}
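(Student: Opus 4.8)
The plan is to unfold the definition of $\ert{\FREE{e}}{\rta}$ from \Cref{table:ert}, which gives $\ert{\FREE{e}}{\rta} \eeq \ivalidpointer{e} \sepadd \rta$, then unfold the quantitative separating addition from \Cref{def:sepadd}, and finally split into the two cases according to whether $\sk(e) \in \dom{\hh}$ --- mirroring the proofs of \Cref{lem:lookup,lem:mut}. The only semantic fact about the heap predicate that I need is that $\ivalidpointer{e}(\sk,\hh') = 0$ iff $\dom{\hh'} = \{\sk(e)\}$, and $\ivalidpointer{e}(\sk,\hh') = \infty$ otherwise.

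For the first case, assume $\sk(e) \in \dom{\hh}$. I would first observe that $\hh$ then admits a \emph{unique} partition $\hh = \hh_1 \sepcon \hh_2$ with $\dom{\hh_1} = \{\sk(e)\}$, namely $\hh_1 = \{\sk(e) \mapsto \hh(\sk(e))\}$; this also shows in passing that the right-hand side $\rta(\sk,\hh_2)$ of the claimed equation is well-defined. Unfolding $\sepadd$ gives
\[
  \ert{\FREE{e}}{\rta}(\sk,\hh) \eeq \min \setcomp{\ivalidpointer{e}(\sk,\hh_1') \pplus \rta(\sk,\hh_2')}{\hh \eeq \hh_1' \sepcon \hh_2'}~,
\]
and every summand coming from a partition other than the distinguished one is $\infty$ (since there $\dom{\hh_1'} \neq \{\sk(e)\}$), whereas the distinguished partition contributes $0 \pplus \rta(\sk,\hh_2)$. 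Hence the minimum is $\rta(\sk,\hh_2)$.

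For the second case, assume $\sk(e) \notin \dom{\hh}$. Then for \emph{every} partition $\hh = \hh_1' \sepcon \hh_2'$ one has $\dom{\hh_1'} \subseteq \dom{\hh}$, so $\sk(e) \notin \dom{\hh_1'}$ and therefore $\ivalidpointer{e}(\sk,\hh_1') = \infty$; each summand in the minimum is thus $\infty$, and since the index set of partitions is non-empty (e.g.\ $\hh = \emptyheap \sepcon \hh$) the minimum is $\infty$.

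I do not anticipate any real difficulty here; the two points that need a line of care are the uniqueness of the decomposition in the first case (so that the statement even type-checks) and the non-emptiness of the set of partitions in the second case (so that the minimum genuinely evaluates to $\infty$ rather than ranging over the empty set).
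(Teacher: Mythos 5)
Your proposal is correct and follows the same route as the paper's proof: unfold $\ert{\FREE{e}}{\rta} = \ivalidpointer{e} \sepadd \rta$ from \Cref{table:ert}, expand the separating addition as a minimum over heap partitions, and observe that $\ivalidpointer{e}$ kills every summand except the unique one with $\dom{\hh_1} = \{\sk(e)\}$ (resp.\ kills all summands when $\sk(e) \notin \dom{\hh}$). Your two added remarks --- uniqueness of the partition so the right-hand side is well-defined, and non-emptiness of the partition set so the minimum genuinely equals $\infty$ --- are sound bits of bookkeeping that the paper leaves implicit.
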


\begin{proof}
First, assume $\sk(e) \in \dom{\hh}$ and let $\hh = \hh_1 \sepcon \hh_2$ such that $\dom{\hh_1} = \{ \sk(e) \}$. Then:

\begin{align*}
     & \ert{\FREE{e}}{\rta}(\sk,\hh) \\
\eeq &  (\ivalidpointer{e} \sepadd \rta)(\sk,\hh) 
\tag{\Cref{table:ert}} \\
\eeq & \min \{ \ivalidpointer{e}(\sk,\hh_1') \pplus \rta(\sk,\hh_2') ~|~ \hh = \hh_1' \sepcon \hh_2' \}
\tag{Def. of $\sepadd$} \\
\eeq & \ivalidpointer{e}(\sk,\hh_1) \pplus \rta(\sk,\hh_2)
\tag{$\ivalidpointer{e}(\sk,\hh_1') = 0$ iff $\hh_1' = \hh_1$}  \\
\eeq & \rta(\sk,\hh_2).
\end{align*}

Second, assume $\sk(e) \not\in \dom{\hh}$. Then:

\begin{align*}
     & \ert{\FREE{e}}{\rta}(\sk,\hh) \\
\eeq &  (\ivalidpointer{e} \sepadd \rta)(\sk,\hh) 
\tag{\Cref{table:ert}} \\
\eeq & \min \{ \ivalidpointer{e}(\sk,\hh_1') \pplus \rta(\sk,\hh_2') ~|~ \hh = \hh_1' \sepcon \hh_2' \}
\tag{Def. of $\sepadd$} \\
\eeq & \min \{ \infty \pplus \rta(\sk,\hh_2') ~|~ \hh = \hh_1' \sepcon \hh_2' \}
\tag{$\ivalidpointer{e}(\sk,\hh_1') = \infty$ since $\sk(e) \notin \dom{\hh}$} \\
\eeq & \infty \pplus \rta(\sk,\hh_2) \\
\eeq & \infty.
\end{align*}

\end{proof}

\subsection{Proof of \Cref{thm:ert-health}}
\label{proof:thm_cont}

Let $C$ be a program; $F = \{\rta_1 \preceq \rta_2 \preceq \ldots\}$ be an $\omega$-chain of runtimes; $\rta, \rtb$ be runtimes; and let $\rtc\in\T$ be a constant runtime.
	Then the following hold:

	\begin{enumerate}
		\item \emph{$\omega$-Continuity}: \qquad $\ert{C}{\sup F} \eeq \sup \ert{C}{F}$\vspace{.5em}
		
		\item \emph{Monotonicity:} \qquad $\rta \ppreceq \rtb$ \qimplies $\ert{C}{\rta} \ppreceq \ert{C}{\rtb}$ \vspace{.5em}
		
		\item \emph{Sub-additivity:} \qquad $\ert{C}{ \rta + \rtb}
		\ppreceq
		\ert{C}{\rta} + \ert{C}{\rtb}$\vspace{.5em}
		
		\item \emph{Constant propagation:} \qquad
		%
		$ \ert{C}{\rtc + \rta} \ppreceq \rtc + \ert{C}{\rta}$
	\end{enumerate}

\begin{proof}
We prove $\omega$-continuity by induction on $C$. Monotonicity then follows from $\omega$-continuity. After that, we prove sub-additivity and constant propagation by induction on $C$.

Now let $F = \{\rta_1,\rta_2,\ldots\}$ be an $\omega$-chain.
We proceed by structural induction on $C$.	

\paragraph{The case $C = \TICK{e}$}

\begin{align*}
&
\ert{\TICK{e}}{\sup F}
\\
\eeq & 
\emprun{e} \sepadd \sup F 
\tag{\Cref{table:ert}} \\
\eeq &
\lambda(\sk,\hh)\mydot \min \{ \emprun{e}(\sk,\hh_1) \pplus \sup \{ \rta(\sk,\hh_2) ~|~ \rta \in F \} ~|~ \hh = \hh_1 \sepcon \hh_2 \}
\tag{Def. of $\sepadd$} \\
\eeq &
\lambda(\sk,\hh)\mydot \min \{ \emprun{e}(\sk,\emptyheap) \pplus \sup \{ \rta(\sk,\hh_2) ~|~ \rta \in F \} ~|~ \hh = \hh_2 \}
\tag{$\emprun{e}(\sk,\hh_1) = \infty$ unless $\hh_1 = \emptyheap$} \\
\eeq &
\lambda(\sk,\hh)\mydot \emprun{e}(\sk,\emptyheap) \pplus \sup \{ \rta(\sk,\hh_2) ~|~ \rta \in F \} 
\tag{ $\min$ over singleton } \\
\eeq &
\lambda(\sk,\hh)\mydot \sup \{ \emprun{e}(\sk,\emptyheap) \pplus  \rta(\sk,\hh_2) ~|~ \rta \in F \} 
\tag{ $+$ is continuous } \\
\eeq & 
\lambda(\sk,\hh)\mydot \sup \{ \min \{ \emprun{e}(\sk,\hh_1) \pplus  \rta(\sk,\hh_2) ~|~ \hh = \hh_1 \sepcon \hh_2 \} ~|~ \rta \in F \} 
\tag{ $\min$ over singleton } \\
\eeq &
\lambda(\sk,\hh)\mydot \sup \{ \emprun{e} \sepadd \rta ~|~ \rta \in F \} 
\tag{Def. of $\sepadd$ } \\
\eeq &
\sup \ert{\TICK{e}}{F}~.
\tag{\Cref{table:ert}} 
\end{align*}

\paragraph{The case $C = \ASSIGN{x}{e}$}

\begin{align*}
& \ert{\ASSIGN{x}{e}}{\sup F} \\	
\eeq & (\sup F)\subst{x}{e} \\
\eeq & \sup F\subst{x}{e} \\
\eeq & \sup \ert{\ASSIGN{x}{e}}{F}~.
\end{align*}

\paragraph{The case $C = \ASSIGNH{x}{e}$}

Let $(\sk,\hh) \in \States$; we distinguish two cases.
First, assume $\sk(e) \in \dom{\hh}$:

\begin{align*}
& \ert{\HASSIGN{x}{e}}{\sup F}(\sk,\hh) \\		
\eeq & (\sup F)(\sk\statesubst{x}{\hh(\sk(e))}, \hh) 
\tag{\Cref{lem:lookup}} \\
\eeq & \sup \{ \rta(\sk\statesubst{x}{\hh(\sk(e))}, \hh) ~|~ \rta \in F \} \\
\eeq & \sup \{ \ert{\ASSIGNH{x}{e}}{\rta}(\sk,\hh) ~|~ \rta \in F \}
\tag{\Cref{lem:lookup}} \\ 
\eeq & \sup \ert{\ASSIGNH{x}{e}}{F}(\sk,\hh)~.
\end{align*}

Second, assume $\sk(e) \notin \dom{\hh}$:

\begin{align*}
& \ert{\ASSIGNH{x}{e}}{\sup F}(\sk,\hh) \\		
\eeq & \infty
\tag{\Cref{lem:lookup}} \\
\eeq & \sup \{ \infty ~|~ \rta \in F \} \\
\eeq & \sup \{ \ert{\ASSIGNH{x}{e}}{\rta}(\sk,\hh) ~|~ \rta \in F \}
\tag{\Cref{lem:lookup}} \\ 
\eeq & \sup \ert{\ASSIGNH{x}{e}}{F}(\sk,\hh)~.
\end{align*}

\paragraph{The case $C = \HASSIGN{e}{e'}$}

Let $(\sk,\hh) \in \States$; we distinguish two cases.
First, assume $\sk(e) \in \dom{\hh}$:

\begin{align*}
& \ert{\HASSIGN{e}{e'}}{\sup F}(\sk,\hh) \\		
\eeq & (\sup F)(\sk, \hh\statesubst{\sk(e)}{\sk(e')}) 
\tag{\Cref{lem:mut}} \\
\eeq & \sup \{ \rta(\sk, \hh\statesubst{\sk(e)}{\sk(e')}) ~|~ \rta \in F \} \\
\eeq & \sup \{ \ert{\HASSIGN{e}{e'}}{\rta}(\sk,\hh) ~|~ \rta \in F \}
\tag{\Cref{lem:mut}} \\ 
\eeq & \sup \ert{\HASSIGN{e}{e'}}{F}(\sk,\hh)~.
\end{align*}

Second, assume $\sk(e) \notin \dom{\hh}$:

\begin{align*}
& \ert{\HASSIGN{e}{e'}}{\sup F}(\sk,\hh) \\		
\eeq & \infty
\tag{\Cref{lem:mut}} \\
\eeq & \sup \{ \infty ~|~ \rta \in F \} \\
\eeq & \sup \{ \ert{\HASSIGN{e}{e'}}{\rta}(\sk,\hh) ~|~ \rta \in F \}
\tag{\Cref{lem:mut}} \\ 
\eeq & \sup \ert{\HASSIGN{e}{e'}}{F}(\sk,\hh)~.
\end{align*}

\paragraph{The case $C = \FREE{x}$}

\begin{align*}
&
\ert{\FREE{e}}{\sup F}
\\
\eeq & 
\ivalidpointer{e} \sepadd \sup F 
\tag{\Cref{table:ert}} \\
\eeq &
\lambda(\sk,\hh)\mydot \min \{ \ivalidpointer{e}(\sk,\hh_1) \pplus \sup \{ \rta(\sk,\hh_2) ~|~ \rta \in F \} ~|~ \hh = \hh_1 \sepcon \hh_2 \}
\tag{Def. of $\sepadd$} \\
\eeq &
\lambda(\sk,\hh)\mydot \min \{ \ivalidpointer{e}(\sk,\hh_1') \pplus \sup \{ \rta(\sk,\hh_2) ~|~ \rta \in F \} ~|~ \hh = \hh_1' \sepcon \hh_2 \}
\tag{$\ivalidpointer{e}(\sk,\hh_1') = \infty$ unless $\dom{\hh_1'} = \sk(x)$} \\
\eeq &
\lambda(\sk,\hh)\mydot \ivalidpointer{e}(\sk,\hh_1') \pplus \sup \{ \rta(\sk,\hh_2) ~|~ \rta \in F \} 
\tag{ $\min$ over singleton } \\
\eeq &
\lambda(\sk,\hh)\mydot \sup \{ \ivalidpointer{e}(\sk,\hh_1') \pplus  \rta(\sk,\hh_2) ~|~ \rta \in F \} 
\tag{ $+$ is continuous } \\
\eeq & 
\lambda(\sk,\hh)\mydot \sup \{ \min \{ \ivalidpointer{e}(\sk,\hh_1) \pplus  \rta(\sk,\hh_2) ~|~ \hh = \hh_1 \sepcon \hh_2 \} ~|~ \rta \in F \} 
\tag{ $\min$ over singleton } \\
\eeq &
\lambda(\sk,\hh)\mydot \sup \{ \ivalidpointer{e} \sepadd \rta ~|~ \rta \in F \} 
\tag{Def. of $\sepadd$ } \\
\eeq &
\sup \ert{\FREE{e}}{F}~.
\tag{\Cref{table:ert}} 
\end{align*}

\paragraph{The case $C = \ALLOC{x}{e}$}

\begin{align*}
& 
\ert{\ALLOC{x}{e}}{\sup F} 
\\
\eeq &
\Sup v\colon \left(\bigoplus_{i=1}^{e} \isingleton{v+i-1}{0}\right) \sepmon (\sup F)\subst{x}{v}
\tag{\Cref{table:ert}} \\
\eeq & 
\sup_v \lambda(\sk,\hh)\mydot \sup \left\{ (\sup F)\subst{x}{v}(\sk,\hh \sepcon \hh')\monus \left(\bigoplus_{i=1}^{e} \isingleton{v+i-1}{0}\right)(\sk,\hh') ~\middle|~ \hh \disjoint \hh' \right\}
\tag{Def. of $\sepmon$} \\
\eeq & 
\sup_v \lambda(\sk,\hh)\mydot \sup \left\{ \sup \{ \rta\subst{x}{v}(\sk,\hh \sepcon \hh') ~|~ \rta \in F \}\monus \left(\bigoplus_{i=1}^{e} \isingleton{v+i-1}{0}\right)(\sk,\hh') ~\middle|~ \hh \disjoint \hh' \right\} 
\tag{substitution is continuous} \\
\eeq &
\sup_v \lambda(\sk,\hh)\mydot \sup \left\{ \sup \{ \rta\subst{x}{v}(\sk,\hh \sepcon \hh') \monus \left(\bigoplus_{i=1}^{e} \isingleton{v+i-1}{0}\right)(\sk,\hh') ~|~ \rta \in F \} ~\middle|~ \hh \disjoint \hh' \right\} 
\tag{monus a constant is continuous} \\
\eeq &
\sup \{ \sup_v \lambda(\sk,\hh)\mydot \sup \left\{ \rta\subst{x}{v}(\sk,\hh \sepcon \hh') \monus \left(\bigoplus_{i=1}^{e} \isingleton{v+i-1}{0}\right)(\sk,\hh') ~\middle|~ \hh \disjoint \hh' \right\} ~|~ \rta \in F \} 
\tag{$\sup_u \sup_v \ldots \eeq \sup_v \sup_u \ldots$} \\
\eeq & \sup \{ \Sup v\colon \left(\bigoplus_{i=1}^{e} \isingleton{v+i-1}{0}\right) \sepmon f\subst{x}{v} ~|~ \rta \in F \}
\tag{Def. of $\sepmon$} \\
\eeq & 
\sup \ert{\ALLOC{x}{e}}{F}~.
\end{align*}

\paragraph{The case $C = \COMPOSE{C_1}{C_2}$}
We have
\begin{align*}
& \ert{\COMPOSE{C_1}{C_2}}{\sup F} 
\\
\eeq & 
\ert{C_1}{\ert{C_2}{\sup F}} 
\tag{\Cref{table:ert}} \\
\eeq &
\ert{C_1}{\sup \ert{C_2}{F}}
\tag{I.H.} \\ 
\eeq &
\sup \ert{C_1}{\ert{C_2}{F}}
\tag{I.H.} \\
\eeq &
\sup \ert{\COMPOSE{C_1}{C_2}}{F}~. 
\tag{\Cref{table:ert}} 
\end{align*}

\paragraph{The case $C = \ITE{\varphi}{C_1}{C_2}$}
We have
\begin{align*}
& \ert{\ITE{\varphi}{C_1}{C_2}}{\sup F} 
\\
\eeq & 
\iverson{\varphi} \cdot \ert{C_1}{\sup F} + \iverson{\neg \varphi} \cdot \ert{C_2}{\sup F} 
\tag{\Cref{table:ert}} \\
\eeq &
\iverson{\varphi} \cdot (\sup \ert{C_1}{F}) + \iverson{\neg \varphi} \cdot (\sup \ert{C_2}{F})
\tag{I.H.} \\
\eeq & 
(\sup \iverson{\varphi} \cdot \ert{C_1}{F}) + (\sup \iverson{\neg \varphi} \cdot \ert{C_2}{F})
\tag{multiplication with runtimes is continuous} \\
\eeq &
\sup (\iverson{\varphi} \cdot \ert{C_1}{F}) + \iverson{\neg \varphi} \cdot \ert{C_2}{F})
\\
\eeq & 
\sup \ert{\ITE{\varphi}{C_1}{C_2}}{F}~. 
\tag{\Cref{table:ert}} 
\end{align*}

\paragraph{The case $C = \PCHOICE{C_1}{p}{C_2}$}
We have
\begin{align*}
& \ert{\PCHOICE{C_1}{p}{C_2}}{\sup F} 
\\
\eeq & 
p \cdot \ert{C_1}{\sup F} + (1-p) \cdot \ert{C_2}{\sup F} 
\tag{\Cref{table:ert}} \\
\eeq &
p \cdot (\sup \ert{C_1}{F}) + (1-p) \cdot (\sup \ert{C_2}{F})
\tag{I.H.} \\
\eeq & 
(\sup p \cdot \ert{C_1}{F}) + (\sup (1-p) \cdot \ert{C_2}{F})
\tag{multiplication with runtimes is continuous} \\
\eeq &
\sup ~(p \cdot \ert{C_1}{F}) + (1-p) \cdot \ert{C_2}{F})
\tag{by monotone convergence, $\sup = \lim$}
\\
\eeq & 
\sup \ert{\PCHOICE{C_1}{p}{C_2}}{F}~. 
\tag{\Cref{table:ert}} 
\end{align*}

\paragraph{The case $C = \WHILEDO{\varphi}{C_1}$}
Recall from \Cref{table:ert} that 
\[
\ert{\WHILEDO{\varphi}{C_1}}{\rta}
\eeq 
\lfp I\mydot \underbrace{ \iverson{\neg\varphi} \cdot \rta + \iverson{\varphi} \cdot \ert{C_1}{I} }_{\eeq \Phi_{\rta}(I)}
\]
Our continuity proof for loops relies on three facts:

\begin{enumerate}
	\item $\Phi_{\sup F} = \sup \{ \Phi_{\rta} ~|~ \rta \in F \}$, which is straightforward;
	\item $\sup \{ \Phi_{\rta} ~|~ \rta \in F \}$ is continuous in ($\T \to \T$), since $\{\Phi_{\rta} ~|~ \rta \in F\}$ is an $\omega$-chain of continuous transformers (by I.H. $\ertC{C_1}$ is continuous) and continuous functions are closed under taking the supremum;
	\item Taking the least fixed point is itself continuous when restricted to the set of continuous transformers in $(\T \to \T)$, see \cite[Proposition 12]{DBLP:series/eatcs/Wechler92}.
\end{enumerate}

We then reason as follows:
\begin{align*}
& \ert{\WHILEDO{\varphi}{C_1}}{\sup F} \\
\eeq & \lfp \Phi_{\sup F} \tag{see above} \\
\eeq & \lfp (\sup \{ \Phi_{\rta} ~|~ \rta \in F \}) \tag{Fact 1} \\
\eeq & \sup \{ \lfp \Phi_{\rta} ~|~ \rta \in F \} \tag{Facts 1 and 2} \\
\eeq & \sup \{ \ert{\WHILEDO{\varphi}{C_1}}{\rta} ~|~ \rta \in F \} \tag{\Cref{table:ert}} \\
\eeq & \sup \ert{\WHILEDO{\varphi}{C_1}}{F}~.
\end{align*}
This concludes the proof of $\omega$-continuity. \\

\noindent
Let us now prove sub-additivity, i.e., 
\[
\ert{C}{\rta + \rtb}
\ppreceq
\ert{C}{\rta} + \ert{C}{\rtb}
\]
by induction on $C$. The cases $\ASSIGN{x}{e}$, $\ASSIGNH{x}{e}$, $\HASSIGN{e}{e'}$, and $\FREE{e}$ are immediate by \Cref{lem:lookup}, \Cref{lem:mut}, and \Cref{lem:free}. 

\paragraph{The case $C = \TICK{e}$}
We have
\begin{align*}
   &\ert{\TICK{e}}{\rta + \rtb} \\
   \eeq& \emprun{e} \sepadd (\rta + \rtb)  \\
   \eeq & e+ \rta + \rtb
   \tag{by \Cref{table:ert}} \\
   \ppreceq & (e+ \rta) + (e+ \rtb)
   \tag{since $e \geq 0$} \\
   \eeq & (\emprun{e} \sepadd  \rta) + (\emprun{e} \sepadd  \rtb)
   \\
   \eeq& \ert{\TICK{e}}{\rta} + \ert{\TICK{e}}{\rtb} 
   \tag{by \Cref{table:ert}}
\end{align*}

\paragraph{The case $C = \ALLOC{x}{e}$}
We have
\begin{align*}
&\ert{\ALLOC{x}{e}}{\rta + \rtb}(\sk,\hh) \\
\eeq & (\Sup\, v \colon \bigl(\bigoplus_{i=1}^{e } \isingleton{v+i-1}{0}\bigr) \sepmon (\rta + \rtb)\subst{x}{v})(\sk,\hh)
\tag{by \Cref{table:ert}} \\
\eeq & \sup \setcomp{(\bigl(\bigoplus_{i=1}^{e} \isingleton{v+i-1}{0}\bigr)\sepmon (\rta + \rtb)\subst{x}{v})(\sk, \hh)}{v \in \Vals}
\\
\eeq & \sup \setcomp{\setcomp{(\rta + \rtb)\subst{x}{v})(\sk, \hh \sepcon\hh')
		\monus
		\bigl(\bigoplus_{i=1}^{e} \isingleton{v+i-1}{0}\bigr)(\sk, \hh')}{\hh' \disjoint \hh}
}{v \in \Vals}
\tag{by \Cref{def:sepmon}}
\\
%
%
\eeq & \sup \setcomp{\setcomp{(\rta + \rtb)\subst{x}{v})(\sk, \hh \sepcon\hh')
	}{\hh \disjoint \hh' =\singleheap{v}{0}\sepcon\ldots\sepcon\singleheap{v + \sk(e) -1}{0}}
}{v \in \Vals}
\\
\eeq & \sup_v \sup_{\substack{\hh' \eeq \ldots\\\hh'\disjoint \hh}} (\rta + \rtb)\subst{x}{v}(\sk, \hh \sepcon\hh')
\tag{rewrite} \\
\eeq & \sup_v \sup_{\substack{\hh' \eeq \ldots\\\hh'\disjoint \hh}}  \rta\subst{x}{v}(\sk, \hh \sepcon\hh') + \rtb\subst{x}{v}(\sk, \hh \sepcon\hh') \\
\lleq & \sup_v \sup_{\substack{\hh' \eeq \ldots\\\hh'\disjoint \hh}} \rta\subst{x}{v}(\sk, \hh \sepcon\hh')  \quad+ \quad \sup_v \sup_{\substack{\hh' \eeq \ldots\\\hh'\disjoint \hh}} \rtb\subst{x}{v}(\sk, \hh \sepcon\hh')
\tag{standard property of suprema} \\
\eeq & \ert{\ALLOC{x}{e}}{\rta }(\sk, \hh)  +  \ert{\ALLOC{x}{e}}{\rtb}(\sk, \hh)~.
\tag{analogous to above reasoning}
\end{align*}
The remaining cases are analogous to \cite[Theorem 1]{DBLP:conf/esop/KaminskiKMO16}.

Let us now prove constant propagation by induction on $C$. The cases $\ASSIGN{x}{e}$, $\ASSIGNH{x}{e}$, $\HASSIGN{e}{e'}$, and $\FREE{e}$ are immediate by \Cref{lem:lookup}, \Cref{lem:mut}, and \Cref{lem:free}.

\paragraph{The case $C=\TICK{e}$}
We have
\begin{align*}
   &\ert{\TICK{e}}{\rtc + \rta} \\
   \eeq& \emprun{e} \sepadd (\rtc + \rta)
   \tag{by \Cref{table:ert}} \\
   \eeq& e + \rtc+ \rta \\
   \eeq & \rtc+ e+\rta \\
   \eeq & \rtc + (\emprun{e}\sepadd \rta) \\
   \eeq & \rtc+ \ert{\TICK{e}}{\rta} ~.
   \tag{by \Cref{table:ert}}
\end{align*}

\paragraph{The case $C=\ALLOC{x}{e}$}
We have
\begin{align*}
&\ert{\ALLOC{x}{e}}{\rtc + \rta}(\sk,\hh) \\
\eeq & (\Sup\, v \colon \bigl(\bigoplus_{i=1}^{e } \isingleton{v+i-1}{0}\bigr) \sepmon (\rtc + \rta)\subst{x}{v})(\sk,\hh)
\tag{by \Cref{table:ert}} \\
\eeq & \sup \setcomp{(\bigl(\bigoplus_{i=1}^{e} \isingleton{v+i-1}{0}\bigr)\sepmon (\rtc + \rta)\subst{x}{v})(\sk, \hh)}{v \in \Vals}
\\
\eeq & \sup \setcomp{\setcomp{(\rtc + \rta)\subst{x}{v})(\sk, \hh \sepcon\hh')
		\monus
		\bigl(\bigoplus_{i=1}^{e} \isingleton{v+i-1}{0}\bigr)(\sk, \hh')}{\hh' \disjoint \hh}
}{v \in \Vals}
\tag{by \Cref{def:sepmon}}
\\
\eeq & \sup \setcomp{\setcomp{(\rtc + \rta)\subst{x}{v})(\sk, \hh \sepcon\hh')
	}{\hh \disjoint \hh' =\singleheap{v}{0}\sepcon\ldots\sepcon\singleheap{v + \sk(e) -1}{0}}
}{v \in \Vals}
\\
\eeq & \sup_v \sup_{\substack{\hh' \eeq \ldots\\\hh'\disjoint \hh}} (\rtc + \rta)\subst{x}{v}(\sk, \hh \sepcon\hh')
\tag{rewrite} \\
\eeq & \sup_v \sup_{\substack{\hh' \eeq \ldots\\\hh'\disjoint \hh}}  \rtc + \rta\subst{x}{v}(\sk, \hh \sepcon\hh') 
\tag{since $\rtc$ is constant }\\
\eeq & \rtc \quad+ \quad \sup_v \sup_{\substack{\hh' \eeq \ldots\\\hh'\disjoint \hh}} \rta\subst{x}{v}(\sk, \hh \sepcon\hh')
\tag{standard property of suprema} \\
\eeq &\rtc  +  \ert{\ALLOC{x}{e}}{\rta}(\sk, \hh)~.
\tag{analogous to above reasoning}
\end{align*}

The cases $C = \COMPOSE{C_1}{C_2}$,  $C = \ITE{\varphi}{C_1}{C_2}$, and $C = \PCHOICE{C_1}{p}{C_2}$ are immediate by the induction hypothesis.

\paragraph{The case $C = \WHILEDO{\varphi}{C_1}$}
Let $\charfun{\rtc+\rta}$ be the $\ertsymbol$-characteristic function of $C$ w.r.t.\ $\rtc+\rta$ and let $\charfun{\rta}$ be the $\ertsymbol$-characteristic function of $C$ w.r.t.\ $\rta$. Using the induction hypothesis, we show by means of an inner induction that for all $n\in \Nats$, $\charfun{\rtc+\rta}^n(0) \leq \rtc+ \charfun{\rta}^n(0)$. Since $\ertsymbol$ is $\omega$-continuous, we then get
 \begin{align*}
    & \ert{\WHILEDO{\varphi}{C_1}}{\rtc+\rta} \\
    \eeq & \sup_{n\in\Nats} \charfun{\rtc+\rta}^n(0) \\
     \ppreceq & \sup_{n\in\Nats} \rtc+\charfun{\rta}^n(0) \\
     \eeq& \rtc+\ert{\WHILEDO{\varphi}{C_1}}{\rta} ~.
 \end{align*}
 
 It is left to show $\charfun{\rtc+\rta}^n(0) \leq \rtc+ \charfun{\rta}^n(0)$. The base case $n=0$ is immediate. For the induction step, consider the following: 
 \begin{align*}
      &\charfun{\rtc+\rta}^{n+1}(0)  \\
      \eeq & \iverson{\varphi} \cdot \ert{C'}{\charfun{\rtc+\rta}^{n}(0)} + \iverson{\neg\varphi} \cdot (\rtc + \rta ) \\
      \ppreceq& \iverson{\varphi} \cdot \ert{C'}{\rtc+ \charfun{\rta}^{n}(0)} + \iverson{\neg\varphi}\cdot (\rtc + \rta ) 
      \tag{by inner I.H. and monotonicity of $\ertsymbol$} \\
      \ppreceq& \iverson{\varphi} \cdot \left(\rtc+\ert{C'}{ \charfun{\rta}^{n}(0)}\right)) + \iverson{\neg\varphi}\cdot (\rtc + \rta ) 
      \tag{by outer I.H. } \\
       \eeq&\rtc+ \iverson{\varphi} \cdot \ert{C'}{ \charfun{\rta}^{n}(0)} + \iverson{\neg\varphi}\cdot  \rta 
      \\
      \eeq &\rtc+ \charfun{\rta}^{n+1}(0)~.
 \end{align*}
\end{proof}

\subsection{Proof of \Cref{thm:ert_frame}}
\label{proof:ert_frame}
For every $C \in \hpgcl$ and runtimes $\rta,\rtb$ with $\modC{C}\cap\Vars(\rtb)=\emptyset$, we have
	\begin{align*}
	\ert{C}{\rta \sepadd \rtb} \ppreceq \ert{C}{\rta} \sepadd \rtb 
	\end{align*}

\begin{proof}
	It suffices to prove the frame rule for the case $C = \ALLOC{x}{e}$ since \citet{haslbeckPhd} has proven the frame rule for the remaining program statements by induction on the structure of $\hpgcl$. We reason as follows:
	\begin{align*}
	   &\ert{\ALLOC{x}{e}}{\rta \sepadd \rtb}(\sk,\hh) \\
	   \eeq & (\Sup\, v \colon \bigl(\bigoplus_{i=1}^{e} \isingleton{v+i-1}{0}\bigr) \sepmon (\rta \sepadd \rtb)\subst{x}{v})(\sk,\hh)
	   \tag{by \Cref{table:ert}} \\
	   \eeq & \sup \setcomp{(\bigl(\bigoplus_{i=1}^{e} \isingleton{v+i-1}{0}\bigr)\sepmon (\rta \sepadd \rtb)\subst{x}{v})(\sk, \hh)}{v \in \Vals}
	    \\
	    \eeq & \sup \setcomp{\setcomp{(\rta \sepadd\rtb)\subst{x}{v})(\sk, \hh \sepcon\hh')
	    	 \monus
	    	 \bigl(\bigoplus_{i=1}^{e} \isingleton{v+i-1}{0}\bigr)(\sk, \hh')}{\hh' \disjoint \hh}
    	 }{v \in \Vals}
     \tag{by \Cref{def:sepmon}}
	    \\
	    \eeq & \sup \setcomp{\setcomp{(\rta\subst{x}{v} \sepadd\rtb))(\sk, \hh \sepcon\hh')
	    		\monus
	    		\bigl(\bigoplus_{i=1}^{e} \isingleton{v+i-1}{0}\bigr)(\sk, \hh')}{\hh' \disjoint \hh}
	    }{v \in \Vals}
    \tag{$x \not \in \Vars(\rtb)$}
	    \\
	    \eeq & \sup \setcomp{\setcomp{(\rta\subst{x}{v} \sepadd\rtb))(\sk, \hh \sepcon\hh')
	    		}{\hh \disjoint \hh' =\singleheap{v}{0}\sepcon\ldots\sepcon\singleheap{v + \sk(e) -1}{0}}
	    }{v \in \Vals}
	    \\
	    \eeq & \sup_v \sup_{\substack{\hh' \eeq \ldots\\\hh'\disjoint \hh}} (\rta\subst{x}{v} \sepadd\rtb))(\sk, \hh \sepcon\hh')
	    \tag{rewrite} \\
	    \eeq & \sup_v \sup_{\substack{\hh' \eeq \ldots\\\hh'\disjoint \hh}}
	    \min_{\hh_1 \sepcon\hh_2 = \hh \sepcon \hh'}
	     \rta\subst{x}{v}(\sk,\hh_1) + \rtb(\sk, \hh_2)
    \tag{by \Cref{def:sepadd}} \\
    \lleq & \sup_v \sup_{\substack{\hh' \eeq \ldots\\\hh'\disjoint \hh}}
    \min_{\hh_1 \sepcon\hh_2 = \hh }
    \rta\subst{x}{v}(\sk,\hh_1 \sepcon \hh') + \rtb(\sk, \hh_2)
    \tag{restrict minimum} \\
    \lleq & \min_{\hh_1 \sepcon\hh_2 = \hh } \sup_v \sup_{\substack{\hh' \eeq \ldots\\\hh'\disjoint \hh}}
    \rta\subst{x}{v}(\sk,\hh_1 \sepcon \hh') + \rtb(\sk, \hh_2)
    \tag{$\sup \inf \leq \inf \sup$} \\
    \eeq & \min_{\hh_1 \sepcon\hh_2 = \hh } (\Sup\, v \colon \bigl(\bigoplus_{i=1}^{e} \isingleton{v+i-1}{0}\bigr) \sepmon \rta\subst{x}{v})(\sk, \hh_1) + \rtb(\sk, \hh_2)
    \tag{analogous to above reasoning} \\
    \eeq&  \big((\Sup\, v \colon \bigl(\bigoplus_{i=1}^{e} \isingleton{v+i-1}{0}\bigr) \sepmon \rta\subst{x}{v}) \sepadd \rtb\big)(\sk,\hh)
    \tag{by \Cref{def:sepadd}} \\
    \eeq&  \big(\ert{\ALLOC{x}{e}}{\rta} \sepadd \rtb\big)(\sk,\hh)
    \tag{by \Cref{table:ert}} 
	\end{align*}
\end{proof}

\subsection{Proof of \Cref{thm:local_rules}}
\label{proof:local_rules}

Let $C \in \hpgcl$. Then:

\begin{enumerate}
	\item \textnormal{(mut)}: $\ert{\HASSIGN{e}{e'}}{\isingleton{e}{e'}} \preceq \ivalidpointer{e}$
	\item \textnormal{(lkp)}: 
	\[
	\ert{\ASSIGNH{x}{e}}{\pureemp{x=z} \sepadd \isingleton{e\subst{x}{y}}{z}} \ppreceq \pureemp{x=y} \sepadd \isingleton{e}{z}~,
	\]
	%
	\item \textnormal{(alc)}:  if $x$ does not occur in $e$, then
	\[\ert{\ALLOC{x}{e}}{\bigoplus_{i=1}^{e} \isingleton{x+i-1}{0}} \ppreceq \iemp~.
	\]
	\item \textnormal{(aux)}: For all $\rta, \rtb \in \T$ and all $y \in \Vars$ not occurring in $C$, 
	\begin{align*}
	\ert{C}{\rta} \lleq \rtb \qquad \textnormal{implies} \qquad \ert{C}{\Inf y\colon \rta} \ppreceq \Inf y \colon \rtb~.
	\end{align*}
\end{enumerate}

Let $(\sk,\hh)\in \States$.

\subsubsection{The rule (mut)}
For the rule (mut), i.e.,
\[
    \ert{\HASSIGN{e}{e'}}{\isingleton{e}{e'}} \ppreceq \ivalidpointer{e}~,
\]
consider the following: If $\sk(e)\not\in\dom{\hh}$, then
\begin{align*}
     &\ert{\HASSIGN{e}{e'}}{\isingleton{e}{e'}}(\sk,\hh) \\
     \eeq& \infty \tag{by \Cref{lem:mut}} \\
     \lleq &\ivalidpointer{e}(\sk,\hh)~. \tag{$\sk,\hh \not \models \slvalidpointer{e}$, Def. of \iiverson{.}}
\end{align*}
If $\sk(e)\in\dom{\hh}$, then
\begin{align*}
&\ert{\HASSIGN{e}{e'}}{\isingleton{e}{e'}}(\sk,\hh) \\
\eeq& \isingleton{e}{e'}(\sk,\hh\statesubst{\sk(e)}{\sk(e')}) \tag{by \Cref{lem:mut}} \\
\lleq & \ivalidpointer{e}(\sk,\hh) ~. \tag{$\dom{h} = \dom{\hh\statesubst{\sk(e)}{\sk(e')}}$} \\
\end{align*}
\qed

\subsubsection{The rule (lkp)}
For the rule (lkp), i.e., 
\[
    \ert{\ASSIGNH{x}{e}}{\pureemp{x=z} \sepadd \isingleton{e\subst{x}{y}}{z}} \ppreceq \pureemp{x=y} \sepadd \isingleton{e}{z}~,
\]
consider the following:
If $\hh \neq \{ \sk(e) \mapsto \sk(z) \}$ or $\sk(x) \neq \sk(y)$, then the right-hans side of the above inequality evaluates to $\infty$ and there is nothing to show.

%
If $\hh = \{ \sk(e) \mapsto \sk(z) \}$ and $\sk(x) = \sk(y)$, then
\begin{align*}
&\ert{\ASSIGNH{x}{e}}{\pureemp{x=z} \sepadd \isingleton{e\subst{x}{y}}{z}}(\sk,\hh) \\
\eeq& (\pureemp{x=z} \sepadd \isingleton{e\subst{x}{y}}{z})(\sk\statesubst{x}{\hh(\sk(e))}, \hh)  \tag{by \Cref{lem:lookup}}\\
\eeq& \isingleton{e\subst{x}{y}}{z}(\sk\statesubst{x}{\hh(\sk(e))}, \hh) \tag{$\sk(z) = \hh(\sk(e))$} \\
\eeq& 0 \tag{$\sk\statesubst{x}{\hh(\sk(e))}, \hh \models \slsingleton{e\subst{x}{y}}{z}$ using $\sk(x) = \sk(y)$} \\
\lleq& (\pureemp{x=y} \sepadd \isingleton{e}{z})(\sk, \hh)  
       \tag{assumption, Def. of \iiverson{.}}
\end{align*}
\qed

\subsubsection{The rule (alc)}
For (alc), i.e., if $x$ does not occur in $e$, then
\[
   \ert{\ALLOC{x}{e}}{\bigoplus_{i=1}^{e} \isingleton{x+i-1}{0}} 
   \ppreceq 
   \iemp~,
\]
consider the following:
\begin{align*}
& 
\ert{\ALLOC{x}{e}}{\bigoplus_{i=1}^{e} \isingleton{x+i-1}{0}}(\sk,\hh)
\\
\eeq & 
\left(\Sup\, v \colon \bigl(\bigoplus_{i=1}^{e} \isingleton{v+i-1}{0}\bigr) \sepmon \left(\bigoplus_{i=1}^{e} \isingleton{x+i-1}{0}\right)\subst{x}{v}\right)(\sk,\hh)
\tag{\Cref{table:ert}} \\
\eeq & 
\left(\Sup\, v \colon \bigl(\bigoplus_{i=1}^{e} \isingleton{v+i-1}{0}\bigr) \sepmon \bigoplus_{i=1}^{e} \isingleton{v+i-1}{0}\right)(\sk,\hh)
\tag{$x$ does not occur in $e$}
\\
\eeq & \left(\sup_{v} \sup \left\{ \left(\bigoplus_{i=1}^{e}\isingleton{v+i-1}{0}\right)(\sk,\hh \sepcon \hh') \monus \left(\bigoplus_{i=1}^{e}\isingleton{v+i-1}{0}\right)(\sk,\hh') ~|~ \hh \disjoint \hh' \right\}\right)(\sk,\hh)
\end{align*}
We distinguish two cases: $\hh = \emptyheap$ and $\hh \neq \emptyheap$.
First, assume $\hh = \emptyheap$. Then:
\begin{align*}
     & \ldots \\
\eeq & \sup_{v} \sup \left\{ \left(\bigoplus_{i=1}^{e}\isingleton{v+i-1}{0}\right)(\sk,\hh') \monus \left(\bigoplus_{i=1}^{e}\isingleton{v+i-1}{0}\right)(\sk,\hh') ~|~ \emptyheap \disjoint \hh' \right\}
\\
\eeq & \sup_{v} \lambda(\sk,\hh)\mydot \sup \left\{ 0 ~|~ \emptyheap \disjoint \hh' \right\}
\\
\eeq & 0 \\
\ppreceq & \iemp(\sk,\emptyheap).
\end{align*}
Second, assume $\hh \neq \emptyheap$. Then:
\begin{align*}
     & \ldots \\
\eeq & 
\left(\sup_{v} \sup \left\{ \left(\bigoplus_{i=1}^{e}\isingleton{v+i-1}{0}\right)(\sk,\hh \sepcon \hh') \monus \left(\bigoplus_{i=1}^{e}\isingleton{v+i-1}{0}\right)(\sk,\hh') ~|~ \hh \disjoint \hh' \right\}\right)(\sk,\hh)
\\
\eeq & \infty 
\tag{choose $v$ and $\hh'\disjoint \hh$ such that $(\bigoplus_{i=1}^{e}\isingleton{v+i-1}{0})(\sk,\hh') = 0$} \\
\eeq &  \iemp(\sk,\hh) 
\tag{by assumption}\\
\ppreceq & \iemp(\sk,\hh)~.
\end{align*}
\qed
%

\subsubsection{The rule (aux)}
For (aux), i.e., for all $\rta, \rtb \in \T$ and all $y \in \Vars$ not occurring in $C$, 
\[
\ert{C}{\rta} \preceq \rtb \quad \textnormal{implies} \quad \ert{C}{\Inf y\colon \rta} \preceq \Inf y \colon \rtb~,
\]
we show that if $y$ does not occur in $C$, then
\begin{align}
\label{eqn:proof:local_rules_1}
    \ert{C}{\Inf y \colon \rta} \ppreceq \Inf y \colon \ert{C}{\rta}~.
\end{align}
This gives us
\begin{align*}
   & \ert{C}{\rta} \ppreceq \rtb \\
   \text{implies} \quad & \Inf y \colon \ert{C}{\rta}  \ppreceq \Inf y \colon \rtb \\
   \text{implies} \quad & \ert{C}{\Inf y \colon \rta} \ppreceq  \Inf y \colon \rtb~.
   \tag{by Inequality \ref{eqn:proof:local_rules_1}}
\end{align*}
It remains to prove Inequality \ref{eqn:proof:local_rules_1} by induction on $C$.

\emph{The case $C = \TICK{e}$.}
\begin{align*}
& \ert{\TICK{e}}{\Inf y \colon \rta} 
\\
\eeq & 
\emprun{e} \sepadd \Inf y \colon \rta 
\tag{\Cref{table:ert}} \\
\eeq & \Inf y \colon \emprun{e} \sepadd \rta 
\tag{\Cref{lem:inf-prenex}}
\\
\ppreceq & 
\Inf y \colon \ert{\TICK{e}}{\rta}.
\tag{\Cref{table:ert}}
\end{align*}

\emph{The case $C = \ASSIGN{x}{e}$.}
\begin{align*}
& \ert{\ASSIGN{x}{e}}{\Inf y \colon \rta} 
\\
\eeq & 
(\Inf y \colon \rta)\subst{x}{e}
\tag{\Cref{table:ert}} \\
\eeq & \Inf y \colon \rta\subst{x}{e}
\tag{$x \notin \Vars(C)$}
\\
\lleq & 
\Inf y \colon \ert{\ASSIGN{x}{e}}{\rta}.
\tag{\Cref{table:ert}}
\end{align*}

\emph{The case $C = \ASSIGNH{x}{e}$.}
Let $(\sk,\hh) \in \States$. If $\sk(e) \in \dom{\hh}$, consider the following:
\begin{align*}
& \ert{\ASSIGNH{x}{e}}{\Inf y \colon \rta}(\sk,\hh) 
\\
\eeq & 
\left(\Inf y \colon \rta\right)(\sk\statesubst{x}{\hh(\sk(e))}, \hh)
\tag{\Cref{lem:lookup}}
\\
\eeq &
\inf_v \rta(\sk\statesubst{x}{\hh(\sk(e))}\statesubst{y}{v}, \hh)
\\
\eeq &
\inf_v \ert{\ASSIGNH{x}{e}}{\rta}(\sk\statesubst{y}{v},\hh) 
\tag{\Cref{lem:lookup}}
\\
\eeq & 
\Inf y\colon \ert{\ASSIGNH{x}{e}}{\rta}(\sk,\hh).
\end{align*}
If $\sk(e) \notin \dom{\hh}$, consider the following:
\begin{align*}
& \ert{\ASSIGNH{x}{e}}{\Inf y \colon \rta}(\sk,\hh) 
\\
\eeq & 
\infty
\tag{\Cref{lem:lookup}}
\\
\eeq &
\inf_v \infty 
\\
\eeq & 
\inf_v \ert{\ASSIGNH{x}{e}}{\rta}(\sk\statesubst{y}{v},\hh)
\tag{\Cref{lem:lookup}}
\\
\eeq & 
\Inf y\colon \ert{\ASSIGNH{x}{e}}{\rta}(\sk,\hh).
\end{align*}

\emph{The case $C = \HASSIGN{e}{e'}$.}
Let $(\sk,\hh) \in \States$. If $\sk(e) \in \dom{\hh}$, consider the following:
\begin{align*}
& \ert{\HASSIGN{e}{e'}}{\Inf y \colon \rta}(\sk,\hh) 
\\
\eeq & 
\left(\Inf y \colon \rta\right)(\sk, \hh\statesubst{\sk(e)}{\sk(e')})
\tag{\Cref{lem:mut}}
\\
\eeq &
\inf_v \rta(\sk\statesubst{y}{v}, \hh\statesubst{\sk(e)}{\sk(e')})
\\
\eeq &
\inf_v \ert{\HASSIGN{e}{e'}}{\rta}(\sk\statesubst{y}{v},\hh) 
\tag{\Cref{lem:mut}}
\\
\eeq & 
\Inf y\colon \ert{\HASSIGN{e}{e'}}{\rta}(\sk,\hh).
\end{align*}
If $\sk(e) \notin \dom{\hh}$, consider the following:
\begin{align*}
& \ert{\HASSIGN{e}{e'}}{\Inf y \colon \rta}(\sk,\hh) 
\\
\eeq & 
\infty
\tag{\Cref{lem:mut}}
\\
\eeq &
\inf_v \infty 
\\
\eeq & 
\inf_v \ert{\HASSIGN{e}{e'}}{\rta}(\sk\statesubst{y}{v},\hh)
\tag{\Cref{lem:mut}}
\\
\eeq & 
\Inf y\colon \ert{\HASSIGN{e}{e'}}{\rta}(\sk,\hh).
\end{align*}

\emph{The case $C = \FREE{e}$.}
\begin{align*}
& 
\ert{\FREE{e}}{\Inf y \colon \rta}
\\
\eeq &
\ivalidpointer{e} \sepadd \Inf y \colon \rta
\tag{\Cref{table:ert}} \\
\eeq &
\Inf y\colon \ivalidpointer{e} \sepadd \rta
\tag{\Cref{lem:inf-prenex}} \\
\eeq &
\Inf y \colon \ert{\FREE{e}}{\rta}.
\tag{\Cref{table:ert}}
\end{align*}

\emph{The case $C = \ALLOC{x}{e}$.}
\begin{align*}
& 
\ert{\ALLOC{x}{e}}{\Inf y \colon \rta}
\\
\eeq &	
\Sup v\colon \left(\bigoplus_{i=1}^{e} \isingleton{v+i-1}{0}\right) \sepmon \Inf y \colon \rta\subst{x}{v}
\tag{\Cref{table:ert}} \\
\eeq &
\lambda(\sk,\hh) \mydot \sup_v \sup \left\{ (\Inf y \colon \rta\subst{x}{v})(\sk,\hh \sepcon \hh') \monus \left(\bigoplus_{i=1}^{e} \isingleton{v+i-1}{0}\right)(\sk,\hh') ~\middle|~ \hh \disjoint \hh' \right\}
\tag{Def. of $\sepmon$} \\
\eeq &
\lambda(\sk,\hh) \mydot \sup_v \sup \left\{ \inf_u \rta\subst{x}{v}(\sk\statesubst{y}{u},\hh \sepcon \hh') \monus \left(\bigoplus_{i=1}^{e} \isingleton{v+i-1}{0}\right)(\sk\statesubst{y}{u},\hh') ~\middle|~ \hh \disjoint \hh' \right\}
\tag{$y \notin \Vars(C)$, $y \neq i$} \\
\ppreceq &
\lambda(\sk,\hh) \mydot \inf_u \sup_v \sup \left\{  \rta\subst{x}{v}(\sk\statesubst{y}{u},\hh \sepcon \hh') \monus \left(\bigoplus_{i=1}^{e} \isingleton{v+i-1}{0}\right)(\sk\statesubst{y}{u},\hh') ~\middle|~ \hh \disjoint \hh' \right\}
\tag{$\sup \sup \inf \ldots \leq \inf \sup \sup \ldots$} \\
\eeq &
\lambda(\sk,\hh) \mydot \inf_u \sup_v \left(\left(\bigoplus_{i=1}^{e} \isingleton{v+i-1}{0}\right) \sepmon \rta\subst{x}{v}\right)(\sk\statesubst{y}{u},\hh)
\tag{Def. of $\sepmon$} \\
\eeq &
\Inf y~ \Sup v\colon \left(\bigoplus_{i=1}^{e} \isingleton{v+i-1}{0}\right) \sepmon \rta\subst{x}{v}
\\
\eeq &
\Inf y\colon \ert{\ALLOC{x}{e}}{\rta}~.
\tag{\Cref{table:ert}} 
\end{align*}

\emph{The case $C = \COMPOSE{C_1}{C_2}$.}
\begin{align*}
& \ert{\COMPOSE{C_1}{C_2}}{\Inf y \colon \rta} 
\\
\eeq & \ert{C_1}{\ert{C_2}{\Inf y \colon \rta}}
\tag{\Cref{table:ert}}
\\
\ppreceq & \ert{C_1}{\Inf y \colon \ert{C_2}{\rta}}
\tag{I.H.}
\\
\ppreceq & \Inf y \colon \ert{C_1}{\ert{C_2}{\rta}}
\tag{I.H.}
\\
\eeq & \Inf y \colon \ert{\COMPOSE{C_1}{C_2}}{\rta}.
\tag{\Cref{table:ert}}
\end{align*}

\emph{The case $C = \ITE{\varphi}{C_1}{C_2}$.}
\begin{align*}
& \ert{\ITE{\varphi}{C_1}{C_2}}{\Inf y \colon \rta} 
\\
\eeq & \iverson{\varphi} \cdot \ert{C_1}{\Inf y \colon \rta}
\pplus \iverson{\neg \varphi} \cdot \ert{C_2}{\Inf y \colon \rta}
\tag{\Cref{table:ert}}
\\
\ppreceq & \iverson{\varphi} \cdot \Inf y \colon \ert{C_1}{\rta}
\pplus \iverson{\neg \varphi} \cdot \Inf y \colon  \ert{C_2}{\rta}
\tag{I.H.}
\\
\eeq & \Inf y \colon \iverson{\varphi} \cdot \ert{C_1}{\rta}
\pplus \Inf y \colon \iverson{\neg \varphi} \cdot \ert{C_2}{\rta} 
\tag{$y \notin \Vars(\varphi)$}
\\
\ppreceq &  \Inf y \colon (\iverson{\varphi} \cdot \ert{C_1}{\rta}
         \pplus \iverson{\neg \varphi} \cdot \ert{C_2}{\rta}) \\
\eeq & \Inf y \colon \ert{\ITE{\varphi}{C_1}{C_2}}{\rta}.
\tag{\Cref{table:ert}}
\end{align*}

\emph{The case $C = \PCHOICE{C_1}{p}{C_2}$.}
\begin{align*}
& \ert{\PCHOICE{C_1}{p}{C_2}}{\Inf y \colon \rta} 
\\
\eeq & p \cdot \ert{C_1}{\Inf y \colon \rta}
\pplus (1-p) \cdot \ert{C_2}{\Inf y \colon \rta}
\tag{\Cref{table:ert}}
\\
\ppreceq & p \cdot \Inf y \colon \ert{C_1}{\rta}
\pplus (1-p) \cdot \Inf y \colon  \ert{C_2}{\rta}
\tag{I.H.}
\\
\eeq & \Inf y \colon p \cdot \ert{C_1}{\rta}
\pplus \Inf y \colon (1-p) \cdot \ert{C_2}{\rta} 
\tag{$y \notin \Vars(p)$}
\\
\ppreceq &  \Inf y \colon (p \cdot \ert{C_1}{\rta}
         \pplus (1-p) \cdot \ert{C_2}{\rta}) \\
\eeq & \Inf y \colon \ert{\PCHOICE{C_1}{p}{C_2}}{\rta}.
\tag{\Cref{table:ert}}
\end{align*}

\emph{The case $C = \WHILEDO{\varphi}{C_1}$.}
Let $I = \Inf y\colon \ert{\WHILEDO{\varphi}{C_1}}{\rta}$.
Then, consider the following:
\begin{align*}
& \iverson{\varphi} \cdot (\Inf y\colon \rta) \pplus \iverson{\neg \varphi} \cdot \ert{C_1}{I} 
\\
\ppreceq &
\iverson{\varphi} \cdot (\Inf y\colon \rta) \pplus \iverson{\neg \varphi} \cdot \Inf y\colon \ert{C_1}{\ert{\WHILEDO{\varphi}{C_1}}{\rta}} 
\tag{Def. of $I$, I.H.} \\
\eeq & \Inf y\colon \left(\iverson{\varphi} \cdot \rta \pplus \iverson{\neg \varphi} \cdot \ert{C_1}{\ert{\WHILEDO{\varphi}{C_1}}{\rta}} \right) 
\tag{$y \notin \Vars(C)$} \\
\eeq & \Inf y\colon \ert{\WHILEDO{\varphi}{C_1}}{\rta}
\tag{\Cref{table:ert}} \\
\eeq & I.
\end{align*}
Hence, by Park induction, we have
\[
  \ert{\WHILEDO{\varphi}{C_1}}{\Inf y\colon \rta}
  \ppreceq I \eeq \Inf y\colon \ert{\WHILEDO{\varphi}{C_1}}{\rta}~.
\]
\qed

\subsection{Program Annotations}
\label{app:annotations}

We use program annotations to apply the $\ertsymbol$ calculus on source-code level. See \Cref{fig:ert-annotations_monotonicity} for annotations representing the rules given in \Cref{table:ert} and for exploiting monotonicity, which correpsonds to the rule of consequence in classical Hoare logic. For annotations using the frame rule (\Cref{thm:ert_frame}), see \Cref{fig:ert-annotations_framing}. For annotations for loops using invariants, see \Cref{fig:ert-loop-annotations}. Due to the backward-moving nature of the $\ertsymbol$-calculus, it is more inituitive to read these annotations from bottom to top.

\begin{figure}[t]
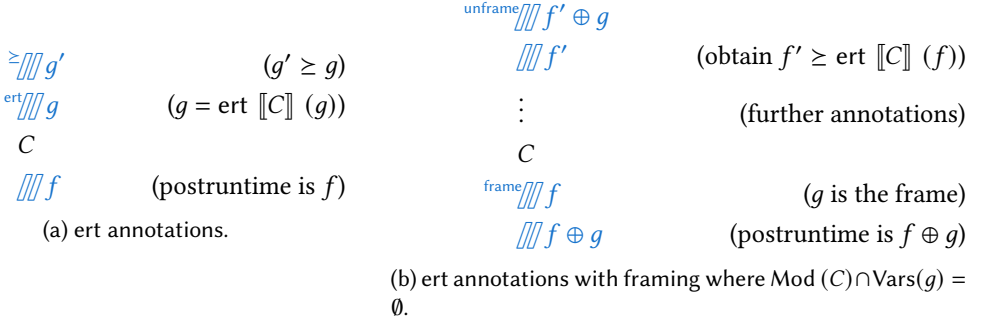

	\begin{subfigure}[t]{0.4\linewidth}
		\begin{minipage}{1\linewidth}%
			\abovedisplayskip=-0em%
			\belowdisplayskip=0pt%
			\begin{align*}
			&\succeqannotate{\rtb'} \tag{$\rtb' \succeq \rtb$}\\
			&\ertannotate{\rtb} \tag{$\rtb = \ert{C}{\rtb}$}\\
			&C \\
			&\annotate{\rta} \tag{postruntime is $\rta$}
			\end{align*}%
			\normalsize%
		\end{minipage}%
		\subcaption{$\ertsymbol$ annotations.}
		\label{fig:ert-annotations_monotonicity}
	\end{subfigure}
	\quad\hfill
	\begin{subfigure}[t]{0.55\linewidth}
		\begin{minipage}{1\linewidth}%
			\abovedisplayskip=-0em%
			\belowdisplayskip=0pt%
			\begin{align*}
			&\unframeannotate{\rta' \sepadd \rtb}{} \\
			& \annotate{\rta'} \tag{obtain $\rta' \succeq \ert{C}{\rta}$} \\
			& \vdots \tag{further annotations} \\
			&C \\
			&\frameannotate{\rta}{} \tag{$\rtb$ is the frame} \\
			&\annotate{\rta \sepadd \rtb} \tag{postruntime is $\rta \sepadd \rtb$}
			\end{align*}%
			\normalsize%
		\end{minipage}%
		\subcaption{$\ertsymbol$ annotations with framing where $\modC{C}\cap \Vars(\rtb) = \emptyset$.}
		\label{fig:ert-annotations_framing}
	\end{subfigure}
	\vspace{-1em}
	\caption{Program annotation style for exploiting monotonicity and framing. It is more intuitive to read these annotations from bottom to top.}
	\label{fig:annotatio_ert}
	\vspace{-1em}
\end{figure}%

\begin{figure}[t]
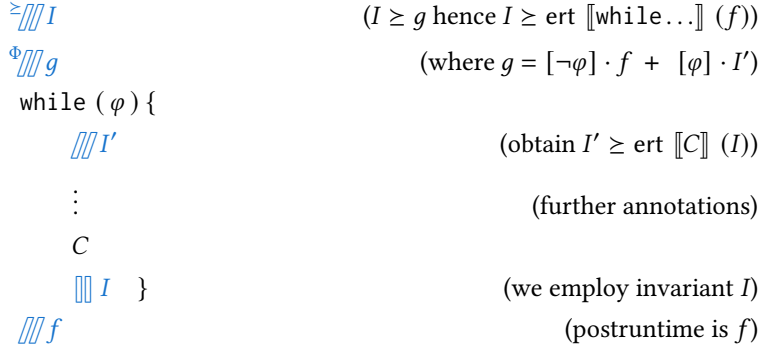

	\begin{minipage}{1\linewidth}
		\begin{align*}
		&\succeqannotate{I} \tag{$I \succeq \rtb$ hence $I \succeq \ert{\WHILESYMBOL \ldots}{f}$}\\
		&\phiannotate{\rtb} \tag{where $\rtb = \iverson{\neg \guard} \cdot \rta \pplus \iverson{\guard} \cdot I'$}\\
		&\WHILE{\guard} \\
		&\qquad \annotate{I'} \tag{obtain $I' \succeq \ert{C}{I}$}\\
		&\qquad \vdots \tag{further annotations} \\
		&\qquad C \\
		&\qquad \starannotate{I} \quad \} \tag{we employ invariant $I$}\\
		&\annotate{\rta} \tag{postruntime is $\rta$}
		\end{align*}
	\end{minipage}
	\caption{Annotation style for loops using invariants (cf. \Cref{thm:ert_invariants})}
	\label{fig:ert-loop-annotations}
\end{figure}

\subsection{Verification of the Lagging List Traversal}
\label{app:lagging_list}

\begin{figure}[!p]
	\begin{align*}
	&\succeqannotate{2 \cdot \Inf \lh \colon \elist{\lh}{\varlistelem} \sepadd \esize{\varlistelem}{0}} \\
	& \phiannotate{\iverson{\varlistelem=0} \cdot 0 \pplus \iverson{\varlistelem\neq 0} \cdot\big( 2 \cdot \Inf \lh \colon \elist{\lh}{\varlistelem} \sepadd \esize{\varlistelem}{0} \sepadd \pureemp{\varlistelem\neq 0}\big)} \\
	%
	%
	&\WHILE{\varlistelem  \neq 0 } \\
	&\qquad \succeqannotate{2 \cdot \Inf \lh \colon \elist{\lh}{\varlistelem} \sepadd \esize{\varlistelem}{0} \sepadd \pureemp{\varlistelem\neq 0}} \\
	&\qquad \succeqannotate{\Inf \lh \colon \elist{\lh}{\varlistelem} \sepadd \esize{\varlistelem}{0} \sepadd \pureemp{\varlistelem\neq 0}} \\
	& \qquad \quad \phantannotate{{}+{} \Inf \lh \colon \elist{\lh}{\varlistelem} \sepadd \esize{\varlistelem}{0}}  \\
	&\qquad \ertannotate{\emprun{1} \sepadd \big(\Inf \lh, v, v'\colon \elist{\lh}{v'} \sepadd \esize{v}{0} \sepadd \pureemp{\varlistelem=v'} \sepadd \isingleton{\varlistelem}{v}} \\
	& \qquad \quad \phantannotate{{}+{} \Inf \lh \colon \elist{\lh}{\varlistelem} \sepadd \esize{\varlistelem}{0}\big)}  \\
	& \qquad \TICK{1}\fatsemi \\
	&\qquad \succeqannotate{\Inf \lh, v, v' \colon \elist{\lh}{v'} \sepadd \esize{v}{0} \sepadd \pureemp{\varlistelem=v'} \sepadd \isingleton{\varlistelem}{v}} \\
	& \qquad \quad \phantannotate{{}+{} \Inf \lh \colon \elist{\lh}{\varlistelem} \sepadd \esize{\varlistelem}{0}} \\
	&\qquad \ertannotate{0.5 \cdot (\Inf \lh, v, v'\colon \elist{\lh}{v'} \sepadd 2 \cdot \esize{v}{0} \sepadd \pureemp{\varlistelem=v'} \sepadd \isingleton{\varlistelem}{v})} \\
     & \qquad \quad \phantannotate{{}+{} 0.5 \cdot (2 \cdot \Inf \lh \colon \elist{\lh}{\varlistelem} \sepadd \esize{\varlistelem}{0})} \\
	& \qquad \{ \\
	& \qquad \qquad \auxvarannotate{\Inf \lh, v, v'\colon  \elist{\lh}{v'} \sepadd 2 \cdot \esize{v}{0} \sepadd \pureemp{\varlistelem=v'} \sepadd \isingleton{\varlistelem}{v}}{} \\
	& \qquad \qquad \unframeannotate{\elist{\lh}{v'} \sepadd 2 \cdot \esize{v}{0} \sepadd \pureemp{\varlistelem=v'} \sepadd \isingleton{\varlistelem}{v}}{} \\
	& \qquad \qquad \lkpannotate{\pureemp{\varlistelem=v'} \sepadd \isingleton{\varlistelem}{v}} \\
	&\qquad \qquad  \ASSIGNH{\varlistelem}{\varlistelem} \\
	& \qquad \qquad \frameannotate{\pureemp{\varlistelem=v} \sepadd \isingleton{v'}{\varlistelem}}{} \\
	& \qquad \qquad \auxvarannotate{\elist{\lh}{v'} \sepadd 2 \cdot \esize{v}{0} \sepadd \pureemp{\varlistelem=v} \sepadd \isingleton{v'}{\varlistelem}} \\
	& \qquad \qquad \succeqannotate{\Inf \lh, v, v'  \colon  \elist{\lh}{v'} \sepadd 2 \cdot \esize{v}{0} \sepadd \pureemp{\varlistelem=v} \sepadd \isingleton{v'}{\varlistelem}} \\
	& \qquad \qquad \annotate{2 \cdot \Inf \lh \colon \elist{\lh}{\varlistelem} \sepadd \esize{\varlistelem}{0}} \\
	& \qquad \} [0.5] \{ \\
	& \qquad \qquad \ertannotate{2 \cdot \Inf \lh \colon \elist{\lh}{\varlistelem} \sepadd \esize{\varlistelem}{0}} \\
	& \qquad \qquad \SKIP \\
	& \qquad \qquad \annotate{2 \cdot \Inf \lh \colon \elist{\lh}{\varlistelem} \sepadd \esize{\varlistelem}{0}} \\
	& \qquad \} \\
	& \qquad \starannotate{2 \cdot \Inf \lh \colon \elist{\lh}{\varlistelem} \sepadd \esize{\varlistelem}{0}} \\
	&\} \\
	&\annotate{0}
	\end{align*}

	\caption{Lagging List Traversal Runtime Verification. The annotations are more intuitive to read from bottom to top. See \Cref{app:annotations} for explanations on these annotations.}\label{fig:proof_lagging_list}
\end{figure}

See \Cref{fig:proof_lagging_list} for detailed source-code annotations for the lagging list traversal case study.

\subsection{Proof of \Cref{thm:op:eert-bellman}}\label{a:op:eert-bellman}

\begin{proof}

By construction of $\eert$ and by the definition of bellman compliance, it suffices
to show that, for all $C \in \hpgcl$, $\rta \in \T$ and program states $(\sk,\hh)$, we have
\[
    \underbrace{~\ert{C}{\rta}(\sk,\hh)~}_{\eeq \eertt{C}{\rta}(\sk,\hh)} 
    \qeq \MROP(\conf{C}{\sk}{\hh}) \pplus \sup_{\ma \in \MA} 
    \sum_{(\conf{C}{\sk}{\hh}) \mstep{p}{\ma} (\conf{C'}{\sk'}{\hh'})} \eertt{C'}{\rta}(\sk',\hh')~.
\]
We proceed by induction on the structure of rules of our execution relation for $\hpgcl$ programs (i.e., we do not have to consider rules involving only $\term$, $\fail$, and $\sink$, cf. \Cref{fig:op:rules}).
All rules except those for sequential composition are base cases; we consider them grouped by $\hpgcl$ statement.


\paragraph{The case $C = \TICK{e}$.}
\begin{align*}
       & \ert{\TICK{e}}{\rta}(\sk,\hh) \\
  \eeq & \emprun{\sk(e)}(\sk,\hh) ~\sepadd~ \rta(\sk,\hh) \tag{Def. of $\ertsymbol$} \\
  \eeq & \sk(e) + \rta(\sk,\hh) \tag{Def. of $\emprun{\sk(e)}$ and $\sepadd$} \\
  \eeq & \sk(e) \pplus \eertt{\term}{\rta}(\sk,\hh) \\
  \eeq & \MROP(\conf{\TICK{e}}{\sk}{\hh}) \pplus
         \sup_{\ma \in \MA}
         \sum_{(\conf{\TICK{e}}{\sk}{\hh}) \mstep{p}{\ma} (\conf{C'}{\sk'}{\hh'})}
 	     p \cdot \eertt{C'}{\rta}(\sk',\hh')~.
 	     \tag{Def. of $\MROP$, \Cref{fig:op:rules}}
\end{align*}

\paragraph{The case $C = \ASSIGN{x}{e}$.}
\begin{align*}
       & \ert{\ASSIGN{x}{e}}{\rta}(\sk,\hh) \\
  \eeq & \rta\subst{x}{e}(\sk,\hh) \tag{Def. of $\ertsymbol$} \\
  \eeq & \rta(\sk\subst{x}{e}, \hh) \tag{standard substitution lemma} \\
  \eeq & \eertt{\term}{\rta}(\sk\subst{x}{\sk(e)},\hh) \\
  \eeq & \MROP(\conf{\ASSIGN{x}{e}}{\sk}{\hh}) \pplus
         \sup_{\ma \in \MA}
         \sum_{(\conf{\ASSIGN{x}{e}}{\sk}{\hh}) \mstep{p}{\ma} (\conf{C'}{\sk'}{\hh'})}
 	     p \cdot \eertt{C'}{\rta}(\sk',\hh')~.
 	     \tag{Def. of $\MROP$, \Cref{fig:op:rules}}
\end{align*}

\paragraph{The case $C = \ALLOC{x}{e}$.}
We distinguish two cases.

First, assume $\sk(e) = n > 0$ and $\hh'' = \singleheap{\val}{0} \sepadd \ldots \sepadd \singleheap{\val+n-1}{0}$. 
Then:
\begin{align*}
       & \ert{\ALLOC{x}{e}}{\rta}(\sk,\hh) \\
  \eeq & \left(\Sup\, v \colon \bigl(\bigoplus_{i=1}^{e } \isingleton{e+i-1}{0}\bigr) \sepmon \rta\subst{x}{v}\right)(\sk,\hh ) \tag{Def. of $\ertsymbol$} \\
  \eeq & \sup_{v \in \Nats} \rta(\sk\subst{x}{v},\hh \sepcon \hh'' \tag{Def. of $\sepmon$, assumption} \\
  \eeq & \sup_{v \in \Nats} \eertt{\term}{\rta}(\sk\subst{x}{v},\hh \sepcon \hh'') \\
  \eeq & \MROP(\conf{\ALLOC{x}{e}}{\sk}{\hh}) \pplus
         \sup_{\ma \in \MA}
         \sum_{(\conf{\ALLOC{x}{e}}{\sk}{\hh}) \mstep{p}{\ma} (\conf{C'}{\sk'}{\hh'})}
 	     p \cdot \eertt{C'}{\rta}(\sk',\hh')~.
 	     \tag{$\MA = \Nats$, Def. of $\MROP$, \Cref{fig:op:rules}}
\end{align*}

Second, assume $\sk(e) = 0$. Then:
\begin{align*}
       & \ert{\ALLOC{x}{e}}{\rta}(\sk,\hh) \\
  \eeq & \left(\Sup\, v \colon \bigl(\bigoplus_{i=1}^{e} \isingleton{e+i-1}{0}\bigr) \sepmon \rta\subst{x}{v}\right)(\sk,\hh ) \tag{Def. of $\ertsymbol$} \\
  \eeq & \left(\Sup\, v \colon \bigl(\iemp \sepmon \rta\subst{x}{v}\bigr)\right)(\sk,\hh ) \tag{assumption} \\
  \eeq & \sup_{v \in \Nats} \rta(\sk\subst{x}{v},\hh \tag{Def. of $\sepmon$} \\
  \eeq & \sup_{v \in \Nats} \eertt{\term}{\rta}(\sk\subst{x}{v},\hh) \\
  \eeq & \MROP(\conf{\ALLOC{x}{e}}{\sk}{\hh}) \pplus
         \sup_{\ma \in \MA}
         \sum_{(\conf{\ALLOC{x}{e}}{\sk}{\hh}) \mstep{p}{\ma} (\conf{C'}{\sk'}{\hh'})}
 	     p \cdot \eertt{C'}{\rta}(\sk',\hh')~.
 	     \tag{$\MA = \Nats$, Def. of $\MROP$, \Cref{fig:op:rules}}
\end{align*}


\paragraph{The case $C = \HASSIGN{e}{e'}$.}
Let $v = \sk(e)$. We distinguish two cases.

If $v \in \dom{h}$, then
\begin{align*}
       & \ert{\HASSIGN{e}{e'}}{\rta}(\sk,\hh) \\
  \eeq & \left(\ivalidpointer{e} \sepadd \bigl(\isingleton{e}{e'} \sepmon \rta \bigr)\right)(\sk,\hh) \tag{Def. of $\ertsymbol$} \\
  \eeq & \rta(\sk,\hh\subst{v}{\sk(e')}) \tag{$v \in \dom{h}$, Def. of $\sepmon$} \\
  \eeq & \eertt{\term}{\rta}(\sk,\hh\subst{v}{\sk(e')}) \\
  \eeq & \MROP(\conf{\HASSIGN{e}{e'}}{\sk}{\hh}) \pplus
         \sup_{\ma \in \MA}
         \sum_{(\conf{\HASSIGN{e}{e'}}{\sk}{\hh}) \mstep{p}{\ma} (\conf{C'}{\sk'}{\hh'})}
 	     p \cdot \eertt{C'}{\rta}(\sk',\hh')~.
 	     \tag{$v \in \dom{h}$, Def. of $\MROP$, \Cref{fig:op:rules}}
\end{align*}
If $v \notin \dom{h}$, then
\begin{align*}
       & \ert{\HASSIGN{e}{e'}}{\rta}(\sk,\hh) \\
  \eeq & \left(\ivalidpointer{e} \sepadd \bigl(\isingleton{x}{e'} \sepmon \rta \bigr)\right)(\sk,\hh) \tag{Def. of $\ertsymbol$} \\
  \eeq & \infty \tag{$v \notin \dom{h}$, Def. of $\sepmon$} \\
  \eeq & \eertt{\fail}{\rta}(\sk,\hh) \\
    \eeq & \MROP(\conf{\HASSIGN{e}{e'}}{\sk}{\hh}) \pplus
         \sup_{\ma \in \MA}
         \sum_{(\conf{\HASSIGN{e}{e'}}{\sk}{\hh}) \mstep{p}{\ma} (\conf{C'}{\sk'}{\hh'})}
 	     p \cdot \eertt{C'}{\rta}(\sk',\hh')~.
 	     \tag{Def. of $\MROP$, by \Cref{fig:op:rules}: $\conf{\HASSIGN{e}{e'}}{\sk}{\hh} \mstep{}{} \conf{\fail}{\sk}{\hh}$}
\end{align*}

\paragraph{The case $C = \ASSIGNH{x}{e}$.}
Let $v = \sk(e)$.  We distinguish two cases.

If $v \in \dom{h}$, then
\begin{align*}
       & \ert{\ASSIGNH{x}{e}}{\rta}(\sk,\hh) \\
  \eeq & \left(\Inf\, v \colon \isingleton{e}{v} \sepadd \bigl(\isingleton{e}{v} \sepmon \rta\subst{x}{v}\bigr)\right)(\sk,\hh) \tag{Def. of $\ertsymbol$} \\
  \eeq & \rta(\sk\subst{x}{\hh(v)},\hh) \tag{$\sk(e) = v \in \dom{h}$, Def. of $\sepmon$} \\
  \eeq & \eertt{\term}{\rta}(\sk,\hh\subst{v}{\sk(e')}) \\
    \eeq & \MROP(\conf{\ASSIGNH{x}{e}}{\sk}{\hh}) \pplus
         \sup_{\ma \in \MA}
         \sum_{(\conf{\HASSIGN{x}{e}}{\sk}{\hh}) \mstep{p}{\ma} (\conf{C'}{\sk'}{\hh'})}
 	     p \cdot \eertt{C'}{\rta}(\sk',\hh')~.
 	     \tag{$v \in \dom{h}$, Def. of $\MROP$, \Cref{fig:op:rules}}
\end{align*}
If $v \notin \dom{h}$, then
\begin{align*}
       & \ert{\ASSIGNH{x}{e}}{\rta}(\sk,\hh) \\
  \eeq & \left(\Inf\, v \colon \isingleton{e}{v} \sepadd \bigl(\isingleton{e}{v} \sepmon \rta\subst{x}{v}\bigr)\right)(\sk,\hh) \tag{Def. of $\ertsymbol$} \\
  \eeq & \infty \tag{$v \notin \dom{h}$, Def. of $\sepmon$} \\
  \eeq & \eertt{\fail}{\rta}(\sk,\hh) \\
  \eeq & \MROP(\conf{\ASSIGNH{x}{e}}{\sk}{\hh}) \pplus
         \sup_{\ma \in \MA}
         \sum_{(\conf{\ASSIGNH{x}{e}}{\sk}{\hh}) \mstep{p}{\ma} (\conf{C'}{\sk'}{\hh'})}
 	     p \cdot \eertt{C'}{\rta}(\sk',\hh')~.
 	     \tag{$v \notin \dom{h}$, Def. of $\MROP$, \Cref{fig:op:rules}}
\end{align*}

\paragraph{The case $C = \FREE{e}$.}
We distinguish two cases.

First, assume there exists a heap $\hh''$ and a value $v$ such that $\hh = \hh'' \sepcon \{ \sk(e) \mapsto v \}$. Then:
\begin{align*}
       & \ert{\FREE{e}}{\rta}(\sk,\hh) \\
  \eeq & \left(\ivalidpointer{e} \sepadd \rta\right)(\sk,\hh) \tag{Def. of $\ertsymbol$} \\
  \eeq & f(\sk, \hh'') \tag{Def. of $\sepadd$, assumption} \\
  \eeq & \eertt{\term}{\rta}(\sk,\hh'') \\
  \eeq & \MROP(\conf{\FREE{e}}{\sk}{\hh}) \pplus
         \sup_{\ma \in \MA}
         \sum_{(\conf{\FREE{e}}{\sk}{\hh}) \mstep{p}{\ma} (\conf{C'}{\sk'}{\hh'})}
 	     p \cdot \eertt{C'}{\rta}(\sk',\hh')~.
 	     \tag{Def. of $\MROP$, \Cref{fig:op:rules}}
\end{align*}
Second, no such $\hh''$ exists, \ie, $\sk(e) \notin \dom{\hh}$. Then:
First, assume there exists a heap $\hh'$ and a value $v$ such that $\hh = \{ \sk(e) \mapsto v \}$. Then:
\begin{align*}
       & \ert{\FREE{e}}{\rta}(\sk,\hh) \\
  \eeq & \left(\ivalidpointer{e} \sepadd \rta\right)(\sk,\hh) \tag{Def. of $\ertsymbol$} \\
  \eeq & \infty \tag{Def. of $\sepadd$, assumption} \\
  \eeq & \eertt{\fail}{\rta}(\sk,\hh) \\
  \eeq & \MROP(\conf{\FREE{e}}{\sk}{\hh}) \pplus
         \sup_{\ma \in \MA}
         \sum_{(\conf{\FREE{e}}{\sk}{\hh}) \mstep{p}{\ma} (\conf{C'}{\sk'}{\hh'})}
 	     p \cdot \eertt{C'}{\rta}(\sk',\hh')~.
 	     \tag{Def. of $\MROP$, \Cref{fig:op:rules}}
\end{align*}

\paragraph{The case $C = \PCHOICE{C_1}{p}{C_2}$.}
\begin{align*}
       & \ert{\PCHOICE{C_1}{p}{C_2}}{\rta}(\sk,\hh) \\
  \eeq & p(\sk,\hh) \cdot \ert{C_1}{\rta}(\sk,\hh) \pplus (1-p(\sk,\hh)) \cdot \ert{C_2}{\rta}(\sk,\hh) \tag{Def. of $\ertsymbol$} \\
  \eeq & \MROP(\conf{\PCHOICE{C_1}{p}{C_2}}{\sk}{\hh}) \pplus \sup_{\ma \in \MA} p(\sk,\hh) \cdot \eertt{C_1}{\rta}(\sk,\hh) \pplus (1-p(\sk,\hh)) \cdot \eertt{C_2}{\rta}(\sk,\hh) \tag{Def. of $\MROP$ and $\eert$} \\
  \eeq & \MROP(\conf{\PCHOICE{C_1}{p}{C_2}}{\sk}{\hh}) \pplus
         \sup_{\ma \in \MA}
         \sum_{(\conf{\ASSIGN{x}{e}}{\sk}{\hh}) \mstep{q}{\ma} (\conf{C'}{\sk'}{\hh'})}
 	     q \cdot \eertt{C'}{\rta}(\sk',\hh')~.
 	     \tag{\Cref{fig:op:rules}}
\end{align*}

\paragraph{The case $C = \ITE{\varphi}{C_1}{C_2}$.}
We distinguish two cases: $\sk \models \varphi$ and $\sk \not\models \varphi$.

First, assume $\sk \models \varphi$. Then:
\begin{align*}
       & \ert{\ITE{\varphi}{C_1}{C_2}}{\rta}(\sk,\hh) \\
  \eeq & \ert{C_1}{\rta}(\sk,\hh) \tag{Def. of $\ertsymbol$, assumption} \\
  \eeq & \eertt{C_1}{\rta}(\sk,\hh) \\
  \eeq & \MROP(\conf{\ITE{\varphi}{C_1}{C_2}}{\sk}{\hh}) \pplus 
         \tag{assumption, Def. of $\MROP$, \Cref{fig:op:rules}} \\
       & \qquad
         \sup_{\ma \in \MA}
         \sum_{(\conf{\ITE{\varphi}{C_1}{C_2}}{\sk}{\hh}) \mstep{p}{\ma} (\conf{C'}{\sk'}{\hh'})}
 	     p \cdot \eertt{C'}{\rta}(\sk',\hh')~.
\end{align*}
Second, assume $\sk \not\models \varphi$. Then:
\begin{align*}
       & \ert{\ITE{\varphi}{C_1}{C_2}}{\rta}(\sk,\hh) \\
  \eeq & \ert{C_2}{\rta}(\sk,\hh) \tag{Def. of $\ertsymbol$, assumption} \\
  \eeq & \eertt{C_2}{\rta}(\sk,\hh) \\
  \eeq & \MROP(\conf{\ITE{\varphi}{C_1}{C_2}}{\sk}{\hh}) \pplus 
         \tag{assumption, Def. of $\MROP$, \Cref{fig:op:rules}} \\
       & \qquad
         \sup_{\ma \in \MA}
         \sum_{(\conf{\ITE{\varphi}{C_1}{C_2}}{\sk}{\hh}) \mstep{p}{\ma} (\conf{C'}{\sk'}{\hh'})}
 	     p \cdot \eertt{C'}{\rta}(\sk',\hh')~.
\end{align*}

\paragraph{The case $C = \WHILEDO{\varphi}{C_1}$.}
Observe that $\WHILEDO{\varphi}{C_1}$ is equivalent to 
\[ \ITE{\varphi}{\COMPOSE{C_1}{\WHILEDO{\varphi}{C_1}}}{\SKIP}~. \]
(It is straightforward to check that the $\ertsymbol$ of both programs is identical.)

We distinguish two cases: $\sk \models \varphi$ and $\sk \not\models \varphi$.

First, assume $\sk \models \varphi$. Then:
\begin{align*}
       & \ert{\WHILEDO{\varphi}{C_1}}{\rta}(\sk,\hh) \\
  \eeq & \ert{\COMPOSE{C_1}{\WHILEDO{\varphi}{C_1}}}{\rta}(\sk,\hh) \tag{assumption and observation} \\
  \eeq & \eertt{\COMPOSE{C_1}{\WHILEDO{\varphi}{C_1}}}{\rta}(\sk,\hh) \\
  \eeq & \MROP(\conf{\WHILEDO{\varphi}{C_1}}{\sk}{\hh}) \pplus 
         \tag{assumption, Def. of $\MROP$, \Cref{fig:op:rules}} \\
       & \qquad
         \sup_{\ma \in \MA}
         \sum_{(\conf{\WHILEDO{\varphi}{C_1}}{\sk}{\hh}) \mstep{p}{\ma} (\conf{C'}{\sk'}{\hh'})}
 	     p \cdot \eertt{C'}{\rta}(\sk',\hh')~.
\end{align*}
Second, assume $\sk \not\models \varphi$. Then:
\begin{align*}
       & \ert{\WHILEDO{\varphi}{C_1}}{\rta}(\sk,\hh) \\
  \eeq & \rta(\sk,\hh) \tag{Def. of $\ertsymbol$, assumption} \\
  \eeq & \eertt{\term}{\rta}(\sk,\hh) \\
  \eeq & \MROP(\conf{\WHILEDO{\varphi}{C_1}}{\sk}{\hh}) \pplus 
         \tag{assumption, Def. of $\MROP$, \Cref{fig:op:rules}} \\
       & \qquad
         \sup_{\ma \in \MA}
         \sum_{(\conf{\WHILEDO{\varphi}{C_1}}{\sk}{\hh}) \mstep{p}{\ma} (\conf{C'}{\sk'}{\hh'})}
 	     p \cdot \eertt{C'}{\rta}(\sk',\hh')~.
\end{align*}

\paragraph{The case $C = \COMPOSE{C_1}{C_2}$.}

By construction of the execution relation $\mstep{}{}$ (cf., \Cref{fig:op:rules}), a single execution step starting in a configuration $\conf{C_1}{\sk}{\hh}$ either proceeds execution, terminates successfully in $\term$ or fails with an error by moving to $\fail$; it is, for example, never possible to terminate and fail via two distinct \emph{single} execution steps.
Formally, we distinguish the following three cases:
\begin{enumerate}
	\item $\conf{C_1}{\sk}{\hh} \mstep{p}{\ma} \conf{C_1'}{\sk'}{\hh'}$,
	\item $\conf{C_1}{\sk}{\hh} \mstep{p}{\ma} \conf{\term}{\sk'}{\hh'}$, and
	\item $\conf{C_1}{\sk}{\hh} \mstep{p}{\ma} \conf{\fail}{\sk}{\hh}$.
\end{enumerate}

First, assume all executions steps starting in $\conf{C_1}{\sk}{\hh}$ are of the form $\conf{C_1}{\sk}{\hh} \mstep{p}{\ma} \conf{C_1'}{\sk'}{\hh'}$.
Then:
\begin{align*}
       & \ert{\COMPOSE{C_1}{C_2}}{\rta}(\sk,\hh) \\
  \eeq & \ert{C_1}{\ert{C_2}{\rta}}(\sk,\hh) \tag{Def. of $\ertsymbol$} \\
  \eeq & \MROP(\conf{C_1}{\sk}{\hh}) \pplus 
         \sup_{\ma \in \MA}
         \sum_{(\conf{C_1}{\sk}{\hh}) \mstep{p}{\ma} (\conf{C_1'}{\sk'}{\hh'})}
 	     p \cdot \eertt{C_1'}{\ert{C_2}{\rta}}(\sk',\hh')
 	     \tag{I.H., assumption} \\
 \eeq &	 \MROP(\conf{\COMPOSE{C_1}{C_2}}{\sk}{\hh}) \pplus 
         \sup_{\ma \in \MA}
         \sum_{(\conf{C_1}{\sk}{\hh}) \mstep{p}{\ma} (\conf{C_1'}{\sk'}{\hh'})}
 	     p \cdot \eertt{\COMPOSE{C_1'}{C_2}}{\rta}(\sk',\hh')
 	     \tag{Def. of $\eert$, $\MROP$} \\
 \eeq &	 \MROP(\conf{\COMPOSE{C_1}{C_2}}{\sk}{\hh}) \pplus 
         \sup_{\ma \in \MA}
         \sum_{(\conf{\COMPOSE{C_1}{C_2}}{\sk}{\hh}) \mstep{p}{\ma} (\conf{C'}{\sk'}{\hh'})}
 	     p \cdot \eertt{C'}{\rta}(\sk',\hh')~.
 	     \tag{assumption, \Cref{fig:op:rules}} 
\end{align*}

Second, assume all executions steps starting in $\conf{C_1}{\sk}{\hh}$ are of the form $\conf{C_1}{\sk}{\hh} \mstep{p}{\ma} \conf{\term}{\sk'}{\hh'}$.
Then:
\begin{align*}
       & \ert{\COMPOSE{C_1}{C_2}}{\rta}(\sk,\hh) \\
  \eeq & \ert{C_1}{\ert{C_2}{\rta}}(\sk,\hh) \tag{Def. of $\ertsymbol$} \\
  \eeq & \MROP(\conf{C_1}{\sk}{\hh}) \pplus 
         \sup_{\ma \in \MA}
         \sum_{(\conf{C_1}{\sk}{\hh}) \mstep{p}{\ma} (\conf{\term}{\sk'}{\hh'})}
 	     p \cdot \eertt{\term}{\ert{C_2}{\rta}}(\sk',\hh')
 	     \tag{I.H., assumption} \\
 \eeq & \MROP(\conf{C_1}{\sk}{\hh}) \pplus 
         \sup_{\ma \in \MA}
         \sum_{(\conf{C_1}{\sk}{\hh}) \mstep{p}{\ma} (\conf{\term}{\sk'}{\hh'})}
 	     p \cdot \eertt{C_2}{\rta}(\sk',\hh')
 	     \tag{Def. of $\eert$} \\
 \eeq &	 \MROP(\conf{\COMPOSE{C_1}{C_2}}{\sk}{\hh}) \pplus 
         \sup_{\ma \in \MA}
         \sum_{(\conf{\COMPOSE{C_1}{C_2}}{\sk}{\hh}) \mstep{p}{\ma} (\conf{C'}{\sk'}{\hh'})}
 	     p \cdot \eertt{C'}{\rta}(\sk',\hh')~.
 	     \tag{Def. of $\MROP$, assumption \Cref{fig:op:rules}}
\end{align*}

Third, assume all executions steps starting in $\conf{C_1}{\sk}{\hh}$ are of the form $\conf{C_1}{\sk}{\hh} \mstep{p}{\ma} \conf{\fail}{\sk}{\hh}$.
Then:
\begin{align*}
       & \ert{\COMPOSE{C_1}{C_2}}{\rta}(\sk,\hh) \\
  \eeq & \ert{C_1}{\ert{C_2}{\rta}}(\sk,\hh) \tag{Def. of $\ertsymbol$} \\
  \eeq & \MROP(\conf{C_1}{\sk}{\hh}) \pplus 
         \sup_{\ma \in \MA}
         \sum_{(\conf{C_1}{\sk}{\hh}) \mstep{p}{\ma} (\conf{\fail}{\sk}{\hh})}
 	     p \cdot \eertt{\fail}{\ert{C_2}{\rta}}(\sk,\hh)
 	     \tag{I.H., assumption} \\
 \eeq & \MROP(\conf{\COMPOSE{C_1}{C_2}}{\sk}{\hh}) \pplus 
         \sup_{\ma \in \MA}
         \sum_{(\conf{C_1}{\sk}{\hh}) \mstep{p}{\ma} (\conf{\fail}{\sk}{\hh})}
 	     p \cdot \infty
 	     \tag{Def. of $\MROP$, $\eert$} \\
 \eeq & \MROP(\conf{\COMPOSE{C_1}{C_2}}{\sk}{\hh}) \pplus 
         \sup_{\ma \in \MA}
         \sum_{(\conf{C_1}{\sk}{\hh}) \mstep{p}{\ma} (\conf{\fail}{\sk}{\hh})}
 	     p \cdot \eertt{\fail}{\rta}(\sk,\hh)
 	     \tag{Def. of $\eert$} \\
 \eeq &	 \MROP(\conf{\COMPOSE{C_1}{C_2}}{\sk}{\hh}) \pplus 
         \sup_{\ma \in \MA}
         \sum_{(\conf{\COMPOSE{C_1}{C_2}}{\sk}{\hh}) \mstep{p}{\ma} (\conf{C'}{\sk'}{\hh'})}
 	     p \cdot \eertt{C'}{\rta}(\sk',\hh')~.
 	     \tag{assumption, \Cref{fig:op:rules}}
\end{align*}
	
Hence, $\eert$ is Bellman compliant.
\end{proof}

\subsection{Proof of \Cref{thm:op:eert-leq-oprt}}\label{a:op:eert-leq-oprt}

By construction of $\eert$ and the definition of bellman compliance, it suffices
to show that, for all $C \in \hpgcl$, $\rta \in \T$ and program states $(\sk,\hh)$, we have
 	    \[
 	        \eertt{C}{\rta}(\sk,\hh) 
 	        \qeq \ert{C}{\rta}(\sk,\hh) 
 	        \quad\leq\quad\quad
 	        \oprtt{C}{\rta}(\sk,\hh)~.
 	    \]
 	    
\begin{proof}

We show the above inequality by induction on the structure of $\hpgcl$ programs.

For the base cases, let $C$ be one of the $\hpgcl$ programs
$\SKIP$, $\TICK{e}$, $\ASSIGN{x}{e}$, $\HASSIGN{e}{e'}$, $\ASSIGNH{x}{e}$, $\ALLOC{x}{e}$, or $\FREE{e}$.
As can be seen in \Cref{fig:op:rules}, we need to distinguish two kinds of execution steps: those that terminate, \ie, move to $\term$, and those that fail, \ie, move to $\fail$.

First, assume that an execution step starting in $\conf{C}{\sk}{\hh}$ terminates, \ie is of the form
\[ \conf{C}{\sk}{\hh} \qmstep{p}{\ma} \conf{\term}{\sk'}{\hh'}~. \]
Then, consider the following:

\begin{align*}
     & \eertt{C}{\rta}(\sk,\hh) \\
\eeq & \MROP(\conf{C}{\sk}{\hh}) \pplus 
         \sup_{\ma \in \MA}
         \sum_{(\conf{C}{\sk}{\hh}) \mstep{p}{\ma} (\conf{C'}{\sk'}{\hh'})}
 	     p \cdot \eertt{C'}{\rta}(\sk',\hh')
 	     \tag{\Cref{thm:op:eert-bellman}} \\
\eeq & \MROP(\conf{C}{\sk}{\hh}) \pplus 
         \sup_{\ma \in \MA}
         \sum_{(\conf{C}{\sk}{\hh}) \mstep{p}{\ma} (\conf{\term}{\sk'}{\hh'})}
 	     p \cdot \eertt{\term}{\rta}(\sk',\hh')
 	     \tag{assumption} \\ 	   
\eeq & \MROP(\conf{C}{\sk}{\hh}) \pplus 
         \sup_{\ma \in \MA}
         \sum_{(\conf{C}{\sk}{\hh}) \mstep{p}{\ma} (\conf{\term}{\sk'}{\hh'})}
 	     p \cdot \rta(\sk',\hh')
 	     \tag{Def. of $\eert$} \\ 	
\eeq & \MROP(\conf{C}{\sk}{\hh}) \pplus 
         \sup_{\ma \in \MA}
         \sum_{(\conf{C}{\sk}{\hh}) \mstep{p}{\ma} (\conf{\term}{\sk'}{\hh'})}
 	     p \cdot \oprtt{\term}{\rta}(\sk',\hh')
 	     \tag{\Cref{thm:op:bellman}} \\ 	 
\eeq & \oprtt{C}{\rta}(\sk,\hh)~.
 	     \tag{assumption, \Cref{thm:op:bellman}} \\   
\end{align*}

Second, assume that an execution step starting in $\conf{C}{\sk}{\hh}$ leads to an error, \ie is of the form
\[ \conf{C}{\sk}{\hh} \qmstep{p}{\ma} \conf{\fail}{\sk}{\hh}~. \]
Then, consider the following:
\begin{align*}
     & \eertt{C}{\rta}(\sk,\hh) \\
\eeq & \MROP(\conf{C}{\sk}{\hh}) \pplus 
         \sup_{\ma \in \MA}
         \sum_{(\conf{C}{\sk}{\hh}) \mstep{p}{\ma} (\conf{C'}{\sk'}{\hh'})}
 	     p \cdot \eertt{C'}{\rta}(\sk',\hh')
 	     \tag{\Cref{thm:op:eert-bellman}} \\
\eeq & \MROP(\conf{C}{\sk}{\hh}) \pplus 
         \sup_{\ma \in \MA}
         \sum_{(\conf{C}{\sk}{\hh}) \mstep{p}{\ma} (\conf{\fail}{\sk}{\hh})}
 	     p \cdot \eertt{\fail}{\rta}(\sk,\hh)
 	     \tag{assumption} \\ 	   
\eeq & \MROP(\conf{C}{\sk}{\hh}) \pplus 
         \sup_{\ma \in \MA}
         \sum_{(\conf{C}{\sk}{\hh}) \mstep{p}{\ma} (\conf{\fail}{\sk}{\hh})}
 	     p \cdot \infty
 	     \tag{Def. of $\eert$} \\ 	
\eeq & \MROP(\conf{C}{\sk}{\hh}) \pplus 
         \sup_{\ma \in \MA}
         \sum_{(\conf{C}{\sk}{\hh}) \mstep{p}{\ma} (\conf{\fail}{\sk}{\hh})}
 	     p \cdot \oprtt{\fail}{\rta}(\sk,\hh)
 	     \tag{\Cref{thm:op:bellman}} \\ 	 
\eeq & \oprtt{C}{\rta}(\sk,\hh)~.
 	     \tag{assumption, \Cref{thm:op:bellman}}
\end{align*}

For the induction hypothesis, assume that, for all $\hpgcl$ programs $C$, runtimes $\rta \in \T$, and program states $(\sk,\hh)$, we have
\begin{align*}
  \eertt{C}{\rta}(\sk,\hh) \quad\leq\quad \oprtt{C}{\rta}(\sk,\hh)~.	\tag{I.H.}
\end{align*}

In remains to prove the composite cases.

\paragraph{The case $C = \ITE{\varphi}{C_1}{C_2}$.}
\begin{align*}
     & \eertt{\ITE{\varphi}{C_1}{C_2}}{\rta}(\sk,\hh) \\
\eeq & \MROP(\conf{\ITE{\varphi}{C_1}{C_2}}{\sk}{\hh}) \pplus 
       \tag{\Cref{thm:op:eert-bellman}} \\
     &  \sup_{\ma \in \MA}
        \sum_{(\conf{\ITE{\varphi}{C_1}{C_2}}{\sk}{\hh}) \mstep{p}{\ma} (\conf{C'}{\sk'}{\hh'})}
 	     p \cdot \eertt{C'}{\rta}(\sk',\hh')
 	     \\
\lleq & \MROP(\conf{\ITE{\varphi}{C_1}{C_2}}{\sk}{\hh}) \pplus 
       \tag{I.H.} \\
     &  \sup_{\ma \in \MA}
        \sum_{(\conf{\ITE{\varphi}{C_1}{C_2}}{\sk}{\hh}) \mstep{p}{\ma} (\conf{C'}{\sk'}{\hh'})}
 	     p \cdot \oprtt{C'}{\rta}(\sk',\hh')
 	     \\
\eeq & \oprtt{\ITE{\varphi}{C_1}{C_2}}{\rta}(\sk,\hh)~.
 	     \tag{\Cref{thm:op:bellman}} 
\end{align*}

\paragraph{The case $C = \PCHOICE{C_1}{p}{C_2}$.}
\begin{align*}
     & \eertt{\PCHOICE{C_1}{p}{C_2}}{\rta}(\sk,\hh) \\
\eeq & \MROP(\conf{\PCHOICE{C_1}{p}{C_2}}{\sk}{\hh}) \pplus 
       \tag{\Cref{thm:op:eert-bellman}} \\
     &  \sup_{\ma \in \MA}
        \sum_{(\conf{\PCHOICE{C_1}{p}{C_2}}{\sk}{\hh}) \mstep{p}{\ma} (\conf{C'}{\sk'}{\hh'})}
 	     p \cdot \eertt{C'}{\rta}(\sk',\hh')
 	     \\
\lleq & \MROP(\conf{\PCHOICE{C_1}{p}{C_2}}{\sk}{\hh}) \pplus 
       \tag{I.H.} \\
     &  \sup_{\ma \in \MA}
        \sum_{(\conf{\PCHOICE{C_1}{p}{C_2}}{\sk}{\hh}) \mstep{q}{\ma} (\conf{C'}{\sk'}{\hh'})}
 	     p \cdot \oprtt{C'}{\rta}(\sk',\hh')
 	     \\
\eeq & \oprtt{\PCHOICE{C_1}{p}{C_2}}{\rta}(\sk,\hh)~.
 	     \tag{\Cref{thm:op:bellman}} 
\end{align*}

\paragraph{The case $C = \COMPOSE{C_1}{C_2}$.}
\begin{align*}
     & \eertt{\COMPOSE{C_1}{C_2}}{\rta} \\
\eeq & \eertt{C1}{\eertt{C_2}{\rta}} \tag{Def. of $\eert$} \\
\ppreceq & \eertt{C1}{\oprtt{C_2}{\rta}} \tag{I.H.} \\
\ppreceq & \oprtt{C1}{\oprtt{C_2}{\rta}} \tag{I.H.} \\
\ppreceq & \oprtt{\COMPOSE{C_1}{C_2}}{\rta}~. \tag{by \Cref{thm:op:seq}; see further below}
\end{align*}

\paragraph{The case $C = \WHILEDO{\varphi}{C_1}$.}
Recall that
\[
\eertt{\WHILEDO{\varphi}{C_1}}{\rta} 
\eeq \ert{\WHILEDO{\varphi}{C_1}}{\rta}
\eeq \lfp \rtb\mydot ~ \underbrace{\iverson{\neg \varphi} \cdot \rta \pplus \iverson{\varphi} \cdot \ert{C_1}{\rtb}}_{\eeq \Phi(\rtb)}~.
\]
Moreover, let $I = \oprtt{\WHILEDO{\varphi}{C_1}}{\rta}$. 
Then, consider the following:
\begin{align*}
     & \Phi(I) \\
\eeq & \iverson{\neg \varphi} \cdot \rta \pplus \iverson{\varphi} \cdot \ert{C_1}{\rtb} \tag{Def. of $\Phi$} \\
\eeq & \iverson{\neg \varphi} \cdot \rta \pplus \iverson{\varphi} \cdot \eertt{C_1}{\rtb} \tag{Def. of $\eert$} \\
\ppreceq & \iverson{\neg \varphi} \cdot \rta \pplus \iverson{\varphi} \cdot \oprtt{C_1}{\rtb} \tag{I.H.} \\ 
\eeq & I \eeq \oprtt{\WHILEDO{\varphi}{C_1}}{\rta}~. \tag{by \Cref{thm:op:while}; see further below}
\end{align*}
Hence, $I$ is a prefixed point of $\Phi(\rtb)$ and thus 
\begin{align*}
 & \eertt{\WHILEDO{\varphi}{C_1}}{\rta} \\
\eeq & \lfp \rtb\mydot ~ \underbrace{\iverson{\neg \varphi} \cdot \rta \pplus \iverson{\varphi} \cdot \ert{C_1}{\rtb}}_{\eeq \Phi(\rtb)} \\
\ppreceq &  I \eeq \oprtt{\WHILEDO{\varphi}{C_1}}{\rta}~.
\end{align*}
	
\end{proof}

\subsubsection{Auxiliary Lemmas}


\begin{lemma}\label{thm:op:seq}
	$\oprtt{C_1}{\oprtt{C_2}{\rta}} \ppreceq \oprtt{\COMPOSE{C_1}{C_2}}{\rta}$.
\end{lemma}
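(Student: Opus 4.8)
\textbf{Proof plan for Lemma~\ref{thm:op:seq}.}
The plan is to prove the inequality $\oprtt{C_1}{\oprtt{C_2}{\rta}} \ppreceq \oprtt{\COMPOSE{C_1}{C_2}}{\rta}$ directly at the level of the operational MDPs, by relating paths of the operational MDP of $C_1$ (with post-runtime $\oprtt{C_2}{\rta}$) to paths of the operational MDP of $\COMPOSE{C_1}{C_2}$ (with post-runtime $\rta$). Fix a state $(\sk,\hh)$. The key observation is structural: by the rules for sequential composition in \Cref{fig:op:rules}, every execution step $\conf{C_1}{\sk}{\hh} \mstep{p}{\ma} c'$ of $\MDPOP{C_1,\oprtt{C_2}{\rta},\sk,\hh}$ is mirrored by a step $\conf{\COMPOSE{C_1}{C_2}}{\sk}{\hh} \mstep{p}{\ma} \hat{c}'$, where $\hat{c}'$ is $\conf{\COMPOSE{C_1'}{C_2}}{\sk'}{\hh'}$ if $c' = \conf{C_1'}{\sk'}{\hh'}$, is $\conf{C_2}{\sk'}{\hh'}$ if $c' = \conf{\term}{\sk'}{\hh'}$, and is $\conf{\fail}{\sk}{\hh}$ if $c' = \conf{\fail}{\sk}{\hh}$. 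Thus the $C_1$-prefix of any execution of $\COMPOSE{C_1}{C_2}$ faithfully simulates an execution of $C_1$, with identical transition probabilities and identical rewards collected (since $\MROP(\conf{\COMPOSE{C_1'}{C_2}}{\sk'}{\hh'}) = \MROP(\conf{C_1'}{\sk'}{\hh'})$ by the definition of $\MROP$, and $\TICKsymbol$-rewards are handled correctly because $\MROP$ already accounts for the leading-$\TICKsymbol$ case).

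Concretely, I would argue as follows. First I would fix an arbitrary scheduler $\msched$ and path length $n$ for $\MDPOP{C_1,\oprtt{C_2}{\rta},\sk,\hh}$, and split each path $\pi = c_0 \ldots c_{n-1}$ according to whether it has reached $\term$ (hence $C_2$ would start), $\fail$, or is still running $C_1$. For the still-running and failing cases, the simulation above gives a path of equal probability and equal accumulated reward in $\MDPOP{\COMPOSE{C_1}{C_2},\rta,\sk,\hh}$ — note that reaching $\fail$ contributes reward $\infty$ in both MDPs, so those paths are trivially dominated. For paths that reach $\conf{\term}{\sk'}{\hh'}$ at some index $k$, the reward $\oprtt{C_2}{\rta}(\sk',\hh') = \ExpRew{\MDPOP{C_2,\rta,\sk',\hh'}}$ collected for leaving $\term$ is itself a supremum over schedulers and path lengths for $C_2$ started in $(\sk',\hh')$; I would push this supremum inside and concatenate a near-optimal $C_2$-execution onto the simulated $\COMPOSE{C_1}{C_2}$-prefix, again using the sequential-composition rules (this time the second and the "continue" rules) to see that executing $C_2$ after $\term$ in $\MDPOP{\COMPOSE{C_1}{C_2},\rta,\sk,\hh}$ is exactly executing $C_2$ in $\MDPOP{C_2,\rta,\sk',\hh'}$. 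Taking suprema over $\msched$ and $n$ on the left and comparing against the supremum defining $\oprtt{\COMPOSE{C_1}{C_2}}{\rta}(\sk,\hh)$ yields the claimed inequality.

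The main obstacle I anticipate is the bookkeeping around the $\term$-to-$C_2$ handover: the reward for leaving $\term$ in $\MDPOP{C_1,\oprtt{C_2}{\rta},\sk,\hh}$ is collected in a single step, whereas in $\MDPOP{\COMPOSE{C_1}{C_2},\rta,\sk,\hh}$ the corresponding quantity is spread out over a potentially infinite continuation, so one must carefully interleave the two suprema (over $C_1$-schedulers/lengths and over $C_2$-schedulers/lengths) and use that finite partial sums of non-negative rewards converge monotonically to $\ExpRew{\cdot}$. An alternative, possibly cleaner route would be to avoid the path-level argument entirely and instead invoke \Cref{thm:op:oprt-least}: it suffices to exhibit \emph{some} Bellman-compliant runtime transformer $\rt$ with $\rt \geq \oprt$ dominated by $\lambda C\,\lambda\rta\,\lambda(\sk,\hh)\mydot\oprtt{C_1}{\oprtt{C_2}{\rta}}$ in the relevant instance — but since $\oprt$ is already the \emph{least} Bellman-compliant transformer, the inequality $\oprt \leq (\text{anything Bellman-compliant})$ goes the wrong way here, so one genuinely needs either the direct path argument or, equivalently, to show that $\lambda\rta\mydot\oprtt{C_1}{\oprtt{C_2}{\rta}}$ arises as (a component of) a Bellman-compliant transformer and then appeal to minimality of $\oprt$ in the converse direction; I would most likely settle on the explicit path-splitting proof sketched above, as it is the most transparent given the machinery already available in the excerpt.
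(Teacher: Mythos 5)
Your path-splitting plan is a genuinely different route from the paper's proof, and the obstacle you flag --- interleaving the $C_1$- and $C_2$-suprema at the $\term$-to-$C_2$ handover --- is exactly where the hard work would go: you would need $\epsilon$-optimal $C_2$-schedulers for each terminal state $(\sk',\hh')$ reachable in the $C_1$-prefix, glue them onto the $C_1$-scheduler to build a single scheduler for $\COMPOSE{C_1}{C_2}$, and then control the error; since there are countably many reachable terminal states with positive mass you also need some care in choosing the $\epsilon$'s, though monotone convergence makes this manageable. So the plan is salvageable but the bookkeeping is real, not just annoying.

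What you dismissed is in fact closer to what the paper does. You noticed that a naive appeal to minimality of $\oprt$ (\Cref{thm:op:oprt-least}) faces a direction problem, and you were right that simply pointing at $\oprtt{C_1}{\oprtt{C_2}{\cdot}}$ does not work. The missing idea is to introduce an \emph{intermediate} transformer: a finite $n$-fold Bellman unrolling $\auxpsi{n}{\rtb}{C}$ with $\auxpsi{0}{\rtb}{C}=0$, $\auxpsi{n+1}{\rtb}{\term}=\rtb$, $\auxpsi{n+1}{\rtb}{\fail}=\infty$, and $\auxpsi{n+1}{\rtb}{C}$ given by one step of the Bellman sum applied to $\auxpsi{n}{\rtb}{\cdot}$ for $C\in\hpgcl$. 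One checks by continuity that $\sup_n \auxpsi{n}{\rtb}{C}$ is Bellman compliant, so Blackwell's theorem (\Cref{thm:op:oprt-least}) gives $\oprtt{C_1}{\rtb}\ppreceq\sup_n\auxpsi{n}{\rtb}{C_1}$. Then one proves by induction on $n$ that $\auxpsi{n}{\oprtt{C_2}{\rta}}{C_1}\ppreceq\oprtt{\COMPOSE{C_1}{C_2}}{\rta}$, using the sequential-composition rules of \Cref{fig:op:rules} and the Bellman compliance of $\oprt$ (\Cref{thm:op:bellman}) to unfold one step on the right-hand side matching the one unfolded on the left. That induction never needs to construct or glue schedulers: the suprema are encoded once and for all in the Bellman equations, which sidesteps the exact difficulty your path argument runs into.

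So: correct plan, different method, with the known hard spot correctly identified but not discharged; the paper's finite-unrolling-plus-Blackwell argument is the cleaner variant of the Bellman route you considered and abandoned.
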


\newcommand{\auxpsi}[3]{\Psi_{#1}^{#2}\llbracket #3 \rrbracket}

\begin{proof}
We first define an auxiliary transformer $\auxpsi{n}{\rtb}{C}$ that unrolls the Bellman equations at most $n$ times. To this end, let $\auxpsi{0}{\rtb}{C} = 0$.
Moreover, we define 
    \begin{align*}
      \auxpsi{n+1}{\rtb}{C}(\sk,\hh)
      \eeq \begin{cases}
        \rtb(\sk,\hh) & \text{ if } C = \term \\
        \infty & \text{ if } C = \fail \\
      	\MROP(\conf{C}{\sk}{\hh}) \pplus 
        \sup_{\ma \in \MA}
        \sum_{(\conf{C}{\sk}{\hh}) \mstep{p}{\ma} (\conf{C'}{\sk'}{\hh'})} p \cdot \auxpsi{n}{\rtb}{C'}(\sk',\hh')
        & \text{ if } C \in \hpgcl~.
      \end{cases}
    \end{align*}      
We make two observations:
\begin{enumerate}
	\item for all $C \in \hpgcl$, $\sup_{n \in \Nats} \auxpsi{n}{\rtb}{C}$ is Bellman compliant; and
	\item $\oprtt{C}{\rtb} \ppreceq \sup_{n \in \Nats} \auxpsi{n}{\rtb}{C}$.
\end{enumerate}
For observation (1), let $C \in \hpgcl$ and consider the following calculations:
\begin{align*}
	& \sup_{n \in \Nats} \auxpsi{n}{\rtb}{C}(\sk,\hh) \\
	\eeq & \sup_{n \in \Nats} \auxpsi{n+1}{\rtb}{C}(\sk, \hh)
	\tag{by def. $\auxpsi{0}{\rtb}{C} = 0$} \\
	& \sup_{n \in \Nats} 
	\MROP(\conf{C}{\sk}{\hh}) \pplus 
        \sup_{\ma \in \MA}
        \sum_{(\conf{C}{\sk}{\hh}) \mstep{p}{\ma} (\conf{C'}{\sk'}{\hh'})} p \cdot \auxpsi{n}{\rtb}{C'}(\sk',\hh')
        \tag{by def. of $\auxpsi{n+1}{\rtb}{C}$ for $C \in \hpgcl$} \\
    \eeq & 
	\MROP(\conf{C}{\sk}{\hh}) \pplus 
        \sup_{\ma \in \MA}
        \sum_{(\conf{C}{\sk}{\hh}) \mstep{p}{\ma} (\conf{C'}{\sk'}{\hh'})} p \cdot \sup_{n \in \Nats} \auxpsi{n}{\rtb}{C'}(\sk',\hh')~.
        \tag{continuity of $+, \sup, \sum, \cdot$}
\end{align*}
Hence, $\sup_{n \in \Nats} \auxpsi{n}{\rtb}{C}$ is Bellman compliant.
For observation (2), note that observation (1) and \Cref{thm:op:oprt-least} imply 
 \[ \oprtt{C}{\rtb} \ppreceq \sup_{n \in \Nats} \auxpsi{n}{\rtb}{C}~. \]

Assume, for the moment, that we already know the following:    
\begin{align*}
  \forall n \in \Nats \quad 
  \forall C_1, C_2 \in \hpgcl \colon \qquad\qquad 
  \auxpsi{n}{\oprtt{C_2}{\rta}}{C_1}(\sk,\hh) \ppreceq
  \oprtt{\COMPOSE{C_1}{C_2}}{\rta}(\sk,\hh)~. \tag{$\dag$}
\end{align*}
Then we can immediate prove the claim:
\begin{align*}
  & \oprtt{C_1}{\oprtt{C_2}{\rta}}(\sk,\hh)	\\
\ppreceq & \sup_{n \in \Nats} \auxpsi{n}{\oprtt{C_2}{\rta}}{C_1}(\sk,\hh)
	           \tag{by observation (2)} \\
\ppreceq & \oprtt{\COMPOSE{C_1}{C_2}}{\rta}(\sk,\hh)~.  \tag{by ($\dag$)}
\end{align*}

To complete the proof, it remains to show ($\dag$). We proceed by induction on $n$.

\emph{Induction base.} For $n = 0$, consider the following:
\begin{align*}
  & \auxpsi{0}{\oprtt{C_2}{\rta}}{C_1}(\sk,\hh) \\
  \eeq & 0 \tag{by definition} \\
  \ppreceq &
  \oprtt{\COMPOSE{C_1}{C_2}}{\rta}(\sk,\hh)~.
\end{align*}

\emph{Induction hypothesis.} Assume for an arbitrary, but fixed, $n \in \Nats$ that for all $C_1,C_2 \in \hpgcl$ and all program states $(\sk,\hh)$, we have
\begin{align*}
  \auxpsi{n}{\oprtt{C_2}{\rta}}{C_1}(\sk,\hh) \ppreceq 
  \oprtt{\COMPOSE{C_1}{C_2}}{\rta}(\sk,\hh)~. \tag{I.H.}
\end{align*}

\emph{Induction step.} 
Let $C_1, C_2 \in \hpgcl$ and fix some program state $(\sk, \hh)$.
By construction of the execution relation $\mstep{}{}$ (cf., \Cref{fig:op:rules}), 
all $C'$ in configurations ($\conf{C'}{\sk'}{\hh'}$) reached from ($\conf{C_1}{\sk}{\hh}$)
via a \emph{single} execution step are either (1) all $\hpgcl$ programs, (2) all $\term$, or (3) all $\fail$.
Hence, it suffices to show the claim for the following three cases:
\begin{enumerate}
	\item for every step $\conf{C_1}{\sk}{\hh} \mstep{p}{\ma} \conf{C'}{\sk'}{\hh'}$, we have $C' \in \hpgcl$;
	\item for every step $\conf{C_1}{\sk}{\hh} \mstep{p}{\ma} \conf{C'}{\sk'}{\hh'}$, we have  $C' = \term$; and
	\item for every step $\conf{C_1}{\sk}{\hh} \mstep{p}{\ma} \conf{C'}{\sk'}{\hh'}$, we have $C' = \fail$.
\end{enumerate}
For case (1), consider the following:
\begin{align*}
  & \auxpsi{n+1}{\oprtt{C_2}{\rta}}{C_1}(\sk,\hh) \\
  \eeq & 
  \MROP(\conf{C}{\sk}{\hh}) \pplus 
        \sup_{\ma \in \MA}
        \sum_{(\conf{C_1}{\sk}{\hh}) \mstep{p}{\ma} (\conf{C'}{\sk'}{\hh'})} p \cdot \auxpsi{n}{\oprtt{C_2}{\rta}}{C'}(\sk',\hh') 
        \tag{by definition and $C_1 \in \hpgcl$} \\
  \ppreceq &
  \MROP(\conf{C}{\sk}{\hh}) \pplus 
        \sup_{\ma \in \MA}
        \sum_{(\conf{C_1}{\sk}{\hh}) \mstep{p}{\ma} (\conf{C'}{\sk'}{\hh'})} p \cdot 
        	\oprtt{\COMPOSE{C'}{C_2}}{\rta}(\sk',\hh')
        \tag{by I.H.; applicable since $C' \in \hpgcl$} \\
  \eeq &
  \MROP(\conf{C}{\sk}{\hh}) \pplus 
        \sup_{\ma \in \MA}
        \sum_{(\conf{\COMPOSE{C_1}{C_2}}{\sk}{\hh}) \mstep{p}{\ma} (\conf{\COMPOSE{C'}{C_2}}{\sk'}{\hh'})} p \cdot 
        	\oprtt{\COMPOSE{C'}{C_2}}{\rta}(\sk',\hh') \\
  \eeq & 
  \oprtt{\COMPOSE{C_1}{C_2}}{\rta}(\sk,\hh)~. \tag{by \Cref{thm:op:bellman}}
\end{align*}
For case (2), consider the following:
\begin{align*}
  & \auxpsi{n+1}{\oprtt{C_2}{\rta}}{C_1}(\sk,\hh) \\
  \eeq & 
  \MROP(\conf{C}{\sk}{\hh}) \pplus 
        \sup_{\ma \in \MA}
        \sum_{(\conf{C_1}{\sk}{\hh}) \mstep{p}{\ma} (\conf{C'}{\sk'}{\hh'})} p \cdot \auxpsi{n}{\oprtt{C_2}{\rta}}{C'}(\sk',\hh') 
        \tag{by definition and $C_1 \in \hpgcl$} \\
   \eeq &
  \MROP(\conf{C}{\sk}{\hh}) \pplus 
        \sup_{\ma \in \MA}
        \sum_{(\conf{C_1}{\sk}{\hh}) \mstep{p}{\ma} (\conf{\term}{\sk'}{\hh'})} p \cdot \auxpsi{n}{\oprtt{C_2}{\rta}}{\term}(\sk',\hh') 
        \tag{case (2)} \\
   \ppreceq &
  \MROP(\conf{C}{\sk}{\hh}) \pplus 
        \sup_{\ma \in \MA}
        \sum_{(\conf{C_1}{\sk}{\hh}) \mstep{p}{\ma} (\conf{\term}{\sk'}{\hh'})} p \cdot \oprtt{C_2}{\rta}(\sk',\hh') 
        \tag{by definition} \\
        \eeq &
  \MROP(\conf{C}{\sk}{\hh}) \pplus 
        \sup_{\ma \in \MA}
        \sum_{(\conf{\COMPOSE{C_1}{C_2}}{\sk}{\hh}) \mstep{p}{\ma} (\conf{C_2}{\sk'}{\hh'})} p \cdot \oprtt{C_2}{\rta}(\sk',\hh') 
        \tag{by definition} \\
  \eeq & 
  \oprtt{\COMPOSE{C_1}{C_2}}{\rta}(\sk,\hh)~. \tag{by \Cref{thm:op:bellman}}
\end{align*}
For case (3), consider the following:
\begin{align*}
  & \auxpsi{n+1}{\oprtt{C_2}{\rta}}{C_1}(\sk,\hh) \\
  \eeq & 
  \MROP(\conf{C}{\sk}{\hh}) \pplus 
        \sup_{\ma \in \MA}
        \sum_{(\conf{C_1}{\sk}{\hh}) \mstep{p}{\ma} (\conf{C'}{\sk'}{\hh'})} p \cdot \auxpsi{n}{\oprtt{C_2}{\rta}}{C'}(\sk',\hh') 
        \tag{by definition and $C_1 \in \hpgcl$} \\
   \eeq &
  \MROP(\conf{C}{\sk}{\hh}) \pplus 
        \sup_{\ma \in \MA}
        \sum_{(\conf{C_1}{\sk}{\hh}) \mstep{p}{\ma} (\conf{\fail}{\sk'}{\hh'})} p \cdot \auxpsi{n}{\oprtt{C_2}{\rta}}{\fail}(\sk',\hh') 
        \tag{case (3)} \\
   \ppreceq &
  \MROP(\conf{C}{\sk}{\hh}) \pplus 
        \sup_{\ma \in \MA}
        \sum_{(\conf{C_1}{\sk}{\hh}) \mstep{p}{\ma} (\conf{\fail}{\sk'}{\hh'})} p \cdot \oprtt{\fail}{\rta}(\sk',\hh') 
       \tag{by definition} \\
  \eeq & 
  \MROP(\conf{C}{\sk}{\hh}) \pplus 
        \sup_{\ma \in \MA}
        \sum_{(\conf{\COMPOSE{C_1}{C_2}}{\sk}{\hh}) \mstep{p}{\ma} (\conf{\fail}{\sk'}{\hh'})} p \cdot \oprtt{\fail}{\rta}(\sk',\hh') \\
  \eeq & 
  \oprtt{\COMPOSE{C_1}{C_2}}{\rta}(\sk,\hh)~. \tag{by \Cref{thm:op:bellman}}
\end{align*}
In all three cases, we thus have $\auxpsi{n+1}{\oprtt{C_2}{\rta}}{C_1}(\sk,\hh) \ppreceq \oprtt{\COMPOSE{C_1}{C_2}}{\rta}(\sk,\hh)$.
\end{proof}

\begin{lemma}\label{thm:op:while}
	$\oprtt{\WHILEDO{\varphi}{C}}{\rta} \eeq \iverson{\neg \varphi} \cdot \rta \pplus \iverson{\varphi} \cdot \oprtt{\COMPOSE{C}{\WHILEDO{\varphi}{C}}}{\rta}$.
\end{lemma}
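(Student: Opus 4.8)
\textbf{Proof plan for \Cref{thm:op:while}.}
The plan is to unfold the construction of the operational MDP $\MDPOP{\WHILEDO{\varphi}{C},\rta,\sk,\hh}$ one step and then reorganize the supremum over schedulers and path lengths accordingly. First I would fix a state $(\sk,\hh)$ and distinguish the two cases $\sk\models\varphi$ and $\sk\not\models\varphi$, exactly as the execution relation in \Cref{fig:op:rules} does. In the case $\sk\not\models\varphi$, the only step from $\conf{\WHILEDO{\varphi}{C}}{\sk}{\hh}$ is to $\tconf{\sk}{\hh}$, from there to $\sink$ collecting reward $\rta(\sk,\hh)$, and then $\sink$ loops forever with reward $0$; hence every path has total reward $\rta(\sk,\hh)$ (the reward $\MROP(\conf{\WHILEDO{\varphi}{C}}{\sk}{\hh})=0$, since the while-configuration is neither a $\TICKsymbol$ nor a $\fail$), and $\ExpRew{\MDPOP{\WHILEDO{\varphi}{C},\rta,\sk,\hh}}=\rta(\sk,\hh)$, matching the right-hand side. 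In the case $\sk\models\varphi$, the unique step is to $\conf{\COMPOSE{C}{\WHILEDO{\varphi}{C}}}{\sk}{\hh}$, so there is a probability-preserving bijection between paths of $\MDPOP{\WHILEDO{\varphi}{C},\rta,\sk,\hh}$ of length $n+1$ and paths of $\MDPOP{\COMPOSE{C}{\WHILEDO{\varphi}{C}},\rta,\sk,\hh}$ of length $n$, and this bijection also respects schedulers (histories just get the extra initial while-configuration prepended); the reward collected along the matching paths agrees because the first configuration contributes reward $0$. Taking $\sup$ over schedulers and lengths on both sides then gives $\ExpRew{\MDPOP{\WHILEDO{\varphi}{C},\rta,\sk,\hh}}=\ExpRew{\MDPOP{\COMPOSE{C}{\WHILEDO{\varphi}{C}},\rta,\sk,\hh}}$, which is the right-hand side.

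A cleaner and more robust way to package this, which I would actually use, is to invoke \Cref{thm:op:bellman}: since $\oprt$ is Bellman compliant, we have
\begin{align*}
  \oprtt{\WHILEDO{\varphi}{C}}{\rta}(\sk,\hh)
  \eeq
  \MROP(\conf{\WHILEDO{\varphi}{C}}{\sk}{\hh})
  \pplus
  \sup_{\ma\in\MA}
  \sum_{(\conf{\WHILEDO{\varphi}{C}}{\sk}{\hh})\mstep{p}{\ma}(\conf{C'}{\sk'}{\hh'})}
  p\cdot\oprtt{C'}{\rta}(\sk',\hh')~.
\end{align*}
Here $\MROP(\conf{\WHILEDO{\varphi}{C}}{\sk}{\hh})=0$ by definition of $\MROP$, and by inspection of \Cref{fig:op:rules} the while-configuration is \emph{deterministic}: it has exactly one outgoing step (independent of the action $\ma$), namely to $\conf{\COMPOSE{C}{\WHILEDO{\varphi}{C}}}{\sk}{\hh}$ if $\sk\models\varphi$ and to $\tconf{\sk}{\hh}$ if $\sk\not\models\varphi$. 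In the first case the sum collapses to $\oprtt{\COMPOSE{C}{\WHILEDO{\varphi}{C}}}{\rta}(\sk,\hh)$; in the second case it collapses to $\oprtt{\term}{\rta}(\sk,\hh)=\rta(\sk,\hh)$ (using that $\oprt$ is a runtime transformer, i.e.\ $\oprtt{\term}{\rta}=\rta$, as noted in the text). Combining the two cases via the Iverson brackets $\iverson{\varphi}$ and $\iverson{\neg\varphi}$ yields exactly the claimed identity.

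The step I expect to require the most care is making precise that the supremum over actions $\sup_{\ma\in\MA}$ genuinely collapses, i.e.\ that the while-configuration has a single enabled action leading to a single successor so that no nondeterminism is actually resolved here — this is immediate from \Cref{fig:op:rules} but must be stated, since $\MA=\Vals$ is nontrivial and the convention $\sum_{\emptyset}\ldots=0$ is in play for the non-enabled actions. Everything else is bookkeeping: invoking \Cref{thm:op:bellman}, reading off $\MROP$, and splitting on $\sk\models\varphi$ versus $\sk\not\models\varphi$. (If one prefers to avoid the Bellman route entirely, the explicit path-bijection argument sketched in the first paragraph works equally well, at the cost of a slightly more tedious but entirely routine manipulation of $\mpaths{n}{\msched}$, $\MP(\mpath)$, and $\MR(\mpath)$.)
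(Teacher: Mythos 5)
Your second, Bellman-equation-based argument is exactly the paper's proof: it applies \Cref{thm:op:bellman} to $\oprt$, observes $\MROP(\conf{\WHILEDO{\varphi}{C}}{\sk}{\hh}) = 0$, case-splits on $\sk\models\varphi$ versus $\sk\not\models\varphi$, collapses the deterministic one-step transition from \Cref{fig:op:rules}, and in the false branch uses that $\oprt$ is a runtime transformer so $\oprtt{\term}{\rta}=\rta$. The initial path-bijection sketch is an alternative you rightly discard; the route you actually commit to matches the paper in both structure and justifications.
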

\begin{proof}
  We prove the claim separately for each program state $(\sk,\hh)$.
  Two cases arise: $\sk \models \varphi$ and $\sk \not\models \varphi$.
  First, assume $\sk \models \varphi$. Then, consider the following:
  \begin{align*}
  	     & \oprtt{\WHILEDO{\varphi}{C}}{\rta}(\sk,\hh) \\
  	\eeq & \MROP(\conf{\WHILEDO{\varphi}{C}}{\sk}{\hh}) \pplus \\
         &  \sum_{(\conf{\WHILEDO{\varphi}{C}}{\sk}{\hh}) \mstep{}{} (\conf{\COMPOSE{C}{\WHILEDO{\varphi}{C}}}{\sk}{\hh})}
 	     \oprtt{\COMPOSE{C}{\WHILEDO{\varphi}{C}}}{\rta}(\sk,\hh)
 	     \tag{\Cref{thm:op:bellman}, \Cref{fig:op:rules}, assumption} \\
  	\eeq & \oprtt{\COMPOSE{C}{\WHILEDO{\varphi}{C}}}{\rta}(\sk,\hh)
 	       \tag{Def. of $\MROP$, algebra} \\ 
 	\eeq & \left(\iverson{\neg \varphi} \cdot \rta \pplus \iverson{\varphi} \cdot \oprtt{\COMPOSE{C}{\WHILEDO{\varphi}{C}}}{\rta}\right)(\sk,\hh)~.
 	        \tag{algebra, assumption}
  \end{align*}
  Second, assume $\sk \not\models \varphi$. Then, consider the following:
  \begin{align*}
  	     & \oprtt{\WHILEDO{\varphi}{C}}{\rta}(\sk,\hh) \\
  	\eeq & \MROP(\conf{\WHILEDO{\varphi}{C}}{\sk}{\hh}) \pplus \\
         &  \sum_{(\conf{\WHILEDO{\varphi}{C}}{\sk}{\hh}) \mstep{}{} (\conf{\term}{\sk}{\hh})}
 	     \oprtt{\term}{\rta}(\sk,\hh)
 	     \tag{\Cref{thm:op:bellman}, \Cref{fig:op:rules}, assumption} \\
  	\eeq & \rta(\sk,\hh) \tag{Def. of $\MROP$, \Cref{thm:op:bellman}} \\ 
 	\eeq & \left(\iverson{\neg \varphi} \cdot \rta \pplus \iverson{\varphi} \cdot \oprtt{\COMPOSE{C}{\WHILEDO{\varphi}{C}}}{\rta}\right)(\sk,\hh)~.
 	        \tag{algebra, assumption}
  \end{align*}
  
\end{proof}

\subsection{Appendix to Section~\ref{s:aert}}
\subsection{Proof of \Cref{thm:aert_decomp}}
For all $C\in\hpgcl$ all $\arta\in\Api$, and all $\rta \in \T$ we have
\[
	\aert{\potent}{C}{\arta} = \ert{C}{\arta + \potent} - \potent~,
\]
\begin{proof}
	By induction on $C$. \\ 

	\noindent
	For the base cases, the claim holds by definition. 
	
	\paragraph{The case $C=\ITE{\varphi}{C_1}{C_2}$.}
	We have 
	\begin{align*}
	   & \aert{\potent}{\ITE{\varphi}{C_1}{C_2}}{\arta} \\
	   \eeq& \iverson{\varphi} \cdot \aert{\potent}{C_1}{\arta} 
	               + \iverson{\neg \varphi} \cdot \aert{\potent}{C_2}{\arta}  \\
	   \tag{by \Cref{table:aert}} \\
	    \eeq& \iverson{\varphi} \cdot (\ert{C_1}{\arta+ \potent} - \potent) 
	    + \iverson{\neg \varphi} \cdot (\ert{C_2}{\arta + \potent} - \potent )
	    \tag{by I.H.} \\
	    \eeq& (\iverson{\varphi} \cdot \ert{C_1}{\arta+ \potent} 
	    + \iverson{\neg \varphi} \cdot \ert{C_2}{\arta + \potent}) - \potent \\
	     \eeq& \ert{\ITE{\varphi}{C_1}{C_2}}{\arta + \potent} - \potent 
	     \tag{by \Cref{table:ert}}
	\end{align*}
	
	\paragraph{The case $C=\PCHOICE{C_1}{p}{C_2}$.}
	
	We have 
	\begin{align*}
	& \aert{\potent}{\PCHOICE{C_1}{p}{C_2}}{\arta} \\
	\eeq& p \cdot \aert{\potent}{C_1}{\arta} 
	+ (1-p) \cdot \aert{\potent}{C_2}{\arta}  \\
	\tag{by \Cref{table:aert}} \\
	\eeq& p \cdot (\ert{C_1}{\arta+ \potent} - \potent) 
	+ (1-p) \cdot (\ert{C_2}{\arta + \potent} - \potent )
	\tag{by I.H.} \\
	\eeq& (p \cdot \ert{C_1}{\arta+ \potent} 
	+ (1-p) \cdot \ert{C_2}{\arta + \potent}) - \potent \\
	\eeq& \ert{\PCHOICE{C_1}{p}{C_2}}{\arta + \potent} - \potent 
	\tag{by \Cref{table:ert}}
	\end{align*}
	
	\paragraph{The case $C=\WHILEDO{\varphi}{C_1}$.}
	
	First notice that 
	\begin{align*}
	   &\aert{\potent}{C}{\arta} \eeq \ert{C}{\arta + \potent} - \potent \\
	   \text{iff}\quad & \aert{\potent}{C}{\arta} + \potent \eeq \ert{C}{\arta + \potent}\\
	   \text{iff}\quad & \potent + \lfp \aertcharfun{\arta} \eeq \lfp \charfun{\arta+\potent}~,
	\end{align*}
	where $\aertcharfun{\arta}$ and $\charfun{\arta+\potent}$ is the $\aertsymbol$- and the $\ertsymbol$-characteristic functional of the loop $C$, respectively. 
	For the latter statement, it suffices to show that (1) if $\artb= \aertcharfun{\arta}(\artb)$, then $\artb+\potent = \charfun{\arta+\potent}(\artb + \potent)$ (which proves the $\geq$-direction), and (2) if $\rta = \charfun{\arta + \potent}(\rta)$, then $\rta - \potent = \aertcharfun{\arta}(\rta - \potent)$ (which proves the $\leq$-direction).
	
	For (1), consider the following:
	\begin{align*}
	   & \artb \eeq \aertcharfun{\arta}(\artb) \\
	   \text{implies} \quad & \artb \eeq \iverson{\varphi}\cdot \aert{\potent}{C_1}{\artb} + \iverson{\neg\varphi}\cdot \arta
	   \tag{by definition} \\
	   \text{implies} \quad & \artb + \potent \eeq \iverson{\varphi}\cdot \aert{\potent}{C_1}{\artb} + \iverson{\neg\varphi}\cdot \arta + \potent \\
	   \text{implies} \quad & \artb + \potent \eeq \iverson{\varphi}\cdot (\aert{\potent}{C_1}{\artb}+ \potent) + \iverson{\neg\varphi}\cdot (\arta + \potent)  \\
	   \text{implies} \quad & \artb + \potent \eeq \iverson{\varphi}\cdot (\ert{C_1}{\artb +\potent} + \iverson{\neg\varphi}\cdot (\arta + \potent) 
	   \tag{by I.H.} \\
	   \text{implies} \quad & \artb + \potent \eeq \charfun{\arta + \potent}(\artb)
	   \tag{by definition}
	\end{align*}
	
	For (2), consider the following:
	\begin{align*}
	& \rta \eeq \charfun{\arta + \potent}(\rta) \\
	\text{implies} \quad & \rta \eeq \iverson{\varphi}\cdot \ert{C_1}{\rta} + \iverson{\neg\varphi}\cdot (\arta + \potent)
	\tag{by definition} \\
	\text{implies} \quad & \rta - \potent \eeq \iverson{\varphi}\cdot \ert{C_1}{\rta} + \iverson{\neg\varphi}\cdot (\arta + \potent) - \potent
	\tag{by definition} \\
	\text{implies} \quad & \rta - \potent \eeq \iverson{\varphi}\cdot (\ert{C_1}{\rta} - \potent) + \iverson{\neg\varphi}\cdot \arta \\
	\text{implies} \quad & \rta - \potent \eeq \iverson{\varphi}\cdot (\aert{\potent}{C_1}{\rta - \potent} + \iverson{\neg\varphi}\cdot \arta 
	\tag{by I.H.} \\
	\text{implies} \quad & \rta - \potent \eeq \aertcharfun{\arta}(\rta - \potent)~.
	\tag{by definition}
	\end{align*}
	
	This completes the proof.
	
\end{proof}

\subsection{Appendix to the Insert-Detelte-FindAny Case Study}
\label{app:findany}

\begin{figure}[t]

		\begin{align*}
			&\unframeannotate{(\rta' + \potent) \sepadd \rtb ~~~-~~~\potent}{} \\
			& \annotate{\rta'} \tag{obtain $\rta' \succeq \aert{\potent}{C}{\rta-\potent}$} \\
			& \vdots \tag{further annotations} \\
			&C \\
			&\frameannotate{\rta - \potent}{} \tag{$\rtb$ is the frame} \\
			&\annotate{\rta \sepadd \rtb ~~~-~~~ \potent} \tag{post $\potent$-runtime is $\rta \sepadd \rtb - \potent$}
		\end{align*}%

	\caption{$\aertsymbol$ annotations with framing where $\modC{C}\cap \Vars(\rtb) = \emptyset$.}
	\label{fig:aert-annotations_framing}
\end{figure}

Throughout, we use a source-code annotation style analogous to the annotations described in \Cref{s:aert} and \Cref{app:annotations}. Framing, i.e.\ \Cref{thm:aert_frame}, is annotated as shown in \Cref{fig:aert-annotations_framing}. To verify the amortized expected runtimes of $\procinsert{\varadd}$ and $\procremove{\varremove}$, we proceed as follows:
We first obtain an upper bound on $\aert{\potent}{\procrank}{0}$ in \Cref{fig:procrank_1} and 
\Cref{fig:procrank_2}. We then obtain an upper bound on 
\[\aert{\potent}{\procsample}{\aert{\potent}{\procrank}{0}} = \aert{\potent}{\COMPOSE{\procsample}{\procrank}}{0}
\]
 in \Cref{fig:procsample_1,fig:procsample_2}. Then, denoting this upper bound by $\arta$, we are in a position to verify $\procdelete{\varremove}$ and $\procinsert{\varadd}$ in \Cref{fig:findany_delete,fig:findany_insert}, where in \Cref{fig:findany_delete}, we make use of the fact that
 \begin{align}
 	\label{eqn:findany1}
 	& \potent + \arta  \\
 	\eeq & \potent  -\potent + \pureemp{\ls \geq 1} \sepadd \emprun{\ls} \sepadd \nicefrac{1}{\ls}\cdot \sum_{i=1}^{\ls} \Inf \ls_2,\lend,z,w \colon  \dll{\lh}{z}{0}{i}{w}   \\
     &\quad \sepadd \dll{w}{\lend}{z}{\ls_2}{0} \sepadd \potent\subst{\varany}{z} \sepadd \pureemp{i\geq 1 \wedge  \ls_2 + i =\ls} 
     \notag\\
     \eeq & \pureemp{\ls \geq 1} \sepadd \emprun{\ls} \sepadd \nicefrac{1}{\ls}\cdot \sum_{i=1}^{\ls} \Inf \ls_2,\lend,z,w \colon \dll{\lh}{z}{0}{i}{w}   \\
     &\quad \sepadd \dll{w}{\lend}{z}{\ls_2}{0} \sepadd \potent\subst{\varany}{z} \sepadd \pureemp{i\geq 1 \wedge  \ls_2 + i =\ls}
     \notag \\
     \ppreceq & \pureemp{\ls \geq 1} \sepadd \emprun{\ls} \sepadd \nicefrac{1}{\ls}\cdot \sum_{i=1}^{\ls} \Inf \ls_2,\lend,z,w \colon \dll{\lh}{z}{0}{i}{w}   \\
     &\quad \sepadd \dll{w}{\lend}{z}{\ls_2}{0} \sepadd (\ls \cdot (1 + \iverson{z = \varremove}))\sepadd \pureemp{i\geq 1 \wedge  \ls_2 + i =\ls} 
     \tag{definition of $\potent$}\\
     \ppreceq & \pureemp{\ls \geq 1} \sepadd \emprun{\ls}
     \sepadd \emprun{\nicefrac{1}{\ls} \cdot ((\ls-1)\cdot \ls + 2\cdot\ls) }
      \sepadd \nicefrac{1}{\ls}\cdot \sum_{i=1}^{\ls} \Inf \ls_2,\lend,z,w \colon \dll{\lh}{z}{0}{i}{w}   \\
     &\quad \sepadd \dll{w}{\lend}{z}{\ls_2}{0} \sepadd \pureemp{i\geq 1 \wedge  \ls_2 + i =\ls} 
     \tag{$z = \varremove$ holds for at most one summand}\\
 	\ppreceq&2+ \pureemp{\ls \geq 1} \sepadd \emprun{2\cdot \ls} \sepadd \Inf \lend \colon
 	\dll{\lh}{\lend}{0}{\ls}{0} 
 	\notag
 \end{align}
as well as
\begin{align}
	\label{eqn:findany2}
	& \iverson{\varremove = \varany} \cdot (-\potent + 2+ \pureemp{\ls \geq 1} \sepadd \emprun{2\cdot \ls} \sepadd \Inf \lend \colon
	\dll{\lh}{\lend}{0}{\ls}{0} )\\
	\eeq& \iverson{\varremove = \varany} \cdot (-(\ls \cdot (1+\iverson{\varremove = \varany}) ) + 2+ \pureemp{\ls \geq 1} \sepadd \emprun{2\cdot \ls} \sepadd \Inf \lend \colon
	\dll{\lh}{\lend}{0}{\ls}{0} ) 
	\notag\\
	\eeq& \iverson{\varremove = \varany} \cdot (-2\cdot\ls + 2+ \pureemp{\ls \geq 1} \sepadd \emprun{2\cdot \ls} \sepadd \Inf \lend \colon
	\dll{\lh}{\lend}{0}{\ls}{0} ) 
	\notag\\
	\eeq& \iverson{\varremove = \varany} \cdot ( 2+ \pureemp{\ls \geq 1} \sepadd \Inf \lend \colon
	\dll{\lh}{\lend}{0}{\ls}{0} )
	\notag
\end{align}

We moreover define
 \[
    \ert{\UNIFASSIGN{x}{e}{e'}}{\rta} 
    \eeq 
    \iverson{e \leq e'} \cdot \nicefrac{1}{e'-e+1} \cdot \sum\limits_{i=e}^{e'} \rta\subst{x}{i}~,
 \]
 where we require that $\sk(e),\sk(e') \in \Nats$ for all stacks $\ss$, which can be shown to be syntactic sugar \cite{QSLpopl}. Using \Cref{thm:aert_decomp}, we obtain $\aertsymbol$ versions of the local rules from \Cref{thm:local_rules}:
 
 \begin{theorem}
 	Let $C \in \hpgcl$. Then:
 	
 	\begin{enumerate}
 		\item \textnormal{(mut):} \qquad $\aert{\potent}{\HASSIGN{e}{e'}}{\isingleton{e}{e'} - \potent} \ppreceq \ivalidpointer{e} - \potent$
 		\item \textnormal{(lkp):} \qquad 
 		%
 		$\aert{{\potent}}{\ASSIGNH{x}{e}}{\pureemp{x=z} \sepadd \isingleton{e\subst{x}{y}}{z} - \potent} \ppreceq \pureemp{x=y} \sepadd \isingleton{e}{z} - \potent$,
 		%
 		\item \textnormal{(alc):} if $x$ does not occur in $e$, then
 		\[\aert{{\potent}}{\ALLOC{x}{e}}{\bigoplus_{i=1}^{e} \isingleton{x+i-1}{0} - {\potent}} \ppreceq \iemp - {\potent}~.
 		\]
 		\item \textnormal{(aux):} For all $\rta, \rtb \in \T$ and all $y \in \Vars$ not occurring in $C$, 
 		\begin{align*}
 		\aert{\potent}{C}{\rta - \potent} \ppreceq \rtb - \potent \qqimplies \aert{\potent}{C}{-\potent + \Inf y\colon \rta} \ppreceq -\potent + \Inf y \colon \rtb~.
 		\end{align*}
 	\end{enumerate}
 \end{theorem}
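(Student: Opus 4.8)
The plan is to obtain all four $\aertsymbol$-rules as immediate corollaries of the corresponding $\ertsymbol$-rules in \Cref{thm:local_rules}, using nothing but the telescoping identity $\aert{\potent}{C}{\arta} = \ert{C}{\arta + \potent} - \potent$ of \Cref{thm:aert_decomp}. The uniform pattern is: rewrite the $\aertsymbol$-expression via telescoping as ``$\ert{}{}$-expression minus $\potent$'', apply the known $\ertsymbol$-rule, and subtract $\potent$ on the right. Before carrying this out one records two trivial bookkeeping facts that are used throughout: every postruntime occurring here has the shape $g - \potent$ for a non-negative $g \in \T$, hence lies in $\Api$ so that $\aertsymbol$ is defined on it; and since potential functions are finite ($\potent \prec \infty$), we may add or subtract $\potent$ on both sides of any $\ppreceq$-inequality, and $(g - \potent) + \potent = g$ holds pointwise (including when $g = \infty$, where both sides are $\infty$).

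For (mut), telescoping gives $\aert{\potent}{\HASSIGN{e}{e'}}{\isingleton{e}{e'} - \potent} = \ert{\HASSIGN{e}{e'}}{\isingleton{e}{e'}} - \potent$; then rule (mut) of \Cref{thm:local_rules}, i.e.\ $\ert{\HASSIGN{e}{e'}}{\isingleton{e}{e'}} \ppreceq \ivalidpointer{e}$, followed by subtracting $\potent$, yields the claim. Rules (lkp) and (alc) go through verbatim the same way: telescoping turns the $\aertsymbol$-term into exactly the left-hand $\ertsymbol$-term of the corresponding inequality in \Cref{thm:local_rules}, minus $\potent$; invoking that inequality and subtracting $\potent$ from its right-hand side finishes both. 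For (aux), telescoping rewrites the hypothesis $\aert{\potent}{C}{\rta - \potent} \ppreceq \rtb - \potent$ as $\ert{C}{\rta} - \potent \ppreceq \rtb - \potent$, which (adding $\potent$, valid since $\potent \prec \infty$) is equivalent to $\ert{C}{\rta} \ppreceq \rtb$; rule (aux) of \Cref{thm:local_rules} then gives $\ert{C}{\Inf y\colon \rta} \ppreceq \Inf y\colon \rtb$, and since telescoping also gives $\aert{\potent}{C}{-\potent + \Inf y\colon \rta} = \ert{C}{\Inf y\colon \rta} - \potent$, subtracting $\potent$ from the previous inequality produces $\aert{\potent}{C}{-\potent + \Inf y\colon \rta} \ppreceq -\potent + \Inf y\colon \rtb$.

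There is no real obstacle here: all the substance lives in \Cref{thm:aert_decomp} and \Cref{thm:local_rules}, which we are allowed to assume. The only points needing a moment's care are (i) checking that each intermediate postruntime indeed lies in $\Api$, so that $\aertsymbol$ is defined at it, and (ii) the $\infty$-arithmetic surrounding $-\potent$ — both of which are immediate from $\potent \prec \infty$ and non-negativity of runtimes in $\T$. (Should one prefer to avoid telescoping, each rule could alternatively be proved by unfolding \Cref{table:aert} together with \Cref{lem:mut}, \Cref{lem:lookup}, and the allocation/auxiliary-variable reasoning from \Cref{proof:local_rules}; but routing everything through \Cref{thm:aert_decomp} is far shorter and reuses the $\ertsymbol$-rules unchanged.)
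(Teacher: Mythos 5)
Your proof is correct and takes exactly the paper's route: the paper introduces this theorem with the single remark ``Using \Cref{thm:aert_decomp}, we obtain $\aertsymbol$ versions of the local rules from \Cref{thm:local_rules}'' and gives no further details, and your argument is precisely the telescoping-based reduction to \Cref{thm:local_rules} that this remark gestures at, with the $\Api$-membership and $\infty$-arithmetic checks spelled out.
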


\begin{figure}[!p]
	\begin{align*}
	%
	%
	&\asucceqannotate{ \isingleton{\varany}{-,-,-}\sepadd 0+ \Inf \lend \colon  \dll{\lh}{\lend}{0}{\ls}{0} \sepadd \emprun{\ls}} \\
	&\aaertannotate{ \isingleton{\varany}{-,-,-}\sepadd 0+ \Inf \lend, \ls_2 \colon  \dll{\lh}{\lend}{0}{\ls_2}{0} \sepadd \emprun{\ls_2}} \\
	& \ASSIGN{\varrank}{1}\fatsemi \\
	&\asucceqannotate{\isingleton{\varany}{-,-,-}\sepadd 0+ \Inf \lend, \ls_2 \colon  \dll{\lh}{\lend}{0}{\ls_2}{0} \sepadd \emprun{\ls_2}} \\
	&\asucceqannotate{\ivalidpointer{\varany+2}\sepadd 0 + \Inf \lend, \ls_2 \colon  \dll{\lh}{\lend}{0}{\ls_2}{0} \sepadd \emprun{\ls_2}} \\
	%
	%
	&\ASSIGNH{\varvalany}{\varany +2} \fatsemi \\
	%
	%
	&\asucceqannotate{\Inf \lend, \ls_2 \colon  \dll{\lh}{\lend}{0}{\ls_2}{0} \sepadd \emprun{\ls_2}} \\
	&\aaertannotate{\Inf v,\lend, \ls_1,\ls_2 \colon \dll{\lh}{v}{0}{\ls_1}{\lh} \sepadd \dll{\lh}{\lend}{v}{\ls_2}{0} \sepadd \emprun{\ls_2}} \\
	&\ASSIGN{\varcur}{\lh}\fatsemi \\
	&\asucceqannotate{\Inf v,\lend, \ls_1,\ls_2 \colon \dll{\lh}{v}{0}{\ls_1}{\varcur} \sepadd \dll{\varcur}{\lend}{v}{\ls_2}{0} \sepadd \emprun{\ls_2}} \\
	&\psiannotate{\ldots}\\
	&\WHILE{\varcur \neq 0} \\
	&\qquad \asucceqannotate{\Inf v,\lend,\ls_1,\ls_2 \colon  \dll{\lh}{v}{0}{\ls_1}{\varcur} \sepadd  \dll{\varcur}{\lend}{v}{\ls_2}{0} \sepadd \emprun{\ls_2}} \\
	&\qquad \aauxvarannotate{\Inf w,v',v,\lend,\ls_1,\ls_2 \colon \emprun{1} \sepadd \dll{\lh}{w}{0}{\ls_1}{v'} \sepadd \isingleton{\varcur}{v,w,-} \sepadd \pureemp{\varcur=v'}} \\
	&\qquad\aphantannotate{\quad {}\sepadd \dll{v}{\lend}{v'}{\ls_2}{0} \sepadd \emprun{\ls_2} \sepadd\emprun{\potent} - \potent}{} \\
	&\qquad \aaertannotate{\emprun{1} \sepadd \dll{\lh}{w}{0}{\ls_1}{v'} \sepadd \isingleton{\varcur}{v,w,-} \sepadd \pureemp{\varcur=v'}\sepadd \dll{v}{\lend}{v'}{\ls_2}{0} \sepadd \emprun{\ls_2}  \sepadd\emprun{\potent} - \potent}{} \\
	&\qquad \TICK{1}\fatsemi \\
	&\qquad \aannotate{\ldots \textnormal{cf.\ \Cref{fig:procrank_2}}}
	\end{align*}
	
	\caption{Program $\procrank$ (part 1).}
	\label{fig:procrank_1}
\end{figure}

\begin{figure}[!p]
	\begin{align*}
       &\qquad \ldots \\
	   & \qquad\aannotate{\ldots \textnormal{cf.\ \Cref{fig:procrank_1}}} \\
	  &\qquad \TICK{1}\fatsemi \\
	   &\qquad \aannotate{\dll{\lh}{w}{0}{\ls_1}{v'} \sepadd \isingleton{\varcur}{v,w,-} \sepadd \pureemp{\varcur=v'}\sepadd \dll{v}{\lend}{v'}{\ls_2}{0}}{} \\
	   &\qquad \aphantannotate{\quad{}\sepadd \dll{v}{\lend}{v'}{\ls_2}{0} \sepadd \emprun{\ls_2} \sepadd\emprun{\potent} - \potent} \\
	  &\qquad \ASSIGNH{\varvalcur}{\varcur + 2}\fatsemi \\
	  &\qquad \aaertannotate{\dll{\lh}{w}{0}{\ls_1}{v'} \sepadd \isingleton{\varcur}{v,w,-} \sepadd \pureemp{\varcur=v'}\sepadd \dll{v}{\lend}{v'}{\ls_2}{0}}{} \\
	  &\qquad \aphantannotate{\quad{}\sepadd \dll{v}{\lend}{v'}{\ls_2}{0} \sepadd \emprun{\ls_2} \sepadd\emprun{\potent} - \potent} \\
	  &\qquad \IF{\varvalcur < \varvalany} \\
	  & \qquad \qquad \ASSIGN{\varrank}{\varrank+1} \\
	  &\qquad \} \\
	   &\qquad \aunframeannotate{\dll{\lh}{w}{0}{\ls_1}{v'} \sepadd \isingleton{\varcur}{v,w,-} \sepadd \pureemp{\varcur=v'}\sepadd \dll{v}{\lend}{v'}{\ls_2}{0}}{} \\
	   &\qquad \aphantannotate{\quad{}\sepadd \dll{v}{\lend}{v'}{\ls_2}{0}\sepadd \emprun{\ls_2} \sepadd\emprun{\potent} - \potent} \\
	   &\qquad \alkpannotate{\isingleton{\varcur}{v} \sepadd \pureemp{\varcur=v'} - \potent}{} \\
	   &\qquad\ASSIGNH{\varcur}{\varcur} \\
	   &\qquad \aframeannotate{\isingleton{v'}{v} \sepadd \pureemp{\varcur=v} - \potent}{} \\
	   &\qquad \aauxvarannotate{\dll{\lh}{w}{0}{\ls_1}{v'} \sepadd \isingleton{v'}{v,w,-} \sepadd \pureemp{\varcur=v}\sepadd \dll{v}{\lend}{v'}{\ls_2}{0} \sepadd \emprun{\ls_2} \sepadd\emprun{\potent} - \potent} \\
	   &\qquad \asucceqannotate{\Inf w,v',v,\lend,\ls_1,\ls_2 \colon \dll{\lh}{w}{0}{\ls_1}{v'} \sepadd \isingleton{v'}{v,w,-} \sepadd \pureemp{\varcur=v} \sepadd \emprun{\ls_2} }\\
	   &\qquad \aphantannotate{\quad {}\sepadd \dll{v}{\lend}{v'}{\ls_2}{0}\sepadd\emprun{\potent} - \potent} \\
	   & \qquad \astarannotate{ \Inf v,\lend, \ls_1,\ls_2 \colon \dll{\lh}{v}{0}{\ls_1}{\varcur} \sepadd \dll{\varcur}{\lend}{v}{\ls_2}{0} \sepadd \emprun{\ls_2} } \\
	   &\}\\
	   &\aannotate{0} 
	\end{align*}
	
	\caption{Program $\procrank$ (part 2).}
	\label{fig:procrank_2}
\end{figure}

\begin{figure}[!p]
	\begin{align*}
		%
		&\aunframeannotate{-\potent + \pureemp{\ls \geq 1} \sepadd \emprun{\ls} \sepadd \nicefrac{1}{\ls}\cdot \sum_{i=1}^{\ls} \Inf \ls_2,\lend,z,w \colon \dll{\lh}{z}{0}{i}{w}}{}   \\
		&\aphantannotate{\quad {} \sepadd \dll{w}{\lend}{z}{\ls_2}{0} \sepadd \potent\subst{\varany}{z} \sepadd \pureemp{i\geq 1 \wedge  \ls_2 + i =\ls}}  \\
		& \aaertannotate{-\potent + \iverson{\ls \geq 1} \cdot \nicefrac{1}{\ls}\cdot \sum_{i=1}^{\ls} \Inf \ls_2,\lend,z,w \colon \sepadd \dll{\lh}{z}{0}{i}{w}}   \\
		&\aphantannotate{\quad {} \sepadd \dll{w}{\lend}{z}{\ls_2}{0} \sepadd \potent\subst{\varany}{z} \sepadd \pureemp{i\geq 1 \wedge  \ls_2 + i =\ls}}  \\
		& \UNIFASSIGN{\varsamplen}{1}{\ls} \fatsemi \\
		& \aaertannotate{\Inf \ls_2,\lend,z,w \colon \sepadd \dll{\lh}{z}{0}{\varsamplen}{w}}   \\
		&\aphantannotate{\quad {} \sepadd \dll{w}{\lend}{z}{\ls_2}{0} \sepadd \potent\subst{\varany}{z} \sepadd \pureemp{\varsamplen\geq 1 \wedge  \ls_2 + \varsamplen =\ls}- \potent}  \\
		&\ASSIGN{\varcur}{\varsamplen} \fatsemi \\
		& \asucceqannotate{\Inf \ls_2,\lend,z,w \colon \sepadd \dll{\lh}{z}{0}{\varcur}{w}}   \\
		&\aphantannotate{\quad {} \sepadd \dll{w}{\lend}{z}{\ls_2}{0} \sepadd \potent\subst{\varany}{z} \sepadd \pureemp{\varcur \geq 1 \wedge  \ls_2 + \varcur =\ls}- \potent}  \\
		& \aaertannotate{\Inf \ls_1,\ls_2,\lend,z,z',w \colon  \dll{\lh}{z}{0}{\ls_1}{\lh}} \\
		&\aphantannotate{\quad {}\sepadd \dll{\lh}{z'}{z}{\varcur}{w} \sepadd \dll{w}{\lend}{z}{\ls_2}{0} \sepadd \potent\subst{\varany}{z'}} \\
		& \aphantannotate{\quad {}\sepadd \pureemp{\varcur \geq 1 \wedge \ls_1 + \ls_2 + \varcur =\ls}- \potent}  \\
		&\ASSIGN{\varany}{\lh}\fatsemi \\
		& \asucceqannotate{\Inf \ls_1,\ls_2,\lend,z,z',w \colon  \dll{\lh}{z}{0}{\ls_1}{\varany}} \\
		&\aphantannotate{\quad {}\sepadd \dll{\varany}{z'}{z}{\varcur}{w} \sepadd \dll{w}{\lend}{z}{\ls_2}{0} \sepadd \potent\subst{\varany}{z'}} \\
		& \aphantannotate{\quad {}\sepadd \pureemp{\varcur \geq 1 \wedge \ls_1 + \ls_2 + \varcur =\ls} - \potent}  \\
		&\psiannotate{\ldots} \\
		&\WHILE{\varcur > 1} \\
		%
		&\qquad \asucceqannotate{\Inf \ls_1,\ls_2,\lend,z,z',w \colon  \dll{\lh}{z}{0}{\ls_1}{\varany}}{} \\
		&\qquad \aphantannotate{\quad {}\sepadd \dll{\varany}{z'}{z}{\varcur }{w} \sepadd \dll{w}{\lend}{z}{\ls_2}{0} \sepadd \potent\subst{\varany}{z'} - \potent} \\
		&\qquad \aannotate{\ldots \textnormal{cf.\ \Cref{fig:procsample_2}}}  \\
		&\qquad \ASSIGNH{\varany}{\varany} \fatsemi \\
		&\qquad \ASSIGN{\varcur}{\varcur -1} \\
		&\} \\
	\end{align*}
	
	\caption{Program $\procsample$ (part 1).}
	\label{fig:procsample_1}
\end{figure}

\begin{figure}[!p]
	\begin{align*}
	%
	& \ldots \\
	&\WHILE{\varcur > 1} \\
	%
    &\qquad \aannotate{\ldots \textnormal{cf.\ \Cref{fig:procsample_1}}} \\
   %
    &\qquad \aauxvarannotate{\Inf \ls_1,\ls_2,\lend,z,z',w,v,v' \colon  \dll{\lh}{z}{0}{\ls_1}{v'} \sepadd \pureemp{v'=\varany} \sepadd \isingleton{\varany}{v,z,-}}{} \\
    &\qquad \aphantannotate{\quad {}\sepadd \dll{v}{z'}{v'}{\varcur - 1}{w} \sepadd \dll{w}{\lend}{z}{\ls_2}{0} \sepadd \potent\subst{\varany}{z'}} \\
    &\qquad \aphantannotate{\quad {}\sepadd \pureemp{\varcur \geq 2 \wedge \ls_1 + \ls_2 + \varcur  =\ls}- \potent}  \\
    &\qquad \aunframeannotate{ \dll{\lh}{z}{0}{\ls_1}{v'} \sepadd \pureemp{v'=\varany} \sepadd \isingleton{\varany}{v,z,-}}{} \\
    &\qquad \aphantannotate{\quad {}\sepadd \dll{v}{z'}{v'}{\varcur - 1}{w} \sepadd \dll{w}{\lend}{z}{\ls_2}{0} \sepadd \potent\subst{\varany}{z'}} \\
    &\qquad \aphantannotate{\quad {}\sepadd \pureemp{\varcur \geq 2 \wedge \ls_1 + \ls_2 + \varcur  =\ls}- \potent}  \\
	& \qquad \alkpannotate{\pureemp{v'=\varany} \sepadd \isingleton{\varany}{v} - \potent}{} \\
	&\qquad \ASSIGNH{\varany}{\varany} \fatsemi \\
	%
	%
	& \qquad \aframeannotate{\pureemp{v=\varany} \sepadd \isingleton{v'}{v} - \potent}{} \\
	& \qquad \aauxvarannotate{ \dll{\lh}{z}{0}{\ls_1}{v'} \sepadd \pureemp{v=\varany} \sepadd \isingleton{v'}{v,z,-}} \\
	&\qquad \aphantannotate{\quad {}\sepadd \dll{v}{z'}{v'}{\varcur - 1}{w} \sepadd \dll{w}{\lend}{z}{\ls_2}{0} \sepadd \potent\subst{\varany}{z'}} \\
	&\qquad \aphantannotate{\quad {}\sepadd \pureemp{\varcur \geq 2 \wedge \ls_1 + \ls_2 + \varcur  =\ls} - \potent}  \\
	& \qquad \asucceqannotate{\Inf \ls_1,\ls_2,\lend,z,z',w,v,v' \colon  \dll{\lh}{z}{0}{\ls_1}{v'} \sepadd \pureemp{v=\varany} \sepadd \isingleton{v'}{v,z,-}} \\
	&\qquad \aphantannotate{\quad {}\sepadd \dll{v}{z'}{v'}{\varcur - 1}{w} \sepadd \dll{w}{\lend}{z}{\ls_2}{0} \sepadd \potent\subst{\varany}{z'}} \\
	&\qquad \aphantannotate{\quad {}\sepadd \pureemp{\varcur \geq 2 \wedge \ls_1 + \ls_2 + \varcur  =\ls}- \potent}  \\
	& \qquad \aaertannotate{\Inf \ls_1,\ls_2,\lend,z,z',w \colon  \dll{\lh}{z}{0}{\ls_1}{\varany}} \\
	&\qquad \aphantannotate{\quad {}\sepadd \dll{\varany}{z'}{z}{\varcur - 1}{w} \sepadd \dll{w}{\lend}{z}{\ls_2}{0} \sepadd \potent\subst{\varany}{z'}} \\
	&\qquad \aphantannotate{\quad {}\sepadd \pureemp{\varcur \geq 2 \wedge \ls_1 + \ls_2 + \varcur - 1 =\ls}- \potent}  \\
	&\qquad \ASSIGN{\varcur}{\varcur -1} \\
	& \qquad \astarannotate{\Inf \ls_1,\ls_2,\lend,z,z',w \colon  \dll{\lh}{z}{0}{\ls_1}{\varany}} \\
	&\qquad \aphantannotate{\quad {}\sepadd \dll{\varany}{z'}{z}{\varcur}{w} \sepadd \dll{w}{\lend}{z}{\ls_2}{0} \sepadd \potent\subst{\varany}{z'}} \\
	&\qquad \aphantannotate{\quad {}\sepadd \pureemp{\varcur \geq 1 \wedge \ls_1 + \ls_2 + \varcur =\ls} - \potent}  \\
	%
	&\} \\
	&\aframeannotate{\Inf \lend,z,\ls_1,\ls_2,w,w' \colon \pureemp{\ls_1+\ls_2+1 = \ls}\sepadd \dll{\lh}{z}{0}{\ls_1}{\varany}}{} \\
	&\aphantannotate{\quad {}\sepadd \isingleton{\varany}{w,w',-}\sepadd \dll{w}{\lend}{\varany}{\ls_2}{0} } \\
	&\asucceqannotate{\emprun{\ls} \sepadd \Inf \lend,z,\ls_1,\ls_2,w,w' \colon \pureemp{\ls_1+\ls_2+1 = \ls}\sepadd \dll{\lh}{z}{0}{\ls_1}{\varany}} \\
	&\aphantannotate{\quad {}\sepadd \isingleton{\varany}{w,w',-}\sepadd \dll{w}{\lend}{\varany}{\ls_2}{0}  \sepadd\emprun{\potent} - \potent} \\
	&\aannotate{ \isingleton{\varany}{-,-,-}\sepadd 0+ \Inf \lend \colon  \dll{\lh}{\lend}{0}{\ls}{0} \sepadd \emprun{\ls}}
	\end{align*}
	
	\caption{Program $\procsample$ (part 2).}
	\label{fig:procsample_2}
\end{figure}

\begin{figure}[!p]
		\begin{align*}
		&\aaertannotate{ 1 + \Inf \lend \colon \dll{\lh}{\lend}{0}{\ls}{0} 
			+ \isingleton{\varremove}{-,-,-}\sepadd 0 + \isingleton{\varany}{-,-,-}\sepadd 0} \\
		& \procremove{x}\fatsemi \\
		%
		%
		&\asucceqannotate{2 + \iverson{\ls=0}\cdot\isingleton{\varremove}{-,-,-}  +\iverson{\ls \geq 1}\cdot ( \iverson{\varremove= \varany} \cdot (\Inf \lend \colon \dll{\lh}{\lend}{0}{\ls}{0} \sepadd \isingleton{\varremove}{-,-,-})} \\
		&\aphantannotate{\quad{}+\iverson{\varremove \neq \varany}\cdot (\isingleton{\varremove}{-,-,-} \sepadd\isingleton{\varany}{-,-,-} \sepadd 0))} \\
		& \IF{\ls = 0} \\
		&\qquad \asucceqannotate{\isingleton{\varremove}{-,-,-} \sepadd 0 + 2\cdot\ls}  \\
		&\qquad \aaertannotate{\isingleton{\varremove}{-,-,-} \sepadd 0 + \potent\subst{\varany}{0} - \potent}  \\
		&\qquad \COMPOSE{\ASSIGN{\varany}{0}}{\ASSIGN{\varrank}{0}}  \\
		&\qquad \aannotate{\isingleton{\varremove}{-,-,-} \sepadd 0 } \\
		&\ELSE  \\
		& \qquad \aaertannotate{\iverson{\varremove = \varany} \cdot \big(2+ \pureemp{\ls \geq 1} \sepadd \Inf \lend \colon \dll{\lh}{\lend}{0}{\ls}{0} \sepadd \isingleton{\varremove}{-,-,-} \big)}  \\
		&\qquad\aphantannotate{\quad {}+ \iverson{\varremove \neq \varany} \cdot \big( \emprun{1} \sepadd \isingleton{\varremove}{-,-,-} \sepadd{}\isingleton{\varany}{-,-,-} \sepadd 0 \big)}
		\tag{\Cref{eqn:findany2}} \\
		& \qquad \IF{\varremove = \varany} \\
		& \qquad \qquad \asucceqannotate{-\potent + 2+\pureemp{\ls \geq 1} \sepadd \emprun{2\cdot \ls} \sepadd \Inf \lend \colon
			\dll{\lh}{\lend}{0}{\ls}{0} \sepadd \isingleton{\varremove}{-,-,-}}  
		\tag{\Cref{eqn:findany1}}\\
		%
		%
		&\qquad \qquad \aunframeannotate{(\arta+\potent) \sepadd\isingleton{\varremove}{-,-,-} - \potent}\\
		&\qquad \qquad \COMPOSE{\procsample}{\procrank} \\
		%
		%
		&\qquad \qquad \asucceqannotate{0} \\
		&\qquad \qquad \aframeannotate{0\sepadd \emprun{\potent} - \potent} \\
		&\qquad \qquad\asucceqannotate{\isingleton{\varremove}{-,-,-} \sepadd 0 \sepadd \emprun{\potent} - \potent}\\
		&\qquad \qquad \aannotate{\isingleton{\varremove}{-,-,-} \sepadd 0 } \\
		&\qquad \ELSE \\
		& \qquad \qquad \asucceqannotate{\emprun{1} \sepadd \isingleton{\varremove}{-,-,-} \sepadd\isingleton{\varany}{-,-,-} \sepadd 0 }  \\
		& \qquad \qquad \COMPOSE{\COMPOSE{\ASSIGNH{\varvalrem}{\varremove+2}}{\ASSIGNH{\varvalany}{\varany+2}}}{\TICK{1}}\fatsemi \\
		&\qquad \qquad \IF{\varvalrem < \varvalany} \\
		&\qquad \qquad \qquad \ASSIGN{\varrank}{\varrank \monus 1}~\} \\
		&\qquad \qquad \aannotate{\isingleton{\varremove}{-,-,-} \sepadd 0 } \\
		&\qquad \} \\
		&\qquad \aannotate{\isingleton{\varremove}{-,-,-} \sepadd 0 } \\
		&\} \\
		&\aaertannotate{\isingleton{\varremove}{-,-,-} \sepadd 0}\\
		&\FREE{\varremove,\varremove+1, \varremove+2}\\
		&\aannotate{0}
		\end{align*}
	
	\caption{$\aertsymbol$-annotations for $\procdelete{\varremove}$. Here $\rta$ is the $\potent$-runtime obtained in \Cref{fig:procsample_1}.}
	\label{fig:findany_delete}
\end{figure}

\begin{figure}[!p]
	
		\begin{align*}
			&\aaertannotate{4 + \Inf \lend \colon \dll{\lh}{\lend}{0}{\ls}{0}+ \iverson{\ls \geq 1} \cdot(\isingleton{\varany}{-,-,-}\sepadd 0) } \\
			&\procadd{\varadd} \fatsemi \\
			&\asucceqannotate{2 + \iiverson{\ls \geq 1} \sepadd \Inf \lend \colon \dll{\lh}{\lend}{0}{\ls}{0} +\iverson{\ls \geq 2}\cdot(\isingleton{\varany+2}{-}\sepadd 0)}  \\
			&\aaertannotate{\nicefrac{1}{\ls+1} \cdot (2\cdot\ls  + \iiverson{\ls \geq 1}
			\sepadd \Inf \lend \colon \dll{\lh}{\lend}{0}{\ls}{0} )} \\
		    &\quad \aphantannotate{{}+ (1-\nicefrac{1}{\ls+1}) \cdot (1 + \iverson{\ls \geq 2}\cdot(\isingleton{\varany+2}{-}\sepadd 0))} \\
			&\{ \\
			&\qquad \asucceqannotate{2\cdot \ls  + \iiverson{\ls \geq 1}
				\sepadd \Inf \lend \colon \dll{\lh}{\lend}{0}{\ls}{0} } \\
			%
			%
			%
			&\qquad \aaertannotate{\isingleton{\lh}{-,-,-}\sepadd 0+ (\Inf \lend \colon \dll{\lh}{\lend}{0}{\ls}{0} \sepadd \emprun{\ls})} \\
			&\qquad\aphantannotate{\quad {}+ \ls \cdot (1 + \iverson{\lh = \varremove}) - \ls \cdot (1 + \iverson{\varany = \varremove})} \\
			&\qquad \ASSIGN{\varany}{\lh} \fatsemi \\
			%
			%
			&\qquad \asucceqannotate{\isingleton{\varany}{-,-,-}\sepadd 0+ \Inf \lend \colon  \dll{\lh}{\lend}{0}{\ls}{0} \sepadd \emprun{\ls}}\\
			&\qquad \procrank \\
			&\qquad \aannotate{0} \\
			&\}~[\nicefrac{1}{\ls+1}]~\{ \\
			&\qquad \asucceqannotate{1 + \iverson{\ls \geq 2}\cdot(\isingleton{\varany+2}{-}\sepadd 0)} \\
			&\qquad \IF{\ls \geq 2} \\
			&\qquad \qquad \asucceqannotate{1 + \isingleton{\varany+2}{-} \sepadd 0} \\
			&\qquad\qquad \ASSIGNH{\varvalany }{\varany+2}\fatsemi \\
			&\qquad \qquad \aaertannotate{1} \\
			&\qquad \qquad \TICK{1}\fatsemi \\
			&\qquad \qquad \aaertannotate{0} \\
			&\qquad\qquad \IF{\varadd < \varvalany} \\
			&\qquad\qquad\qquad \ASSIGN{\varrank}{\varrank +1} \\
			&\qquad \qquad \} \\
			&\qquad \qquad \aaertannotate{0} \\
			&\qquad \ELSE \\
			&\qquad\qquad  \asucceqannotate{\ls} \\
			&\qquad \qquad \aaertannotate{\ls \cdot (1 + \iverson{\lh = \varremove}) - \ls \cdot (1 + \iverson{\varany = \varremove})} \\
			&\qquad \qquad \COMPOSE{\ASSIGN{\varany}{\lh}}{\ASSIGN{\varrank}{1}} \\
			&\qquad \qquad \aaertannotate{0} \\
			&\qquad \} \\
			&\qquad \aaertannotate{0} \\
			&\} \\
			&\aaertannotate{0}
		\end{align*}
	
	\caption{$\aertsymbol$-annotations for $\procinsert{\varadd}$.}
	\label{fig:findany_insert}
\end{figure}

\subsection{Appendix to the Randomized Dynamic List Case Study}
\label{app:dynlist}

See \Cref{fig:dynamic_list_aert_proof_1} and \ref{fig:dynamic_list_aert_proof_2} for detailed source-code annotations, where we define 
\[
\earray{\ahead}{\asize} \eeq \bigoplus_{i=1}^{\asize} \ivalidpointer{\ahead + i -1} ~,
\]
for the sake of readability and assume the following specifications of the involved subprograms as described in \Cref{sec:dynlist}:

\begin{itemize}
	\item $\aert{\potent}{\ALLOC{\ahead'}{\asize'}}{\earray{\ahead'}{\asize'} -\potent } \preceq \iemp - \potent$
	\item \begin{align*}
	&\aert{\potent}{\arraycopy{\ahead}{\asize}{\ahead'}}{\earray{\ahead}{\asize} \sepadd \earray{\ahead'}{\asize'}-\potent} \\  
	{}\preceq \qquad &  \emprun{\asize} \sepadd \pureemp{\asize'\geq \asize} \sepadd \earray{\ahead}{\asize} \sepadd \earray{\ahead'}{\asize'} -\potent
	\end{align*}
	\item $\aert{\potent}{\arraydelete{\ahead}{\asize}-\potent}{\iemp} \preceq \earray{\ahead}{\asize} - \potent$
\end{itemize}

\begin{figure}
	
	\begin{align*}
	&\asucceqannotate{\emprun{3}  \sepadd \pureemp{\aoff\leq \asize \wedge \asize \geq 1} \sepadd \earray{\ahead}{\asize}} \\
	& \asucceqannotate{\pureemp{\aoff\leq \asize \wedge \asize \geq 1} \sepadd \emprun{1} \sepadd \earray{\ahead}{\asize} \sepadd ( \iverson{\aoff  =\asize} \cdot (\nicefrac{1}{\asize +1} \cdot(\emprun{\asize+1}) + \nicefrac{\asize}{\asize +1} \cdot 2) + \iverson{\aoff  \neq\asize} \cdot 2)} \\
	&\aaertannotate{\iverson{\aoff  =\asize} \cdot (\nicefrac{1}{\asize +1} \cdot(\emprun{1} \sepadd \pureemp{\aoff \leq \asize+1 \wedge \asize+1 \geq 1} \sepadd  \emprun{\asize} \sepadd \earray{\ahead}{\asize} \sepadd \emprun{\potent\subst{\aoff}{\aoff +1}\subst{\asize}{\asize+1}} - \potent)} \\
	&\aphantannotate{\quad {}+ \nicefrac{\asize}{\asize +1} \cdot(\emprun{1} \sepadd \pureemp{\aoff \leq 2\cdot \asize \wedge 2\cdot \asize \geq 1} \sepadd  \emprun{\asize} \sepadd \earray{\ahead}{\asize} \sepadd \emprun{\potent\subst{\aoff}{\aoff +1}\subst{\asize}{2\cdot \asize}}) - \potent)} \\
	&\aphantannotate{{}+ \iverson{\aoff  \neq\asize} \cdot (\emprun{1} \sepadd 0 \sepadd \ivalidpointer{\ahead + \aoff} \sepadd \emprun{\potent\subst{\aoff}{\aoff +1}} - \potent)} \\
	&\IF{\aoff  = \asize} \\
	&\qquad \aaertannotate{\nicefrac{1}{\asize +1} \cdot(\emprun{1} \sepadd \pureemp{\aoff \leq \asize+1 \wedge \asize+1 \geq 1} \sepadd  \emprun{\asize} \sepadd \earray{\ahead}{\asize} \sepadd \emprun{\potent\subst{\aoff}{\aoff +1}\subst{\asize}{\asize+1}} - \potent)} \\
	&\qquad \aphantannotate{\quad {}+ \nicefrac{\asize}{\asize +1} \cdot(\emprun{1} \sepadd \pureemp{\aoff \leq 2\cdot \asize \wedge 2\cdot \asize \geq 1} \sepadd  \emprun{\asize} \sepadd \earray{\ahead}{\asize} \sepadd \emprun{\potent\subst{\aoff}{\aoff +1}\subst{\asize}{2\cdot \asize}}- \potent)} \\
	&\qquad  \PCHOICE{\ASSIGN{\asize'}{\asize+1}}{\nicefrac{1}{\asize +1}}{\ASSIGN{\asize'}{2\cdot \asize}} \fatsemi \\
	&\qquad \aunframeannotate{\emprun{1} \sepadd \pureemp{\aoff \leq \asize' \wedge \asize' \geq 1} \sepadd  \emprun{\asize} \sepadd \earray{\ahead}{\asize} \sepadd \emprun{\potent\subst{\aoff}{\aoff +1}\subst{\asize}{\asize'}} - \potent}{} \\
	&\qquad \aannotate{\iemp - \potent} \\
	& \qquad  \ALLOC{\ahead'}{\asize'} \fatsemi \\
	&\qquad \aframeannotate{\earray{\ahead'}{\asize'} - \potent}{} \\
	&\qquad \asucceqannotate{\emprun{1} \sepadd \pureemp{\aoff \leq \asize' \wedge \asize' \geq 1} \sepadd  \emprun{\asize} \sepadd \earray{\ahead}{\asize} \sepadd \earray{\ahead'}{\asize'} \sepadd \emprun{\potent\subst{\aoff}{\aoff +1}\subst{\asize}{\asize'}} - \potent}{} \\
	&\qquad \aunframeannotate{\emprun{1} \sepadd \pureemp{\aoff \leq \asize' \wedge \asize' \geq 1} \sepadd \pureemp{\asize' \geq \asize } \sepadd \emprun{\asize} \sepadd \earray{\ahead}{\asize} \sepadd \earray{\ahead'}{\asize'} \sepadd \emprun{\potent\subst{\aoff}{\aoff +1}\subst{\asize}{\asize'}} - \potent}{} \\
	&\qquad \aannotate{\pureemp{\asize' \geq \asize } \sepadd \emprun{\asize} \sepadd \earray{\ahead}{\asize} \sepadd \earray{\ahead'}{\asize'} - \potent} \\
	& \qquad \arraycopy{\ahead}{\asize}{\ahead'} \fatsemi \\
	&\qquad \aframeannotate{\earray{\ahead}{\asize} \sepadd \earray{\ahead'}{\asize'} - \potent}{} \\
	&\qquad \asucceqannotate{\emprun{1} \sepadd \pureemp{\aoff \leq \asize' \wedge \asize' \geq 1} \sepadd \earray{\ahead}{\asize} \sepadd \earray{\ahead'}{\asize'} 
		\sepadd \emprun{\potent\subst{\aoff}{\aoff +1}\subst{\asize}{\asize'}}}{} \\
	&\qquad \aunframeannotate{\emprun{1} \sepadd 0 \sepadd \ivalidpointer{\ahead' + \aoff}  \sepadd \earray{\ahead}{\asize}
		\sepadd \emprun{\potent\subst{\aoff}{\aoff +1}\subst{\asize}{\asize'}} -\potent}{} \\
	&\qquad \aannotate{\earray{\ahead}{\asize} - \potent} \\
	&\qquad \arraydelete{\ahead}{\asize} \fatsemi \\
	& \qquad \aframeannotate{\iemp - \potent}{} \\
	& \qquad \aaertannotate{\emprun{1} \sepadd 0 \sepadd \ivalidpointer{\ahead' + \aoff} 
		\sepadd \emprun{\potent\subst{\aoff}{\aoff +1}\subst{\asize}{\asize'}} - \potent} \\
	& \qquad \ASSIGN{\ahead}{\ahead'}\fatsemi \\
	& \qquad \aaertannotate{\emprun{1} \sepadd 0 \sepadd \ivalidpointer{\ahead + \aoff} 
		\sepadd \emprun{\potent\subst{\aoff}{\aoff +1}\subst{\asize}{\asize'}} - \potent} \\
	& \qquad \ASSIGN{\asize}{\asize'}\\
	& \qquad \aannotate{\emprun{1} \sepadd 0 \sepadd \ivalidpointer{\ahead + \aoff} \sepadd \emprun{\potent\subst{\aoff}{\aoff +1}} - \potent} \\
	& \} \\
	&\aannotate{\ldots \text{see \Cref{fig:dynamic_list_aert_proof_2}}} \\
	&\ldots
	\end{align*}
	
	\caption{Randomized Dynamic Table Amortized Expected Runtime Verification (part 1).}
	\label{fig:dynamic_list_aert_proof_1}
\end{figure}

\begin{figure}
	
	\begin{align*}
	& \ldots \text{see \Cref{fig:dynamic_list_aert_proof_1}} \\
	&\aunframeannotate{\emprun{1} \sepadd 0 \sepadd \ivalidpointer{\ahead + \aoff} \sepadd \emprun{\potent\subst{\aoff}{\aoff +1}} -\potent}{} \\
	&\amutannotate{\ivalidpointer{\ahead + \aoff} -\potent} \\
	& \HASSIGN{\ahead + \aoff}{y}\fatsemi \\
	&\aframeannotate{\isingleton{\ahead + \aoff}{y} -\potent}{} \\
	&\asucceqannotate{\emprun{1} \sepadd 0 \sepadd \isingleton{\ahead + \aoff}{y} \sepadd \emprun{\potent\subst{\aoff}{\aoff +1}} - \potent} \\
	&\aaertannotate{1+ \potent\subst{\aoff}{\aoff +1} - \potent} \\
	&\ASSIGN{\aoff}{\aoff +1}\fatsemi \\
	%
	&\aaertannotate{1} \\
	& \TICK{1} \\
	& \aannotate{0} \\
	\end{align*}
	
	\caption{Randomized Dynamic Table Amortized Expected Runtime Verification (part 2).}
	\label{fig:dynamic_list_aert_proof_2}
\end{figure}

\end{document}